\documentclass[conference, letter, 10pt, times]{IEEEtran}
 
\usepackage{stmaryrd}
\usepackage{amsfonts}
\usepackage{graphicx}
\let\ORIGINALincludegraphics\includegraphics
\usepackage{cite}
\usepackage{amsthm}
\usepackage{amssymb}
\usepackage{mathpartir}
\usepackage{verbatim}
\usepackage[hyphens]{url}
\usepackage{paralist}
\usepackage{xcolor}
\usepackage{thmtools}
\usepackage{thm-restate}

\usepackage{listings}
\usepackage{paralist}
\usepackage{subfigure}
\usepackage{xifthen}
\usepackage{mathtools}
\usepackage{multirow}
\usepackage{tikz-cd}
\usepackage{booktabs}
\usepackage{algorithm}
\usepackage[noend]{algpseudocode}
\usepackage{xspace}
\usetikzlibrary{decorations.markings}
\usetikzlibrary{decorations.pathmorphing}
\usetikzlibrary{shapes,fit,arrows.meta}
\usetikzlibrary{calc,trees,positioning,arrows,chains,shapes.geometric,%
    decorations.pathreplacing,decorations.pathmorphing,shapes,%
    matrix,shapes.symbols}
\tikzset{negated/.style={
        decoration={markings,
            mark= at position 0.5 with {
                \node[transform shape] (tempnode) {$\backslash$};
            }
        },
        postaction={decorate}
    }
}

\usepackage{extarrows}
\usepackage{mathtools}
\usepackage{subfigure}
\usepackage[Export]{adjustbox}[2018/04/08]
\usepackage{pdfpages}

\makeatletter
\newcommand{\xMapsto}[2][]{\ext@arrow 0599{\Mapstofill@}{#1}{#2}}
\def\Mapstofill@{\arrowfill@{\Mapstochar\Relbar}\Relbar\Rightarrow}
\providecommand*{\twoheadrightarrowfill@}{%
  \arrowfill@\relbar\relbar\twoheadrightarrow
}
\providecommand*{\xtwoheadrightarrow}[2][]{%
  \ext@arrow 0579\twoheadrightarrowfill@{#1}{#2}%
}
\makeatother

\usepackage{array}
\newcolumntype{C}[1]{>{\centering}m{#1}}
\usepackage{bm}
\usepackage{numprint}
\npthousandsep{\,}

\usepackage{xstring}

\newcommand{\hide}[1]{%
    \IfEqCase{\hideComments}{%
        {false}{#1}%
        {true}{}%
        % you can add more cases here as desired
    }
}

\newcommand{\techReportAppendix}[1]{%
    \IfEqCase{\shortVersion}{%
        {false}{Appendix \ref{#1}}%
        {true}{\cite{techReport}}%
        % you can add more cases here as desired
    }
}%

\newcommand{\techReportAppendices}[2]{%
    \IfEqCase{\shortVersion}{%
        {false}{Appendices \ref{#1}--\ref{#2}}%
        {true}{\cite{techReport}}%
        % you can add more cases here as desired
    }
}%

\newcommand{\onlyTechReport}[1]{%
    \IfEqCase{\shortVersion}{%
        {false}{#1}%
        % you can add more cases here as desired
    }
}%

\newcommand{\onlyShortVersion}[1]{%
    \IfEqCase{\shortVersion}{%
        {true}{#1}%
        % you can add more cases here as desired
    }
}%

% NAMES
\newcommand{\lang}{$\mu$\textsc{Asm}}

\newcommand{\para}[1]{\subsubsection*{#1}}
\newcommand{\spectre}{\textsc{Spectre}}
\newcommand{\tool}{\textsc{Spectector}\xspace}
% SYSTEMS
\newcommand{\clang}{\textsc{Clang}}
\newcommand{\icc}{\textsc{Icc}}
\newcommand{\vcc}{\textsc{Visual C++}}

\newcommand{\ciao}{\textsc{Ciao}}

%%%%% automatically break the line at commas in math mode
\AtBeginDocument{%
  \mathchardef\mathcomma\mathcode`\,
  \mathcode`\,="8000 
}
{\catcode`,=\active
  \gdef,{\mathcomma\discretionary{}{}{}}
}

\newcommand\fcirc{{\color{black}\bullet}\mathllap{\circ}}

%%%%%%%% ENVIRONMENTS
\theoremstyle{definition}
\newtheorem{definition}{Definition}
\newtheorem{exampleInt}{Example}
\theoremstyle{plain}

\newenvironment{example}
 {\begin{exampleInt} } 
 { $\hfill \blacksquare$\end{exampleInt} }

%%%%%%%% LANGUAGES

% Jose: Fix for \lstinline + ttfamily + math mode
% https://tex.stackexchange.com/questions/127010/how-can-i-make-lstinline-function-normally-inside-math-mode
\usepackage{letltxmacro}
\newcommand*{\SavedLstInline}{}
\LetLtxMacro\SavedLstInline\lstinline
\DeclareRobustCommand*{\lstinline}{%
  \ifmmode
    \let\SavedBGroup\bgroup
    \def\bgroup{%
      \let\bgroup\SavedBGroup
      \hbox\bgroup
    }%
  \fi
  \SavedLstInline
}

\lstset{numbers=left,xleftmargin=2em,frame=single,framexleftmargin=1.5em,basicstyle=\ttfamily}

\lstdefinestyle{Cstyle}
{
	%% FORMATTING
	frame = tb,
  belowskip=.4\baselineskip,
  aboveskip=.4\baselineskip,
  	showstringspaces = false,
  	breaklines = true,
  	breakatwhitespace = true,
  	tabsize = 3,
  	numbers = left,
    stepnumber = 1,
    numberstyle = \tiny\color{gray},
  	%%% LANGUAGE
    language = {[ANSI]C},
    alsoletter={.\$},
    basicstyle={\ttfamily\color{black}},
    stringstyle={\ttfamily\color{string-color}},
    keywordstyle={\ttfamily\color{Blue3}},
    keywordstyle=[2]{\ttfamily\color{Green4}},
    keywordstyle=[3]{\ttfamily\color{orange}},
    keywordstyle=[4]{\ttfamily\color{violet}},
    otherkeywords = {temp,y,A,B,k,size},
    morekeywords = [2]{A,B},
    morekeywords = [3]{},
    morekeywords = [4]{y,size,k,temp},
}

\lstdefinestyle{ASMstyle}
{
	%% FORMATTING
	frame = tb,
  belowskip=.4\baselineskip,
  aboveskip=.4\baselineskip,
  	showstringspaces = false,
              	breaklines = true,
  	breakatwhitespace = true,
  	tabsize = 3,
  	numbers = left,
    stepnumber = 1,
    numberstyle = \tiny\color{gray},
  	%%% LANGUAGE
    alsoletter={.\$\%},
    basicstyle={\ttfamily\color{black}},
    stringstyle={\ttfamily\color{string-color}},
    keywordstyle={\ttfamily\color{Blue3}},
    keywordstyle=[2]{\ttfamily\color{Green4}},
    keywordstyle=[3]{\ttfamily\color{orange}},
    keywordstyle=[4]{\ttfamily\color{violet}},
    keywordstyle=[5]{\ttfamily\color{red}}, 
    keywords = {cmova, cmovae, cmovb, cmove, cmovne, cmovbe, mov, xor, or,  jbe, jne, add, jmp,  cmp, shl, and, sar, lfence, jae, clt,  and, lea},
    morekeywords = {},
    morekeywords = [2]{A, B},
    morekeywords = [3]{\%rax, \%rbx, \%rcx, \%eax, \%ecx, \%rbp, \%cl, \%rdi, \%edx, \%esi, \%rsi, \%rdx, \%rsp, \%al, \%rip, \%ax, \%bx, \%cx,  \%bp,  \%di, \%dx, \%si, \%sp, \%ip, \%r8b},
    morekeywords = [4]{k, y, size, temp},
    morekeywords = [5]{END, .L1, .L2}
}

\newcommand{\inlineCcode}[1]{\lstinline[style=Cstyle]|#1|}
\newcommand{\inlineASMcode}[1]{\lstinline[style=ASMstyle]|#1|}

%%%%%%%% NOTES
\definecolor{Blue3}{HTML}{0000CD}
\definecolor{Green4}{HTML}{008B00}
\definecolor{Red3}{HTML}{CD0000}
\definecolor{orange}{rgb}{0.8, 0.47, 0.196}

\newcommand\remove[1]{}

%%%%%%%% NEW SYMBOLS

 \newcommand\xleadsto[1]{%
   \if\relax\detokenize{#1}\relax
   \rightsquigarrow
   \else
   \mathrel{%
     \begin{tikzpicture}[%
       baseline={(current bounding box.south)}
       ]
       \node[%
       ,inner sep=.44ex
       ,align=center
       ] (tmp) {$\scriptstyle #1$};
       \path[%
       ,draw,<-
       ,decorate,decoration={%
         ,zigzag
         ,amplitude=0.7pt
         ,segment length=1.2mm,pre length=3.5pt
       }
       ] 
       (tmp.south east) -- (tmp.south west);
     \end{tikzpicture}
   }
   \fi
 }

%%%%% LANGUAGE

\newcommand{\histories}{\mathcal{H}}
\newcommand{\tup}[1]{\langle #1 \rangle}
\newcommand{\Var}{\mathit{Regs}}
\newcommand{\Val}{\mathit{Vals}}
\newcommand{\Nat}{\mathbb{N}}

\newcommand{\Mem}{\mathit{Mem}}
\newcommand{\Assgn}{\mathit{Assgn}}
\newcommand{\Addr}{\mathit{Addrs}}
\newcommand{\Conf}{\mathit{Conf}}
\newcommand{\Init}{\mathit{InitConf}}
\newcommand{\Final}{\mathit{FinalConf}}

\newcommand{\Obs}{\mathit{Obs}}
\newcommand{\ExtObs}{\mathit{ExtObs}}
\newcommand{\Prg}{\mathit{Prg}}
\newcommand{\bp}{{\mathcal{O}}}
\newcommand{\lbl}{\ell}
\newcommand{\kywd}[1]{\mathbf{#1}}
\newcommand{\skipKywd}{\kywd{skip}}
\newcommand{\storeKywd}{\kywd{store}}
\newcommand{\loadKywd}{\kywd{load}}
\newcommand{\jmpKywd}{\kywd{jmp}}
\newcommand{\jzKywd}{\kywd{beqz}}
\newcommand{\pcKywd}{\kywd{pc}}
\newcommand{\barrierKywd}{\kywd{spbarr}}
\newcommand{\pc}{\pcKywd}
\newcommand{\obsKywd}[1]{\textcolor{Blue3}{\mathtt{#1}}}
\newcommand{\loadObsKywd}{\obsKywd{load}}
\newcommand{\storeObsKywd}{\obsKywd{store}}
\newcommand{\pcObsKywd}{\obsKywd{pc}}
\newcommand{\loadObs}[1]{\loadObsKywd\ #1}
\newcommand{\storeObs}[1]{\storeObsKywd\ #1}
\newcommand{\pcObs}[1]{\pcObsKywd\ #1}
\newcommand{\startObsKywd}[1]{\obsKywd{start}}
\newcommand{\commitObsKywd}[1]{\obsKywd{commit}} 
\newcommand{\rollbackObsKywd}[1]{\obsKywd{rollback}}
\newcommand{\startObs}[1]{\startObsKywd{}\ #1}

\newcommand{\commitObs}[1]{\commitObsKywd{}\ #1}
\newcommand{\rollbackObs}[1]{\rollbackObsKywd{}\ #1}
\newcommand{\unaryOp}[1]{\ominus #1}
\newcommand{\binaryOp}[2]{#1 \otimes #2}
\newcommand{\cmd}[2]{ #1 : #2 }
\newcommand{\pseq}[2]{#1 ; #2}
\newcommand{\pskip}{\skipKywd{}}
\newcommand{\passign}[2]{#1 \leftarrow #2}
\newcommand{\pload}[2]{\loadKywd\ #1, #2}
\newcommand{\pstore}[2]{\storeKywd\ #1, #2}
\newcommand{\pcondassign}[3]{#1 \xleftarrow{#3?} #2}
\newcommand{\pjmp}[1]{\jmpKywd\ #1}
\newcommand{\pjz}[2]{\jzKywd\ #1, #2}
\newcommand{\pbarrier}{\barrierKywd}
\newcommand{\select}[2]{#1(#2)}
\newcommand{\eval}[2]{\xrightarrow{#2}}
\newcommand{\speval}[3]{\xleadsto{#3}}
\newcommand{\exprEval}[2]{\llbracket #1 \rrbracket(#2)}

\newcommand{\specProject}[1]{#1\mathord{\upharpoonright_{se}}}
\newcommand{\nspecProject}[1]{#1\mathord{\upharpoonright_{nse}}}

\newcommand{\specStates}{\mathit{SpecS}}
\newcommand{\ExtConf}{\mathit{ExtConf}}
\newcommand{\exhausted}[1]{\mathit{enabled}(#1)}

\newcommand{\zeroes}[1]{\mathit{zeroes}(#1)}

\newcommand{\id}{\mathit{id}}
\newcommand{\ctr}{\mathit{ctr}}

\newcommand{\expand}[2]{\textsc{expand}(#1,#2)}

\newcommand{\pcSim}{\sim_{\pc}}
\newcommand{\nextSim}{\sim_{\mathit{next}}}

\newcommand{\mbp}{{\mathcal{O}}}

%%%%% SEQUENCES 
\newcommand{\finiteSequences}[1]{#1^*}

\newcommand{\prefix}[2]{{#1}^{#2}}
\newcommand{\elt}[2]{#1|_{#2}}
\newcommand{\emptysequence}{\varepsilon}
\newcommand{\concat}{\cdot}
\newcommand{\decrement}[1]{\mathit{decr}(#1)}

%%%%%% SETS

%%%%% SEC. CONDITIONS

\newcommand{\indist}[1]{\sim_{#1} }

%%%%% SYMBOLIC SEMANTICS
\newcommand{\sexpr}{\mathit{se}}

\newcommand{\SymVal}{\mathit{SymbVals}}
\newcommand{\SymExpr}{\mathit{SymbExprs}}
\newcommand{\ite}[3]{\mathbf{ite}(#1,#2,#3)}

\newcommand{\symEval}[2]{\xrightarrow{#2}_{\mathit{s}}}
\newcommand{\symExprEval}[2]{#2(#1)}

\newcommand{\symTraces}[2]{\mathit{symTraces}_{#1}(#2)}

\newcommand{\symb}[1]{#1_s}
\newcommand{\symWrite}[3]{\kywd{write}(#1,#2,#3)}
\newcommand{\symRead}[2]{\kywd{read}(#1,#2)}

\newcommand{\policy}{P}

\newcommand{\symPcObs}[1]{\obsKywd{symPc}(#1)}
\newcommand{\cstrs}[1]{\mathit{obsEqv}(#1)}
\newcommand{\pathCond}[1]{\mathit{pthCnd}(#1)}
\newcommand{\policyEqv}[1]{\mathit{polEqv}(#1)}
\newcommand{\memcheck}{\textsc{MemLeak}}
\newcommand{\pccheck}{\textsc{CtrlLeak}}

\algdef{SE}[DOWHILE]{Do}{doWhile}{\algorithmicdo}[1]{\algorithmicwhile\ #1}%
\newcommand{\sameSymPc}[1]{\mathit{sameSymbPc}(#1)}
\newcommand{\decrTop}[1]{\mathit{decr}'(#1)}
\newcommand{\decrementTop}[1]{\decrTop{#1}}
\newcommand{\window}[1]{\mathit{wndw}(#1)} 
\newcommand{\exhaustedTop}[1]{\mathit{enabled}'(#1)}
\newcommand{\zeroesTop}[1]{\mathit{zeroes}'(#1)}

%%%%% CASE STUDIES
\newcommand{\unp}{\textsc{Unp}}
\newcommand{\fen}{\textsc{Fen}}
\newcommand{\slh}{\textsc{Slh}}
\newcommand{\opt}{\texttt{-O2}}
\newcommand{\unopt}{\texttt{-O0}}

%%%%% NOTATION FOR PROGRAM EXECUTION
\newcommand{\lbbar}{\{\kern-0.5ex|}
\newcommand{\rbbar}{|\kern-0.5ex\}}
\newcommand{\lanbar}{\langle \kern-0.5ex|}
\newcommand{\ranbar}{|\kern-0.5ex\rangle}
\newcommand{\specEval}[3]{\llbracket #1 \rrbracket_{#2}(#3)}
\newcommand{\nspecEval}[2]{\llparenthesis #1 \rrparenthesis(#2)}
\newcommand{\amEval}[3]{\lbbar #1 \rbbar_{#2}(#3)}
\newcommand{\ameval}[1]{\xLongrightarrow{#1}}

\newcommand{\symSpeval}[3]{\xLongrightarrow{#3}_{\mathit{s}}}

\newcommand{\nspecTraces}[1]{\llparenthesis #1 \rrparenthesis}
\newcommand{\specTraces}[2]{\llbracket #1 \rrbracket_{#2}}
\newcommand{\amTraces}[2]{\lbbar #1 \rbbar_{#2}}
\newcommand{\symbTraces}[2]{\amTraces{#1}{#2}^\mathit{symb}} 

\newcommand{\shortVersion}{false} %set to true to generate the short paper version
								  %set to false to generate the technical report
\newcommand{\hideComments}{true} %set to true to hide comments 

\title{\tool: Principled Detection of \\Speculative Information Flows}

\author{\IEEEauthorblockN{Marco Guarnieri\IEEEauthorrefmark{1},
Boris K\"{o}pf\IEEEauthorrefmark{2}, Jos\'{e} F. Morales\IEEEauthorrefmark{1}, 
Jan Reineke\IEEEauthorrefmark{3}, and 
Andr\'{e}s S\'{a}nchez\IEEEauthorrefmark{1} }
\IEEEauthorblockA{\IEEEauthorrefmark{1}\textit{IMDEA Software Institute}\qquad
\IEEEauthorrefmark{2}\textit{Microsoft Research}\qquad
\IEEEauthorrefmark{3}\textit{Saarland University}}}

\begin{document}

\maketitle

\thispagestyle{plain}
\pagestyle{plain}

\begin{abstract}
Since the advent of \spectre{}, a number of countermeasures have been proposed and deployed. 
Rigorously reasoning about their effectiveness, however, requires a
well-defined notion of security against speculative execution attacks,
which has been missing until now.

In this paper (1) we put forward  \textit{speculative non-interference}, the first semantic notion of security against speculative execution attacks, and (2) we develop \tool{}, an algorithm based on symbolic execution to automatically prove speculative non-interference, or to detect violations. 

We implement \tool{} in a tool, which we use to detect subtle leaks and optimizations opportunities in the way major compilers place \spectre{} countermeasures. A scalability analysis indicates that checking speculative non-interference does not exhibit fundamental bottlenecks beyond those inherited by symbolic execution.
\end{abstract}

\section{Introduction}\label{sec:intro}

\textit{Speculative execution} avoids expensive pipeline stalls by predicting the outcome of branching (and other) decisions, and by
speculatively executing the corresponding instructions. 
If a prediction is incorrect, the processor aborts the speculative execution and rolls back the effect of the 
speculatively executed instructions on the architectural (ISA) state, which consists of registers, flags, and main memory. 

However, the speculative execution's effect on the microarchitectural state, which comprises the content of the cache, is not (or only partially) rolled back. 
This side effect can leak information about the speculatively accessed data and thus violate confidentiality.  
The family of \spectre{} attacks~\cite{Maisuradze:2018:RSE:3243734.3243761, Kocher2018spectre,schwarz2018netspectre,220586,kiriansky2018speculative} demonstrates that this vulnerability affects all modern general-purpose processors and poses a serious threat for platforms with multiple tenants.

Since the advent of \spectre{}, a number of countermeasures have been
proposed and deployed. 
At the software-level, these include, for instance, the insertion of serializing instructions~\cite{Intel}, the use of branchless bounds checks~\cite{Webkit}, and speculative load hardening~\cite{spec-hard}.
Several compilers support the automated insertion of these countermeasures during compilation~\cite{Intel-compiler,microsoft,spec-hard}, and the first static analyses to help identify vulnerable code patterns are emerging~\cite{oo7}.

However, we still lack a precise characterization of  {\em security against speculative execution attacks}.
Such a characterization is a prerequisite for reasoning about the
effectiveness of countermeasures, and for making principled decisions about
their placement.
It would enable one, for example, to identify cases where countermeasures do not prevent all attacks, or where they are unnecessary.

\subsubsection*{Our approach}
We develop a novel, principled approach for detecting information flows introduced by speculative execution, and for reasoning about software defenses against \spectre{}-style attacks.
Our approach is backed by a semantic notion of security against speculative execution attacks, and it comes with an algorithm, based on symbolic execution, for proving the absence of speculative leaks.

\subsubsection*{Defining security}
The foundation of our approach is \textit{speculative non-interference}, a novel semantic notion of security against speculative execution attacks.
Speculative non-interference is based on comparing a
program with respect to two different semantics:
\begin{asparaitem}
\item The first is a standard, {\em non-speculative semantics}. We use this semantics as a proxy for the intended program
  behavior.
\item The second is a novel, {\em speculative semantics} that can
  follow mispredicted branches for a bounded number of steps before
  backtracking. We use this semantics to capture the effect of
  speculatively executed instructions.
\end{asparaitem}

In a nutshell, speculative non-interference requires that {\em
  speculatively executed instructions do not leak more information
  into the microarchitectural state than what the intended behavior does},
i.e., than what is leaked by the standard, non-speculative semantics.\looseness=-1

To capture ``leakage into the microarchitectural state'', we
consider an observer of the program execution that sees the locations
of memory accesses and jump targets. This observer model is commonly used for
characterizing ``side-channel free'' or ``constant-time''
code~\cite{MolnarPSW05,AlmeidaBBDE16} in the absence of detailed
models of the microarchitecture.

Under this observer model, an adversary may distinguish two initial program states if they yield different traces of memory locations and jump targets.
{\em Speculative non-interference} (SNI)  requires that two initial program states can be distinguished under the speculative semantics only if they can also be distinguished under the standard, non-speculative semantics.\looseness=-1

The speculative semantics, and hence SNI, depends on the decisions taken by a branch predictor. 
We show that one can abstract from the specific predictor by considering a worst-case predictor that mispredicts every branching decision. SNI w.r.t. this worst-case predictor implies SNI w.r.t. a large class of real-world branch predictors, without introducing false alarms.

\subsubsection*{Checking speculative non-interference}

We propose \tool{}, an algorithm to automatically prove that programs satisfy SNI.
Given a program $p$, \tool{} uses symbolic execution with respect to the speculative semantics and the worst-case branch predictor to derive a concise representation of the traces of memory accesses and jump targets during execution along all possible program paths. 

Based on this representation, \tool creates an SMT
formula that captures that, whenever two initial program states produce the
same memory access patterns in the standard semantics, they also
produce the same access patterns in the speculative
semantics. Validity of this formula for each program path implies
speculative non-interference.

\subsubsection*{Case studies}
We implement a prototype of \tool{}, with a front end for parsing (a subset of) x86 assembly and a back end for solving SMT formulas using the Z3 SMT solver. 
We perform two case studies where we evaluate the precision and scalability of \tool{}.\looseness=-1

\begin{asparaitem}
\item For evaluating precision, we analyze the 15 variants of \spectre{} v1~by Kocher~\cite{Kocher2018examples}.
We create a corpus of 240 microbenchmarks by compiling the 15 programs with the \clang{}, \textsc{Intel} \icc{}, and Microsoft \vcc{} compilers, using different levels of optimization and protection against \spectre{}.
Using \tool{}, we successfully (1) detect all leaks pointed out
in~\cite{Kocher2018examples}, (2) detect novel, subtle leaks that are
out of scope of existing approaches that check for known vulnerable code
patterns~\cite{oo7}, and (3) identify cases where compilers
unnecessarily inject countermeasures, i.e., opportunities for
optimization without sacrificing security.
  \item For evaluating scalability, we apply \tool{} to the codebase of the Xen Project Hypervisor. Our evaluation indicates that the cost of checking speculative non-interference is comparable to that of discovering symbolic paths, which shows that our approach does not exhibit bottlenecks beyond those inherited by symbolic execution.
  \end{asparaitem}

 \subsubsection*{Scope}
We focus on leaks introduced by speculatively executed instructions resulting from mispredicted branch outcomes, such as those exploited in \spectre{} v1~\cite{Kocher2018spectre}.
For an in-depth discussion of our approach's scope, see Section~\ref{sec:discussion}.

\subsubsection*{Summary of contributions}
Our contributions are both theoretical and practical.
On the theoretical side, we present \textit{speculative non-interference}, the first semantic notion of security against speculative execution attacks.
On the practical side, we develop \tool{}, an automated technique for
detecting speculative leaks (or prove their absence), and we use it to
detect subtle leaks -- and optimization opportunities -- in the way 
 compilers inject \spectre{} countermeasures.

\tool{} is available at \texttt{\url{https://spectector.github.io}}. \onlyShortVersion{An extended version of this paper containing proofs of the technical results is available at~\cite{techReport}.}

\section{Illustrative example}\label{sec:illustration}

\begin{figure}
	
\begin{lstlisting}[style=Cstyle]
if (y < size)
 temp &= B[A[y] * 512];
\end{lstlisting}
\caption{\spectre{} variant 1 - C code}\label{figure:spectrev1:c-code}
\end{figure}

\begin{figure}
\begin{lstlisting}[style=ASMstyle]
 mov	size, %rax
 mov	y, %rbx
 cmp	%rbx, %rax
 jbe	END
 mov	A(%rbx), %rax
 shl	$9, %rax
 mov	B(%rax), %rax
 and	%rax, temp
\end{lstlisting}
\caption{\spectre{} variant 1 - Assembly code}\label{figure:spectrev1:asm-code}
\end{figure}

To illustrate our approach, we show how  \tool{} applies to the \spectre{}  v1 example~\cite{Kocher2018spectre} shown in Figure~\ref{figure:spectrev1:c-code}.

\para{Spectre v1}
The program checks whether the index stored in the variable \inlineCcode{y} is less than the size of the array \inlineCcode{A}, stored in the variable \inlineCcode{size}.
If that is the case, the program retrieves \inlineCcode{A[y]}, amplifies it with a multiple (here: \inlineCcode{512}) of the cache line size, and uses the result as an address for accessing the array \inlineCcode{B}.\looseness=-1

If \inlineCcode{size} is not cached, evaluating the branch condition requires traditional processors to wait until \inlineCcode{size} is fetched from main memory.
Modern processors instead speculate on the condition's outcome and continue the computation.
Hence, the memory accesses in line 2 may be executed even if $\inlineCcode{y} \geq \inlineCcode{size}$.\looseness=-1

When \inlineCcode{size} becomes available, the processor checks whether the speculated branch is the correct one.
If it is not, it rolls back the  architectural (i.e., ISA) state's changes and executes the correct branch.
However, the speculatively executed memory accesses leave a footprint
in the microarchitectural state, in particular in the cache, which enables an adversary to retrieve 
\inlineCcode{A[y]}, even for $\inlineCcode{y} \geq \inlineCcode{size}$, by probing the array \inlineCcode{B}.

\para{Detecting leaks with \tool{}}
\tool{} automatically detects leaks introduced by speculatively executed instructions, or proves their absence.
Specifically, \tool{} detects a leak whenever executing the program
under the speculative semantics, which captures that the execution can
go down a mispredicted path for a bounded number of steps, leaks more
information into the microarchitectural state than executing the program
under a non-speculative semantics. 

To illustrate how \tool{} operates, we consider the x86 assembly\footnote{We use a simplified AT\&T syntax without operand sizes} translation of Figure~\ref{figure:spectrev1:c-code}'s program (cf.~Figure~\ref{figure:spectrev1:asm-code}).

\tool{} performs symbolic execution with respect to the speculative
semantics to derive a concise representation of the concrete traces of memory
accesses and program counter values along each path of the program.
These symbolic traces capture the program's effect on the microarchitectural state. 

During speculative execution, the speculatively executed parts are determined by the predictions of the branch predictor.
As shown in Section~\ref{sec:automation:always-mispred}, leakage due to speculative execution is maximized under a branch predictor that mispredicts every branch. 
The code in Figure~\ref{figure:spectrev1:asm-code} yields two
symbolic traces w.r.t. the speculative semantics that mispredicts every branch:\footnote{For simplicity of presentation, the example traces capture only
  loads but not the program counter. }
\begin{equation}\label{tr:inbounds}
\startObsKywd{} \concat\rollbackObsKywd{}\concat\tau \quad \text{when} \quad
  \inlineASMcode{y} < \inlineASMcode{size}
\end{equation}\vspace{-15pt}
\begin{equation} \label{tr:outbounds}
\startObsKywd{} \concat \tau  \concat
    \rollbackObsKywd{} \quad \text{when} \quad \inlineASMcode{y} \geq \inlineASMcode{size}
\end{equation}
where
$\tau=\loadObs{(\inlineCcode{A} + \inlineASMcode{y} )} \concat
\loadObs{(\inlineASMcode{B} +\inlineASMcode{A[y] * 512})}$.  Here, the
argument of $\loadObsKywd$ is visible to the observer, while
$\startObsKywd{}$ and $\rollbackObsKywd{}$ denote the start and the
end of a misspeculated execution. The traces of the {\em
  non-speculative} semantics are obtained from those of the
speculative semantics by removing all observations in between
$\startObsKywd{}$ and $\rollbackObsKywd{}$. 

Trace~\ref{tr:inbounds} shows that whenever \inlineASMcode{y} is in
bounds (i.e., $\inlineASMcode{y} < \inlineASMcode{size}$) the
observations of the speculative semantics and the non-speculative
semantics coincide (i.e. they are both $\tau$). In contrast,
Trace~\ref{tr:outbounds} shows that whenever
$\inlineASMcode{y} \geq \inlineASMcode{size}$, the speculative
execution generates observations $\tau$ that depend on \inlineASMcode{A[y]}
whose value is not visible in the non-speculative
execution.
This is flagged as a leak by \tool{}.

\para{Proving security with \tool{}}
The \clang{} 7.0.0 C++
compiler implements a countermeasure, called
speculative load hardening~\cite{spec-hard}, that applies conditional
masks to addresses to prevent leaks into the
microarchitectural state. Figure~\ref{figure:spectrev1:asm-code-slh}
depicts the protected output of \clang{} on the program from
Figure~\ref{figure:spectrev1:c-code}.

\begin{figure}[h]
\begin{lstlisting}[style=ASMstyle]
 mov    size, %rax
 mov    y, %rbx
 mov    $0, %rdx              
 cmp    %rbx, %rax
 jbe    END
 cmovbe $-1, %rdx             
 mov    A(%rbx), %rax
 shl    $9, %rax
 or     %rdx, %rax 
 mov    B(%rax), %rax
 or     %rdx, %rax             
 and    %rax, temp
\end{lstlisting}
%\vspace{-10pt}
\caption{\spectre{} variant 1 - Assembly code with speculative load hardening.
\clang{} inserted instructions 3, 6, 9, and 11.}\label{figure:spectrev1:asm-code-slh}
\end{figure}

The symbolic execution of the speculative semantics produces, as before,
 Trace~\ref{tr:inbounds} and Trace~\ref{tr:outbounds}, but with 
\begin{equation*}
\tau=\loadObs{ (\inlineCcode{A}
    + \inlineASMcode{y} )} \concat \loadObs{(\inlineASMcode{B}
    +(\inlineASMcode{A[y] * 512}) \,|\, \mathit{mask})},
\end{equation*}
where
$\mathit{mask} = \ite{\inlineASMcode{y} <
  \inlineASMcode{size}}{\texttt{0x0}}{\texttt{0xFF..FF}}$ corresponds
to the conditional move in line $6$ and $|$ is a bitwise-or operator. 
Here, $\ite{\inlineASMcode{y} <
  \inlineASMcode{size}}{\texttt{0x0}}{\texttt{0xFF..FF}}$ is a symbolic if-then-else expression evaluating to $\texttt{0x0}$ if $\inlineASMcode{y} <
  \inlineASMcode{size}$ and to $\texttt{0xFF..FF}$ otherwise.

The analysis of Trace 1 is as before. For Trace 2, however, \tool{}
determines (via a query to Z3~\cite{z3}) that, for all
$\inlineASMcode{y}\ge \inlineASMcode{size}$ %-- which we assume is public -- 
there is exactly {\em one} observation that the adversary can
make during the speculative execution, namely
$\loadObs{ (\inlineCcode{A} + \inlineASMcode{y} )} \concat
\loadObs{(\inlineASMcode{B} + \texttt{0xFF..FF})}$, from which it concludes that no
information leaks into the microarchitectural state, i.e., the
countermeasure is effective in securing the program. See Section~\ref{sec:casestudies}
for examples where \tool{} detects that countermeasures are not applied effectively.

\section{Language and Semantics}\label{sec:language}
We now introduce \lang{}, a core assembly language which we use for defining speculative non-interference and describing \tool{}.

\subsection{Syntax}
The syntax of \lang{} is defined in Figure \ref{figure:language:syntax}.
Expressions are built from a set of register identifiers $\Var$, which contains a designated element $\pc$ representing the program counter, and a set $\Val$ of values, which consists of the natural numbers and $\bot$.
\lang{} features eight kinds of instructions: a $\pskip$ instruction, (conditional) assignments, load and store instructions, branching instructions, indirect jumps, and speculation barriers $\pbarrier{}$.
Both conditional assignments and speculation barriers are commonly used to implement \spectre{} countermeasures~\cite{Intel,spec-hard}.\looseness=-1

A \lang{} program is a sequence of pairs $\cmd{n}{i}$, where $i$ is an instruction and $n \in \Nat$ is a value representing the instruction's label.
We say that a program is \textit{well-formed} if (1) it does not contain duplicate labels, (2) it contains an instruction labeled with $0$, i.e., the initial instruction, and (3) it does not contain branch instructions of the form $\cmd{n}{\pjz{x}{n+1}}$.
In the following we consider only well-formed programs.

We often treat programs $p$ as partial functions from natural numbers to instructions.
Namely, given a program $p$ and a number $n \in \Nat$, we denote by $\select{p}{n}$ the instruction labelled with $n$ in $p$ if it exists, and $\bot$ otherwise.

\begin{example}\label{example:language:spectrev1}
	The \spectre{} v1 example from Figure~\ref{figure:spectrev1:c-code} can be expressed in \lang{} as follows:
	\begin{align*}
		& 	\cmd{0}{\passign{x}{\mathtt{y} < \mathtt{size}}}\\
		&	\cmd{1}{\pjz{x}{\bot}}\\
		&	\cmd{2}{\pload{z}{\mathtt{A} + \mathtt{y}}}\\
		&	\cmd{3}{\passign{z}{z * 512}}\\
		&	\cmd{4}{\pload{w}{\mathtt{B} + z}}\\
		&	\cmd{5}{\passign{\mathtt{temp}}{\mathtt{temp}\ \&\ w}}
	\end{align*}
	Here, registers $\mathtt{y}$, $\mathtt{size}$, and $\mathtt{temp}$ store the respective variables.
	Similarly, registers $\mathtt{A}$ and $\mathtt{B}$ store the memory addresses of the first elements of the arrays \inlineCcode{A} and \inlineCcode{B}.
\end{example}

\begin{figure}
\begin{tabular}{llcl}
\multicolumn{4}{l}{\bf Basic Types} \\
\textit{(Registers)} 	&  $x$		& $\in$ & $\Var$ \\
\textit{(Values)} 		&  $n, \lbl$ 		& $\in$ & $\Val = \Nat \cup \{\bot\}$  \\\\
\multicolumn{4}{l}{\bf Syntax} \\
\textit{(Expressions)} 	&  $e$		& $:=$ & $n \mid x \mid \unaryOp{e} \mid \binaryOp{e_1}{e_2}$ \\
\textit{(Instructions)} 	&  $i$ 		& $:=$ & $\pskip{} \mid \passign{x}{e} \mid \pload{x}{e} \mid $ \\
						&			&	   & $\pstore{x}{e} \mid \pjmp{e} \mid \pjz{x}{\lbl} \mid$ \\
						&			&	   & $ \pcondassign{x}{e}{e'} \mid \pbarrier{}$ \\
\textit{(Programs)}		&  $p$		& $:=$ & $\cmd{n}{i} \mid \pseq{p_1}{p_2}$
\end{tabular}
\caption{\lang{} syntax}\label{figure:language:syntax}
\end{figure}

\subsection{Non-speculative Semantics}\label{sect:language:non-speculative}
The standard, non-speculative semantics models the execution of \lang{} programs on a platform without speculation. 
This semantics is formalized as a ternary relation $\sigma \eval{p}{o}\sigma'$ mapping a configuration $\sigma$ to a configuration $\sigma'$, while producing an observation $o$.
Observations are used to capture what an adversary can see about a given execution trace. 
We describe the individual components of the semantics below.\looseness=-1

\begin{figure*}
	{
	\small
	\begin{mathpar}
			
	\inferrule[Load]
	{
	\select{p}{a(\pc)} = \pload{x}{e} \\
	x \neq \pc \\
	n = \exprEval{e}{a}
	}
	{
	\tup{m, a} \eval{p}{\loadObs{n}} \tup{m, a[\pc \mapsto a(\pc)+1, x \mapsto m(n)]}
	}

	\inferrule[Store]
	{
	\select{p}{a(\pc)} = \pstore{x}{e} \\
	n = \exprEval{e}{a}
	}
	{
	\tup{m, a} \eval{p}{\storeObs{n}} \tup{ m[n \mapsto a(x)], a[\pc \mapsto a(\pc)+1]}
	}

	\inferrule[CondUpdate-Sat]
	{
		p(a(\pc)) = \pcondassign{x}{e}{e'}\\
		\exprEval{e'}{a} = 0\\
		x \neq \pc
	}
	{
		\tup{m,a} \eval{p}{} \tup{m,a[\pc \mapsto a(\pc) + 1, x \mapsto \exprEval{e}{a}]}
	}

	\inferrule[Beqz-Sat]
	{
	\select{p}{a(\pc)} = \pjz{x}{\lbl} \\
	a(x) = 0
	}
	{
	\tup{m, a} \eval{p}{\pcObs{\lbl}} \tup{ m, a[\pc \mapsto \lbl]}
	}

	\inferrule[Jmp]
	{
	\select{p}{a(\pc)} = \pjmp{e} \\
	\lbl = \exprEval{e}{a}
	}
	{
	\tup{m, a} \eval{p}{\pcObs{\lbl}} \tup{ m, a[\pc \mapsto \lbl]}
	}
	\end{mathpar}
	}
	\caption{Standard semantics for \lang{} program $p$ -- selected rules}\label{figure:language:semantics}
	\end{figure*}

\para{Configurations}
A \textit{configuration} $\sigma$ is a pair $\tup{m, a}$ of a \textit{memory} $m\in\Mem$ and a {\em register assignment}  $a\in \Assgn$, modeling the state of the computation.
Memories $m$ are functions mapping memory addresses, represented by natural numbers, to values in $\Val$. 
Register assignments $a$ are functions mapping register identifiers to values.
We require that $\bot$ can only be assigned to the program counter $\pc$, signaling termination.
A configuration $\tup{m,a}$ is \textit{initial} (respectively \textit{final}) if $a(\pc)=0$ (respectively $a(\pc)=\bot$). 
We denote the set $\Mem \times \Assgn$ of all configurations by $\Conf$.

\para{Adversary model and observations}
We consider an adversary that observes the program counter and the locations of memory accesses during computation.
This adversary model is commonly used to formalize timing side-channel free code~\cite{MolnarPSW05,AlmeidaBBDE16}, without requiring microarchitectural models.
In particular, it captures leakage through caches without requiring an explicit cache model.

We model this adversary in our semantics by annotating transactions with observations  $\loadObs{n}$ and $\storeObs{n}$, which expose read and write accesses to an address $n$, and observations $\pcObs{n}$, which expose the value of the program counter.
We denote the set of all observations by $\Obs$.

\para{Evaluation relation}
We describe the execution of \lang{} programs using the evaluation relation $\eval{p}{} \subseteq \Conf \times \Obs \times \Conf$. 
Most of the rules defining $\eval{p}{}$ are fairly standard, which is why Figure~\ref{figure:language:semantics} presents only a selection. 
We refer the reader to Appendix~\ref{appendix:non-speculative-semantics} for the remaining rules.

The rules \textsc{Load} and \textsc{Store} describe the behavior of instructions $\pload{x}{e}$ and $\pstore{x}{e}$ respectively.
The former assigns to the register $x$ the memory content at the address $n$ to which expression $e$ evaluates; the latter stores the content of $x$ at that address.
Both rules expose the address $n$ using observations and increment the program counter.

The rule \textsc{CondUpdate-Sat} describes the behavior of a conditional update $\pcondassign{x}{e}{e'}$ whose condition $e'$ is satisfied.
It first checks that the condition $e'$ evaluates to  $0$.
It then updates the register assignment $a$ by storing in $x$ the value of $e$, and by incrementing $\pc$.

The rule \textsc{Beqz-Sat} describes the effect of the instruction $\pjz{x}{\lbl}$ when the branch is taken.
Under the condition that $x$ evaluates to $0$, it sets the program counter to $\lbl$ and exposes this change using the observation $\pcObs{\lbl}$.

Finally, the rule \textsc{Jmp} executes $\pjmp{e}$ instructions.
The rule stores the value of $e$ in the program counter and records this change using the observation $\pcObs{\lbl}$.

\para{Runs and traces}
The evaluation relation captures individual steps in the execution of a program.
Runs capture full executions of the program. We formalize them as triples $\tup{\sigma, \tau, \sigma'}$ consisting of an initial configuration~$\sigma$, a trace of observations~$\tau$, and a final configuration~$\sigma'$.
Given a program~$p$, we denote by~$\nspecTraces{p}$ the set of all possible runs of the non-speculative semantics, i.e., it contains all triples $\tup{\sigma, \tau, \sigma'}$ corresponding to executions $ \sigma \eval{p}{\tau}^* \sigma'$.
Finally, we denote by~$\nspecEval{p}{\sigma}$ the trace~$\tau$ such that there is a final configuration~$\sigma'$ for which~$\tup{\sigma, \tau, \sigma'} \in \nspecTraces{p}$.
In this paper, we only consider terminating programs.
Extending the definitions and algorithms to non-terminating programs is future work.

\section{Speculative semantics}\label{sect:language:speculative}

This section introduces a model of speculation that captures the execution of \lang{} programs on speculative in-order microarchitectures. 
We first informally explain this model in Section~\ref{modspec} before formalizing it in the rest of the section. 

\subsection{Modeling speculation}\label{modspec}
Non-branching instructions are executed as in the standard semantics. 
Upon reaching a branching instruction, the {\em prediction oracle}, which is a parameter of our model, is queried to obtain a branch prediction that is used to decide which of the two branches to execute speculatively. 

To enable a subsequent rollback in case of a misprediction, a snapshot of the current program configuration is taken, before starting a  \emph{speculative transaction}.
In this speculative transaction, the program is executed speculatively along the predicted branch for a bounded number of computation steps.
Computing the precise length $w$ of a speculative transactions would (among other aspects) require a detailed model of the memory hierarchy. To abstract from this complexity, in our model $w$ is also provided by the prediction oracle.

At the end of a speculative transaction, the correctness of the prediction is evaluated:
\begin{asparaitem}
\item If the prediction was {\em correct}, the transaction is committed and the computation continues using the current  configuration.
\item If the prediction was {\em incorrect}, the transaction is aborted, the original  configuration is restored, and the computation continues on the correct branch. 
\end{asparaitem}

In the following we formalize the behavior intuitively described above in the {\em speculative semantics}.
The main technical challenge lies in catering for nested branches and transactions.

\subsection{Prediction oracles}
In our model, prediction oracles serve two distinct purposes: (1) predicting branches, and (2) determining the speculative transactions' lengths. 
A \textit{prediction oracle} $\bp$ is a partial function that takes as input a program~$p$, a branching history~$h$, and a label~$\lbl$ such that $p(\lbl)$ is a branching instruction, and that returns as output a pair $\tup{\lbl',w} \in \Val \times \Nat$, where $\lbl'$ represents the predicted branch (i.e., $\lbl' \in \{\lbl+1, \lbl''\}$ where $p(\lbl) = \pjz{x}{\lbl''}$) and $w$ represents the speculative transaction's length.

Taking into account the {\em branching history} enables us to capture history-based branch predictors, a general class of branch predictors that base their decisions on the sequence of branches leading up to a branching instruction.
Formally, a branching history is a sequence of triples $\tup{\lbl, \id, \lbl'}$, where $\lbl \in \Val$ is the label of a branching instruction, $\lbl' \in \Val$ is the label of the predicted branch, and $\id\in\Nat$ is the identifier
 of the transaction in which the branch is executed. 

A prediction oracle $\bp$ has \textit{speculative window at most} $w$ if the length of the transactions generated by its predictions is at most $w$, i.e., for all programs $p$, branching histories $h$, and labels $\lbl$, $\bp(p, h, \lbl) = \tup{\lbl',w'}$, for some $\lbl'$ and with $w' \leq w$.

\begin{example}\label{example:speculative-semantics:btfnt-predictor}
The ``backward taken forward not taken'' (BTFNT) branch predictor, implemented in early CPUs~\cite{hennessy2011computer}, predicts the branch as taken if the target instruction address is lower than the program counter.
It can be formalized as part of a prediction oracle $\mathit{BTFNT}$, for a fixed speculative window~$w$, as follows:
$\mathit{BTFNT}(p, h, \lbl) = \tup{\mathit{min}(\lbl+1, \lbl'),w}$, where  $p(\lbl) = \pjz{x}{\lbl'}$.
\end{example}

Dynamic branch predictors, such as simple 2-bit predictors and more complex correlating or tournament predictors~\cite{hennessy2011computer}, can also be formalized using prediction oracles.

\subsection{Speculative transactions}
To manage each ongoing speculative transaction\footnote{Due to nesting, multiple transactions may be happening simultaneously.}, the speculative semantics needs to remember a snapshot $\sigma$ of the configuration prior to the start of the transaction, the length~$w$ of the transaction (i.e., the number of instructions left to be executed in this transaction), the branch prediction~$\lbl$ used at the start of the transaction, and the transaction's identifier~$\id$.
We call such a 4-tuple $\tup{\sigma,\id, w, \lbl} \in \Conf \times \Nat \times \Nat \times \Val$, a \textit{speculative state}, and we denote by $\specStates$ the set  of all speculative states.\looseness=-1

Nested transactions are represented by sequences of speculative states.
We use standard notation for sequences: $\finiteSequences{S}$ is the set of all finite sequences over the set $S$, $\emptysequence$ is the  empty sequence, and  $s_1 \concat s_2$ is the concatenation of sequences $s_1$ and $s_2$.\looseness=-1

We use the following two helper functions to manipulate sequences of speculative states $s \in \specStates^*$:
\begin{asparaitem}
	\item $\mathit{decr} : \specStates^* \to \specStates^*$ decrements by $1$ the length of all transactions in the sequence.
	\item  $\mathit{zeroes} : \specStates^* \to \specStates^*$ sets to $0$ the length of all transactions in the sequence.
	\item The predicate $\exhausted{s}$ holds if and only if none of the transactions in $s$ has remaining length~$0$.
\end{asparaitem}

In addition to branch and jump instructions, speculative transactions can also modify the program counter:
rolling back a transaction results in resetting the program counter to the one in the correct branch.
To expose such changes to the adversary, we extend the set $\Obs$ of observations with elements of the form $\startObs{\id}$, $\commitObs{\id}$, and $\rollbackObs{\id}$, to denote start,  commit, and  rollback of a speculative transaction $\id$.
$\ExtObs$ denotes the set of extended observations.

\subsection{Evaluation relation}
The speculative semantics operates on \textit{extended configurations}, which are 4-tuples $\tup{\ctr, \sigma, s,h} \in \ExtConf$ consisting of a global counter $\ctr \in \Nat$ for generating transaction identifiers, a configuration $\sigma \in \Conf$,  a sequence $s$ of speculative states representing the ongoing speculative transactions, and a branching history $h$. 
Along the lines of the standard semantics, we describe the speculative semantics of \lang{} programs under a prediction oracle $\bp$ using the relation $\speval{p}{\bp}{} \subseteq \ExtConf \times \ExtObs^* \times \ExtConf$. 
The rules are given in Figure~\ref{figure:language:speculative-execution} and are explained below:

\textsc{Se-NoBranch} captures the behavior of non-branching instructions as long as the length of all speculative states in $s$ is greater than $0$, that is, as long as $\exhausted{s}$ holds.
In this case, $\speval{p}{\bp}{}$ mimics the behavior of the non-speculative semantics~$\eval{p}{}$.
If the instruction is not a speculation barrier, the lengths of all speculative transactions are decremented by~1 using $\mathit{decr}$.
In contrast, if the instruction is a speculation barrier $\pbarrier$, the length of all transactions is set to $0$ using $\mathit{zeroes}$. 
In this way, $\pbarrier$ forces the termination (either with a commit or with a rollback) of all ongoing speculative transactions.

\textsc{Se-Branch} models the behavior of branch instructions.
The rule
\begin{inparaenum}[(1)]
\item queries the prediction oracle $\bp$ to obtain a prediction $\tup{\lbl,w}$ consisting of the predicted next instruction address~$\lbl$ and the length of the transaction $w$,
\item sets the program counter to $\lbl$,
\item decrements the length of the transactions in $s$,
\item increments the transaction counter $\ctr$, 
\item appends a new speculative state with  configuration $\sigma$, identifier $\ctr$, transaction's length $w$, and predicted instruction address $\lbl$, and
\item updates the branching history by appending an entry $\tup{a(\pc), \ctr, \lbl}$ modeling the prediction.
\end{inparaenum}
The rule also records the start of the speculative execution and the change of the program counter through observations.

\textsc{Se-Commit} captures a speculative transaction's commit.
It is executed whenever a speculative state's  remaining length  reaches $0$.
Application of the rule requires that the prediction made for the transaction is correct, which is checked by comparing the predicted address $\lbl$ with the one obtained by executing one step of the non-speculative semantics starting from the configuration $\sigma'$.
The rule records the transaction's commit through an observation, and it updates the branching history according to the branch decision that has been taken.

\textsc{Se-Rollback} captures a speculative transaction's rollback.
The rule checks that the prediction is incorrect (again by comparing the predicted address $\lbl$ with the one obtained from the non-speculative semantics), and it restores the configuration stored in $s$.
Rolling back a transaction also terminates the speculative execution of all the nested transactions.
This is modeled by dropping the portion $s'$ of the speculative state associated with the nested transactions.
The rule also produces observations recording the transaction's rollback and the change of the program counter, and it updates the branching history by recording the branch instruction's correct outcome. % of the branch instruction.

\para{Runs and traces}
Runs and traces are defined analogously to the non-speculative case:
Given a program $p$ and an oracle~$\bp$, we denote by~$\specTraces{p}{\bp}$ the set of all possible runs of the speculative semantics.
By $\specEval{p}{\bp}{\sigma}$ we denote the trace~$\tau$ such that there is a final configuration~$\sigma'$ for which  $\tup{\sigma, \tau, \sigma'} \in \specTraces{p}{\bp}$.

\begin{example}\label{example:speculative-semantics:spectrev1}

For illustrating the speculative semantics, we execute the program from Ex.~\ref{example:language:spectrev1} with the  oracle from Ex.~\ref{example:speculative-semantics:btfnt-predictor} and a  configuration $\tup{0,\tup{m,a}, \emptysequence, \emptysequence}$ where $a(\mathtt{y})\geq a(\mathtt{size})$.\looseness=-1
First, the rule \textsc{Se-NoJump} is applied to execute the assignment  $\passign{x}{\mathtt{y} < \mathtt{size}}$.
Then, the branch instruction $\pjz{x}{\bot}$ is reached and so rule \textsc{Se-Jump} applies.
This produces the observations $\startObs{0}$, modeling the beginning of a speculative transaction with id~0, and $\pcObs{2}$, representing the program counter's change.
Next, rule \textsc{Se-NoJump} applies three times to execute the instructions 2--5, thereby producing the observations $\loadObs{v_1}$ and $\loadObs{v_2}$ that record the memory accesses.
Finally, rule \textsc{Se-Rollback} applies, which terminates the speculative transaction and rolls back its effects.
This rule produces the observations  $\rollbackObs{0}$ and $\pcObs{\bot}$.
Thus, executing the program produces the  trace:
$$\tau := \startObs{0} \concat \pcObs{2} \concat \loadObs{v_1} \concat \loadObs{v_2} \concat \rollbackObs{0} \concat \pcObs{\bot}\ ,$$ where $v_1 = a(\mathtt{A}) + a(\mathtt{y})$ and $v_2 = a(\mathtt{B}) + m(v_1) * 512$.
\end{example}

\begin{figure*}[!h]
{
\small
\begin{mathpar}
\inferrule[Se-NoBranch]
{
p(\sigma(\pc)) \neq \pjz{x}{\lbl}\\
\sigma \eval{p}{\tau} \sigma'\\
 \exhausted{s} \\\\
 s' = {
 	\begin{cases}
 		\decrement{s} & \text{if}\ p(\sigma(\pc)) \neq \pbarrier\\
		\zeroes{s}  & 	\text{otherwise}
 	\end{cases}
 }
}
{
\tup{\ctr, \sigma, s, h} \speval{p}{\bp}{\tau} \tup{\ctr, \sigma', s', h}
}	

\inferrule[Se-Branch]
{
p(\sigma(\pc)) = \pjz{x}{\lbl'}\\
\bp(p,h,\sigma(\pc)) = \tup{\lbl,w}\\
\sigma = \tup{m ,a} \\\\
 \exhausted{s} \\
 s' = \decrement{s} \concat \tup{\sigma,\ctr, w, \lbl}\\
 \id = \ctr
}
{
\tup{\ctr, \sigma, s, h} \speval{p}{\bp}{\startObs{\id} \concat \pcObs{\lbl}} \tup{ \ctr +1 , \tup{ m, a[\pc \mapsto \lbl]}, s', h \concat \tup{a(\pc), \id, \lbl}}
}

\inferrule[Se-Commit]
{
\sigma' \eval{p}{\tau} \tup{ m, a}\\
\lbl = a(\pc)\\
\exhausted{s'}\\\\
h' =  h \concat \tup{\sigma'(\pc), \id, a(\pc)}
}
{
\tup{\ctr, \sigma, s \concat \tup{\sigma',\id, 0, \lbl} \concat s', h} \speval{p}{\bp}{\commitObs{\id}} \tup{\ctr, \sigma, s \concat s', h'}
}

\inferrule[Se-Rollback]
{
\sigma' \eval{p}{\tau} \tup{ m, a}\\
\lbl \neq a(\pc)\\
\exhausted{s'}\\\\
h' =  h \concat \tup{\sigma'(\pc), \id, a(\pc)}
}
{
\tup{\ctr, \sigma, s \concat \tup{\sigma',\id, 0, \lbl} \concat s', h} \speval{p}{\bp}{\rollbackObs{\id} \concat \pcObs{a(\pc)}} \tup{\ctr, \tup{m,a}, s,h'}
}
\end{mathpar}
}
\caption{Speculative execution for \lang{} for a program $p$ and a prediction oracle $\bp$}\label{figure:language:speculative-execution}
\end{figure*}

\subsection{Speculative and Non-speculative Semantics}

We conclude this section by connecting  the speculative and non-speculative semantics. 
For this, we introduce two projections of speculative traces $\tau$:
\begin{asparaitem}
	\item  the {\em non-speculative projection} $\nspecProject{\tau}$ is the trace obtained by removing from $\tau$ (1) all substrings that correspond to rolled-back transactions, i.e. all substrings $\startObs{\id} \concat \tau' \concat \rollbackObs{\id}$, and (2) all extended observations.
	\item the {\em speculative projection} $\specProject{\tau}$ is the trace   produced by rolled-back transactions, i.e. the complement of $\nspecProject{\tau}$. 
\end{asparaitem}
We lift projections $\specProject{}$ and $\nspecProject{}$ to sets of  runs  in the natural way.
Then, a program's non-speculative behavior can be obtained from its speculative behavior by dropping all speculative observations, i.e., by applying $\nspecProject{\tau}$ to all of its runs~$\tau$:
\begin{restatable}{prop}{speculativeAndNonSpeculative}
\label{proposition:speculative-and-non-speculative}
Let $p$ be a program and $\bp$ be a prediction oracle.
Then, $\nspecTraces{p} = \nspecProject{\specTraces{p}{\bp}}$.
\end{restatable}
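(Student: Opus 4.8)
The plan is to establish the two inclusions of the set equality separately, both resting on one observation about the rules in Figure~\ref{figure:language:speculative-execution}: when a speculative transaction ends, \textsc{Se-Commit} and \textsc{Se-Rollback} evaluate exactly \emph{one} step of the non-speculative semantics from the snapshot $\sigma'$ stored in the transaction, namely the branch step that $p(\sigma'(\pc))$ would have taken non-speculatively; moreover the result of \textsc{Se-Rollback} depends \emph{only} on that snapshot (the speculatively reached configuration is thrown away), and the program-counter observation it emits right after $\rollbackObs{\id}$ is exactly the observation of that non-speculative branch step. Since $\nspecProject{\cdot}$ deletes the block $\startObs{\id}\concat\tau'\concat\rollbackObs{\id}$ but keeps the $\obsKywd{pc}$ observation that follows it, and elsewhere just deletes the extended observations $\startObs{\id}$, $\commitObs{\id}$, $\rollbackObs{\id}$, a rolled-back transaction contributes to the projection precisely one non-speculative branch step, and a committed one contributes one branch step plus the (projected) observations produced inside it.

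\textbf{Inclusion $\nspecProject{\specTraces{p}{\bp}}\subseteq\nspecTraces{p}$.} Take a run $\tup{0,\sigma_0,\emptysequence,\emptysequence}\speval{p}{\bp}{\tau}^{*}\tup{\ctr_f,\sigma_f,\emptysequence,h_f}$; its $\Conf$-components are the initial $\sigma_0$ and the final $\sigma_f$. First I would observe that rolled-back transactions have non-crossing lifetimes: because \textsc{Se-Rollback} on a transaction discards every speculative state nested above it, a transaction can reach its own \textsc{Se-Rollback} only if no transaction below it on the stack has rolled back meanwhile, so the spans of rolled-back transactions form a forest. I would then define the \emph{architectural skeleton} of the run by deleting every \textsc{Se-Commit} step and, for each outermost rolled-back transaction, deleting its \textsc{Se-Branch} step and all steps strictly inside its span, while keeping its \textsc{Se-Rollback} step. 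The surviving steps are \textsc{Se-NoBranch} steps, \textsc{Se-Branch} steps of committed transactions, and \textsc{Se-Rollback} steps of outermost rolled-back transactions; reading the last kind as ``one non-speculative step from the stored snapshot'', each surviving step performs exactly one transition of $\eval{p}{}$, consecutive surviving steps meet in a common configuration, and the sequence goes from $\sigma_0$ to $\sigma_f$. (A surviving \textsc{Se-NoBranch} or \textsc{Se-Branch} step runs while every transaction still on the stack is committed, i.e.\ while the computation is on the architecturally correct path, which is why the executed instruction is the intended one; the snapshot reading of \textsc{Se-Rollback} is exactly the bridging observation.) On traces, these deletions coincide with $\nspecProject{\cdot}$ and leave precisely the concatenation of the surviving steps' observations, so $\sigma_0\eval{p}{\nspecProject{\tau}}^{*}\sigma_f$, i.e.\ $\tup{\sigma_0,\nspecProject{\tau},\sigma_f}\in\nspecTraces{p}$.

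\textbf{Inclusion $\nspecTraces{p}\subseteq\nspecProject{\specTraces{p}{\bp}}$.} Both $\eval{p}{}$ and $\speval{p}{\bp}{}$ are deterministic once $\bp$ is fixed (the applicable rule is pinned down by the current instruction and by whether the innermost speculative state has reached length $0$), and the speculative semantics terminates whenever the program does: every transaction has a finite length, so each misspeculated excursion is finite, while the committed steps mimic the terminating non-speculative run. Hence for $\tup{\sigma_0,\tau_{\mathit{ns}},\sigma_f}\in\nspecTraces{p}$ the speculative run from $\tup{0,\sigma_0,\emptysequence,\emptysequence}$ exists and ends in some final configuration $\sigma_f'$ with some trace $\tau$; the previous inclusion gives $\sigma_0\eval{p}{\nspecProject{\tau}}^{*}\sigma_f'$, and determinism of $\eval{p}{}$ forces $\nspecProject{\tau}=\tau_{\mathit{ns}}$ and $\sigma_f'=\sigma_f$, so $\tup{\sigma_0,\tau_{\mathit{ns}},\sigma_f}\in\nspecProject{\specTraces{p}{\bp}}$.

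\textbf{Main obstacle.} The delicate part is making the first inclusion precise --- in particular that the skeleton extraction is well defined and does yield a non-speculative run. The linchpin is the non-crossing property of rolled-back transactions; on top of it one needs a careful case analysis handling that a committed transaction may still carry \emph{live nested descendants} at commit time (so transactions are not resolved in last-in-first-out order, yet the deletions still compose), together with a check against the full rule set (including the branch rules deferred to the appendix) that the program-counter observation which survives a rollback or commit is syntactically the one emitted by the matching non-speculative branch step. A minor loose end is the auxiliary lemma, used in the second inclusion, that the speculative semantics terminates on terminating programs.
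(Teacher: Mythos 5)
Your first inclusion ($\nspecProject{\specTraces{p}{\bp}}\subseteq\nspecTraces{p}$) is essentially the paper's own argument: the paper proves it by incrementally rebuilding the non-speculative run, with one lemma per case (\textsc{Se-NoBranch} steps mirror $\eval{p}{}$; the \textsc{Se-Branch} step of an eventually committed transaction coincides with the correct non-speculative branch step, using a separate lemma that committed transactions were correctly predicted; a whole rolled-back transaction collapses to the single branch step taken from the stored snapshot, with its $\pcObsKywd$ observation emitted after $\rollbackObsKywd$; \textsc{Se-Commit} leaves the configuration unchanged). Your ``architectural skeleton'' is exactly this case analysis packaged as a deletion procedure, and your non-crossing/forest observation corresponds to the paper applying its rollback lemma to each (outermost) rolled-back transaction as a block. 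Where you genuinely diverge is the second inclusion: the paper proves it \emph{constructively}, by an $\textsc{expand}$ procedure that simulates each non-speculative step under $\speval{p}{\bp}{}$ (applying \textsc{Se-NoBranch}, or \textsc{Se-Branch} followed by running the mispredicted transaction to its rollback, which is finite because windows are decremented), so existence and termination of the speculative run fall out of the construction. You instead invoke determinism of $\speval{p}{\bp}{}$ plus a termination lemma and then reuse the first inclusion together with determinism of $\eval{p}{}$. This is a legitimate and arguably slicker route, but the termination claim is doing more work than a ``minor loose end'': your justification (``the committed steps mimic the terminating non-speculative run'') implicitly needs a prefix version of your skeleton argument (soundness for partial runs with a non-empty stack), since the inclusion as you stated it only applies to completed runs ending with an empty speculative state; without that, the argument risks circularity. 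Filling this in amounts to redoing something close to the paper's simulation argument, which is why the paper's constructive completeness proof avoids needing termination and determinism as standalone lemmas. Determinism of both semantics does hold (the oracle is a function, and \textsc{Se-Commit}/\textsc{Se-Rollback} can only fire on the topmost zero-length transaction), so once you add the prefix soundness/termination lemma your proof goes through.
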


\onlyTechReport{The proof of Proposition~\ref{proposition:speculative-and-non-speculative} is given in~\techReportAppendix{appendix:speculative-and-non-speculative}.}

\section{Speculative Non-interference}\label{sec:security}

This section introduces \emph{speculative non-interference} (SNI), a semantic notion of security characterizing those information leaks that are introduced by speculative execution.

\subsection{Security policies}\label{sec:secpolicy}
Speculative non-interference is parametric in a policy that specifies which parts of the configuration are known or controlled by an adversary, i.e., ``public'' or ``low'' data.

Formally, a security policy $\policy$ is a finite subset of $\Var \cup \Nat$ specifying the low register identifiers and memory addresses.
Two configurations $\sigma, \sigma' \in \Conf$ are {\em indistinguishable} with respect to a policy $\policy$, written $\sigma \indist{\policy} \sigma'$, iff they agree on all registers and memory locations in $\policy$.

\begin{example}\label{example:security-condition:policy}
A policy $\policy$ for the program from Example~\ref{example:language:spectrev1} may state that the content of the registers $\mathtt{y}$, $\mathtt{size}$, $\mathtt{A}$, and $\mathtt{B}$ is non-sensitive, i.e., $\policy = \{\mathtt{y}, \mathtt{size},\mathtt{A}, \mathtt{B}\}$.
\end{example}
Policies need not be manually specified but can in principle be inferred from the context in which a piece of code executes, e.g., whether a variable is reachable from public input or not.

\subsection{Speculative non-interference}\label{sec:specnonint}

Speculative non-interference requires that executing a program under the speculative semantics  does not leak more information than executing the same program under the non-speculative semantics. 
Formally, whenever two indistinguishable configurations produce the same non-speculative traces, then they must also produce the same speculative traces.

\begin{definition}\label{def:gsni}
A program $p$ satisfies \textit{speculative non-interference} for a prediction oracle $\bp$ and a security policy $\policy$ iff 
for all initial configurations $\sigma,\sigma' \in \Init$,
if $\sigma \indist{\policy} \sigma'$ and $\nspecEval{p}{\sigma} = \nspecEval{p}{\sigma'}$, then $\specEval{p}{\bp}{\sigma} = \specEval{p}{\bp}{\sigma'}$.
\end{definition}

Speculative non-interference is a variant of non-interference. 
While  non-interference compares what is leaked by a program with a policy specifying the allowed leaks, speculative non-interference compares the program leakage under two semantics, the non-speculative and the speculative one.
The security policy and the non-speculative semantics, together, specify what the program may leak under the speculative semantics.\footnote{Conceptually, the non-speculative semantics can be seen as a declassification assertion for the speculative semantics~\cite{sabelfeld2009declassification}.}

\begin{example}
The program $p$ from Example~\ref{example:language:spectrev1} does not satisfy speculative non-interference for the BTFNT oracle from Example~\ref{example:speculative-semantics:btfnt-predictor} and the policy $\policy$ from Example~\ref{example:security-condition:policy}.
Consider two initial configurations $\sigma:= \tup{m,a}, \sigma':=\tup{m',a'}$ that agree on the values of $\mathtt{y}$, $\mathtt{size}$, $\mathtt{A}$, and $\mathtt{B}$ but disagree on the value of $\mathtt{B}[\mathtt{A}[\mathtt{y}] * 512]$.
Say, for instance, that $m(a(\mathtt{A}) + a(\mathtt{y})) = 0$ and $ m'(a'(\mathtt{A}) + a'(\mathtt{y})) = 1$.
Additionally, assume that $\mathtt{y} \geq \mathtt{size}$.

Executing the program under the non-speculative semantics produces the trace $\pcObs{\bot}$ when starting from $\sigma$ and $\sigma'$.
Moreover, the two initial configurations are indistinguishable with respect to the policy $\policy$.
However, executing $p$ under the speculative semantics produces two distinct traces $\tau = \startObs{0} \concat \pcObs{3} \concat \loadObs{v_1} \concat \loadObs{(a'(\mathtt{B})+0)} \concat \rollbackObs{0} \concat \pcObs{\bot}$ and $\tau' = \startObs{0} \concat \pcObs{3} \concat \loadObs{v_1} \concat \loadObs{(a'(\mathtt{B})+1)} \concat \rollbackObs{0} \concat \pcObs{\bot}$, where $v_1 = a(\mathtt{A}) + a(\mathtt{y}) = a'(\mathtt{A}) + a'(\mathtt{y})$.
Therefore, $p$ does not satisfy speculative non-interference.
\end{example}

\subsection{Always-mispredict speculative semantics}\label{sec:automation:always-mispred}
The speculative semantics and SNI are parametric in the prediction oracle $\bp$. 
Often, it is desirable obtaining guarantees w.r.t. {\em any} prediction oracle, since branch prediction models in modern CPUs are unavailable and as different CPUs employ different predictors.
To this end, we introduce a variant of the speculative semantics that facilitates such an analysis.

Intuitively, leakage due to speculative execution is maximized under a branch predictor that mispredicts every branch.
This intuition holds true unless speculative transactions are nested, where a correct prediction of a nested branch sometimes yields more leakage than a misprediction.

\begin{example}\label{ex:mispred}
Consider the following variation of the \spectre{} v1 example~\cite{Kocher2018spectre} from Figure~\ref{figure:spectrev1:c-code}, and assume that the function \verb!benign()! runs for longer than the speculative window and does not leak any information.
\begin{lstlisting}[style=Cstyle]
if (y < size)
	if (y-1 < size)
		benign();
	temp &= B[A[y] * 512];
\end{lstlisting}
Then, under a branch predictor that mispredicts every branch, the speculative transaction corresponding to the outer branch will be rolled back before reaching line 4.
On the other hand, given a correct prediction of the inner branch, line~4 would be reached and a speculative leak would be present.
\end{example}

A simple but inefficient approach to deal with this challenge would be to consider both cases, correct and incorrect prediction, upon every branch.
This, however, would result in an exponential explosion of the number of paths to consider.

To avoid this, we  introduce the {\em always-mispredict semantics} that differs from the speculative semantics in three key ways:
\begin{asparaenum}[(1)]
	\item It mispredicts every branch, hence its name. In particular, it is not parametric in the prediction oracle. 
	\item It initializes the length of every {\em non-nested}  transaction to~$w$, and the length of every {\em nested}  transaction to the remaining length of its enclosing  transaction decremented by $1$.\looseness=-1
	\item Upon executing instructions, only the remaining length of the innermost transaction is decremented.
\end{asparaenum}
The consequence of these modifications is that nested transactions do not reduce the number of steps that the semantics may explore the correct path for, after the nested transactions have been rolled back.
In Example~\ref{ex:mispred}, after rolling back the nested speculative transaction, the outer transaction continues as if the nested branch had been correctly predicted in the first place, and thus the speculative leak in line~4 is reached.

Modifications (1)-(3) are formally captured in the three rules \textsc{Am-NoBranch}, \textsc{Am-Branch}, and \textsc{Am-Rollback} given  in Appendix~\ref{appendix:always-mispredict}.
Similarly to $\specEval{p}{\bp}{\sigma}$, we denote by $\amEval{p}{w}{\sigma}$ the trace of observations obtained by executing the program $p$, starting from initial configuration $\sigma$ according to the always-mispredict evaluation relation with speculative window $w$.

Theorem~\ref{theorem:always-mispredict} \onlyTechReport{(proved in Appendix~\ref{appendix:always-mispredict-worst-case}) } states that checking SNI w.r.t. the always- mispredict semantics is sufficient to obtain security guarantees w.r.t. all prediction oracles.

\begin{restatable}{thm}{alwaysMispredict}
\label{theorem:always-mispredict}
A program $p$ satisfies SNI for a security policy $\policy$ and all prediction oracles~$\bp$ with speculative window at most $w$ iff 
for all initial configurations $\sigma,\sigma' \in \Init$,
if $\sigma \indist{\policy} \sigma'$ and $\nspecEval{p}{\sigma} = \nspecEval{p}{\sigma'}$, then $\amEval{p}{w}{\sigma} = \amEval{p}{w}{\sigma'}$.
\end{restatable}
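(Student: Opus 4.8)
The plan is to prove the two directions of the ``iff'' separately, with the key technical device being a correspondence between runs of the always-mispredict semantics with window $w$ and runs of the parametric speculative semantics under an arbitrary oracle $\bp$ with speculative window at most $w$. The essential observation is that the always-mispredict semantics is the ``most informative'' one: for any oracle $\bp$, every observation that the $\bp$-speculative semantics produces along a rolled-back transaction is also produced (at a corresponding position) by the always-mispredict semantics, because (i) mispredicting always yields at least as long a misspeculated execution as a correct prediction followed by the subsequent behavior, and (ii) the modified length bookkeeping (modifications (2)–(3) in Section~\ref{sec:automation:always-mispred}) ensures that an enclosing transaction in the always-mispredict run explores its correct path for exactly as many steps as it would have had the nested branch been predicted correctly. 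I would make this precise as a simulation lemma: for every oracle $\bp$ with window $\le w$ and every initial configuration $\sigma$, the trace $\specEval{p}{\bp}{\sigma}$ is, in a suitable sense, ``embedded'' in $\amEval{p}{w}{\sigma}$ --- more precisely, the non-speculative projections coincide (this is essentially Proposition~\ref{proposition:speculative-and-non-speculative} applied to both semantics, since both agree with $\nspecTraces{p}$ under $\nspecProject{\cdot}$), and the set of pairs $(\text{position},\text{observation})$ generated inside rolled-back transactions for $\bp$ is contained in that for always-mispredict.

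For the ``if'' direction (always-mispredict SNI implies SNI for all oracles with window $\le w$), I would argue contrapositively: suppose $p$ violates SNI for some oracle $\bp$ with window $\le w$, witnessed by $\sigma \indist{\policy} \sigma'$ with $\nspecEval{p}{\sigma} = \nspecEval{p}{\sigma'}$ but $\specEval{p}{\bp}{\sigma} \ne \specEval{p}{\bp}{\sigma'}$. Using the simulation lemma, the point of divergence between the two $\bp$-traces --- which must lie inside some rolled-back transaction, since the non-speculative projections agree --- lifts to a point of divergence between $\amEval{p}{w}{\sigma}$ and $\amEval{p}{w}{\sigma'}$: the differing observation (a $\loadObsKywd$/$\storeObsKywd$ address or a $\pcObsKywd$ value that depends on non-low data) is reproduced by the always-mispredict run because always-mispredicting explores a superset of the speculatively reachable instructions. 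Hence always-mispredict SNI also fails. The ``only if'' direction is the easier one: the always-mispredict semantics with window $w$ is itself realizable, up to the length-bookkeeping discrepancy, as (the observable behavior of) a speculative semantics under the oracle that mispredicts every branch and assigns window $w$; more carefully, one shows directly that any SNI violation under always-mispredict yields one under that worst-case oracle by replaying the same two executions, so SNI for all oracles with window $\le w$ in particular gives always-mispredict SNI.

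The main obstacle is the bookkeeping around \emph{nested} transactions, exactly the subtlety flagged in Example~\ref{ex:mispred}: I must show that the length-reassignment in modification (2) and the innermost-only decrement in modification (3) really do make the always-mispredict run reach \emph{every} instruction that some oracle could reach speculatively, without ever reaching \emph{fewer} (the latter would break soundness) --- and simultaneously that it does not overapproximate in a way that breaks the ``only if'' direction (i.e., the worst-case misprediction oracle must actually be able to match the always-mispredict observations). Formalizing this requires a careful induction on the structure of the derivation, tracking the stack $s$ of speculative states and relating, for each transaction identifier, its remaining length in the always-mispredict run to the maximum number of steps any window-$\le w$ oracle could keep the corresponding (possibly re-nested) transaction alive. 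I expect to need an auxiliary invariant stating that at every extended configuration, the remaining length of the innermost active transaction in the always-mispredict run equals $w$ minus the number of steps executed since the outermost enclosing transaction started, minus the contributions already ``spent'' --- and that rolled-back nested transactions leave this quantity for the enclosing transaction untouched.
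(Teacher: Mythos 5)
There is a genuine gap, and it sits in the direction you call ``the easier one.'' You claim that the always-mispredict semantics is realizable, up to bookkeeping, by the single oracle that mispredicts every branch with window $w$, so that any violation of the always-mispredict condition can be replayed under that fixed worst-case oracle. This is exactly what Example~\ref{ex:mispred} refutes: in the speculative semantics, $\mathit{decr}$ decrements \emph{all} pending transactions at every step, so under the mispredict-everything oracle the nested misprediction (e.g.\ into \texttt{benign()}) exhausts the outer transaction's window and the leaking instruction is never reached, whereas the always-mispredict semantics, by its modified bookkeeping, does reach it. Hence a divergence in $\amEval{p}{w}{\cdot}$ need not be reproducible under that oracle at all, and the ``only if'' direction does not follow from it. The paper instead proves this direction (Proposition~\ref{proposition:always-mispredict-worst-case}, completeness part) by \emph{constructing a violation-specific oracle}: from the pair of always-mispredict runs up to their first divergence it extracts which transactions are rolled back, and builds an oracle that mispredicts exactly those branches with window $w$ and \emph{correctly} predicts the remaining ones with window $0$ (indexing the prediction list by the length of the branching history), so that the speculative semantics mimics the always-mispredict bookkeeping and exposes the same divergence. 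Without some such per-violation construction your argument for this direction fails.

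The other direction is closer to workable but is under-specified in a way that matters. Your simulation lemma is a single-run containment (each $\bp$-observation inside rolled-back transactions embeds somewhere in the always-mispredict trace), and you then ``lift'' a divergence between the two $\bp$-traces to a divergence between the two always-mispredict traces. But the embedding position of an observation depends on the control flow, the oracle's queries, and the branching history of that particular run; to conclude that the two differing $\bp$-observations land at \emph{the same} position in the two always-mispredict traces (so that those traces actually differ) you already need a relational invariant tying the two runs together, and establishing it tends to be circular unless done by a lockstep induction over both pairs of runs. This is how the paper argues: Lemma~\ref{lemma:always-mispredict-worstcase:am-to-sp} advances the two always-mispredict runs and the two oracle runs simultaneously, maintaining pc-similarity and next-step agreement between the two runs of each semantics, a ``mirroring'' between each always-mispredict configuration and the commit-free projection of the corresponding speculative one, and a window inequality, and it uses the assumed \emph{equality} of the always-mispredict traces as the inductive hypothesis that forces the oracle runs to produce equal observations. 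Your invariant about the innermost transaction's remaining length is a single-run quantity and would not by itself support this relational step; you would need to recast both the lemma and the invariant in the two-run form.
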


In our case studies in Sections~\ref{sec:casestudies} and \ref{sec:xen}, we use $w = 200$. 
This is motivated by typical sizes of the reorder buffer~\cite{IntelReorderBuffer}, which limits the lengths of speculative transactions in modern microarchitectures.

\section{Detecting speculative information flows}

We now present \tool{}, an approach to detect speculative leaks, or to prove their absence.
\tool{} symbolically executes the program~$p$ under analysis to derive a concise representation of $p$'s behavior as a set of symbolic traces.
It analyzes each symbolic trace using an SMT solver to detect possible speculative leaks through memory accesses or control-flow instructions.
If neither memory nor control leaks are detected, \tool{} reports the program as secure.

\subsection{Symbolically executing \lang{} programs}\label{sec:automation:symbolic-semantics}

We symbolically execute programs w.r.t. the {\em always mispredict} semantics, which enables us to derive security guarantees that hold for arbitrary prediction oracles, see Theorem~\ref{theorem:always-mispredict}. 
Our symbolic execution engine relies on the following components:
\begin{asparaitem}
	\item A \textit{symbolic expression} $\sexpr$ is a concrete value $n \in \Val$, a symbolic value $s \in \SymVal$, an if-then-else expression $\ite{\sexpr}{\sexpr'}{\sexpr''}$, or the application of unary or binary operators to symbolic expressions.

	\item A \textit{symbolic memory} is a term in the standard theory of arrays~\cite{bradley2007calculus}.
	A memory update  $\symWrite{sm}{\sexpr}{\sexpr'}$ updates the symbolic memory $sm$ by assigning the symbolic value $\sexpr'$ to the symbolic address $\sexpr$.
	We extend symbolic expressions with memory reads $\symRead{sm}{\sexpr}$, which retrieve the value of the symbolic address $\sexpr$ from the symbolic memory $sm$.
\item A {\em symbolic trace} $\tau $ is a sequence of symbolic observations of the form $\loadObs{\sexpr}$ or $\storeObs{\sexpr}$, symbolic branching conditions of the form $\symPcObs{\sexpr}$, and transaction-related observations of the form $\startObs{n}$ and $\rollbackObs{n}$, for natural numbers $n$ and symbolic expressions $\sexpr$. 
\item The path condition $\pathCond{\tau}\!\!=\!\!\bigwedge_{\symPcObs{\sexpr} \in \tau} \sexpr$ of trace~$\tau$ is the conjunction of all symbolic branching conditions in $\tau$.\looseness=-1
\item The symbolic execution derives symbolic runs $\tup{\sigma, \tau, \sigma'}$, consisting of symbolic configurations $\sigma,\sigma'$ and a symbolic trace $\tau$. The set of all symbolic runs forms the symbolic semantics, which  we denote by $\symbTraces{p}{w}$. The derivation rules are fairly standard and are given in Appendix~\ref{appendix:symbolic-semantics}.
\item The value of  an expression $\sexpr$ depends on a \textit{valuation} $\mu : \SymVal \to \Val$  mapping symbolic values to concrete ones.
The evaluation $\mu(\sexpr)$  of $\sexpr$ under $\mu$ is standard and formalized in Appendix~\ref{appendix:symbolic-semantics}.
\item A symbolic expression $\sexpr$ is \textit{satisfiable}, written $\mu \models \sexpr$, if there is a valuation $\mu$ such that $\symExprEval{\sexpr}{\mu} \neq 0$.
 Every valuation that satisfies a symbolic run's path condition maps the run to a concrete run. We denote by
$\gamma(\tup{\sigma, \tau, \sigma'})$ the set $\{  \tup{\mu(\sigma), \mu(\tau), \mu(\sigma')} \mid  \mu \models \pathCond{\tau} \}$  of  $\tup{\sigma, \tau, \sigma'}$'s  concretizations, and we lift it to $\symbTraces{p}{w}$. The concretization of the symbolic runs  yields the set of all concrete runs:
\end{asparaitem}

\begin{restatable}{prop}{alwaysMispredSymbolic}\label{proposition:symbolic-execution}
	Let $p$ be a program and $w \in \Nat$ be a speculative window.
	Then, $\amTraces{p}{w} = \gamma(\symbTraces{p}{w})$.
\end{restatable}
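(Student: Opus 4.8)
The plan is to prove the equality by establishing the two inclusions $\gamma(\symbTraces{p}{w}) \subseteq \amTraces{p}{w}$ (soundness of symbolic execution) and $\amTraces{p}{w} \subseteq \gamma(\symbTraces{p}{w})$ (completeness), each by induction on the length of the corresponding derivation, via a step-wise simulation between the symbolic and the concrete always-mispredict transition systems.

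First I would fix a valuation $\mu$ and define a \emph{representation} relation $\approx_\mu$ between symbolic extended configurations $\tup{\ctr, \hat\sigma, \hat s, h}$ and concrete ones $\tup{\ctr, \sigma, s, h}$: the counters and branching histories coincide, the concrete register assignment is the pointwise image under $\mu$ of the symbolic one, the concrete memory is the image of the symbolic memory term under $\mu$ (using the standard interpretation of the theory of arrays, so that $\mu$ commutes with $\symRead{\cdot}{\cdot}$ and $\symWrite{\cdot}{\cdot}{\cdot}$), and each speculative state of $\hat s$ — snapshot, identifier, remaining length, predicted label — represents the corresponding one of $s$. A useful preliminary remark is that symbolic execution introduces no fresh symbolic values along a run, since a load is recorded as a $\symRead{\cdot}{\cdot}$ term over the existing symbolic memory rather than as a fresh symbol; hence the valuation associated with a run depends only on its initial configuration and stays fixed throughout.

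The heart of the proof is a single-step simulation lemma in each direction. \textbf{Soundness step}: if $\hat\rho \symSpeval{p}{w}{\tau} \hat\rho'$ symbolically, $\mu \models \pathCond{\tau}$, and $\hat\rho \approx_\mu \rho$, then $\rho \ameval{\mu(\tau)} \rho''$ with $\hat\rho' \approx_\mu \rho''$. \textbf{Completeness step}: if $\hat\rho \approx_\mu \rho$ and $\rho \ameval{o} \rho'$ concretely, then there is a symbolic step $\hat\rho \symSpeval{p}{w}{\tau} \hat\rho'$ with $\mu \models \pathCond{\tau}$, $\mu(\tau) = o$, and $\hat\rho' \approx_\mu \rho'$. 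Both are proved by case analysis over the rules — the symbolic analogues of \textsc{Am-NoBranch}, \textsc{Am-Branch}, \textsc{Am-Rollback}, together with the symbolic counterparts of the base rules of Figure~\ref{figure:language:semantics}. Each symbolic rule is built to mirror its concrete counterpart, so applying $\mu$ to its premises and conclusion — using that $\mu$ is a homomorphism on expressions and on the array operations, and that $\mu \models \pathCond{\tau}$ discharges exactly the guard-related side conditions ($\exprEval{e'}{a}=0$ versus $\neq 0$ and the misprediction comparisons) — turns it into a valid concrete instance; and conversely, at every branch point the satisfied/unsatisfied alternative of the guard selects the unique symbolic rule whose path-condition extension is consistent with $\mu$. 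I would then lift both lemmas to full runs by induction on the number of steps, using as base case that the initial symbolic configuration $\approx_\mu$-represents exactly the concrete initial configurations obtained as $\mu$ ranges over the valuations fixing a concrete initial memory and register assignment, and using for the inductive step that $\pathCond{\cdot}$ of a trace is the conjunction of the branching conditions occurring in it and that $\mu$ is fixed along a run. Soundness then yields $\gamma(\symbTraces{p}{w}) \subseteq \amTraces{p}{w}$ directly, and completeness yields, for each concrete run, a symbolic run whose path condition is satisfied by the reconstructing valuation and which concretizes to that run, hence $\amTraces{p}{w} \subseteq \gamma(\symbTraces{p}{w})$; a final small argument, using that the step correspondence is one-to-one, matches final (terminated) configurations on both sides, so that the \emph{runs} — not merely the traces — correspond.

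I expect the main obstacle to be the bookkeeping around the symbolic memory and the nested speculative-state stack rather than anything conceptually deep. One must check carefully that $\mu$ commutes with iterated $\symWrite{\cdot}{\cdot}{\cdot}$ and with $\symRead{\cdot}{\cdot}$, so that loads and stores concretize correctly even after many symbolic updates, and that the snapshot configurations stored inside (possibly nested) speculative states — which are themselves symbolic and are restored verbatim by \textsc{Am-Rollback} — are $\approx_\mu$-represented consistently, so that concrete and symbolic configurations still correspond after a rollback that drops a suffix of the stack. The branching of the symbolic semantics at conditional instructions also needs care: each concrete run must match exactly one symbolic path, and that path's condition must be satisfiable by the valuation reconstructing the run's initial configuration.
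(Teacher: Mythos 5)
Your overall strategy coincides with the paper's: both inclusions are proved via single-step simulation lemmas between the symbolic and the concrete always-mispredict semantics, with a valuation $\mu$ fixed along the run, homomorphism facts for $\mu$ over expressions and the array operations, and an induction on the length of the run (the paper's Lemmas~\ref{lemma:symbolic:one-step-soundness-non-speculative}--\ref{lemma:symbolic:one-step-soundness-speculative} and \ref{lemma:symbolic:one-step-completeness-non-speculative}--\ref{lemma:symbolic:one-step-completeness-speculative}, lifted in Propositions~\ref{propostion:alwasy-mispred-symbolic:soundness} and~\ref{propostion:alwasy-mispred-symbolic:completeness}). However, the single-step lemmas as you state them are too weak, and the soundness one is in fact false, exactly in the rollback case you dismiss as bookkeeping. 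The symbolic rollback rule re-executes the branch from the stored snapshot and requires the resulting program counter to differ from the stored (mispredicted) label; the branch condition that selects which of the two symbolic branch rules is used in this re-execution is \emph{not} part of the rollback step's own trace, which consists only of $\rollbackObs{\id} \concat \pcObs{\cdot}$ and hence has trivial path condition. So your hypothesis ``$\mu \models \pathCond{\tau}$ for the current step'' gives no control over which way that branch goes under $\mu$: one can pick $\mu$ consistent with your purely structural relation $\approx_\mu$ yet violating that condition, and then the concrete rollback from the related configuration emits a different $\pcObs{}$ observation than $\mu(\tau)$, so the conclusion of your soundness step fails. Symmetrically, your completeness step fails on configurations where the stored label happens to coincide with the correct successor of the snapshot: the concrete rollback fires but the symbolic rollback rule (whose premise demands inequality) cannot.

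The repair is precisely what the paper makes explicit: strengthen what is carried through the induction. For soundness it defines the ``next-step mispredict path condition'' of the speculative-state stack --- the conjunction, over all pending speculative states, of the condition asserting that the stored label is the wrong successor of its snapshot --- adds $\mu \models \pathCond{s}$ as a hypothesis of the one-step lemma, and then proves by induction along the run that this condition is entailed by the $\symPcObs{\sexpr}$ observations emitted earlier at the corresponding branch steps, so that satisfying the whole trace's path condition suffices. For completeness it introduces the analogous well-formedness predicate on concrete stacks (each stored snapshot re-executes to a program counter different from the stored label), assumes it in the one-step lemma, and proves it invariant along concrete runs. Your remark that the path condition of a trace is the conjunction of its branching conditions and that $\mu$ is fixed points in this direction, but as written your run-level induction composes the unstrengthened one-step lemmas, so the rollback case does not go through; you must fold these stack invariants either into $\approx_\mu$ or into the hypotheses of both single-step lemmas.
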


The proof of Proposition~\ref{proposition:symbolic-execution} is given in~\techReportAppendix{appendix:concorete-and-symbolic}.

\begin{example}\label{example:symbolic-semantics}
		Executing the program from Example~\ref{example:language:spectrev1} under the symbolic speculative semantics with speculative window~$2$ yields the following two symbolic traces:
		$\tau_1 := \symPcObs{{\mathtt{y}} < {\mathtt{size}}} \concat \startObs{0} \concat \pcObs{2} \concat \pcObs{10} \concat \rollbackObs{0} \concat \pcObs{3} \concat \loadObs{{\mathtt{A}} + {\mathtt{y}}}  \concat \loadObs{{\mathtt{B}} + \symRead{sm}{ ({\mathtt{A}} + {\mathtt{y}}) }*512}$, and $\tau_2 := \symPcObs{{\mathtt{y}} \geq {\mathtt{size}}} \concat  \startObs{0} \concat \pcObs{3} \concat \loadObs{{\mathtt{A}} + {\mathtt{y}}}  \concat \loadObs{{\mathtt{B}} + \symRead{sm}{ ({\mathtt{A}} + {\mathtt{y}}) }*512} \concat \rollbackObs{0} \concat \pcObs{2} \concat \pcObs{10}$.
\end{example}

\subsection{Checking speculative non-interference}

\begin{algorithm}
    \caption{\tool{}}
    \label{algorithm:tool}
    \begin{algorithmic}[1]
	\Require A program $p$, a security policy $\policy$, a speculative window  $w \in \Nat$.
	\Ensure \textsc{Secure} if $p$ satisfies speculative non-interference with respect to the policy $\policy$; \textsc{Insecure} otherwise
		\Statex{}
	\Procedure{\tool}{$p,\policy,w$}
		\For{each symbolic run $\tup{\sigma,\tau,\sigma'} \in \symbTraces{p}{w}$}
			\If{$\memcheck(\tau,P)  \vee \pccheck(\tau,P)$}
				\State{\Return{\textsc{Insecure}}}
			\EndIf
		\EndFor
		\State{\Return{\textsc{Secure}}}
        \EndProcedure
    \Statex{}
    \Procedure{$\memcheck$}{$\tau, \policy$}
    		\State{$\psi \gets \pathCond{\tau}_{1 \wedge 2} \wedge \policyEqv{\policy} \wedge $}
    		\Statex{$\qquad \qquad \cstrs{\nspecProject{\tau}} \wedge \neg \cstrs{\specProject{\tau}}$}
			\State{\Return{$\textsc{Satisfiable}(\psi)$}}

    \EndProcedure
    \Statex{}
    \Procedure{$\pccheck$}{$\tau, \policy$}
    		\For{each prefix $\nu \concat \symPcObs{\mathit{se}}$ of $\specProject{\tau}$}
				\State{$\psi \gets \pathCond{\nspecProject{\tau}\concat \nu}_{1 \wedge 2} \wedge \policyEqv{\policy} \wedge$}
				\Statex{$\qquad \qquad \quad \cstrs{\nspecProject{\tau}} \wedge \neg \sameSymPc{se}$}
				\If{$\textsc{Satisfiable}(\psi)$}
					\State{\Return{$\top$}}
				\EndIf
			\EndFor
    		\State{\Return{$\bot$}}
    \EndProcedure
    \end{algorithmic}
\end{algorithm}

\tool{} is given in Algorithm~\ref{algorithm:tool}.
It relies on two procedures: \memcheck{} and \pccheck{}, to detect leaks resulting from memory  and  control-flow instructions, respectively.
We start by discussing the \tool{} algorithm and next explain the \memcheck{} and \pccheck{} procedures.

\tool{} takes as input a program $p$, a policy $\policy$ specifying the non-sensitive information, and a speculative window $w$.
The algorithm iterates over all symbolic runs produced by the symbolic always-mispredict speculative semantics (lines 2-4).
For each  run $\tup{\sigma,\tau,\sigma'}$, the algorithm checks whether $\tau$ speculatively leaks information through memory accesses or control-flow instructions.
If this is the case, then \tool{} has found a witness of a speculative leak and it reports $p$ as \textsc{Insecure}.
If none of the traces contain speculative leaks, the algorithms terminates returning \textsc{Secure} (line 5).

\para{Detecting leaks caused by memory accesses}
The procedure \memcheck{} takes as input a  trace~$\tau$ and a policy $\policy$, and it determines whether $\tau$ leaks information through symbolic $\loadObs{}$ and $\storeObs{}$ observations. 
The check is expressed as a satisfiability check of a  constraint $\psi$. The construction of~$\psi$ is inspired by self-composition~\cite{barthe2004secure}, which reduces reasoning about {\em pairs} of program runs to reasoning about single runs by replacing each symbolic variable $x$ with two copies $x_1$ and~$x_2$. We lift the subscript notation to symbolic expressions.

The constraint $\psi$ is the conjunction of four formulas:
\begin{asparaitem}
\item $\pathCond{\tau}_{1 \wedge 2}$ stands for $\pathCond{\tau}_{1} \wedge \pathCond{\tau}_{2}$, which ensures that both runs follow the path associated with $\tau$.\looseness=-1
\item $\policyEqv{\policy}$ introduces constraints $x_1 = x_2$ for each register $x\in \policy$ and $\symRead{sm_1}{n} = \symRead{sm_2}{n}$ for each memory location $n \in P$, which ensure that both runs agree on all non-sensitive inputs.
\item $\cstrs{\nspecProject{\tau}}$ introduces a constraint $\sexpr_1 = \sexpr_2$ for each $\loadObs{\sexpr}$ or $\storeObs{\sexpr}$ in $\nspecProject{\tau}$, which ensures that the non-speculative observations associated with memory accesses are the same in both runs. 
\item $\neg \cstrs{\specProject{\tau}}$  ensures that speculative observations associated with memory accesses differ among the two runs. 
\end{asparaitem}

If $\psi$ is satisfiable, there are two $\policy$-indistinguishable configurations that produce the same non-speculative traces (since $\pathCond{\tau}_{1 \wedge 2} \wedge \policyEqv{\policy} \wedge \cstrs{\nspecProject{\tau}}$ is satisfied) and whose speculative traces differ in a memory access observation (since $\neg \cstrs{\specProject{\tau}}$ is satisfied), i.e. a violation of SNI.

\para{Detecting leaks caused by control-flow instructions}
To detect leaks caused by control-flow instructions, \pccheck{} checks whether there are two traces in $\tau$'s concretization that agree on the outcomes of all non-speculative branch and jump instructions, while differing in the outcome of at least one speculatively executed branch or jump instruction.

In addition to $\pathCond{\tau}$, $\cstrs{\tau}$, and $\policyEqv{\policy}$, the procedure relies on the  function $\sameSymPc{\sexpr}$ that introduces the constraint $\sexpr_1 \leftrightarrow \sexpr_2$ ensuring that  $\sexpr$ is satisfied in one concretization iff it is satisfied in the other.

\pccheck{} checks, for each prefix $\nu \concat \symPcObs{\sexpr}$ in $\tau$'s speculative projection $\specProject{\tau}$,  the satisfiability of the conjunction of $\pathCond{\nspecProject{\tau} \concat \nu}_{1 \wedge 2}$, $\policyEqv{\policy}$, $\cstrs{\nspecProject{\tau}}$, and $\neg \sameSymPc{\sexpr}$.
Whenever the formula is satisfiable, there are two $\policy$-indistinguishable configurations that produce the same non-speculative traces, but whose speculative traces differ on program counter observations, i.e. a violation of SNI.

\begin{example}
	Consider  the  trace $\tau_2 := \symPcObs{{\mathtt{y}} \geq {\mathtt{size}}} \concat  \startObs{0} \concat \pcObs{3} \concat \loadObs{{\mathtt{A}} + {\mathtt{y}}}  \concat \loadObs{{\mathtt{B}} + \symRead{sm}{  ({\mathtt{A}} + {\mathtt{y}})  }*512 } \concat \rollbackObs{0} \concat \pcObs{2} \concat \pcObs{10}$ from Example~\ref{example:symbolic-semantics}. \memcheck{} detects a leak caused by the observation $\loadObs{{\mathtt{B}} + \symRead{sm}{  ({\mathtt{A}} + {\mathtt{y}})  }*512 }$.
	Specifically, it detects that there are distinct symbolic valuations that agree on the non-speculative observations but disagree on the value of $\loadObs{{\mathtt{B}} + \symRead{sm}{  ({\mathtt{A}} + {\mathtt{y}})  }*512 }$. That is, the observation  depends on sensitive information that is not disclosed by $\tau_2$'s non-speculative projection.	\end{example}

\para{Soundness and completeness}
Theorem~\ref{theorem:soundness-and-completeness} states that \tool{} deems secure only speculatively non-interferent programs, and all detected leaks are actual violations of SNI.\looseness=-1

\begin{restatable}{thm}{spectectorSoundnessAndCompleteness}
\label{theorem:soundness-and-completeness}
If $\tool{}(p,\policy,w)$ terminates, then  $\tool{}(p,\policy,w) = \textsc{Secure}$ iff  the program $p$ satisfies speculative non-interference w.r.t. the  policy~$\policy$ and all prediction oracles~$\bp$ with speculative window at most~$w$.
\end{restatable}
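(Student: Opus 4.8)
The plan is to chain together the two characterizations already established: Proposition~\ref{proposition:symbolic-execution}, which says that concretizing the symbolic always-mispredict runs yields exactly $\amTraces{p}{w}$, and Theorem~\ref{theorem:always-mispredict}, which says that SNI with respect to all oracles with speculative window at most $w$ is equivalent to the always-mispredict condition: whenever $\sigma \indist{\policy} \sigma'$ and $\nspecEval{p}{\sigma} = \nspecEval{p}{\sigma'}$ then $\amEval{p}{w}{\sigma} = \amEval{p}{w}{\sigma'}$. Given these, it suffices to show that $\tool(p,\policy,w)$ returns \textsc{Secure} if and only if this always-mispredict condition holds. So the real content is a soundness/completeness argument for the two subprocedures $\memcheck$ and $\pccheck$ relative to this condition.

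First I would unfold the definitions: $\tool$ returns \textsc{Insecure} iff there is some symbolic run $\tup{\sigma,\tau,\sigma'} \in \symbTraces{p}{w}$ for which $\memcheck(\tau,\policy)$ or $\pccheck(\tau,\policy)$ holds. The key observation is that a violation of the always-mispredict condition means there exist two initial configurations that are $\policy$-indistinguishable, agree on their non-speculative traces, but produce different always-mispredict traces. By Proposition~\ref{proposition:symbolic-execution}, both concrete runs factor through symbolic runs; and since the two concrete traces agree non-speculatively but differ, and the non-speculative projection is obtained by deleting exactly the rolled-back substrings and extended observations, the two concrete traces must arise from the \emph{same} symbolic trace $\tau$ (they follow the same control-flow skeleton non-speculatively) and differ in a speculative observation. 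That difference is either in a $\loadObs{}$/$\storeObs{}$ argument — caught by $\memcheck$ via the constraint $\neg\cstrs{\specProject{\tau}}$ — or in a $\pcObs{}$ value inside a rolled-back transaction — caught by $\pccheck$ via $\neg\sameSymPc{\sexpr}$ on the appropriate prefix. The matching formulas $\pathCond{\tau}_{1\wedge 2}$, $\policyEqv{\policy}$, and $\cstrs{\nspecProject{\tau}}$ encode precisely ``both are valid runs of $\tau$, $\policy$-indistinguishable, with equal non-speculative observations.'' I would argue each direction of the iff by reading these self-composition constraints back and forth against Definition~\ref{def:gsni} specialized via Theorem~\ref{theorem:always-mispredict}.

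The step I expect to be the main obstacle is the completeness direction of $\pccheck$, i.e.\ arguing that \emph{every} way two always-mispredict traces can first disagree is captured by \emph{some} prefix $\nu\concat\symPcObs{\sexpr}$ of $\specProject{\tau}$ together with the right path condition $\pathCond{\nspecProject{\tau}\concat\nu}_{1\wedge 2}$. One has to be careful that the two concrete executions stay ``in lockstep'' along $\tau$ up to the first point of divergence — that the prefix of observations before the diverging $\pcObs{}$ really is common to both — and that conditioning on $\pathCond{\nspecProject{\tau}\concat\nu}$ rather than the full $\pathCond{\tau}$ is both sound (does not admit spurious witnesses) and complete (the diverging pair still satisfies it, since after the divergence the two runs may leave $\tau$ entirely). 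A secondary subtlety is handling loads inside the speculative window soundly: the symbolic memory reads $\symRead{sm}{\sexpr}$ must be consistently interpreted under both copies of the valuation, which is exactly what carrying $\cstrs{\nspecProject{\tau}}$ and the theory-of-arrays semantics through the self-composition buys us. Once these bookkeeping points are nailed down, the theorem follows by transitivity of the three equivalences, and the ``if $\tool(p,\policy,w)$ terminates'' hypothesis is used only to rule out the degenerate nonterminating case where the \texttt{for} loop never exhausts $\symbTraces{p}{w}$.
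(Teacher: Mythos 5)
Your plan is essentially the paper's own argument: the paper also reduces the theorem to Theorem~\ref{theorem:always-mispredict} plus Proposition~\ref{proposition:symbolic-execution} and then proves soundness/completeness of \memcheck{} and \pccheck{} against the always-mispredict condition (split into two lemmas, stated in the contrapositive of your ``violation implies detection'' direction), including exactly the lockstep/prefix-path-condition subtlety for \pccheck{} that you flag. The only ingredient you leave implicit is Proposition~\ref{proposition:always-mispredict:speculative-and-non-speculative}, which the paper invokes to identify the non-speculative semantics traces with the non-speculative projections of always-mispredict traces (rather than appealing only to the definition of the projection), and note that ``same non-speculative behaviour'' alone does not force the two runs onto the same symbolic trace---that needs the \pccheck{}-style agreement on speculative branches, as your later discussion acknowledges.
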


The theorem follows from the soundness and completeness of the always-mispredict semantics w.r.t. prediction oracles (Theorem~\ref{theorem:always-mispredict}) and of the symbolic semantics w.r.t. to the always-mispredict semantics (Proposition~\ref{proposition:symbolic-execution}).
\onlyTechReport{The proof of Theorem~\ref{theorem:soundness-and-completeness} is given in Appendix~\ref{appendix:proofs}.}

\section{Tool Implementation}\label{sect:tool}

We implement our approach in our tool \tool{},
which is available at \texttt{\url{https://spectector.github.io}}.
The tool, which is implemented on top of the \ciao{} logic
programming system~\cite{hermenegildo11:ciao-design-tplp-short}, consists of three components: a front end that translates x86 assembly programs into \lang{}, a core engine implementing Algorithm~\ref{algorithm:tool}, and a back end  handling SMT queries.

\para{x86 front end}
The front end translates AT\&T/GAS and Intel-style assembly files into \lang{}. 
It currently supports over 120 instructions: data movement instructions
($\kywd{mov}$, etc.), logical, arithmetic, and comparison instructions
($\kywd{xor}$, $\kywd{add}$, $\kywd{cmp}$, etc.),
branching and jumping instructions
($\kywd{jae}$, $\kywd{jmp}$, etc.), conditional moves
($\kywd{cmovae}$, etc.), stack manipulation ($\kywd{push}$,
$\kywd{pop}$, etc.), and function calls\footnote{We model the so-called ``near calls'', where the callee is in the same code segment as the caller.}
($\kywd{call}$, $\kywd{ret}$).

It currently does not support privileged x86 instructions, e.g., for handling model specific registers and virtual memory.
Further it does not support sub-registers (like $\mathtt{eax}$,
$\mathtt{ah}$, and $\mathtt{al}$) and unaligned memory accesses, i.e.,
we assume that only 64-bit words are read/written at each address
without overlaps.
Finally, the translation currently maps symbolic address names
to \lang{} instruction addresses, limiting arithmetic on code addresses.

\para{Core engine}
The core engine implements Algorithm~\ref{algorithm:tool}.
It relies on a concolic approach to implement symbolic execution that performs
a depth-first exploration of the symbolic runs. 
Starting from a concrete initial configuration, the engine executes the program under the always-mispredict speculative semantics while keeping track of the symbolic configuration and path condition.
It discovers new runs by iteratively negating the last (not previously negated) conjunct in the path condition until it finds a new initial configuration, which is then used to re-execute the program concolically.
In our current implementation, indirect jumps are \emph{not} included in the path conditions, and thus new symbolic runs and corresponding inputs are only discovered based on negated branch conditions.\footnote{We plan to remove this limitation in a future release of our tool.}
This process is interleaved with the \memcheck{} and \pccheck{} checks and iterates until a leak is found or all paths have been explored.

\para{SMT back end}
The Z3 SMT solver~\cite{z3} acts as a back end for checking satisfiability and finding models of symbolic expressions using the \textsc{Bitvector} and \textsc{Array} theories, which are used to model registers and memory. 
The implementation currently does not rely on incremental solving, since it was less efficient than one-shot solving for the selected theories.

\section{Case study: Compiler countermeasures}\label{sec:casestudies}

\newcommand{\N}{\circ}
\renewcommand{\P}{\fcirc}
\newcommand{\Q}{?}
\newcommand{\clm}{0.2cm}

\begin{figure*}
\centering
\begin{tabular}
{c  c c c c c c  c c c c  c c c c c c  }
\toprule
\multirow{3}{*}{Ex.}	&	\multicolumn{6}{c}{\textsc{Vcc}}  &  \multicolumn{4}{c }{\icc} & \multicolumn{6}{c }{\clang{}}	\\ 
 \cmidrule(lr){2-7}
 \cmidrule(lr){8-11}
 \cmidrule(lr){12-17}
	& \multicolumn{2}{c}{\unp}	& \multicolumn{2}{c }{\fen{} 19.15}		& \multicolumn{2}{c }{\fen{} 19.20}	&  \multicolumn{2}{c }{\unp}	& \multicolumn{2}{c }{\fen} &  \multicolumn{2}{c }{\unp}	& \multicolumn{2}{c }{\fen} & \multicolumn{2}{c }{\slh} \\
 \cmidrule(lr){2-3}   \cmidrule(lr){4-5}  \cmidrule(lr){6-7}
 \cmidrule(lr){8-9}  \cmidrule(lr){10-11}
 \cmidrule(lr){12-13}  \cmidrule(lr){14-15}
 \cmidrule(lr){16-17} 
	&	\unopt 	&	 	\opt & \unopt 	&	 	\opt	 	&	\unopt		& 		\opt		&	\unopt		& 	\opt			& 	\unopt 	&	 	\opt	 	&	\unopt 	&	 	\opt	  & 	\unopt 	&	 	\opt	 	& 	\unopt 	&	 	\opt	\\ 
 \midrule
01	&	$\N$	&	$\N$	&	$\P$	&	$\P$	& 	$\P$	&	$\P$ &	$\N$	&	$\N$	&	$\P$	&	$\P$	&	$\N$	&	$\N$	&	$\P$	&	$\P$	&	$\P$	&	$\P$ \\ 
02	&	$\N$	&	$\N$	&	$\P$	&	$\P$	& 	$\P$	&	$\P$	 &	$\N$	&	$\N$	&	$\P$	&	$\P$	&	$\N$	&	$\N$	&	$\P$	&	$\P$	&	$\P$	&	$\P$ \\ 
03	&	$\N$ 	&	$\N$	&	$\P$	&	$\N$	& 	$\P$	&	$\P$	 &	$\N$	&	$\N$	&	$\P$	&	$\P$	&	$\N$	&	$\N$	&	$\P$	&	$\P$	&	$\P$	&	$\P$ \\ 
04	&	$\N$	&	$\N$	&	$\N$	&	$\N$	& 	$\P$	&	$\P$ &	$\N$	&	$\N$	&	$\P$	&	$\P$	&	$\N$	&	$\N$	&	$\P$	&	$\P$	&	$\P$	&	$\P$ \\ 
05	&	$\N$	&	$\N$	&	$\P$	&	$\N$	& 	$\P$	&	$\N$	 &	$\N$	&	$\N$	&	$\P$	&	$\P$	&	$\N$	&	$\N$	&	$\P$	&	$\P$	&	$\P$	&	$\P$ \\ 
06	&	$\N$	&	$\N$	&	$\N$	&	$\N$	& 	$\N$	&	$\N$ &	$\N$	&	$\N$	&	$\P$	&	$\P$	&	$\N$	&	$\N$	&	$\P$	&	$\P$	&	$\P$	&	$\P$ \\ 
07	&	$\N$	&	$\N$	&	$\N$	&	$\N$	& 	$\N$	&	$\N$ &	$\N$	&	$\N$	&	$\P$	&	$\P$	&	$\N$	&	$\N$	&	$\P$	&	$\P$	&	$\P$	&	$\P$ \\ 
08	&	$\N$	&	$\P$	&	$\N$	&	$\P$	& 	$\N$	&	$\P$ &	$\N$	&	$\P$	&	$\P$	&	$\P$	&	$\N$	&	$\P$	&	$\P$	&	$\P$	&	$\P$	&	$\P$ \\ 
09	&	$\N$	&	$\N$	&	$\N$	&	$\N$	& 	$\N$	&	$\N$ &	$\N$	&	$\N$	&	$\P$	&	$\P$	&	$\N$	&	$\N$	&	$\P$	&	$\P$	&	$\P$	&	$\P$ \\ 
10	&	$\N$	&	$\N$	&	$\N$	&	$\N$	&	$\N$	&	$\N$	 &	$\N$	&	$\N$	&	$\P$	&	$\P$	&	$\N$	&	$\N$	&	$\P$	&	$\P$	&	$\P$	&	$\N$ \\ 
11	&	$\N$	&	$\N$	&	$\N$	&	$\N$	& 	$\N$	&	$\N$ &	$\N$	&	$\N$	&	$\P$	&	$\P$	&	$\N$	&	$\N$	&	$\P$	&	$\P$	&	$\P$	&	$\P$ \\ 
12	&	$\N$	&	$\N$	&	$\N$	&	$\N$	& 	$\P$	&	$\P$	 &	$\N$	&	$\N$	&	$\P$	&	$\P$	&	$\N$	&	$\N$	&	$\P$	&	$\P$	&	$\P$	&	$\P$ \\ 
13	&	$\N$	&	$\N$	&	$\N$	&	$\N$	& 	$\N$	&	$\N$	 &	$\N$	&	$\N$	&	$\P$	&	$\P$	&	$\N$	&	$\N$	&	$\P$	&	$\P$	&	$\P$	&	$\P$ \\ 
14	&	$\N$	&	$\N$	&	$\N$	&	$\N$	& 	$\P$	&	$\P$	 &	$\N$	&	$\N$	&	$\P$	&	$\P$	&	$\N$	&	$\N$	&	$\P$	&	$\P$	&	$\P$	&	$\P$ \\ 
15	&	$\N$	&	$\N$	&	$\N$	&	$\N$	& 	$\N$	&	$\N$	 &	$\N$	&	$\N$	&	$\P$	&	$\P$	&	$\N$	&	$\N$	&	$\P$	&	$\P$	&	$\N$	&	$\P$ \\ 
\bottomrule
\end{tabular}
\caption{Analysis of Kocher's examples~\cite{Kocher2018examples},
  compiled with different compilers and options.
For each of the 15 examples, we analyzed the unpatched version (denoted by \unp), the version patched with speculation barriers (denoted by \fen), and the version patched using speculative load hardening (denoted by \slh).
Programs have been compiled without optimizations (\unopt{}) or with compiler optimizations (\opt) using the compilers \vcc{} (two versions), \icc{}, and \clang{}. 
$\N$ denotes that \tool{} detects a speculative leak, whereas 
$\P$ indicates that \tool{} proves the program secure.
}\label{figure:case-studies:results}
\end{figure*}

This section reports on a case study in which we apply \tool{} to analyze the security of compiler-level countermeasures against \spectre{}. 
We analyze a corpus of 240 assembly programs derived from the variants of the \spectre{} v1 vulnerability by Kocher~\cite{Kocher2018examples} using different compilers and compiler options.
This case study's goals  are: (1) to determine whether speculative non-interference realistically captures speculative leaks, and (2) to assess \tool{}'s precision.\looseness=-1

\subsection{Experimental Setup}\label{sec:case-studies:setup}

For our analysis, we rely on three state-of-the-art compilers:
Microsoft \vcc{} versions v19.15.26732.1 and v19.20.27317.96, 
 Intel \icc{} v19.0.0.117, and \clang{} v7.0.0. %, and \gcc{} v??.

We compile the programs using two different \textit{optimization levels} (\unopt{} and
\opt{}) and three \textit{mitigation levels}: 
\begin{inparaenum}[(a)]
\item \unp{}: we compile  without any \spectre{} mitigations. 
\item \fen{}: we compile with automated injection of speculation
  barriers.\footnote{Fences are supported by \clang{} with
    the flag \texttt{-x86-speculative\-load-hardening-lfence}, by \icc{}
    with \texttt{-mconditional\-branch=all-fix}, and by \vcc{} with
    \texttt{/Qspectre}.\looseness=-1}
\item \slh{}: we compile using speculative load hardening.\footnote{Speculative load hardening is supported by \clang{} with the flag \texttt{-x86-speculative-load-hardening}.}
\end{inparaenum}

Compiling each of the 15 examples from \cite{Kocher2018examples} with each of
the 3 compilers, each of the 2 optimization levels, and each of
the 2-3 mitigation levels, yields a corpus of 240 x64
assembly programs.\footnote{The resulting assembly files are available at~\texttt{\url{https://spectector.github.io}}.}
For each program, we specify a security policy that
flags as ``low'' all registers and memory locations that can either be
controlled by the adversary or can be assumed to be public. This
includes  variables \inlineASMcode{y} and \inlineASMcode{size}, and the
base addresses of the arrays $\inlineASMcode{A}$ and $\inlineASMcode{B}$
as well as the stack pointer.

\subsection{Experimental Results}

Figure~\ref{figure:case-studies:results} depicts  the results of applying \tool{} to the 240 examples.  We highlight the
following findings:

\begin{asparaitem}
\item \tool{} detects the speculative leaks in almost all unprotected
  programs, for all compilers (see the \unp{} columns). The exception
  is Example \#8, which uses a conditional expression instead of the if statement of
  Figure~\ref{figure:spectrev1:c-code}:
\begin{lstlisting}[style=Cstyle]
temp &= B[A[y<size?(y+1):0]*512];
\end{lstlisting}
At optimization level \unopt{}, this is translated to a (vulnerable) branch instruction by all compilers, and at level \opt{} to a (safe) conditional move, thus closing the leak. See Appendix~\ref{secs:example8} for the 
corresponding \clang{} assembly.
\item The \clang{} and Intel \icc{} compilers defensively insert fences after
  each branch instruction, and \tool{} can prove security for
  all cases (see the \fen{} columns for \clang{} and \icc{}).  In
  Example \#8 with options \opt{} and \fen{}, \icc{} inserts an \textbf{lfence} instruction, even
  though the baseline relies on a conditional move, see line 10 below. This \textbf{lfence} is unnecessary according to our semantics, but may close leaks on processors that speculate over conditional moves. 
  \begin{lstlisting}[style=ASMstyle]
        mov 	y, %rdi
        lea 	1(%rdi), %rdx
        mov 	size, %rax
        xor 	%rcx, %rcx
        cmp 	%rax, %rdi
        cmovb 	%rdx, %rcx
        mov 	temp, %r8b
        mov 	A(%rcx), %rsi
        shl 	$9, %rsi
        lfence
        and 	B(%rsi), %r8b
        mov 	%r8b, temp
\end{lstlisting}
\item For the \vcc{} compiler, \tool{} automatically detects all
  leaks pointed out in~\cite{Kocher2018examples} (see the \fen{} 19.15 \opt{} column for \textsc{Vcc}). Our analysis differs
  from Kocher's only on Example \#8, where  the compiler v19.15.26732.1 introduces a safe conditional move, as explained above. Moreover,
  without compiler optimizations (which is not considered
  in~\cite{Kocher2018examples}), \tool{} establishes the security of Examples \#3 and \#5 (see the \fen{} 19.15 \unopt{} column).
The latest \textsc{Vcc} compiler additionally mitigates the leaks in Examples \#4, \#12, and \#14 (see the \fen{} 19.20 column).\looseness=-1

\item \tool{} can prove the security of speculative load hardening in
  Clang (see the \slh{} column for \clang{}), except for Example \#10 with \opt{}
  and Example \#15 with \unopt{}.
\end{asparaitem}

\subsubsection*{Example 10 with Speculative Load Hardening}
Example \#10 differs from Figure~\ref{figure:spectrev1:c-code} in that it leaks sensitive
information into the microarchitectural state by conditionally reading
the content of \inlineCcode{B[0]}, depending on the value of
\inlineCcode{A[y]}.

\begin{lstlisting}[style=Cstyle]
 if (y < size) 
   if (A[y] == k)
     temp &= B[0];
\end{lstlisting}
\tool{} proves the security of the program produced with \clang{} \unopt{}, and speculative load hardening.

However, at optimization level \opt{}, \clang{} outputs the following
code that \tool{} reports as insecure. 

\begin{lstlisting}[style = ASMstyle]
 mov     size, %rdx
 mov     y, %rbx
 mov     $0, %rax
 cmp     %rbx, %rdx
 jbe     END
 cmovbe  $-1, %rax
 or      %rax, %rbx
 mov     k, %rcx
 cmp     %rcx, A(%rbx)
 jne     END
 cmovne  $-1, %rax
 mov     B, %rcx
 and     %rcx, temp
 jmp     END
\end{lstlisting}

The reason for this is that \clang{} masks only the register
\inlineASMcode{\%rbx} that contains the index of the memory access
\inlineCcode{A[y]}, cf.~lines 6--7. However, it does {\em not} mask
the value that is read from \inlineCcode{A[y]}.  As a result, the
comparison at line 9 speculatively leaks (via the jump target) whether
the content of \inlineCcode{A[}\texttt{0xFF...FF}\inlineCcode{]} is \inlineCcode{k}.
\tool{} detects this subtle leak and flags a violation of speculative
non-interference.

While this example nicely illustrates the scope of \tool{}, it is likely not a problem in practice: 
First, the leak may be mitigated by how data dependencies are handled in modern out-of-order CPUs. Specifically, the conditional move in line 6 relies on the comparison in Line 4. If executing the conditional leak effectively terminates speculation, the reported leak is spurious.  
Second, the leak can be mitigated at the OS-level by ensuring that \texttt{0xFF...FF} is not mapped in the page tables, or that the value of \inlineCcode{A[}\texttt{0xFF...FF}\inlineCcode{]} does not contain any secret~\cite{marinas}. Such contextual information can be expressed with policies (see~\ref{sec:secpolicy}) to improve the precision of the analysis.

\subsection{Performance}\label{sec:case-studies:performance}

We run all experiments on a Linux machine (kernel 4.9.0-8-amd64) with Debian 9.0,
 a Xeon Gold 6154 CPU, and 64 GB of RAM. 
We use  \ciao{}  version 1.18 and the Z3  version 4.8.4.\looseness=-1

 \tool{}  terminates within less than 30 seconds on all examples, with several examples being analyzed in about 0.1 seconds, except for Example \#5 in mode \slh{} \opt{}.  
In this exceptional case, \tool{} needs 2 minutes for proving security.
This is due to  Example \#5's complex control-flow, which leads to loops involving several branch instructions.

\section{Case study: Xen Project Hypervisor}\label{sec:xen}

\newcommand{\figureheight}{6cm}
\begin{figure*}
	\centering
	\subfigure[Checking non-interference with \memcheck{}]{
		\label{figure:xen:sni:memcheck}
		\adjustbox{height=\figureheight, trim={{0.04\width} {0.025\height} {.1\width} {0.1\height}},clip}		
		{\ORIGINALincludegraphics[page=1]{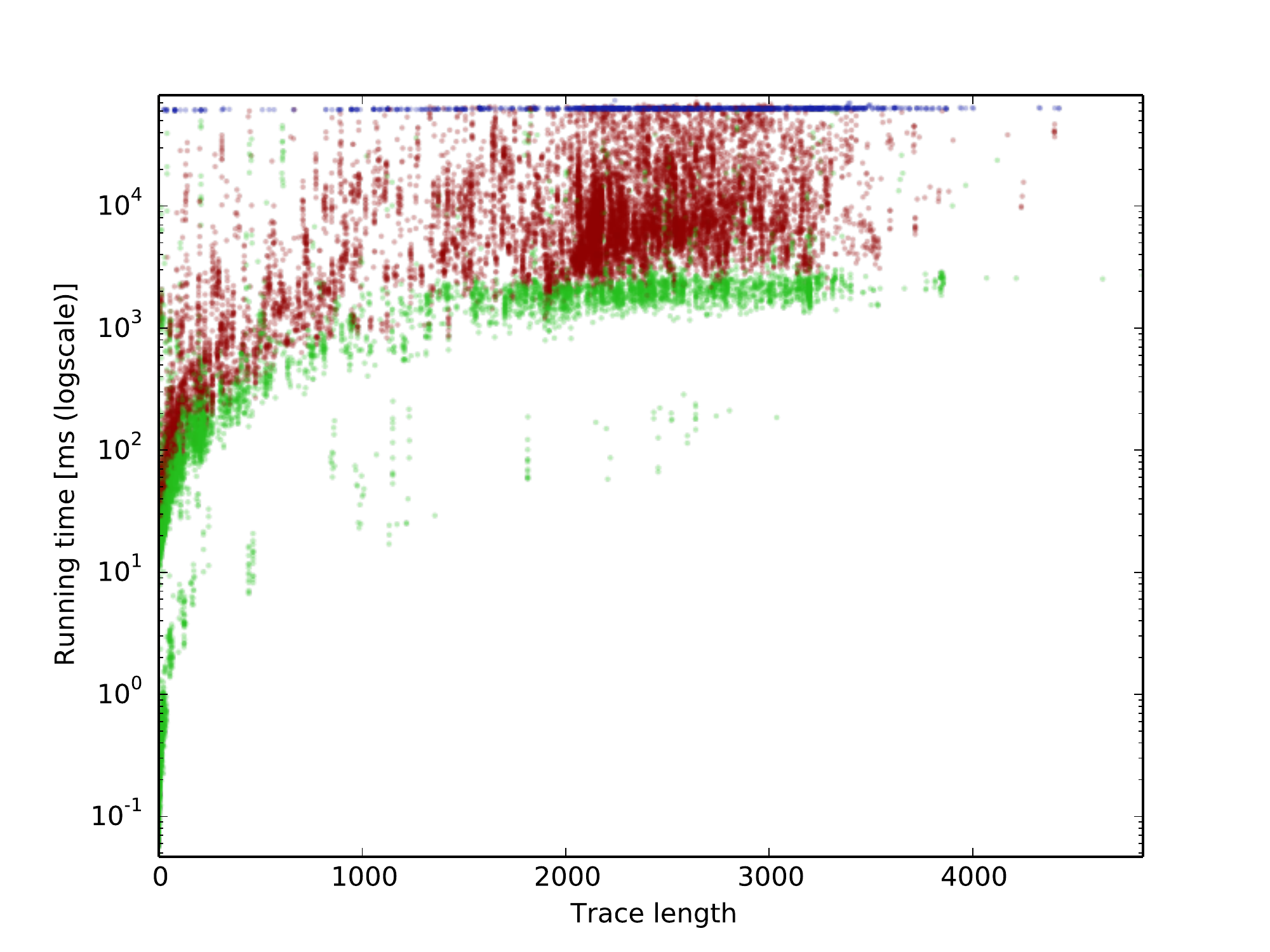}}
	}
	\qquad
	\subfigure[Checking non-interference with \pccheck{}]{
		\label{figure:xen:sni:pccheck}
		\adjustbox{height=\figureheight, trim={{0.04\width} {0.025\height} {.1\width} {0.1\height}},clip}
		{\ORIGINALincludegraphics[page=3]{plots}}
	}

	\subfigure[Discovering symbolic paths]{
		\label{figure:xen:symbolic-execution}
		 \adjustbox{height=\figureheight, trim={{0.04\width} {0.025\height} {.1\width} {0.1\height}}, clip}
		 {\ORIGINALincludegraphics[page=5]{plots}}
	}
	\qquad
	\subfigure[Symbolic execution versus SNI check]{
		\label{figure:xen:sni-vs-symb}
		 \adjustbox{height=\figureheight, trim={{0.04\width} {0.0125\height} {.1\width} {0.1\height}},clip}{
			\ORIGINALincludegraphics[page=6]{plots}
		}
	}

\caption{Scalability analysis for the Xen Project hypervisor. 
In  (a) and (b), green denotes secure traces, red denotes insecure traces, and blue denotes traces producing timeouts.
In  (c) and (d), yellow denotes the first trace discovered for each function, while blue denotes all discovered further traces.
The vertical lines in (d) represent traces where either \memcheck{} times out and \pccheck{} succeed or both time out.
\looseness=-1
}
\vspace{-12pt}
\end{figure*}

This section reports on a case study in which we apply \tool{} on the Xen Project hypervisor~\cite{xen}.
This case study's goal is to understand the challenges in scaling the tool to a significant real-world code base.
It forms a snapshot of our ongoing effort towards the comprehensive side-channel analysis of the Xen hypervisor. 

\subsection{Challenges for scaling-up}\label{sec:scale-challenge}
There are three main challenges for scaling \tool{} to a large code base such as the Xen hypervisor: 

\subsubsection*{ISA support}
Our front end currently supports only a fraction of the x64 ISA (cf.~Section~\ref{sect:tool}).
Supporting the full x64 ISA is conceptually straightforward but out of the scope of this paper. 
For this case study, we treat unsupported instructions  as $\pskip{}$, sacrificing the analysis's correctness. 

\subsubsection*{Policies} \tool{} uses policies specifying the public and secret parts of configurations. The manual specification of precise policies (as in Section~\ref{sec:casestudies}) is infeasible for large code bases, and their automatic inference from the calling context is not yet supported by \tool{}. 
For this case study, we use a  policy that treats registers as ``low'' and  memory locations as ``high'', which may introduce false alarms. 
For instance, the policy treats as ``high'' all function parameters that are retrieved from memory (e.g., popped from the stack), which is why \tool{} flags their speculative uses in memory or branching instructions as leaks.\looseness=-1

\subsubsection*{Path explosion and nontermination} \tool{} is based on symbolic execution, which suffers from path explosion and nontermination when run on programs with loops and indirect jumps. 
In the future, we plan to address this challenge by employing approximate but sound static analysis techniques, such as abstract interpretation.
Such techniques can be employed both to efficiently infer loop invariants, and jump targets, but also to directly address the question whether a given program satisfies SNI or not.
A systematic study of techniques to soundly approximate SNI is out of scope of this paper.
For this case study, as discussed in the following section, we bound the number and the lengths of symbolic paths that are explored, thereby sacrificing the soundness of our analysis.

\subsection{Evaluating scalability}
\subsubsection*{Approach}

To perform a meaningful evaluation of \tool{}'s scalability despite the incomplete path coverage, we compare the time spent on discovering new symbolic paths with the time spent on checking SNI. Analyzing paths of different lengths enables us to evaluate the scalability of checking SNI {\em relative to that} of symbolic execution, which factors out the path explosion problem from the analysis.

We stress that we sacrifice soundness and completeness of the analysis for running \tool{} on the full Xen codebase (see Section~\ref{sec:scale-challenge}). This is why in this section we do {\em not} attempt to make statements about the security of the hypervisor.

\newcommand{\numberOfFunctions}{\numprint{3959}}
\newcommand{\numberOfTraces}{\numprint{24701}}

\subsubsection*{Setup}

We analyze the Xen Project hypervisor version 4.10, 
which we compile using \clang{} v7.0.0.
We identify \numberOfFunctions{}  functions in the generated assembly. 
For each function, we explore at most 25 symbolic paths of at most \numprint{10000} instructions each, with a global timeout of 10 minutes.\footnote{The sources and scripts needed for reproducing our results are available at~\texttt{\url{https://spectector.github.io}}.}

We record execution times as a function of the trace length, i.e., the number of $\loadObs{se}$, $\storeObs{se}$, and $\symPcObs{se}$ observations, rather than path length, since the former is more relevant for the size of the resulting SMT formulas.
We execute our experiments on the machine described in Section~\ref{sec:case-studies:performance}.

\subsection{Experimental results}

\subsubsection*{Cost of symbolic execution}
We measure the time taken for discovering symbolic paths (cf.~Section~\ref{sect:tool}).
In total, \tool{} discovers \numberOfTraces{} symbolic paths.
Figure~\ref{figure:xen:symbolic-execution} depicts the time for discovering paths. 
We highlight the following findings:
\begin{asparaitem}
\item As we apply concolic execution, discovering the first symbolic path does not require any SMT queries and is hence cheap. These cases are depicted by yellow dots in Fig.~\ref{figure:xen:symbolic-execution}.

\item Discovering further paths requires SMT queries. This increases execution time by approximately two orders of magnitude. These cases correspond to the blue dots in  Fig.~\ref{figure:xen:symbolic-execution}. 
\item For 48.3\% of the functions we do not reach the limit of 25 paths, for 35.4\% we do not reach the limit of \numprint{10000} instructions per path, and for 18.7\% we do not encounter unsupported instructions. 13 functions satisfy all three conditions. 
\end{asparaitem}

\subsubsection*{Cost of checking SNI} 
We apply \memcheck{} and \pccheck{} to the \numberOfTraces{} traces (derived from the discovered paths), with a timeout of 1 minute each.
Figure~\ref{figure:xen:sni:memcheck} and~\ref{figure:xen:sni:pccheck} depict the respective analysis runtimes; 
Figure~\ref{figure:xen:sni-vs-symb} relates the time required for discovering a new trace with the time for checking SNI, i.e., for executing lines 3--4 in Algorithm~\ref{algorithm:tool}.

We highlight the following findings:
\begin{asparaitem}

\item \memcheck{} and \pccheck{} can analyze 93.8\% and 94.7\%, respectively, of the \numberOfTraces{} traces in less than 1 minute. 
The remaining traces result in timeouts.

\item For 41.9\% of the traces, checking SNI is at most 10x faster than discovering the trace, and for 20.2\% of the traces it is between 10x and 100x faster.
On the other hand, for 26.9\% of the traces, discovering the trace is at most 10x faster than checking SNI, and for 7.9\% of the traces, discovering the trace is between 10x and 100x faster than checking SNI.
\end{asparaitem}

\subsubsection*{Summary}
Overall, our data indicates that the cost of checking SNI is comparable to that of discovering symbolic paths.
This may be surprising since SNI is a relational property, which requires {\em comparing} executions and is know to scale poorly. However, note that \tool only compares executions that follow the {\em same symbolic path}. This is sufficient because the program counter is observable, i.e., speculative non-interference never requires to consider two executions that disagree on path conditions. We hence conclude that our approach does not exhibit fundamental bottlenecks beyond those it inherits from symbolic execution.

\section{Discussion}\label{sec:discussion}

\subsection{Exploitability}

Exploiting speculative execution attacks requires an adversary to (1) prepare the microarchitectural state, (2) run victim code---partially speculatively---to encode information into the microarchitectural state, and (3) extract the leaked information from the microarchitectural state. 
\tool analyzes the victim code to determine whether it may speculatively leak information into the microarchitectural state in any possible attack context. 
Following the terminology of~\cite{microsoft-mitigation,spectreSoK}, speculative non-interference is a semantic characterization of {\em disclosure gadgets} enabled by speculative execution.

\subsection{Scope of the model}

The results obtained by \tool are only valid to the extent that the speculative semantics and the observer model accurately capture the target system.

In particular, \tool may incorrectly classify a program as secure if the speculative semantics does not implicitly\footnote{\emph{Implicitly}, because we take the memory accesses performed by the program and the flow of control as a proxy for the observations an adversary might make, e.g., through the cache.} capture all  additional observations an adversary may make due to speculative execution on an actual microarchitecture. 
For example, microarchitectures could potentially speculate on the condition of a conditional update, which our speculative semantics currently does not permit.

Similarly, \tool may incorrectly classify a program as insecure if the speculative semantics admits speculative executions that are not actually possible on an actual microarchitecture. This might be the case for speculative load hardening on Kocher's Example \#10, as discussed in Section~\ref{sec:casestudies}.

The speculative semantics, however, can always be adapted to more accurately reflect reality, once better documentation of processor behavior becomes available.
In particular, it would be relatively straightforward to extend the speculative semantics with models of indirect jump predictors~\cite{Kocher2018spectre}, return stack buffers~\cite{Maisuradze:2018:RSE:3243734.3243761}, and memory disambiguation predictors~\cite{spectreV4}.
The notion of SNI itself is robust to such changes, as it is defined relative to the speculative semantics. 

We capture ``leakage into the microarchitectural state'' using the relatively powerful  observer of the program execution that sees the location of memory accesses and the jump targets.
This observer could be replaced by a weaker one, which accounts for more detailed models of a CPU's memory hierarchy, and \tool could be adapted accordingly, e.g. by adopting the cache models from CacheAudit~\cite{doychev2015cacheaudit}.
We believe, however, that highly detailed models are not actually desirable for several reasons: 
\begin{inparaenum}[(a)]
	\item they encourage brittle designs that break under small changes to the model,
	\item they have to be adapted frequently, and
	\item they are hard to understand and reason about for compiler developers and hardware engineers.
\end{inparaenum}
The ``constant-time'' observer model adopted in this paper has  proven to offer a good tradeoff between precision and robustness~\cite{MolnarPSW05,AlmeidaBBDE16}.

\section{Related work}

\para{Speculative execution attacks}
These attacks exploit  speculatively executed instructions to leak information.
After \spectre{}~\cite{Kocher2018spectre, kiriansky2018speculative}, many speculative execution attacks have been discovered that differ in the exploited speculation sources~\cite{Maisuradze:2018:RSE:3243734.3243761, 220586, spectreV4}, the covert channels~\cite{trippel2018meltdownprime,schwarz2018netspectre, stecklina2018lazyfp} used, or the target platforms~\cite{chen2018sgxpectre}.
We refer the reader to~\cite{spectreSoK} for a  survey of  speculative execution attacks and their countermeasures.

 Here, we overview only \spectre{} v1 software-level countermeasures.
 AMD and Intel suggest inserting \textbf{lfence} instructions after branches~\cite{Intel,amd}.
 These instructions effectively act as speculation barriers, and prevent speculative leaks.
 The Intel C++ compiler~\cite{Intel-compiler}, the Microsoft Visual C++ compiler~\cite{microsoft}, and the \textsc{Clang}~\cite{spec-hard} compiler can automatically inject this countermeasure at compile time.
 Taram et al.~\cite{Taram2019} propose \emph{context-sensitive fencing}, a  defense mechanism that dynamically injects fences at the microoperation level where necessary, as determined by a  dynamic information-flow tracker. 
 An alternative technique to injecting fences is to introduce artificial data dependencies~\cite{DBLP:journals/corr/abs-1805-08506, spec-hard}.
 Speculative Load Hardening (SLH)~\cite{spec-hard}, implemented in the \textsc{Clang} compiler, employs carefully injected data dependencies and masking operations to prevent the leak of sensitive information into the microarchitectural state.
 A third software-level countermeasure consists in replacing branching instructions by other computations, like bit masking, that do not trigger speculative execution~\cite{Webkit}.

\para{Detecting speculative leaks}
oo7~\cite{oo7} is a binary analysis tool for detecting speculative leaks.
The tool looks for specific syntactic code patterns and it can analyze large code bases. 
However, it misses some speculative leaks, like Example \#4 from Section~\ref{sec:casestudies}.
oo7 would also incorrectly classify all the programs patched by SLH in our case studies as insecure, since they still match oo7's vulnerable patterns.
In contrast, \tool{} builds on a semantic notion of security and is thus not limited to particular syntactic code patterns.

Disselkoen et al.~\cite{disselkoen2018code} and Mcilroy et al.~\cite{McIlroy19} develop models for capturing speculative execution, which they use to illustrate several known Spectre variants. Neither approach provides a security notion or a detection technique.
Compared with our speculative semantics, the model of~\cite{McIlroy19} more closely resembles microarchitectural implementations by explicitly modeling the reorder buffer, caches, and branch predictors, which we intentionally abstract away.

In work concurrent to ours, Cheang et al.~\cite{Cheang19} introduce the notion of trace property-dependent observational determinism (TPOD), which they instantiate to formally capture the new leaks introduced by the interaction of microarchitectural side channels with speculative execution.
As TPOD is a 4-safety property it can be checked using 4-way self composition.
In contrast, SNI can be checked by 2-way self composition (cf. Proposition~\ref{proposition:speculative-and-non-speculative}), which is likely to be more efficient.

\para{Formal architecture models}
Armstrong et al.~\cite{conf/popl19/armstrong} present formal models for the ARMv8-A, RISC-V,  MIPS, and CHERI-MIPS instruction-set architectures.
Degenbaev~\cite{degenbaev2012x86} and Goel et al.~\cite{Goel2017} develop formal models for parts of the x86 architecture.
Such models enable, for instance, the formal verification of compilers, operating systems, and hypervisors.
However, ISA models naturally abstract from microarchitectural aspects such as speculative execution or caches, which are required to reason about side-channel vulnerabilities.

Zhang et al.~\cite{zhang2018coppelia} present Coppelia, a tool to automatically generate software exploits for hardware designs. 
However, the processor designs they consider, OR1200, PULPino, and Mor1kx, do not feature speculative execution.

\para{Static detection of side-channel vulnerabilities}
Several approaches have been proposed for statically detecting side-channel vulnerabilities in programs~\cite{doychev2015cacheaudit,dkpldi17,brotzman2019casym,AlmeidaBBDE16}. 
These differ from our work in that (1) they do not consider speculative execution, and (2) we exclusively target speculation leaks, i.e., we ignore leaks from the standard semantics. 
However, we note that our tool could easily be adapted to also detect leaks from the standard semantics.

\section{Conclusions}

We introduce speculative non-interference, the first semantic notion
of security against speculative execution attacks.  Based on this
notion we develop \tool{}, a tool for automatically detecting
speculative leaks or proving their absence, and we show how it can be
used to detect subtle leaks---and optimization opportunities---in the way state-of-the-art compilers apply
\spectre{} mitigations.

\medskip
\subsubsection*{Acknowledgments}
We thank Roberto Giacobazzi, Catalin Marinas, Matt Miller, Matthew Parkinson, Niki Vazou, and the anonymous reviewers for helpful discussions and comments. 
This work was supported by a grant from Intel Corporation,
Ram{\'o}n y Cajal grant RYC-2014-16766, Atracci\'on de Talento Investigador grant 2018-T2/TIC-11732A, Spanish projects
TIN2015-70713-R DEDETIS, TIN2015-67522-C3-1-R TRACES, and RTI2018-102043-B-I00 SCUM, and Madrid
regional projects S2013/ICE-2731 N-GREENS and S2018/TCS-4339 BLOQUES.

\bibliographystyle{IEEEtran}
{
\bibliography{bibfile-1}
}

\appendices

\renewcommand{\thesectiondis}[2]{\Alph{section}:}

\section{Non-speculative semantics}\label{appendix:non-speculative-semantics}

Given a program $p$, we formalize its non-speculative semantics using the relation $\eval{p}{} \subseteq \Conf \times \Obs \times \Conf$  in Figure~\ref{figure:language:semantics:full}.

\begin{figure*}

\begin{tabular}{l c l c l c l c}
\multicolumn{8}{l}{\bf Expression evaluation}\\
	$\exprEval{n}{a} = n$ & & $\exprEval{x}{a} = a(x)$ &
	 $\exprEval{\unaryOp{e}}{a} = \unaryOp{\exprEval{e}{a}}$ & & $\exprEval{\binaryOp{e_1}{e_2}}{a} = \binaryOp{\exprEval{e_1}{a}}{\exprEval{e_2}{a}}$\\\\
\multicolumn{8}{l}{\bf Instruction evaluation}\\
\multicolumn{8}{l}{
{
\small
\begin{mathpar}
\inferrule[Skip]
{
\select{p}{a(\pc)} = \pskip
}
{
\tup{m, a} \eval{p}{} \tup{m, a[\pc \mapsto a(\pc)+1]}
}

\inferrule[Barrier]
{
	p(a(\pc)) = \pbarrier
}
{
	\tup{m,a} \eval{p}{} \tup{m,a[\pc \mapsto a(\pc) + 1]}
}

\inferrule[Assign]
{
\select{p}{a(\pc)} = \passign{x}{e}\\
x \neq \pc
}
{
\tup{m, a} \eval{p}{} \tup{m, a[\pc \mapsto a(\pc)+1,x \mapsto \exprEval{e}{a}]}
}

\inferrule[ConditionalUpdate-Sat]
{
	p(a(\pc)) = \pcondassign{x}{e}{e'}\\
	\exprEval{e'}{a} = 0\\
	x \neq \pc
}
{
	\tup{m,a} \eval{p}{} \tup{m,a[\pc \mapsto a(\pc) + 1, x \mapsto \exprEval{e}{a}]}
}

\inferrule[ConditionalUpdate-Unsat]
{
	p(a(\pc)) = \pcondassign{x}{e}{e'}\\
	\exprEval{e'}{a} \neq 0\\
	x \neq \pc
}
{
	\tup{m,a} \eval{p}{} \tup{m,a[\pc \mapsto a(\pc) + 1]}
}

\inferrule[Terminate]
{
	\select{p}{a(\pc)} = \bot
}
{
\tup{m, a} \eval{p}{} \tup{m, a[\pc \mapsto \bot]}
}

\inferrule[Load]
{
\select{p}{a(\pc)} = \pload{x}{e} \\ 
x \neq \pc \\
n = \exprEval{e}{a}
}
{
\tup{m, a} \eval{p}{\loadObs{n}} \tup{m, a[\pc \mapsto a(\pc)+1, x \mapsto m(n)]}
}

\inferrule[Store]
{
\select{p}{a(\pc)} = \pstore{x}{e} \\ 	
n = \exprEval{e}{a}
}
{
\tup{m, a} \eval{p}{\storeObs{n}} \tup{ m[n \mapsto a(x)], a[\pc \mapsto a(\pc)+1]}
}

\inferrule[Beqz-Sat]
{
\select{p}{a(\pc)} = \pjz{x}{\lbl} \\
a(x) = 0
}
{
\tup{m, a} \eval{p}{\pcObs{\lbl}} \tup{ m, a[\pc \mapsto \lbl]}
}

\inferrule[Beqz-Unsat]
{
\select{p}{a(\pc)} = \pjz{x}{\lbl} \\
a(x) \neq 0
}
{
\tup{ m, a} \eval{p}{\pcObs{a(\pc)+1}} \tup{ m, a[\pc \mapsto a(\pc) +1]}
}

\inferrule[Jmp]
{
\select{p}{a(\pc)} = \pjmp{e} \\
\lbl = \exprEval{e}{a}
}
{
\tup{m, a} \eval{p}{\pcObs{\lbl}} \tup{ m, a[\pc \mapsto \lbl]}
}
\end{mathpar}
}
}
\end{tabular}
\caption{\lang{} semantics for a program $p$}\label{figure:language:semantics:full}
\end{figure*}

\section{Trace projections}\label{appendix:projections}
Here, we formalize the speculative projection $\specProject{\tau}$ and the non-speculative projection $\nspecProject{\tau}$.

\para{Non-speculative projection}
Given a trace $\tau$, its non-speculative projection contains only the observations that are produced by committed transactions; in other words, rolled-back transactions are removed in the projection.
Formally,  $\nspecProject{\tau}$ is defined as follows:
$\nspecProject{\emptysequence} = \emptysequence$, 
$\nspecProject{(o \concat \tau)} = o \concat \nspecProject{\tau}$ if $o$ is $\loadObs{\sexpr}$, $\storeObs{\sexpr}$, $\pcObs{n}$, or $\symPcObs{\sexpr}$, 
$\nspecProject{(\startObs{i} \concat \tau)} = \nspecProject{\tau}$ if $\rollbackObs{i}$ is not in $\tau$,
$\nspecProject{(\commitObs{i} \concat \tau)} = \nspecProject{\tau}$,
$\nspecProject{(\startObs{i} \concat \tau \concat \rollbackObs{i} \concat \tau') } = \nspecProject{\tau'}$, and
$\nspecProject{\tau} = \emptysequence$ otherwise.

\para{Speculative projection}
Given a speculative trace $\tau$, its speculative projection contains only the observations produced by rolled-back transactions.
Formally, $\specProject{\tau}$ is defined as:
$\specProject{\emptysequence} = \emptysequence$,
$\specProject{(o \concat \tau)} = \specProject{\tau}$ if $o$ is $\loadObs{\sexpr}$, $\storeObs{\sexpr}$, $\pcObs{n}$, or $\symPcObs{\sexpr}$, 
$\specProject{(\startObs{i} \concat \tau)} = \specProject{\tau}$ if $\rollbackObs{i}$ is not in $\tau$,
$\specProject{(\commitObs{i} \concat \tau)} = \specProject{\tau}$,
$\specProject{(\startObs{i} \concat \tau \concat \rollbackObs{i} \concat \tau') } = \mathit{filter}(\tau) \concat \nspecProject{\tau'}$, and
$\nspecProject{\tau} = \emptysequence$ otherwise, where $\mathit{filter}(\tau)$ denotes the trace obtained by dropping all extended observations $\startObs{\id}$, $\commitObs{\id}$, and $\rollbackObs{\id}$ from~$\tau$.

\section{Always-mispredict semantics}\label{appendix:always-mispredict}
We describe the execution of \lang{} programs under the always-mispredict oracle with speculative window $w$ as a ternary evaluation relation $\tup{\ctr,\sigma,s} \ameval{\tau} \tup{\ctr',\sigma',s'}$ mapping a configuration $\tup{\ctr,\sigma,s}$ to a configuration $\tup{\ctr',\sigma',s'}$ while producing the observations~$\tau$.
Differently from the speculative semantics, the always-mispredict semantics does not require a branching history $h$, since its prediction only depends on the branch outcome.
The rules formalizing the always-mispredict semantics are given in Figure~\ref{figure:always-mispredict-semantics}. % 

\textsc{Am-NoBranch} captures the behavior of non-branching instructions. 
Similar to its counterpart \textsc{Se-NoBranch}, the rule acts as a wrapper for the standard semantics. 
The difference lies in the  the auxiliary predicate $\exhaustedTop{s}$ and the auxiliary functions $\decrementTop{s}$, and $\zeroesTop{s}$, which apply their non-primed counterpart only to the {\em last} transaction in the speculative state.
E.g., $\exhaustedTop{s \concat \tup{\id,w,\lbl,\sigma}} = \exhausted{\tup{\id,w,\lbl,\sigma}}$.
This ensures that upon rolling back a nested transaction, its enclosing transaction can explore the other alternative branch to the full depth of the speculative window (corresponding to the case of a correct prediction). 

\textsc{Am-Branch} models the behavior of branching instructions  $\pjz{x}{\lbl'}$.
The rule mispredicts the outcome of the branch instruction by setting the program counter to $\lbl'$ only when the condition is {\em not} satisfied.
The length of the new transaction is set to the minimum of the oracle's speculative window $w$ and $\window{s} -1$, where $\window{s}$ is the remaining length of the last speculative transaction in $s$.
This ensures that nested transactions are not explored for longer than permitted by their enclosing transactions, whose remaining lengths are not decremented during the execution of the nested transaction.

\textsc{Am-Rollback} models the rollback of speculative transactions.
Different from \textsc{SE-Rollback}, and by design of \textsc{Am-NoBranch}, the rule applies only to the last transaction in $s$.
Since the semantics always-mispredicts the outcome of branch instructions, \textsc{SE-Rollback} is always applied, i.e there is no need for a rule that handles committed transactions.

\begin{figure*}
{\small
\begin{mathpar}
\inferrule[Am-NoBranch]
{
p(\sigma(\pc)) \neq \pjz{x}{\lbl}\\
\sigma \eval{p}{\tau} \sigma'\\
 \exhaustedTop{s}\\\\
  s' = {
 	\begin{cases}
 		\decrementTop{s} & \text{if}\ p(\sigma(\pc)) \neq \pbarrier\\
		\zeroesTop{s}  & 	\text{otherwise}
 	\end{cases}
	}
}
{
\tup{\ctr, \sigma, s} \ameval{\tau} \tup{\ctr, \sigma', s'}
}	

\inferrule[Am-Branch]
{
p(\sigma(\pc)) = \pjz{x}{\lbl'}\\
\exhaustedTop{s}\\\\
\lbl = {
\begin{cases}
\sigma(\pc) + 1 & \text{if}\ \sigma(x) = 0  \\
\lbl' 	 	& \text{if}\ \sigma(x) \neq 0 \\
\end{cases}
}\\
\id = \ctr\\\\
s' = \decrementTop{s} \concat \tup{\sigma,\ctr, \mathit{min}(w, \window{s}-1), \lbl}
}
{
\tup{\ctr, \sigma, s} \ameval{ \startObs{\id} \concat \pcObs{\lbl}} \tup{ \ctr +1 , \sigma[\pc \mapsto \lbl], s'}
}	

\inferrule[Am-Rollback]
{
\sigma' \eval{p}{\tau} \sigma''\\
}
{
\tup{\ctr, \sigma, s \concat \tup{\sigma',\id, 0, \lbl}} \ameval{\rollbackObs{\id} \concat \pcObs{\sigma''(\pc)}} \tup{\ctr, \sigma'', s}
}
\end{mathpar}
}
\caption{Always-mispredict speculative semantics for a program $p$ and speculative window $w$}\label{figure:always-mispredict-semantics}
\end{figure*}

Similarly to Proposition~\ref{proposition:speculative-and-non-speculative},  a program's non-speculative behavior can be recovered from  the always-mispredict semantics.

\begin{restatable}{prop}{alwaysMispredSpecNonSpec}
\label{proposition:always-mispredict:speculative-and-non-speculative}
	Let $p$ be a program and $w$ be a speculative window.
Then, $\nspecTraces{p} = \nspecProject{\amTraces{p}{w}}$.
\end{restatable}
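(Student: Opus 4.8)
The plan is to prove the two inclusions $\nspecTraces{p}\subseteq\nspecProject{\amTraces{p}{w}}$ and $\nspecProject{\amTraces{p}{w}}\subseteq\nspecTraces{p}$ separately, following the same pattern as the proof of Proposition~\ref{proposition:speculative-and-non-speculative}. Compared with that proof, the simplification is that the always-mispredict semantics never commits a transaction, so every speculative transaction is eventually rolled back; the new complication is the top-only auxiliary operations $\decrementTop{\cdot}$, $\zeroesTop{\cdot}$, and $\exhaustedTop{\cdot}$, which change how nested transactions interact with speculative windows. A preliminary step I would carry out first is an auxiliary lemma: a transaction opened with window $v$ is rolled back after a finite, bounded number of steps. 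This is proved by induction on $v$, using that the always-mispredict semantics is deterministic, that every step taken while a transaction is innermost decrements that transaction's window (or zeroes it, for $\pbarrier$), and that a nested transaction is always created with a strictly smaller window, since $\min(w,\window{s}-1)<\window{s}$. As a consequence, for a terminating program every always-mispredict execution is finite and ends in a final configuration with empty speculative state, so the runs we compare contain no dangling, un-rolled-back transaction.

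For $\nspecProject{\amTraces{p}{w}}\subseteq\nspecTraces{p}$, I would take an always-mispredict run $\tup{\sigma,\tau,\sigma'}$ and show $\tup{\sigma,\nspecProject{\tau},\sigma'}\in\nspecTraces{p}$. The key structural fact is that the run decomposes uniquely into an alternation of (i) top-level segments, consisting of \textsc{Am-NoBranch} steps taken with empty speculative state (these necessarily execute non-branch instructions), and (ii) transaction blocks, each opened by an \textsc{Am-Branch} step creating a non-nested transaction with a fresh identifier $\id$ and closed by the unique \textsc{Am-Rollback} step that removes it; since \textsc{Am-Rollback} requires the transaction it removes to be the last component of the speculative state, transaction identifiers are well-bracketed and no step of a block escapes past its closing rollback. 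A top-level segment is then, step by step, a non-speculative execution, because an \textsc{Am-NoBranch} step with empty speculative state is exactly a step of the relation $\eval{p}{}$, and $\nspecProject{\cdot}$ keeps its observations. A transaction block opened at a branch configuration $\hat\sigma$ contributes, after projection, exactly the observation $\pcObs{\hat\sigma'(\pc)}$ of the correct non-speculative branch step $\hat\sigma\eval{p}{}\hat\sigma'$: the snapshot stored by \textsc{Am-Branch} is never modified, so the closing \textsc{Am-Rollback} resumes from that snapshot via the correct branch rule and yields $\hat\sigma'$; and $\nspecProject{\cdot}$ erases the substring $\startObs{\id}\concat\cdots\concat\rollbackObs{\id}$ while keeping the $\pcObs{\cdot}$ emitted immediately after $\rollbackObs{\id}$. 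Concatenating the pieces gives a non-speculative execution from $\sigma$ to $\sigma'$ with trace $\nspecProject{\tau}$.

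For $\nspecTraces{p}\subseteq\nspecProject{\amTraces{p}{w}}$, I would argue in the other direction: given a non-speculative execution $\sigma_0\eval{p}{o_1}\cdots\eval{p}{o_n}\sigma_n$, build an always-mispredict execution by induction on $n$, keeping the invariant that after simulating $k$ non-speculative steps we have reached an extended configuration with empty speculative state and plain configuration $\sigma_k$, along a trace whose non-speculative projection is $o_1\concat\cdots\concat o_k$. A non-branching step is mirrored directly by \textsc{Am-NoBranch}. For a branching step at $\sigma_k$, I apply \textsc{Am-Branch}, which opens a transaction heading down the mispredicted branch, and then let the semantics run deterministically inside that transaction; by the auxiliary lemma this sub-execution is finite and ends with the \textsc{Am-Rollback} that closes the transaction, returning to empty speculative state at the configuration $\sigma_{k+1}$ produced by the correct branch step, and the non-speculative projection of the whole inserted block is exactly $o_{k+1}=\pcObs{\sigma_{k+1}(\pc)}$, just as above. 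This re-establishes the invariant.

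I expect the main obstacle to be the bookkeeping around nesting: establishing the auxiliary termination lemma, proving the block decomposition and the well-bracketing of transaction identifiers, and verifying the invariant that the window of an enclosing transaction is left untouched by the activity of its nested transactions --- which is precisely the property that makes the always-mispredict semantics ``continue as if the nested branch had been correctly predicted'', cf.\ Example~\ref{ex:mispred}. The remaining parts --- the base cases and the matching of observations under $\nspecProject{\cdot}$ --- are routine, and the overall two-direction structure mirrors that of Proposition~\ref{proposition:speculative-and-non-speculative}.
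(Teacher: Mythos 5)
Your proposal is correct and follows essentially the same route as the paper, which proves this proposition by replaying the two directions of Proposition~\ref{proposition:speculative-and-non-speculative}: the soundness direction becomes simpler because the always-mispredict semantics only ever rolls back transactions, and the completeness direction is the same branch-by-branch expansion of a non-speculative run into mispredicted-then-rolled-back blocks. Your explicit auxiliary lemma that every transaction is rolled back after finitely many steps (by induction on the window, using that nested windows are strictly smaller and that enclosing windows are untouched during nested execution) is a detail the paper only asserts in passing, and it is the right way to justify it.
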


\onlyTechReport{The proof of Proposition~\ref{proposition:always-mispredict:speculative-and-non-speculative} is given in Appendix~\ref{appendix:always-mispredict-and-non-speculative}.}

Proposition~\ref{proposition:always-mispredict-worst-case} states that the always-mispredict semantics yields the worst-case leakage.
\begin{restatable}{prop}{alwaysMispredWorstCase}
\label{proposition:always-mispredict-worst-case}
Let $p$ be a program, $w \in \Nat$ be a speculative window, and  $\sigma,\sigma' \in \Init$ be initial configurations.
$\amEval{p}{w}{\sigma} = \amEval{p}{w}{\sigma'}$ iff  $\specEval{p}{\bp}{\sigma} = \specEval{p}{\bp}{\sigma'}$ for all prediction oracles $\bp$ with speculative window at most~$w$.
\end{restatable}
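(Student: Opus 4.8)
The statement is an equivalence, and I would prove the two implications separately, both exploiting the one structural fact that distinguishes the always-mispredict semantics from any fixed oracle: the always-mispredict semantics \emph{freezes} the remaining budget of an enclosing transaction while a nested transaction is explored (\textsc{Am-NoBranch} decrements only the innermost length), whereas under a genuine oracle \textsc{Se-NoBranch} decrements \emph{all} ongoing transactions. Consequently, along any path the always-mispredict run explores at least as deeply as any oracle run, so its trace subsumes the trace produced by any oracle. Throughout I use that the non-speculative semantics is deterministic, so that Propositions~\ref{proposition:speculative-and-non-speculative} and~\ref{proposition:always-mispredict:speculative-and-non-speculative} give, for each initial $\sigma$, $\nspecProject{\specEval{p}{\bp}{\sigma}} = \nspecEval{p}{\sigma} = \nspecProject{\amEval{p}{w}{\sigma}}$.

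\textbf{Direction $\Rightarrow$ (always-mispredict equal implies all oracles equal).} I would prove a \emph{domination lemma}: for every oracle $\bp$ with speculative window at most $w$ there is a function $g_{\bp}$, depending only on $p$, $w$, $\bp$ and not on the configuration, such that $\specEval{p}{\bp}{\sigma} = g_{\bp}(\amEval{p}{w}{\sigma})$ for all $\sigma \in \Init$. The construction of $g_{\bp}$ is purely syntactic: from an always-mispredict trace one reads off the full control flow and every memory access — including those along every mispredicted path up to depth $w$ — and the correct continuation visible after each rollback. A replay that consults $\bp$ at each branch (maintaining $\bp$'s branching history as it goes), decides commit vs.\ rollback by comparing $\bp$'s prediction with the outcome appearing after the matching rollback in the always-mispredict trace, and truncates each transaction to the length $\bp$ assigns, yields exactly $\specEval{p}{\bp}{\sigma}$. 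Correctness of the replay is an induction on the length of the $\bp$-execution, carrying an invariant that aligns the $\bp$-configuration with the current position in the always-mispredict trace and shows that the remaining budget of every transaction in the always-mispredict run dominates that of the corresponding transaction in the $\bp$-run — this is exactly where budget freezing is needed. The implication is then immediate: equal always-mispredict traces have equal images under $g_{\bp}$.

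\textbf{Direction $\Leftarrow$ (all oracles equal implies always-mispredict equal), contrapositively.} Assume $\amEval{p}{w}{\sigma} \neq \amEval{p}{w}{\sigma'}$ and let $i$ be the first index where the traces disagree (treating a strict prefix as disagreeing at the next position). If the discrepancy already appears in the non-speculative projection, then $\nspecEval{p}{\sigma} \neq \nspecEval{p}{\sigma'}$ and \emph{every} oracle distinguishes $\sigma$ from $\sigma'$. Otherwise the discrepancy lies inside a rolled-back transaction, reached through a nesting chain of branch instructions $b_1,\dots,b_d$ ($b_1$ on the real path, each $b_{j+1}$ on the mispredicted path of $b_j$), at some position $k$ of the innermost transaction. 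The always-mispredict budget accounting caps the total exploration depth from the outermost branch by $w$, so the sum of the numbers of instructions executed on $b_1,\dots,b_{d-1}$'s mispredicted paths before reaching the next chain element, plus $(d-1)$, plus $k$, is at most $w$. I then build one oracle $\bp$ that mispredicts with budget $w$ exactly at the occurrences of $b_1,\dots,b_d$ singled out by their branching histories, and predicts correctly with budget $0$ everywhere else; a $0$-budget correct prediction costs an enclosing transaction only the single decrement of the branch step itself and commits immediately, so it does not disturb the accounting. The bound above guarantees that under this $\bp$ no $b_j$ exhausts its budget before position $k$ of the innermost transaction is reached, so $\bp$'s run reproduces the always-mispredict trace up to index $i$ and then exhibits the discrepancy; since the common prefix up to $i$ fixes the relevant histories, the same $\bp$ distinguishes $\sigma$ from $\sigma'$.

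\textbf{Main obstacle.} The conceptual content is light — always-mispredict ``does more'' than any oracle — but the proof is bookkeeping-heavy, and the delicate point in both directions is the budget accounting under nesting: forward, one must show the invariant ``always-mispredict budgets dominate oracle budgets'' survives every rule, including rollback of a nested transaction whose enclosing transaction has been silently shrunk under the oracle; backward, one must verify that the explicit witness oracle, with its $0$-budget side branches, consumes exactly the enclosing budget that the always-mispredict run consumes, so that the depth bound transfers verbatim. A modest amount of care is also needed in the case analysis of what the first differing observation can be (a memory-access argument, a program-counter value, a transaction marker, or one trace being a strict prefix), but each case reduces to the chain construction above.
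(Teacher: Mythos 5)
Your proposal is sound and, in the backward direction, is essentially the paper's own argument: the witness oracle you build (mispredict with full window $w$ exactly at the chain of transactions still live at the first divergence, predict correctly with window $0$ everywhere else, with the oracle made well-defined as a function of the branching history) is precisely the construction in the proof of Proposition~\ref{proposition:always-mispred-worst-case:completeness}, including the observation that a $0$-window correct prediction costs every enclosing transaction exactly one decrement, which is what makes the budgets match (the paper's invariant $INV2$) and lets the telescoping depth bound transfer. The forward direction is where you diverge in packaging: the paper does not establish a reconstruction function $g_{\bp}$ with $\specEval{p}{\bp}{\sigma}=g_{\bp}(\amEval{p}{w}{\sigma})$; instead it proves a relational lockstep lemma (Lemma~\ref{lemma:always-mispredict-worstcase:am-to-sp}) that advances \emph{two} always-mispredict runs with equal traces against \emph{two} oracle runs, carrying the invariants $\cong$ (pc- and next-step agreement), $||$ (configuration mirroring against the commit-free projection of the oracle's stack) and $INV$ (the oracle's minimum remaining window is dominated, prefix-wise, by the always-mispredict window), and concludes equality of the emitted oracle observations step by step. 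Your functional formulation is a strengthening of the same fact and is provable, but note that it does not reduce the work: its correctness induction must still relate the oracle configuration to the always-mispredict \emph{configuration} (not just a trace position), i.e. you need the analogue of $||$ alongside budget domination, and the well-definedness of the replay from the trace alone (every actual branch outcome and every observation the oracle can emit being locatable in the always-mispredict trace) is itself a consequence of that domination invariant, so the rule-by-rule bookkeeping ends up isomorphic to the paper's; the relational two-run version merely avoids having to specify the stitching of $g_{\bp}$ explicitly, at the cost of proving a statement that is weaker but exactly sufficient.
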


\onlyTechReport{The proof of Proposition~\ref{proposition:always-mispredict-worst-case} is given in Appendix~\ref{appendix:always-mispredict-worst-case}.}

\section{Symbolic semantics}\label{appendix:symbolic-semantics}

\begin{figure*}
\begin{tabular}{l l}
	\multicolumn{2}{l}{\bf Expression evaluation}\\
$\exprEval{n}{a} = n$ & if $n \in \Val$  \\
$\exprEval{se}{a} = se$ & if $sa \in \SymExpr \setminus \Val$\\
$\exprEval{x}{a} = a(x)$ & if $x \in \Var$ \\
$\exprEval{\unaryOp{e}}{a} = \mathit{apply}(\unaryOp{}, \exprEval{e}{a}) $ & if $\exprEval{e}{a} \in \Val$\\
$\exprEval{\unaryOp{e}}{a} = \unaryOp{\exprEval{e}{a}} $ & if $\exprEval{e}{a} \in \SymExpr \setminus \Val$\\
$\exprEval{\binaryOp{e_1}{e_2}}{a} = \mathit{apply}(\binaryOp{}{},\exprEval{e_1}{a},\exprEval{e_2}{a})$ & if $\exprEval{e_1}{a}, \exprEval{e_2}{a} \in \Val$\\
$\exprEval{\binaryOp{e_1}{e_2}}{a} = \binaryOp{\exprEval{e_1}{a}}{\exprEval{e_2}{a}}$ & if $\exprEval{e_1}{a} \in \SymExpr \setminus \Val$\\
$\exprEval{\binaryOp{e_1}{e_2}}{a} = \binaryOp{\exprEval{e_1}{a}}{\exprEval{e_2}{a}}$ & if $\exprEval{e_2}{a} \in \SymExpr \setminus \Val$\\
\\
\multicolumn{2}{l}{\bf Instruction evaluation}\\
\end{tabular}
{\small
\begin{mathpar}
\inferrule[Skip]
{
\select{p}{sa(\pc)} = \pskip}
{
\tup{sm, sa} \symEval{p}{} \tup{sm, sa[\pc \mapsto sa(\pc)+1]}
}

\inferrule[Barrier]
{
\select{p}{sa(\pc)} = \pbarrier}
{
\tup{sm, sa} \symEval{p}{} \tup{sm, sa[\pc \mapsto sa(\pc)+1]}
}

\inferrule[Assign]
{
\select{p}{sa(\pc)} = \passign{x}{e}\\
x \neq \pc
}
{
\tup{sm, sa} \symEval{p}{} \tup{sm, sa[\pc \mapsto sa(\pc)+1,x \mapsto \exprEval{e}{sa}]}
}

\inferrule[ConditionalUpdate-Concr-Sat]
{
	\select{p}{sa(\pc)} = \pcondassign{x}{e}{e'}\\
	\exprEval{e'}{sa} = 0\\
	x \neq \pc
}
{
	\tup{sm,sa} \symEval{p}{} \tup{sm,sa[\pc \mapsto sa(\pc) + 1, x \mapsto \exprEval{e}{sa}]}
}

\inferrule[ConditionalUpdate-Concr-Unsat]
{
	\select{p}{sa(\pc)} = \pcondassign{x}{e}{e'}\\
	\exprEval{e'}{sa} =  n\\
	n \in \Val\\\\
	n \neq 0 \\
	x \neq \pc
}
{
	\tup{sm,sa} \symEval{p}{} \tup{sm,sa[\pc \mapsto sa(\pc) + 1]}
}

\inferrule[ConditionalUpdate-Symb]
{
	\select{p}{sa(\pc)} = \pcondassign{x}{e}{e'}\\
	\exprEval{e'}{sa} =  \sexpr\\
	\sexpr \not\in \Val\\
	x \neq \pc
}
{
	\tup{sm,sa} \symEval{p}{} \tup{sm,sa[\pc \mapsto sa(\pc) + 1, x \mapsto \ite{\sexpr = 0}{\exprEval{e}{sa}}{sa(x)}]}
}

\inferrule[Load-Symb]
{
\select{p}{sa(\pc)} = \pload{x}{e} \\
x \neq \pc \\ 
\sexpr = \exprEval{e}{sa}\\\\
\sexpr' = \symRead{sm}{se}
}
{
\tup{sm, sa} \symEval{p}{\loadObs{se}} \tup{sm, sa[\pc \mapsto sa(\pc)+1, x \mapsto \sexpr']}
}

\inferrule[Store-Symb]
{
\select{p}{sa(\pc)} = \pstore{x}{e} \\ 
\sexpr = \exprEval{e}{sa}\\\\
sm' = \symWrite{sm}{se}{sa(x)}
}
{
\tup{sm, sa} \symEval{p}{\storeObs{\sexpr}} \tup{ sm', sa[\pc \mapsto sa(\pc)+1]}
}

\inferrule[Beqz-Concr-Sat]
{
\select{p}{sa(\pc)} = \pjz{x}{\lbl} \\
sa(x) = 0\\
sa(x) \in \Val
}
{
\tup{sm, sa} \symEval{p}{\symPcObs{\top} \concat \pcObs{\lbl}} \tup{ sm, sa[\pc \mapsto \lbl]}
}

\inferrule[Beqz-Symb-Sat]
{
\select{p}{sa(\pc)} = \pjz{x}{\lbl} \\
sa(x) \not\in \Val 
}
{
\tup{sm, sa} \symEval{p}{\symPcObs{sa(x) = 0} \concat \pcObs{\lbl}} \tup{ sm, sa[\pc \mapsto \lbl]}
}

\inferrule[Beqz-Concr-Unsat]
{
\select{p}{sa(\pc)} = \pjz{x}{\lbl} \\
sa(x) \neq 0 \\
sa(x) \in \Val
}
{
\tup{ sm, sa} \symEval{p}{\symPcObs{\top} \concat \pcObs{sa(\pc)+1}} \tup{ sm, sa[\pc \mapsto sa(\pc) +1]}
}

\inferrule[Beqz-Symb-Unsat]
{
\select{p}{sa(\pc)} = \pjz{x}{\lbl} \\
sa(x) \not\in \Val 
}
{
\tup{sm, sa} \symEval{p}{\symPcObs{sa(x) \neq 0} \concat \pcObs{sa(\pc)+1}} \tup{ sm, sa[\pc \mapsto sa(\pc) +1]}
}

\inferrule[Jmp-Concr]
{
\select{p}{sa(\pc)} = \pjmp{e} \\
\lbl = \exprEval{e}{sa}\\
\lbl \in \Val
}
{
\tup{sm, sa} \symEval{p}{\symPcObs{\top} \concat \pcObs{\lbl}} \tup{ sm, sa[\pc \mapsto \lbl]}
}

\inferrule[Jmp-Symb]
{
\select{p}{sa(\pc)} = \pjmp{e} \\
\exprEval{e}{sa} \not\in \Val\\
\lbl \in \Val 
}
{
\tup{sm, sa} \symEval{p}{\symPcObs{\exprEval{e}{sa} = \lbl} \concat \pcObs{\lbl}} \tup{ sm, sa[\pc \mapsto \lbl]}
}

\inferrule[Terminate]
{
	\select{p}{sa(\pc)} = \bot
}
{
\tup{sm, sa} \symEval{p}{} \tup{sm, sa[\pc \mapsto \bot]}
}
\end{mathpar}
}
 \caption{\lang{} symbolic non-speculative semantics for a program $p$}\label{figure:symbolic:non-speculative-semantics}
\end{figure*}

Here, we formalize the symbolic  semantics.

\para{Symbolic expressions}
Symbolic expressions represent computations over symbolic values.
A \textit{symbolic expression} $\sexpr$ is a concrete value $n \in \Val$, a symbolic value $s \in \SymVal$, an if-then-else expression $\ite{\sexpr}{\sexpr'}{\sexpr''}$, or the application of a unary   $\unaryOp{}$  or a binary operator $\binaryOp{}{}$.
\begin{align*}
 \sexpr := n \mid s \mid \ite{\sexpr}{\sexpr'}{\sexpr''} \mid \unaryOp{\sexpr} \mid \binaryOp{\sexpr}{\sexpr'}
\end{align*}

\begin{figure*}[h!]
{\small
\begin{mathpar}
\inferrule[Am-NoBranch]
{
p(\sigma(\pc)) \neq \pjz{x}{\lbl}\\
\sigma \symEval{p}{\tau} \sigma'\\
 \exhaustedTop{s}\\\\
  s' = {
 	\begin{cases}
 		\decrementTop{s} & \text{if}\ p(\sigma(\pc)) \neq \pbarrier\\
		\zeroesTop{s}  & 	\text{otherwise}
 	\end{cases}
	}
}
{
\tup{\ctr, \sigma, s} \symSpeval{p}{\mbp_n}{\tau} \tup{\ctr, \sigma', s'}
}	

\inferrule[Am-Branch-Symb]
{
p(\sigma(\pc)) = \pjz{x}{\lbl''}\\
\exhaustedTop{s}\\\\
\sigma \symEval{p}{\symPcObs{se} \concat \pcObs{\lbl'}} \sigma'\\
\lbl = {
\begin{cases}
\sigma(\pc) + 1 & \text{if}\ \lbl' \neq \sigma(\pc) + 1 \\
\lbl'' 	 	& \text{if}\ \lbl' = \sigma(\pc) + 1 \\
\end{cases}
}\\\\
s' = \decrementTop{s} \concat \tup{\sigma, \ctr, \mathit{min}(w, \window{s}-1), \lbl}\\
\id = \ctr
}
{
\tup{\ctr, \sigma, s} \symSpeval{p}{\mbp_n}{\symPcObs{se} \concat \startObs{\id} \concat \pcObs{\lbl}} \tup{ \ctr +1 , \sigma[\pc \mapsto \lbl], s'}
}	

\inferrule[Am-Rollback]
{
\sigma' \symEval{p}{\tau} \sigma''\\
\sigma''(\pc) \neq \lbl
}
{
\tup{\ctr, \sigma, s \concat \tup{\sigma',\id, 0, \lbl}} \symSpeval{p}{\mbp_n}{\rollbackObs{\id} \concat \pcObs{\sigma''(\pc)}} \tup{\ctr, \sigma'', s}
}
\end{mathpar}
}
\caption{Symbolic always-mispredict speculative semantics for a program $p$ and speculative window $w$}\label{figure:symbolic:symbolic-always-mispredict-semantics}
\end{figure*}

\para{Symbolic memories}
We model symbolic memories as symbolic arrays using the standard theory of arrays~\cite{bradley2007calculus}.
That is, we model memory updates as triples of the form $\symWrite{sm}{\sexpr}{\sexpr'}$, which updates the symbolic memory $sm$ by assigning the symbolic value $\sexpr'$ to the symbolic location $\sexpr$, and memory reads as $\symRead{sm}{\sexpr}$, which denote retrieving the value assigned to the symbolic expression $\sexpr$.

A \textit{symbolic memory} $sm$ is either a function $mem : \Nat \to \SymVal$ mapping memory addresses to symbolic values or a term $\symWrite{sm}{\sexpr}{\sexpr'}$, where $sm$ is a symbolic memory and $\sexpr, \sexpr'$ are symbolic expressions.
To account for symbolic memories, we extend symbolic expressions with terms of the form $\symRead{sm}{\sexpr}$, where $sm$ is a symbolic memory and $\sexpr$ is a symbolic expression, representing memory reads.
\begin{align*}
	sm 		&:= mem \mid \symWrite{sm}{\sexpr}{\sexpr'} \\
	\sexpr 	&:= \ldots \mid \symRead{sm}{\sexpr}
\end{align*}

\para{Evaluating symbolic expressions}
The value of a symbolic expression $\sexpr$ depends on a \textit{valuation} $\mu : \SymVal \to \Val$  mapping symbolic values to concrete ones:
\begingroup
\allowdisplaybreaks
\begin{align*}
 	\symExprEval{n}{\mu} &= n\ \text{if}\ n \in \Val \\
	\symExprEval{s}{\mu} &= \mu(s)\ \text{if}\ s \in \SymVal\\
	\symExprEval{\ite{\sexpr}{\sexpr'}{\sexpr''}}{\mu} &= \symExprEval{\sexpr'}{\mu}\ \text{if}\ \symExprEval{\sexpr}{\mu} \neq 0 \\
	\symExprEval{\ite{\sexpr}{\sexpr'}{\sexpr''}}{\mu} &= \symExprEval{\sexpr''}{\mu}\ \text{if}\ \symExprEval{\sexpr}{\mu} = 0 \\
	\symExprEval{\ominus \sexpr}{\mu} &= \unaryOp{\symExprEval{\sexpr}{\mu}}\\
	\symExprEval{\sexpr \otimes \sexpr'}{\mu} &=  \binaryOp{\symExprEval{\sexpr}{\mu}}{\symExprEval{\sexpr'}{\mu}}\\
	\symExprEval{mem}{\mu} &= \mu \circ mem \\
	\symExprEval{ \symWrite{sm}{\sexpr}{\sexpr'} }{\mu} &= \mu(sm)[\mu(\sexpr) \mapsto \mu(\sexpr')] \\
	\symExprEval{\symRead{sm}{\sexpr}}{\mu} &= \mu(sm)( \mu(\sexpr))
\end{align*}
\endgroup
An expression $\sexpr$ is \textit{satisfiable} if there is a valuation $\mu$ satisfying it, i.e., $\symExprEval{\sexpr}{\mu} \neq 0$.

\para{Symbolic assignments}
A \textit{symbolic assignment} $sa$ is a function mapping registers to symbolic expressions $sa : \Var \to \SymExpr$.
Given a symbolic assignment $sa$ and a valuation $\mu$,  $\mu(sa)$ denotes the  assignment $\mu \circ sa$.
We assume the program counter $\pc$ to always be concrete, i.e., $sa(\pc) \in \Val$.

\para{Symbolic configurations}
A \textit{symbolic configuration} is a pair $\tup{sm,sa}$ consisting of a symbolic memory $sm$  and  a symbolic assignment $sa$.
We lift speculative states to symbolic configurations.
A \textit{symbolic extended configuration} is a triple $\tup{\ctr,\sigma,s}$ where $\ctr \in \Nat$ is a counter, $\sigma \in \Conf$ is a symbolic configuration, and $s$ is a symbolic speculative state.

\para{Symbolic observations}
When symbolically executing a program, we may produce observations whose value is symbolic.
To account for this, we introduce symbolic observations of the form $\loadObs{\sexpr}$ and $\storeObs{\sexpr}$, which are produced when symbolically executing $\loadKywd$ and $\storeKywd$ commands, and $\symPcObs{\sexpr}$, produced when symbolically evaluating branching instructions, where $\sexpr$ is a symbolic expression.
In our symbolic semantics, we use the observations $\symPcObs{\sexpr}$ to represent the symbolic path condition indicating when a path is feasible.
Given a sequence of symbolic observations $\tau$ and a valuation $\mu$,  $\mu(\tau)$ denotes the  trace obtained by (1) dropping all observations $\symPcObs{\sexpr}$, and (2) evaluating all symbolic observations different from $\symPcObs{\sexpr}$ under $\mu$.

\para{Symbolic semantics}
The \textit{non-speculative} semantics is captured by the relation $\symEval{p}{}$  in Fig.~\ref{figure:symbolic:non-speculative-semantics}, while  the  \textit{speculative} semantics is captured by the relation $\symSpeval{p}{\mbp}{}$ in Fig.~\ref{figure:symbolic:symbolic-always-mispredict-semantics}.

\para{Computing symbolic runs and traces}
We now fix the symbolic values. 
The set $\SymVal$ consists of a symbolic value $\symb{x}$ for each register identifier $x$ and of a symbolic value $\symb{mem^n}$ for each memory address $n$. 
We also fix the initial symbolic memory $sm_0 = \lambda n \in \Nat.\ \symb{m^n}$ and the symbolic assignment $sa_0$ such that $sa_0(\pc) = 0$ and $sa_0(x) = \symb{x}$.

The set $\symbTraces{p}{w}$ contains all runs that can be derived using the symbolic semantics (with speculative window $w$) starting from the initial configuration $\tup{sm_0,sa_0}$.
That is, $\symbTraces{p}{w}$ contains all triples $\tup{\tup{sm_0,sa_0}, \tau, \sigma'}$, where $\tau$ is a symbolic trace and $\sigma'$ is a  final symbolic configuration, corresponding to symbolic computations $\tup{0, \tup{sm_0,sa_0}, \emptysequence} \symSpeval{p}{\bp}{\tau}^* \tup{\ctr, \sigma',\emptysequence}$ where the path condition $\bigwedge_{\symPcObs{\sexpr} \in \tau} \sexpr$ is satisfiable.

We compute $\symbTraces{p}{w}$ in the standard way.
We keep track of a path constraint $PC$ and we update it whenever the semantics produces an observation $\symPcObs{\sexpr}$.
We start the computation from $\tup{0, \tup{sm_0,sa_0}, \emptysequence}$ and $PC = \top$.
When executing branch and jump instructions, we explore all branches consistent with the current $PC$, and, for each of them, we update $PC$.

\section{Code from Case Studies}\label{sec:code}

\subsection{Example \#8}\label{secs:example8}

In Example \#8, the bounds check of Figure~\ref{figure:spectrev1:c-code}
is implemented using a conditional operator:
\begin{lstlisting}[style=Cstyle]
temp &= B[A[y<size?(y+1):0]*512];
\end{lstlisting}
When compiling the example without countermeasures or optimizations,
the conditional operator is translated to a branch instruction (cf.~line 4), which is a source of speculation. Hence, the resulting program
contains a speculative leak, which \tool{} correctly detects.

\begin{lstlisting}[style=ASMstyle]
	 mov    size, %rcx
	 mov    y, %rax
	 cmp    %rcx, %rax
	 jae    .L1
	 add    $1, %rax
	 jmp    .L2
.L1:
	 xor    %rax, %rax
	 jmp    .L2
.L2:
	 mov    A(%rax), %rax
	 shl    $9, %rax
	 mov    B(%rax), %rax
	 mov    temp, %rcx
	 and    %rax, %rcx
	 mov    %rcx, temp
\end{lstlisting}

In the \unp{} \opt{} mode, the
conditional operator is translated as a conditional move
(cf. line 6), for which \tool{} can prove security.

\begin{lstlisting}[style=ASMstyle]
 mov    size, %rax
 mov    y, %rdx
 xor    %rcx, %rcx
 cmp    %rdx, %rax
 lea    1(%rdx), %rax
 cmova  %rax, %rcx
 mov    A(%rcx), %rax
 shl    $9, %rax
 mov    B(%rax), %rax
 and    %rax, temp
\end{lstlisting}

\subsection{Example \#15 in \slh{} mode}\label{ssec:example15}
Here, the adversary provides the input via the pointer \inlineCcode{*y}:
\begin{lstlisting}[style=Cstyle]
 if (*y < size)
    temp &= B[A[*y] * 512];
\end{lstlisting}
In the \unopt{} \slh{} mode, 
\clang{} hardens the address used for performing the memory access \inlineCcode{A[*y]} in lines 8--12, but not the resulting value, which is stored in the register \inlineASMcode{\%cx}.
However, the value stored in  \inlineASMcode{\%cx} is used to perform a second memory access at line 14.
An adversary can exploit the second memory access to speculatively leak the content of \inlineCcode{A[0xFF...FF]}.
In our experiments, \tool{} correctly detected such leak.
\begin{lstlisting}[style=ASMstyle]
 mov    $0, %rax
 mov    y, %rdx
 mov    (%rdx), %rsi
 mov    size, %rdx
 cmp    %rdx, %rsi
 jae     END
 cmovae $-1, %rax
 mov    y, %rcx
 mov    (%rcx), %rcx
 mov    %rax, %rdx
 or     %rcx, %rdx
 mov    A(%rdx), %rcx
 shl    $9, %rcx
 mov    B(%rcx), %rcx
 mov    temp, %rdx
 and    %rcx, %rdx
 mov    %rdx, temp
\end{lstlisting}
In contrast, when Example~\#15 is compiled with the \opt{} flag, \clang{} correctly hardens  \inlineCcode{A[*y]}'s result (cf.~line 10).
This prevents information from flowing into the microarchitectural state during speculative execution.
Indeed, \tool{} proves that the program satisfies speculative non-interference.
\begin{lstlisting}[style=ASMstyle]
 mov    $0, %rax
 mov    y, %rdx
 mov    (%rdx), %rdx
 mov    size, %rsi
 cmp    %rsi, %rdx
 jae    END
 cmovae $-1, %rax
 mov    A(%rdx), %rcx
 shl    $9, %rcx
 or     %rax, %rcx
 mov    B(%rcx), %rcx
 or     %rax, %rcx
 and    %rcx, temp
\end{lstlisting}

\onecolumn

\section{Relating the speculative and non-speculative semantics (Proposition~\ref{proposition:speculative-and-non-speculative})}
\label{appendix:speculative-and-non-speculative}

We start by spelling out the definition of $\specTraces{p}{\bp}$.
Given a program $p$ and a prediction oracle~$\bp$, we denote by $\specTraces{p}{\bp}$ the set of all triples $\tup{\sigma, \tau, \sigma'} \in \Init \times \ExtObs^* \times \Final$ corresponding to executions $\tup{0, \sigma, \emptysequence,\emptysequence} \speval{p}{\bp}{\tau}^* \tup{\ctr, \sigma',\emptysequence,h}$.

We are now ready to prove Proposition~\ref{proposition:speculative-and-non-speculative}, which we restate here for simplicity:

\speculativeAndNonSpeculative*

\begin{proof}
We prove the two directions separately.

\para{$\mathbf{(\Leftarrow)}$}	
Assume that $\tup{\sigma, \tau', \sigma'} \in \specTraces{p}{\bp}$.
From Proposition~\ref{proposition:speculative-and-non-speculative:soundness} (proved in Section~\ref{appendix:speculative-and-non-speculative:soundness}), it follows that  $\tup{\sigma, \nspecProject{\tau}, \sigma'} \in \nspecTraces{p}$.

\para{$\mathbf{(\Rightarrow)}$}
Assume that $\tup{\sigma, \tau, \sigma'} \in \nspecTraces{p}$.
From Proposition~\ref{proposition:speculative-and-non-speculative:completeness} (proved in Section~\ref{appendix:speculative-and-non-speculative:completeness}), it  follows that there exists a trace $\tau' \in \ExtObs^*$ such that $\tup{\sigma, \tau', \sigma'} \in \specTraces{p}{\bp}$ and $\nspecProject{\tau'} = \tau$.
\end{proof}

In the following, we rely on the concept of \textit{rolled-back transactions} to denote portions of program executions associated with rolled-back speculative transactions.
Concretely,  an execution $\tup{\ctr_0,\sigma_0,s_0, h_0} \speval{p}{\bp}{\tau_0} \tup{\ctr_1,\sigma_1,s_1, h_1} \speval{p}{\bp}{\tau_1} \ldots \speval{p}{\bp}{\tau_{n-1}}  \tup{\ctr_n,\sigma_n,s_n, h_n} \speval{p}{\bp}{\tau_n} \tup{\ctr_{n+1},\sigma_{n+1},s_{n+1}, h_{n+1}}$ is  a \textit{rolled back speculative transaction} iff there is an identifier $\id \in \Nat$ and two labels $\lbl,\lbl' \in \Val$ such that $\tau_0 = \startObs{i} \concat \pcObs{\lbl}$ and $\tau_n = \rollbackObs{i} \concat \pcObs{\lbl'}$.

\subsection{Soundness of the speculative semantics}\label{appendix:speculative-and-non-speculative:soundness}

Proposition~\ref{proposition:speculative-and-non-speculative:soundness} states that the speculative semantics does not introduce spurious non-speculative behaviors.

\begin{restatable}{prop}{}\label{proposition:speculative-and-non-speculative:soundness}
Let $p$ be a program and $\bp$ be a prediction oracle.
Whenever $\tup{0, \sigma, \emptysequence, \emptysequence} \speval{p}{\bp}{\tau}^* \tup{\ctr, \sigma', \emptysequence, h}$, $\sigma \in \Init$, and $\sigma' \in \Final$, then $\sigma \eval{p}{\nspecProject{\tau}}^* \sigma'$.	
\end{restatable}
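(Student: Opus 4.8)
The plan is to prove a slightly more general statement by strong induction on the length $n$ of the derivation: for all $\ctr_0,h_0$ and every execution $\tup{\ctr_0,\sigma_0,\emptysequence,h_0} \speval{p}{\bp}{\tau}^* \tup{\ctr_n,\sigma_n,\emptysequence,h_n}$ one has $\sigma_0 \eval{p}{\nspecProject{\tau}}^* \sigma_n$; Proposition~\ref{proposition:speculative-and-non-speculative:soundness} is then the instance $\ctr_0=0$, $h_0=h_n=\emptysequence$ (the hypotheses $\sigma_0\in\Init$, $\sigma_n\in\Final$ are not actually needed for this direction). The base case $n=0$ is immediate since $\nspecProject{\emptysequence}=\emptysequence$ and the non-speculative semantics has an empty run.

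For the inductive step I inspect the first transition. Because the speculative state is $\emptysequence$, only \textsc{Se-NoBranch} or \textsc{Se-Branch} can fire (the premises of \textsc{Se-Commit} and \textsc{Se-Rollback} require a speculative state containing a length-$0$ entry). If it is \textsc{Se-NoBranch}, the transition mirrors exactly one non-speculative step $\sigma_0\eval{p}{\tau_0}\sigma_1$ where $\tau_0$ is empty or a single non-extended observation, and the speculative state stays $\emptysequence$ (since $\decrement{\emptysequence}=\zeroes{\emptysequence}=\emptysequence$); applying the induction hypothesis to the remaining $n-1$ transitions and using $\nspecProject{(\tau_0\concat\tau')} = \tau_0\concat\nspecProject{\tau'}$ closes this case.

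The real work is the \textsc{Se-Branch} case. This step opens a transaction $\id=\ctr_0$ with snapshot $\sigma_0$ and predicted target $\lbl$, emits $\startObs{\ctr_0}\concat\pcObs{\lbl}$, and produces the speculative configuration $\sigma_0[\pc\mapsto\lbl]$. I would first establish two structural facts: $(i)$ $\ctr_0$ remains the bottommost speculative state in every configuration until it is resolved — pushes only add on top, a commit/rollback of any other transaction removes only entries strictly above it, and no rollback can sit below $\ctr_0$; and $(ii)$ because $s_n=\emptysequence$, $\ctr_0$ is resolved at some step $k\le n$, by \textsc{Se-Commit} or \textsc{Se-Rollback} according to whether the non-speculative branch step from $\sigma_0$ reaches $\lbl$ or not. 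In the rollback subcase, \textsc{Se-Rollback} performs precisely the non-speculative branch step from the snapshot, $\sigma_0\eval{p}{\pcObs{a(\pc)}}\sigma_k$, while the definition of $\nspecProject{}$ deletes the entire block $\startObs{\ctr_0}\concat\pcObs{\lbl}\concat\cdots\concat\rollbackObs{\ctr_0}$ and keeps only the trailing $\pcObs{a(\pc)}$; since a rollback discards everything above $\ctr_0$, the speculative state is $\emptysequence$ again after step $k$, so the induction hypothesis applies to the strictly shorter tail and the two pieces concatenate. In the commit subcase the prediction was correct, so $\sigma_0[\pc\mapsto\lbl]$ is exactly the non-speculative post-branch configuration ($\sigma_0\eval{p}{\pcObs{\lbl}}\sigma_1$); the steps strictly between the \textsc{Se-Branch} and the \textsc{Se-Commit}, with $\ctr_0$ deleted from every speculative state, form a valid derivation (deleting a bottom entry is harmless: the length tests of the other rules never mention it, and resolution rules never touch it) that starts from an empty speculative state, and when the residual state left by the commit is empty, the induction hypothesis applies to this inner derivation and to the post-commit tail; using that $\nspecProject{}$ drops the unmatched $\startObs{\ctr_0}$ and the $\commitObs{\ctr_0}$ while preserving $\pcObs{\lbl}$ and the inner observations, everything glues together.

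The one delicate point — and the place I expect to spend the most effort — is the commit subcase in which the committed transaction leaves a \emph{non-empty} residual speculative state: since \textsc{Se-Commit} can retire a length-$0$ transaction from the middle of the stack, an execution that ends with an empty speculative state need not decompose at the first moment its state becomes empty. I would handle this by carrying the induction through for executions $\tup{\ctr_0,\sigma_0,\emptysequence,h_0}\speval{p}{\bp}{\tau}^*\tup{\ctr_n,\sigma_n,s_n,h_n}$ whose final state $s_n$ is exhausted, proving the same conclusion $\sigma_0\eval{p}{\nspecProject{\tau}}^*\sigma_n$ and peeling residual transactions off the bottom one at a time exactly as above; everything else (the rollback subcase, \textsc{Se-NoBranch}, and the bookkeeping with $\nspecProject{}$) is routine, relying only on uniqueness of transaction identifiers to make the $\startObs{}$/$\rollbackObs{}$ matching in the definition of $\nspecProject{}$ unambiguous.
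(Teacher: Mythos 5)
Most of your outline is sound, and the no-branch and rollback cases essentially reproduce the paper's argument (a maximal rolled-back span collapses to the single correct branch step that \textsc{Se-Rollback} itself performs, and a transaction that is eventually committed must have been correctly predicted, since its snapshot and predicted label are never modified before resolution). The genuine gap is in the commit subcase, at exactly the point you flag, and your proposed repair does not work: the strengthened induction statement ``for every execution $\tup{\ctr_0,\sigma_0,\emptysequence,h_0} \speval{p}{\bp}{\tau}^* \tup{\ctr_n,\sigma_n,s_n,h_n}$ with $s_n$ exhausted, $\sigma_0 \eval{p}{\nspecProject{\tau}}^* \sigma_n$'' is false. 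Take a run that opens a \emph{mispredicted} transaction and cut it before the rollback, while the remaining window is still positive: the final state is exhausted, but $\tau$ contains $\startObs{\id}$ with no matching $\rollbackObs{\id}$, so $\nspecProject{\tau}$ \emph{keeps} the wrong-path observations (including the $\pcObs{\lbl}$ of the mispredicted target), and $\sigma_n$ is a wrong-path configuration; the deterministic non-speculative run from $\sigma_0$ takes the other branch, so the conclusion fails.

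Worse, the instances you actually need are among the bad ones. \textsc{Se-Commit} of the bottom transaction $\ctr_0$ only requires the states \emph{above} it to be enabled, so a nested transaction $\id'$ opened during $\ctr_0$'s window can still be pending when $\ctr_0$ commits and be rolled back only afterwards. For such runs the block $\startObs{\id'} \concat \cdots \concat \rollbackObs{\id'}$ straddles your cut at the commit: the projection of the full trace deletes it, whereas the projection of your prefix (the inner derivation, with the bottom entry removed) retains $\id'$'s wrong-path observations, so $\nspecProject{\tau}$ is not the concatenation of the projections of the two pieces, and the induction hypothesis applied to the inner derivation is asserting something false (its end configuration lies on $\id'$'s wrong path). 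The bottom-entry-deletion argument itself is fine (all rule premises only get weaker, and the histories, hence the oracle's answers, are unchanged); what you need is a stronger invariant on the residual speculative state than exhaustedness --- e.g.\ that every pending transaction in $s_n$ is correctly predicted, i.e.\ the non-speculative step from its snapshot reaches its stored label --- or a different decomposition, as in the paper's proof, which walks the run step by step, collapses each maximal rolled-back span as a unit, skips $\commitObsKywd$ steps, and uses commit-implies-correct-prediction for the branch steps of committed transactions.
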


\begin{proof}
Let $p$ be a program and $\bp$ be a prediction oracle.
Furthermore, let $r$ be an execution  $\tup{0, \sigma, \emptysequence,\emptysequence} \speval{p}{\bp}{\tau}^* \tup{\ctr, \sigma', \emptysequence,h}$ such that  $\sigma \in \Init$ and $\sigma' \in \Final$.
We denote by $R$ the set of rolled back transactions and by $C$ the set containing all identifiers of transactions committed in $r$, i.e.,  $R = \{ \id \in \Nat \mid \exists i \in \Nat.\ \elt{\tau}{i} = \rollbackObs{\id}\}$ and $C = \{ \id \in \Nat \mid \exists i \in \Nat.\ \elt{\tau}{i} = \commitObs{\id}\}$.
We can incrementally construct the execution $\sigma \eval{p}{\nspecProject{\tau}}^* \sigma'$ by applying Lemmas~\ref{lemma:transparency:no-jump}--\ref{lemma:transparency:rollback}.
In particular, for instructions that are not inside a speculative transaction we  apply Lemma~\ref{lemma:transparency:no-jump}.
For branch instructions, there are two cases.
If the associated speculative transaction is eventually rolled back (i.e., its identifier is in $R$), then we apply Lemma~\ref{lemma:transparency:rollback} to the whole rolled back transaction to generate the corresponding step in the non-speculative semantics.
In contrast, if the speculative transaction is eventually committed (i.e., its identifier is in $C$), then we can apply Lemma~\ref{lemma:transparency:committed-jump} (in conjunction with Lemmas~\ref{lemma:transparency:correctness-of-pc-committed-transactions}) to generate the corresponding step in the non-speculative semantics as well as Lemma~\ref{lemma:transparency:no-jump} for the other instructions in the transaction.
Finally, we can simply ignore the steps associated with the \textsc{Se-Commit} rule (see Lemma~\ref{lemma:transparency:commit-no-changes}).
Observe that the resulting traces can be combined to obtain $\nspecProject{\tau}$ since $\tup{0, \sigma, \emptysequence, \emptysequence} \speval{p}{\bp}{\tau}^* \tup{\ctr, \sigma', \emptysequence, h}$ is a terminated execution.
\end{proof}

The above proof relies on the auxiliary lemmas below to handle non-branch instructions (Lemma~\ref{lemma:transparency:no-jump}), committed transactions (Lemmas~\ref{lemma:transparency:committed-jump}, \ref{lemma:transparency:commit-no-changes}, and~\ref{lemma:transparency:correctness-of-pc-committed-transactions}), and rolled back transactions (Lemma~\ref{lemma:transparency:rollback}).

\begin{restatable}{lem}{}\label{lemma:transparency:no-jump}
Let $p$ be a program and $\bp$ be a prediction oracle.
Whenever $\tup{\ctr, \sigma, s,h} \speval{p}{\bp}{\tau} \tup{\ctr', \sigma', s',h'}$, $p(\sigma(\pc)) \neq \pjz{x}{\lbl}$,  and $\exhausted{s}$, then $\sigma \eval{p}{\tau} \sigma'$. 
\end{restatable}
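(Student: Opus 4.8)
The plan is to proceed by case analysis on the rule of the speculative semantics (Figure~\ref{figure:language:speculative-execution}) used to derive the single transition $\tup{\ctr, \sigma, s, h} \speval{p}{\bp}{\tau} \tup{\ctr', \sigma', s', h'}$. There are exactly four such rules: \textsc{Se-NoBranch}, \textsc{Se-Branch}, \textsc{Se-Commit}, and \textsc{Se-Rollback}. I will argue that the two hypotheses $p(\sigma(\pc)) \neq \pjz{x}{\lbl}$ and $\exhausted{s}$ together force the applied rule to be \textsc{Se-NoBranch}, and that this rule's premises then contain the desired conclusion verbatim.

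First, rule \textsc{Se-Branch} is excluded immediately: its first premise requires $p(\sigma(\pc)) = \pjz{x}{\lbl'}$ for some register $x$ and label $\lbl'$, which contradicts the assumption that $p(\sigma(\pc))$ is not a branch instruction. Second, rules \textsc{Se-Commit} and \textsc{Se-Rollback} are excluded by $\exhausted{s}$: both require the speculative state to decompose as $s_1 \concat \tup{\sigma'', \id, 0, \lbl} \concat s_2$, i.e.\ to contain a transaction whose remaining length is $0$. By definition $\exhausted{s}$ holds precisely when no transaction in $s$ has remaining length $0$, so no such decomposition exists and neither rule applies.

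Consequently the transition must have been derived by \textsc{Se-NoBranch}. One of the premises of that rule is exactly $\sigma \eval{p}{\tau} \sigma'$ (the rule mimics one step of the non-speculative semantics, adjusting only the speculative-state component $s$ — by $\mathit{decr}$ or, if the instruction is $\pbarrier$, by $\mathit{zeroes}$), which is precisely what we must show. As a side remark that will be useful elsewhere, \textsc{Se-NoBranch} leaves the transaction counter and branching history untouched, so in fact $\ctr' = \ctr$ and $h' = h$, consistently with the statement.

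There is no genuine obstacle here: the argument is a short inversion on the speculative step. The only point requiring care is to match the structural shape demanded by the \textsc{Se-Commit}/\textsc{Se-Rollback} rules against the hypothesis $\exhausted{s}$, i.e.\ to unfold the definition of $\exhausted{\cdot}$ and observe that it is exactly the negation of ``$s$ contains a length-$0$ transaction''. Once that is in place, the case analysis collapses to the single case \textsc{Se-NoBranch}, whose premise is the conclusion.
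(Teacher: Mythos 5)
Your proposal is correct and matches the paper's proof, which simply observes that under the hypotheses the step must have been derived by \textsc{Se-NoBranch}, whose premise is exactly $\sigma \eval{p}{\tau} \sigma'$. You merely make explicit the inversion step the paper leaves implicit (ruling out \textsc{Se-Branch} via $p(\sigma(\pc)) \neq \pjz{x}{\lbl}$ and \textsc{Se-Commit}/\textsc{Se-Rollback} via $\exhausted{s}$), which is a faithful elaboration rather than a different route.
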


\begin{proof}
Let $p$ be a program and $\bp$ be a prediction oracle.
Assume that $\tup{\ctr, \sigma, s,h} \speval{p}{\bp}{\tau} \tup{\ctr', \sigma', s',h'}$, $p(\sigma(\pc)) \neq \pjz{x}{\lbl}$,  and $\exhausted{s}$.
From this and the \textsc{Se-NoBranch} rule, it follows $\sigma \eval{p}{\tau} \sigma'$.
\end{proof}

\begin{restatable}{lem}{}\label{lemma:transparency:committed-jump}
Let $p$ be a program and $\bp$ be a prediction oracle.
Whenever 
$\tup{\ctr, \sigma, s,h} \speval{p}{\bp}{\tau} \tup{\ctr', \sigma', s',h'}$, 
$p(\sigma(\pc)) = \pjz{x}{\lbl}$,
$\exhausted{s}$,  
$\bp(p,h,\sigma(\pc)) = \tup{\lbl',w}$,
$\sigma \eval{p}{\pcObs{\lbl'}} \sigma''$, 
then $\sigma \eval{p}{\nspecProject{\tau}} \sigma'$. 
\end{restatable}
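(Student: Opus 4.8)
The plan is a direct rule inversion on the single speculative step, followed by matching the configuration it produces against the one reached by the given non-speculative step. Since $\select{p}{\sigma(\pc)} = \pjz{x}{\lbl}$, the rule \textsc{Se-NoBranch} cannot have been used, and since $\exhausted{s}$ holds no speculative state in $s$ has remaining length $0$, so \textsc{Se-Commit} and \textsc{Se-Rollback} are likewise inapplicable; hence the step $\tup{\ctr,\sigma,s,h}\speval{p}{\bp}{\tau}\tup{\ctr',\sigma',s',h'}$ is an instance of \textsc{Se-Branch}. Writing $\sigma = \tup{m,a}$ and reading off that rule with $\bp(p,h,\sigma(\pc)) = \tup{\lbl',w}$, I obtain $\tau = \startObs{\ctr}\concat\pcObs{\lbl'}$ and $\sigma' = \tup{m,\,a[\pc\mapsto\lbl']}$; in particular the memory and every register other than $\pc$ are unchanged, and $\pc$ is set to the predicted label $\lbl'$.

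Next I would compute $\nspecProject{\tau}$. As $\tau = \startObs{\ctr}\concat\pcObs{\lbl'}$ contains no $\rollbackObs{\ctr}$, the clause for an unmatched $\startObs{\cdot}$ simply deletes that extended observation, leaving $\nspecProject{\tau} = \pcObs{\lbl'}$. Then I would invert the given non-speculative step $\sigma\eval{p}{\pcObs{\lbl'}}\sigma''$: because $\select{p}{\sigma(\pc)} = \pjz{x}{\lbl}$, this step is an instance of \textsc{Beqz-Sat} or \textsc{Beqz-Unsat}, and in either case the memory and all registers but $\pc$ are preserved while the emitted observation $\pcObs{n}$ records exactly the new value $n$ of $\pc$. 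Matching the observations forces $\sigma'' = \tup{m,\,a[\pc\mapsto\lbl']} = \sigma'$, so the given step is in fact $\sigma\eval{p}{\pcObs{\lbl'}}\sigma'$, i.e.\ $\sigma\eval{p}{\nspecProject{\tau}}\sigma'$, which is the claim.

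I do not expect a real obstacle: the lemma concerns one speculative step on a branch whose prediction coincides with the true outcome, so once the applicable rule is identified the rest is bookkeeping about the projection and the configurations. The only place that needs a little care is ensuring that the $\pcObs{\cdot}$ observation in the non-speculative step determines $\sigma''$ uniquely; this holds precisely because the branch rules of the non-speculative semantics, like \textsc{Se-Branch}, alter nothing but the program counter, whose new value is exactly what $\pcObs{\cdot}$ exposes.
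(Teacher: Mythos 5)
Your proposal is correct and follows essentially the same route as the paper's own proof: invert the speculative step to \textsc{Se-Branch} (the other rules being blocked by the branch instruction and $\exhausted{s}$), read off $\tau = \startObs{\ctr}\concat\pcObs{\lbl'}$ and $\sigma' = \sigma[\pc \mapsto \lbl']$, compute $\nspecProject{\tau} = \pcObs{\lbl'}$, and match against the non-speculative step to conclude $\sigma'' = \sigma'$. Your extra detail on why the $\pcObs{\cdot}$ observation uniquely determines $\sigma''$ via \textsc{Beqz-Sat}/\textsc{Beqz-Unsat} is a harmless elaboration of what the paper leaves implicit.
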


\begin{proof}
Let $p$ be a program and $\bp$ be a prediction oracle.
Assume that $\tup{\ctr, \sigma, s,h} \speval{p}{\bp}{\tau} \tup{\ctr', \sigma', s',h'}$, 
$p(\sigma(\pc)) = \pjz{x}{\lbl}$,
$\exhausted{s}$,  
$\bp(p,h,\sigma(\pc)) = \tup{\lbl',w}$,
$\sigma \eval{p}{\pcObs{\lbl'}} \sigma''$.
From this, the run was produced by applying the \textsc{Se-Branch} rule.
Hence, $\sigma'= \sigma[\pc \mapsto \lbl']$ and $\tau = \startObs{\ctr} \concat \pcObs{\lbl'}$.
From this, $\sigma \eval{p}{\pcObs{\lbl'}} \sigma''$,  $\nspecProject{\tau} = \pcObs{\lbl'}$, and $\sigma'' = \sigma[\pc \mapsto \lbl']$, it follows that $\nspecProject{\tau} = \pcObs{\lbl'}$ and $\sigma' = \sigma''$. 	
Hence, $\sigma \eval{p}{\tau} \sigma'$. 
\end{proof}

\begin{restatable}{lem}{}\label{lemma:transparency:commit-no-changes}
Let $p$ be a program and $\bp$ be a prediction oracle.
Whenever $\tup{\ctr, \sigma, s, h } \speval{p}{\bp}{\commitObs{\id}} \tup{\ctr', \sigma', s', h'}$, then $\sigma = \sigma'$.
\end{restatable}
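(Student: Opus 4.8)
The plan is a one-rule case analysis on the speculative evaluation relation $\speval{p}{\bp}{}$. First I would observe that, among the four rules in Figure~\ref{figure:language:speculative-execution}, the only one whose emitted observation sequence is $\commitObs{\id}$ is \textsc{Se-Commit}: \textsc{Se-NoBranch} emits a plain non-speculative observation $\tau \in \Obs$ (never an extended observation), \textsc{Se-Branch} emits $\startObs{\id} \concat \pcObs{\lbl}$, and \textsc{Se-Rollback} emits $\rollbackObs{\id} \concat \pcObs{a(\pc)}$. Hence the hypothesis $\tup{\ctr,\sigma,s,h} \speval{p}{\bp}{\commitObs{\id}} \tup{\ctr',\sigma',s',h'}$ can only have been derived by an instance of \textsc{Se-Commit}.

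Then I would simply read off the conclusion of \textsc{Se-Commit}. Renaming the snapshot stored in the innermost speculative state to $\sigma''$ (to avoid clashing with the lemma's $\sigma'$), the rule rewrites $\tup{\ctr,\, \sigma,\, s \concat \tup{\sigma'',\id,0,\lbl} \concat s',\, h}$ to $\tup{\ctr,\, \sigma,\, s \concat s',\, h'}$. The transaction counter and the configuration component are literally unchanged by the rule — only the speculative-state sequence loses the exhausted innermost state and the branching history is extended. Matching this against the statement of the lemma yields $\ctr' = \ctr$ and, in particular, $\sigma' = \sigma$, which is exactly the claim.

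There is no real obstacle here; the only subtlety worth flagging is the overloaded name $\sigma'$ (the lemma uses it for the post-state, whereas \textsc{Se-Commit} uses it for the saved snapshot inside the speculative state), which the renaming above disambiguates. I would also note in passing that the side conditions of \textsc{Se-Commit} (the innermost transaction having remaining length $0$, and the stored prediction $\lbl$ agreeing with the program counter reached by one non-speculative step from the snapshot) play no role in establishing $\sigma = \sigma'$: the equality is purely a syntactic consequence of the shape of the rule's conclusion.
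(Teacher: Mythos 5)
Your proof is correct and takes essentially the same route as the paper: identify \textsc{Se-Commit} as the only rule emitting $\commitObs{\id}$, then read $\sigma = \sigma'$ directly off the rule's conclusion. The extra care you take in ruling out the other rules and renaming the snapshot configuration is fine but not needed beyond what the paper's one-line argument already covers.
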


\begin{proof}
Let $p$ be a program and $\bp$ be a prediction oracle.
Assume that $\tup{\ctr, \sigma, s, h } \speval{p}{\bp}{\commitObs{\id}} \tup{\ctr', \sigma', s', h'}$.
The only applicable rule is \textsc{Se-Commit} and $\sigma = \sigma'$ directly follows from the rule's definition.
\end{proof}

\begin{restatable}{lem}{}\label{lemma:transparency:rollback}
Let $p$ be a \lang{} program and $\bp$ be a prediction oracle.
Whenever $r:= \tup{\ctr, \sigma, s,h} \speval{p}{\bp}{\tau}^* \tup{\ctr', \sigma', s',h'}$, $\exhausted{s}$, $p(\sigma(\pc)) = \pjz{x}{\lbl}$, and the run $r$ is a rolled-back transaction, then $\sigma \eval{p}{\nspecProject{\tau}} \sigma'$. 
\end{restatable}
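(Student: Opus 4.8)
\medskip\noindent\textbf{Proof proposal.}
The plan is to unfold the meaning of ``$r$ is a rolled-back transaction'' and then read the conclusion almost directly off the \textsc{Se-Branch} and \textsc{Se-Rollback} rules, with the only real care needed in computing the projection $\nspecProject{\tau}$.

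First I would pin down the two endpoint transitions of $r$. Since $p(\sigma(\pc)) = \pjz{x}{\lbl}$ is a branch and $\exhausted{s}$ holds, none of \textsc{Se-NoBranch}, \textsc{Se-Commit}, \textsc{Se-Rollback} is applicable to the first step, so it must be an instance of \textsc{Se-Branch}. Let $\bp(p,h,\sigma(\pc)) = \tup{\lbl_p,w}$. This step emits $\startObs{\ctr}\concat\pcObs{\lbl_p}$, moves to configuration $\sigma[\pc \mapsto \lbl_p]$ with counter $\ctr+1$, and pushes a speculative-state entry $\tup{\sigma,\ctr,w,\lbl_p}$ that records the immutable snapshot $\sigma$ and prediction $\lbl_p$ under the fresh identifier $\ctr$. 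Because the global counter is strictly monotone, no transaction opened later in $r$ reuses the identifier $\ctr$; hence $\startObs{\ctr}$ occurs in $\tau$ exactly once, at its very front. By the definition of rolled-back transaction, the last transition of $r$ emits $\rollbackObs{\ctr}\concat\pcObs{\lbl''}$ for some $\lbl''$, i.e.\ it is an instance of \textsc{Se-Rollback} that resolves transaction $\ctr$. A trivial structural invariant (only \textsc{Se-Commit}/\textsc{Se-Rollback} remove entries, and by uniqueness of $\ctr$ no earlier step removes the $\ctr$-entry) shows that this closing step fires with exactly the snapshot $\sigma$ and prediction $\lbl_p$ introduced at the start. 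Its side conditions therefore give a non-speculative step $\sigma \eval{p}{\tau'} \tup{m,a}$ with $\lbl_p \neq a(\pc)$, the resulting configuration is $\tup{m,a}$, and the emitted observation is $\rollbackObs{\ctr}\concat\pcObs{a(\pc)}$; since $\tup{m,a}$ is the end of $r$, we get $\sigma' = \tup{m,a}$. Moreover, because $p(\sigma(\pc))$ is a branch, the step $\sigma \eval{p}{\tau'} \sigma'$ is an instance of \textsc{Beqz-Sat} or \textsc{Beqz-Unsat}, so $\tau' = \pcObs{a(\pc)}$, i.e.\ $\sigma \eval{p}{\pcObs{a(\pc)}} \sigma'$.

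Next I would compute $\nspecProject{\tau}$. Writing $\tau = \startObs{\ctr}\concat\pcObs{\lbl_p}\concat\tau_m\concat\rollbackObs{\ctr}\concat\pcObs{a(\pc)}$ for the intermediate fragment $\tau_m$, and using that $\startObs{\ctr}$ and $\rollbackObs{\ctr}$ each occur exactly once in $\tau$, the clause $\nspecProject{(\startObs{i}\concat\tau\concat\rollbackObs{i}\concat\tau')} = \nspecProject{\tau'}$ deletes the whole block up to and including $\rollbackObs{\ctr}$, leaving $\nspecProject{\tau} = \nspecProject{\pcObs{a(\pc)}} = \pcObs{a(\pc)}$. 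Combined with $\sigma \eval{p}{\pcObs{a(\pc)}} \sigma'$ this yields $\sigma \eval{p}{\nspecProject{\tau}} \sigma'$, which is the claim.

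The main thing to be careful about -- there is no hard computation here -- is that $\tau_m$ may itself contain arbitrarily deeply nested transactions (further \textsc{Se-Branch}, \textsc{Se-Commit}, \textsc{Se-Rollback} steps) together with many memory and program-counter observations. The proof must make clear that none of this matters: the snapshot and prediction stored for transaction $\ctr$ are untouched by any deeper activity, so the closing \textsc{Se-Rollback} behaves exactly as if it immediately followed the opening \textsc{Se-Branch}; and the non-speculative projection erases the entire block between the unique matching $\startObs{\ctr}$ and $\rollbackObs{\ctr}$. The one place where a bit of rigor is needed is justifying that $\ctr$ is used by a single transaction within $r$, which follows from monotonicity of the transaction counter; this is what licenses both the invariant above and the single application of the projection clause.
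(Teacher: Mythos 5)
Your proposal is correct and follows essentially the same route as the paper's proof: identify the opening \textsc{Se-Branch} and closing \textsc{Se-Rollback} steps, argue (via uniqueness of the identifier $\ctr$ and the fact that intermediate steps only touch remaining window lengths) that the snapshot and predicted label for transaction $\ctr$ persist unchanged, and then read off $\nspecProject{\tau}=\pcObs{a(\pc)}$ together with $\sigma \eval{p}{\pcObs{a(\pc)}} \sigma'$ from the rollback rule's premises. Your extra care about counter monotonicity and the Beqz-step producing the $\pcObs{}$ observation only makes explicit what the paper asserts more tersely.
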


\begin{proof}
Let $p$ be a \lang{} program and $\bp$ be a prediction oracle.
Assume that $r:= \tup{\ctr, \sigma, s,h} \speval{p}{\bp}{\tau}^* \tup{\ctr', \sigma', s',h'}$, $\exhausted{s}$, $p(\sigma(\pc)) = \pjz{x}{\lbl}$, and the run $r$ is a rolled-back transaction.
Since $r$ is a rolled back transaction, there is an identifier $\id \in \Nat$ and two labels $\lbl,\lbl' \in \Val$ such that $\tup{\ctr,\sigma,s,h} \speval{p}{\bp}{\startObs{\id} \concat \pcObs{\lbl}} \tup{\ctr_1,\sigma_1,s_1,h_1} \speval{p}{\bp}{\tau_1} \ldots \speval{p}{\bp}{\tau_{n-1}}  \tup{\ctr_n,\sigma_n,s_n,h_n} \speval{p}{\bp}{\rollbackObs{\id} \concat \pcObs{\lbl'}} \tup{\ctr_{n+1},\sigma_{n+1},s_{n+1},h_{n+1}}$.

First, observe that $\tau$ is $\startObs{i} \concat \pcObs{\lbl} \concat \tau' \concat \rollbackObs{i} \concat \pcObs{\lbl'}$.
From this, $\nspecProject{\tau} = \nspecProject{\pcObs{\lbl'}} = \pcObs{\lbl'}$.

Next, the first step in $r$ is obtained by applying the \textsc{Se-Branch} rule (since it is the only rule producing $\startObs{\id}$ observations).
Hence, the last transaction in $s'$ is $\tup{\sigma,\ctr,w,\lbl}$, where $\tup{\lbl,w}$ is the prediction produced by $\bp$, and $\id = \ctr$.
We never modify the identifier, label, and rollback state inside the speculative transaction during the execution up to the last step.
Therefore, it follows (from the rules) that $\tup{\sigma,\ctr,0,\lbl}$ is a transaction in $s_n$.

Finally, the last step is obtained by applying the \textsc{Se-Rollback} rule (since it is the only rule producing $\rollbackObs{\id}$ observations).
Hence, the rolled back transaction in $s_n$ is $\tup{\sigma,\ctr,0,\lbl}$.
From this, we have $\sigma \eval{p}{\pcObs{\lbl'}} \sigma_{n+1}$.
From this, $\nspecProject{\tau} = \pcObs{\lbl'}$, and $\sigma' = \sigma_{n+1}$ (by construction), we have $\sigma \eval{p}{\nspecProject{\tau}} \sigma'$.
\end{proof}

\begin{restatable}{lem}{}\label{lemma:transparency:correctness-of-pc-committed-transactions}
Let $p$ be a program, $\bp$ be a prediction oracle, and $\tup{\ctr_0,\sigma_0,s_0,h_0} \speval{p}{\bp}{\tau_0} \tup{\ctr_1,\sigma_1,s_1,h_1} \speval{p}{\bp}{\tau_1} \ldots \speval{p}{\bp}{\tau_{n-1}}  \tup{\ctr_n,\sigma_n,s_n,h_n} \speval{p}{\bp}{\tau_n} \tup{\ctr_{n+1},\sigma_{n+1},s_{n+1},h_{n+1}}$ be an execution.
For all identifiers $\id \in \Nat$, 
if there is a index $0 \leq j \leq n$ such that $\tau_j = \commitObs{\id}$,
then there exists an $0 \leq i <j$ and $w \in \Nat$ such that 
$\tup{m,a} \eval{p}{\pcObs{\lbl}} \tup{m,a[\pc \mapsto \lbl]}$, 
$\tup{m,a} = \sigma_i$,  
$\tup{\lbl,w} = \bp(p,h_i,\sigma_i(\pc))$, and 
$\tau_i = \startObs{\ctr_i} \concat \pcObs{\sigma_{i+1}(\pc)}$.
\end{restatable}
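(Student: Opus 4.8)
The statement says: if some step $\tau_j$ in the execution is a commit observation $\commitObs{\id}$, then there is an earlier step $\tau_i$ that started the corresponding transaction $\id$, and at that point the non-speculative semantics would have taken the branch exactly to the predicted label $\sigma_{i+1}(\pc)$. The plan is to argue by tracking the life-cycle of transaction $\id$ through the execution, using the fact that transaction identifiers are created by exactly one rule and consumed by exactly one rule.

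First I would observe that the only rule that emits a $\commitObs{\id}$ observation is \textsc{Se-Commit}, and the only rule that emits a $\startObs{\id}$ observation is \textsc{Se-Branch}. Moreover, \textsc{Se-Branch} sets the new transaction's identifier to the current value of the global counter $\ctr$, and $\ctr$ is strictly incremented each time \textsc{Se-Branch} fires and is never decremented by any rule. Hence identifiers are globally unique: there is at most one step that creates transaction $\id$. Since step $j$ commits $\id$, the transaction $\tup{\sigma',\id,0,\lbl}$ must sit in the speculative-state sequence $s_j$ just before the commit; I would then argue by a short induction on the execution prefix (looking at how each rule transforms the speculative-state sequence $s$) that a tuple with identifier $\id$ can only have been \emph{appended} to $s$ by an application of \textsc{Se-Branch}, and once appended its identifier, predicted label $\lbl$, and snapshot configuration $\sigma'$ are never altered by any rule (\textsc{Se-NoBranch} only applies $\decrement{}$/$\zeroes{}$ to the lengths; \textsc{Se-Branch} only appends; \textsc{Se-Commit}/\textsc{Se-Rollback} only drop a suffix). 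This pins down the index $i<j$ at which \textsc{Se-Branch} created $\id$, and gives $\tau_i = \startObs{\ctr_i}\concat\pcObs{\lbl}$ with $\id=\ctr_i$ and the snapshot $\sigma_i = \tup{m,a}$, and $\sigma_{i+1}(\pc)=\lbl$, where $\tup{\lbl,w}=\bp(p,h_i,\sigma_i(\pc))$ by the premise of \textsc{Se-Branch}. That the instruction at $\sigma_i(\pc)$ is a branch $\pjz{x}{\lbl''}$ (so that $\bp$ is defined on it) is also a premise of \textsc{Se-Branch}.

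It then remains to get $\tup{m,a}\eval{p}{\pcObs{\lbl}}\tup{m,a[\pc\mapsto\lbl]}$, i.e.\ that the non-speculative step from the snapshot goes to the \emph{predicted} label $\lbl$. This is exactly where the commit (as opposed to a rollback) matters: the premise of \textsc{Se-Commit} at step $j$ is that $\sigma'\eval{p}{\tau}\tup{m,a}$ with $\lbl = a(\pc)$, and since I have identified $\sigma'$ with the snapshot $\sigma_i = \tup{m,a}$ and $\lbl$ with the predicted label, the non-speculative step from $\sigma_i$ lands on the program counter value $\lbl$. Unfolding the non-speculative rules for a branch instruction (\textsc{Beqz-Sat}/\textsc{Beqz-Unsat}), a $\pcObs{\lbl}$ observation is produced and the memory is unchanged, so the step has exactly the form $\tup{m,a}\eval{p}{\pcObs{\lbl}}\tup{m,a[\pc\mapsto\lbl]}$, as required, with $w$ taken from $\bp(p,h_i,\sigma_i(\pc))=\tup{\lbl,w}$.

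The main obstacle is the bookkeeping in the second step: making precise the claim that the identifier/label/snapshot components of a speculative-state entry are invariant between its creation and its removal, and that identifiers are never reused. This requires a careful case analysis over the four speculative rules and their effect on $s$ and $\ctr$, ideally phrased as an auxiliary invariant proved by induction on the length of the execution prefix (e.g.\ ``every entry $\tup{\sigma',\id,\_,\lbl}$ currently in $s$ was appended by a unique earlier \textsc{Se-Branch} step with counter value $\id$, snapshot $\sigma'$, and prediction $\lbl$''). Everything else is a direct reading-off of rule premises.
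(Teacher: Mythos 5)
Your proposal is correct and follows essentially the same route as the paper's proof: identify the committing step as an application of \textsc{Se-Commit}, trace back to the unique earlier \textsc{Se-Branch} step that created transaction $\id$ (reading off $\tau_i$, the snapshot, and the oracle prediction from that rule's premises), use the fact that the entry's identifier, label, and snapshot are never modified while it sits in $s$, and obtain the non-speculative step to the predicted label from the \textsc{Se-Commit} premise together with the \textsc{Beqz} rules. If anything, you are more explicit than the paper about the bookkeeping (identifier uniqueness via the monotone counter and the invariance of entries between creation and removal), which the paper only asserts informally.
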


\begin{proof}
Let $p$ be a program, $\bp$ be a prediction oracle, and $\tup{\ctr_0,\sigma_0,s_0,h_0} \speval{p}{\bp}{\tau_0} \tup{\ctr_1,\sigma_1,s_1,h_1} \speval{p}{\bp}{\tau_1} \ldots \speval{p}{\bp}{\tau_{n-1}}  \tup{\ctr_n,\sigma_n,s_n,h_n} \speval{p}{\bp}{\tau_n} \tup{\ctr_{n+1},\sigma_{n+1},s_{n+1},h_{n+1}}$ be an execution.
Let $\id \in \Nat$ be an arbitrary identifier such that  $\tau_j = \commitObs{\id}$ for some $0 \leq j \leq n$.
The $j$-th step in the run has been produced by applying the \textsc{Se-Commit} rule, which is the only rule producing $\commitObs{\id}$ observations, while committing the the speculative transaction $\tup{\tup{m,a}, \id, 0, \lbl}$ in $s_j$.
Hence, $\tup{m,a} \eval{p}{\pcObs{\lbl}} \tup{m,a[\pc \mapsto \lbl]}$ holds.
Therefore, there is an $0 \leq i < j$ when we executed the \textsc{Se-Branch} rule for the identifier $\id$.
In particular, the rule created the speculative transaction $\tup{\sigma_i, \ctr_i, w, \lbl'}$ where $\tup{\lbl',w} = \bp(p,h_i,\sigma_i(\pc))$, the observations are $\tau_i = \startObs{\ctr_i} \concat \pcObs{\lbl'}$, and the next program counter is $\lbl'$.
Since (1) we only modify the speculative windows inside the speculative transactions states, and (2) there is always only one speculative transaction per identifier, it follows that $\ctr_i = \id$, $\sigma_i = \tup{m,a}$, and $\lbl' = \lbl$.
From this, $\tup{m,a} \eval{p}{\pcObs{\lbl}} \tup{m,a[\pc \mapsto \lbl]}$, 
$\tup{m,a} = \sigma_i$,  
$\tup{\lbl,w} = \bp(p,h_i,\sigma_i(\pc))$, and 
$\tau_i = \startObs{\ctr_i} \concat \pcObs{\sigma_{i+1}(\pc)}$.
\end{proof}

\subsection{Completeness of the speculative semantics}
\label{appendix:speculative-and-non-speculative:completeness}

Proposition~\ref{proposition:speculative-and-non-speculative:completeness} states that the speculative semantics models all behaviors of the non-speculative semantics (independently on the prediction oracle).

\begin{restatable}{prop}{}\label{proposition:speculative-and-non-speculative:completeness}
Let $p$ be a program and $\bp$ be a prediction oracle.
Whenever $\sigma \eval{p}{\tau}^* \sigma'$, $\sigma \in \Init$, and $\sigma' \in \Final$, there are $\tau' \in \ExtObs^*$, $\ctr \in \Nat$, and $h \in \histories$ such that $\tup{0, \sigma, \emptysequence,\emptysequence} \speval{p}{\bp}{\tau'}^* \tup{\ctr, \sigma', \emptysequence,h}$ and $\tau = \nspecProject{\tau'}$.
\end{restatable}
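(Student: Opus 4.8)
The plan is to prove this by induction on the length of the non-speculative execution $\sigma \eval{p}{\tau}^* \sigma'$, showing that each step of the non-speculative semantics can be simulated by one or more steps of the speculative semantics, with the extra steps being precisely the start/commit (or start/\ldots/rollback) bookkeeping that the non-speculative projection $\nspecProject{\cdot}$ erases. The main subtlety is that the speculative semantics, when it hits a branch instruction, is \emph{forced} by rule \textsc{Se-Branch} to open a speculative transaction following the oracle's prediction $\bp(p,h,\sigma(\pc)) = \tup{\lbl,w}$, which may point down the \emph{wrong} branch; we must then show that this transaction is always eventually resolved (committed if the prediction was correct, rolled back otherwise), returning the configuration to exactly the state the non-speculative semantics would have reached after that branch, while contributing nothing to $\nspecProject{\tau'}$ beyond the single $\pcObs{}$ observation of the correct branch.

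Concretely, I would strengthen the induction hypothesis so that it applies to executions starting from an \emph{arbitrary} extended configuration $\tup{\ctr, \sigma, \emptysequence, h}$ with empty speculative-state sequence (not just the initial $\tup{0,\sigma,\emptysequence,\emptysequence}$), since the counter and history grow along the way. For the inductive step, case-split on the instruction $p(\sigma(\pc))$:
\begin{asparaitem}
\item If it is a non-branch, non-barrier instruction (or a barrier, which with $s = \emptysequence$ behaves the same), apply \textsc{Se-NoBranch}: since $s=\emptysequence$, $\exhausted{\emptysequence}$ holds vacuously and $\decrement{\emptysequence} = \zeroes{\emptysequence} = \emptysequence$, so the speculative step mimics the non-speculative one exactly, emitting the same observation, and we recurse.
\item If it is a branch $\pjz{x}{\lbl''}$, query $\bp(p,h,\sigma(\pc)) = \tup{\lbl,w}$ and apply \textsc{Se-Branch}, which opens a transaction with snapshot $\sigma$, identifier $\ctr$, length $w$, predicted label $\lbl$, emitting $\startObs{\ctr}\concat\pcObs{\lbl}$. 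Now I run the speculative semantics forward for up to $w$ \textsc{Se-NoBranch} steps (decrementing the transaction length each time) until the transaction length reaches $0$; a key lemma is needed here: any execution from $\tup{\ctr{+}1, \tup{m,a[\pc\mapsto\lbl]}, \tup{\sigma,\ctr,w,\lbl}, h'}$ reaches a state where the innermost (here, only) transaction has length $0$, at which point either \textsc{Se-Commit} or \textsc{Se-Rollback} fires. If $\lbl$ equals the correct branch target, \textsc{Se-Commit} fires, emitting only $\commitObs{\ctr}$ (erased by $\nspecProject{}$), leaving configuration $\sigma$ unchanged and then the non-speculative branch step can be simulated\ldots actually cleaner: since the snapshot is $\sigma$ and commit restores $\sigma$ with $\pc$ still\ldots here one must look carefully at \textsc{Se-Commit}, which keeps $\sigma$ in the first component; so after commit we are at $\tup{\ctr{+}1,\sigma,\emptysequence,h''}$ and must still take the actual branch step --- but wait, the $\startObs\concat\pcObs{\lbl}$ already advanced $\pc$ inside the transaction. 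The correct reading is that \textsc{Se-Commit} discards the in-transaction progress and returns to $\sigma$; so we genuinely re-do the branch. This is fine: $\nspecProject{}$ of the whole block $\startObs{\ctr}\concat\pcObs{\lbl}\concat(\text{in-transaction obs})\concat\commitObs{\ctr}$ contributes nothing if committed --- hmm, but then where does the $\pcObs{}$ for the correct branch come from? It must come from the \emph{next} \textsc{Se-Branch} re-execution; so effectively the machine may loop. To avoid this loop I instead use \textsc{Se-Rollback} uniformly in the \emph{mispredict} case and handle the correct-prediction case by noting \textsc{Se-Commit} followed by the real non-branch continuation --- the safest route is: whatever $\bp$ predicts, eventually the correct branch is taken and its $\pcObs{}$ recorded; choose in the simulation to let the transaction be rolled back if mispredicted (\textsc{Se-Rollback} emits $\rollbackObs{\ctr}\concat\pcObs{a(\pc)}$, and $\nspecProject{}$ keeps exactly\ldots no, it strips the whole $\startObs\ldots\rollbackObs$ block including that $\pcObs$).
\end{asparaitem}

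Given this, the genuinely load-bearing fact I would isolate as a lemma is: \emph{from a state with an open transaction whose prediction is correct, the committed continuation produces the same observations and final configuration as the non-speculative semantics would from the branch onward, modulo extended observations}; and symmetrically for the mispredicted case via rollback. The main obstacle is exactly pinning down how \textsc{Se-Commit}/\textsc{Se-Rollback} interact with the snapshot and the program counter so that the $\pcObs{}$ of the correct branch appears exactly once in $\nspecProject{\tau'}$ --- reading the rules, \textsc{Se-Commit} restores the \emph{snapshot} $\sigma$ (pre-branch), so I expect that the intended simulation is: \textsc{Se-Branch} predicts correctly $\Rightarrow$ run the transaction to length $0$ $\Rightarrow$ \textsc{Se-Commit} (returns to $\sigma$) $\Rightarrow$ then a \emph{second} \textsc{Se-Branch}, but now with a history that makes $\bp$ still predict the same label --- which reintroduces the loop. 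The resolution, which I would verify against the appendix rules, is that \textsc{Se-Commit}'s premise $\sigma' \eval{p}{\tau} \tup{m,a}$ with $\lbl = a(\pc)$ is the correctness check and the committed configuration should actually carry $\sigma'$ advanced, i.e. the rule as written may keep $\sigma$ but the intended semantics (and what Prop.~\ref{proposition:speculative-and-non-speculative} needs) is that commit \emph{does} advance past the branch; if the rule truly keeps $\sigma$, then termination of the whole speculative run (which the paper assumes) forces that real progress happens, and I would exploit that assumption. So: the completeness proof reduces to (i) a simulation lemma for non-branch steps (routine), (ii) a transaction-resolution lemma showing every opened transaction terminates in commit or rollback with controlled observations and the expected post-branch configuration (the crux), and (iii) assembling these along the non-speculative run and checking $\nspecProject{\tau'} = \tau$ by the definition of $\nspecProject{}$ in Appendix~\ref{appendix:projections}, which deletes exactly the $\startObs\ldots\rollbackObs$ blocks and all extended observations --- precisely the material the simulation inserts.
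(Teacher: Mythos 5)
Your overall skeleton is the same as the paper's: simulate the non-speculative run step by step, using \textsc{Se-NoBranch} for non-branch instructions (with $s=\emptysequence$ or, more generally, $\exhausted{s}$), and for each branch apply \textsc{Se-Branch} and then resolve the opened transaction by commit or rollback, checking at the end that $\nspecProject{\tau'}=\tau$. The gap is in your treatment of the correctly-predicted branch, and it comes from a misreading of \textsc{Se-Commit}. In that rule the first component $\sigma$ of $\tup{\ctr,\sigma,s\concat\tup{\sigma',\id,0,\lbl}\concat s',h}$ is the \emph{current} configuration, which already carries all progress made inside the transaction; the snapshot is $\sigma'$, stored in the speculative state. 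Commit keeps $\sigma$, discards only the snapshot, and emits just $\commitObs{\id}$ (cf.\ the paper's lemma that commit leaves the configuration unchanged). So there is no ``discard the in-transaction progress and re-do the branch,'' no loop, and no need to appeal to the termination assumption or to suspect the rule is mis-stated. The $\pcObs{\lbl}$ of the correct branch is emitted once, by \textsc{Se-Branch} itself, and it survives $\nspecProject{\cdot}$ precisely because the transaction is later committed: the projection deletes $\startObs{\id}$ and $\commitObs{\id}$ but keeps everything in between, so the in-transaction observations are exactly the non-speculative observations of the correctly-taken branch and its continuation. This is the load-bearing fact in the correct-prediction case, and your proposal never lands on it; as written (``we genuinely re-do the branch'') the simulation would either duplicate the $\pcObs{}$ or diverge.

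You make a symmetric slip in the mispredicted case: you claim the $\pcObs{}$ emitted by \textsc{Se-Rollback} is stripped along with the $\startObs{\id}\cdots\rollbackObs{\id}$ block. It is not: the rule emits $\rollbackObs{\id}\concat\pcObs{a(\pc)}$, the projection deletes only up to and including $\rollbackObs{\id}$, and the trailing $\pcObs{a(\pc)}$ is kept --- this is exactly how the correct-branch observation appears in $\nspecProject{\tau'}$ for a rolled-back transaction, and it is what the paper's rollback lemma establishes ($\nspecProject{\tau}=\pcObs{\lbl'}$, with the final configuration equal to the one the non-speculative branch step produces, since rollback restores the snapshot and then takes the correct branch). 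Once these two readings are fixed, your plan coincides with the paper's proof: the paper defines an $\textsc{expand}$ procedure that, per non-speculative step, applies \textsc{Se-NoBranch} for non-branches, applies \textsc{Se-Branch} once when the oracle predicts correctly (committing later, when the window is exhausted), and in the mispredicted case runs the transaction to its forced rollback before continuing; the projection bookkeeping then gives $\tau=\nspecProject{\tau'}$.
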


\begin{proof}
Let $p$ be a program and $\bp$ be a prediction oracle.
Moreover, let $r$ be an execution $\sigma_0 \eval{p}{\tau_0} \sigma_1 \eval{p}{\tau_1} \ldots \sigma_n \eval{p}{\tau_n} \sigma_{n+1}$ such that $\sigma_0 \in \Init$ and $\sigma_{n+1} \in \Final$.

We now define a procedure $\expand{\tup{\ctr,\sigma,s,h}}{\sigma \eval{p}{\tau} \sigma'}$ for constructing the execution under the speculative semantics as follows:
\begin{compactitem}
\item If  $p(\sigma(\pc)) \neq \pjz{x}{\lbl'}$ and $\exhausted{s}$, then we apply the \textsc{Se-NoBranch} rule once.
This produces the step $r':= \tup{\ctr,\sigma,s,h} \speval{p}{\bp}{\tau} \tup{\ctr,\sigma', \decrement{s},h'}$.
We return $r'$.

\item If $p(\sigma(\pc)) = \pjz{x}{\lbl'}$ and $\exhausted{s}$, there are two cases:
\begin{compactitem}
\item If  $\bp(p,h,\sigma(\pc)) = \tup{\lbl, w}$ and $\lbl = \sigma'(\pc)$, then we correctly predict the outcome of the branch instruction.
We produce the execution $r'$ by applying  the \textsc{Se-Branch} rule:  $r':= \tup{\ctr,\sigma,s,h} \speval{p}{\bp}{\startObs{\ctr} \concat \pcObs{\lbl} }  \tup{\ctr+1,\sigma', \decrement{s} \concat \tup{\sigma,\ctr,w,\lbl}, h \concat \tup{\sigma(\pc),\ctr,\lbl} } $.
Observe that this step starts a speculative transaction that will be committed (since $\lbl = \sigma'(\pc)$).
We return $r'$.

\item If $\bp(p,h,\sigma(\pc)) = \tup{\lbl, w}$ and $\lbl \neq \sigma'(\pc)$, then we mispredict the outcome of the branch instruction.%
We produce the execution $r'$ by (1) applying \textsc{Se-Branch} once to start the transaction $\ctr$, (2) we repeatedly apply the speculative semantics rules until we roll back the transaction $\ctr$.
This is always possible since speculative transactions always terminate.
Hence, $r'$ is 
$
\tup{\ctr,\sigma,s,h} 
	\speval{p}{\bp}{\startObs{\ctr} \concat \pcObs{\lbl}} 
\tup{\ctr + 1,\sigma[\pc \mapsto \lbl], \decrement{s} \concat \tup{\sigma,\ctr, w, \lbl }, h \concat \tup{\sigma(\pc),\ctr,\lbl} } 
	\speval{p}{\bp}{\tau'}^*
\tup{\ctr'', \sigma'', s'' \concat \tup{\sigma, \ctr, 0, \lbl } \concat s''', h' }
	\speval{p}{\bp}{\rollbackObs{\ctr} \concat \pcObs{ \sigma'(\pc)}}
\tup{\ctr'', \sigma', s'', h' \concat \tup{\sigma(\pc),\ctr,\sigma'(\pc) } }$.
We return $r'$.

\end{compactitem}

\item If $\neg \exhausted{s}$, then there is a pending transaction (with identifier $\id$) that has to be committed.
In this case, we simply perform one step by applying the \textsc{Se-Commit} rule.
\end{compactitem}
Finally, the execution $r'$ is constructed by concatenating the partial executions $r_0, \ldots, r_n$, which are constructed as follows:
\begin{align*}
& 	\mathit{cfg}_0 := \tup{\ctr_0, \sigma_0, \emptysequence, \emptysequence} \\
&	r_i = \expand{\mathit{cfg}_i}{\sigma_i \eval{p}{\tau_i} \sigma_{i+1}}\\
&	\mathit{cfg}_{i} := \mathit{last}(r_{i-1}) \qquad \text{if}\ i>0
\end{align*}
From the construction, it immediately follows that the initial extended configuration in $r'$ is $\tup{\ctr_0, \sigma_0, \emptysequence, \emptysequence}$ and the final configuration is $\tup{\ctr, \sigma_0, \emptysequence,h}$ (for some $\ctr \in \Nat$ and $h \in \histories$) since (1) $r$ is a terminating run, (2) we always execute all transactions until completion, (3) speculative transactions always terminate given that speculative windows are decremented during execution, (4) rolled-back speculative transactions do not affect the configuration $\sigma$.
Similarly, it is easy to see that $\nspecProject{\tau} = \tau_0 \concat \ldots \concat \tau_n$ since (1) the observations associated with committed steps are equivalent to those in the non-speculative semantics, (2) observations associated with rolled back transactions are removed by the projection (and we execute all transactions until termination, and (3) other extended observations are removed by the non-speculative projection. 
\end{proof}

\section{Relating the always-mispredict and non-speculative semantics (Proposition~\ref{proposition:always-mispredict:speculative-and-non-speculative})}
\label{appendix:always-mispredict-and-non-speculative}

We start by spelling out the definition of $\amTraces{p}{w}$.
Given a program $p$ and a speculative window~$w$, we denote by $\amTraces{p}{w}$ the set of all triples $\tup{\sigma, \tau, \sigma'} \in \Init \times \ExtObs^* \times \Final$ corresponding to executions $\tup{0, \sigma, \emptysequence} \ameval{\tau}^* \tup{\ctr, \sigma',\emptysequence}$.

We are now ready to prove Proposition~\ref{proposition:always-mispredict:speculative-and-non-speculative}, which we restate here for simplicity:

\alwaysMispredSpecNonSpec*

\begin{proof}
The proposition can be proved in a similar way to Proposition~\ref{proposition:speculative-and-non-speculative}.
Specifically, the proof of the direction $\bm{(\Leftarrow)}$ is a simpler version of Proposition~\ref{proposition:speculative-and-non-speculative:soundness}'s proof since there are only rolled-back transactions. The proof of the direction  $\bm{(\Rightarrow)}$, instead, is identical  Proposition~\ref{proposition:speculative-and-non-speculative:completeness}'s proof.	
\end{proof}

\newcommand{\amProgEval}[3]{\lbbar #1 \rbbar_{#2}(#3)}
\newcommand{\spProgEval}[3]{\llbracket #1 \rrbracket_{#2}(#3)}
\newcommand{\commProject}[1]{#1\mathord{\upharpoonright_{com}}}
\newcommand{\projectNested}[1]{#1\mathord{\upharpoonright_{nst}}}
\renewcommand{\ameval}[3]{\xLongrightarrow{#3}} 
\newcommand{\minWindow}[1]{\mathit{minWndw}(#1)}
\newcommand{\idx}{\bm{idx}}

\section{Relating speculative and always mispredict semantics (Theorem~\ref{theorem:always-mispredict} and Proposition~\ref{proposition:always-mispredict-worst-case})}\label{appendix:always-mispredict-worst-case}

Below, we provide the proof of Theorem~\ref{theorem:always-mispredict}, which we restate here for simplicity:

\alwaysMispredict*

\begin{proof}
Let $p$ be a program, $P$ be a policy, and $w \in \Nat$ be a speculative window.
We prove the two directions separately.

\para{$\bm{(\Rightarrow)}$}
Assume that $p$ satisfies SNI for $P$ and all prediction oracles~$\bp$ with speculative window at most $w$.
Then, for all $\bp$ with speculative window at most $w$, for all initial configurations $\sigma,\sigma' \in \Init$,
if $\sigma \indist{\policy} \sigma'$ and $\nspecEval{p}{\sigma} = \nspecEval{p}{\sigma'}$, then $\specEval{p}{\bp}{\sigma} = \specEval{p}{\bp}{\sigma'}$.
From this and Proposition~\ref{proposition:always-mispredict-worst-case} (proved below), we have that for all initial configurations $\sigma,\sigma' \in \Init$,
if $\sigma \indist{\policy} \sigma'$ and $\nspecEval{p}{\sigma} = \nspecEval{p}{\sigma'}$, then $\amEval{p}{w}{\sigma} = \amEval{p}{w}{\sigma'}$.

\para{$\bm{(\Leftarrow)}$}
Assume that $p$ is such that for all initial configurations $\sigma,\sigma' \in \Init$,
if $\sigma \indist{\policy} \sigma'$ and $\nspecEval{p}{\sigma} = \nspecEval{p}{\sigma'}$, then $\amEval{p}{w}{\sigma} = \amEval{p}{w}{\sigma'}$.
Let $\bp$ be an arbitrary prediction oracle with speculative window at most $w$.
From Proposition~\ref{proposition:always-mispredict-worst-case} (proved below), we have that for all initial configurations $\sigma,\sigma' \in \Init$,
if $\sigma \indist{\policy} \sigma'$ and $\nspecEval{p}{\sigma} = \nspecEval{p}{\sigma'}$, then $\specEval{p}{\bp}{\sigma} = \specEval{p}{\bp}{\sigma'}$.
Therefore, $p$ satisfies SNI w.r.t. $P$ and $\bp$.
Since $\bp$ is an arbitrary predictor with speculation window at most $w$, then $p$ satisfies SNI for $\policy$ and all prediction oracles~$\bp$ with speculative window at most $w$.
\end{proof}

The proof of Theorem~\ref{theorem:always-mispredict} depends on Proposition~\ref{proposition:always-mispredict-worst-case}, which we restate here for simplicity:

\alwaysMispredWorstCase*

\begin{proof}
The proposition follows immediately from Propositions~\ref{proposition:always-mispred-worst-case:soundness} (proved in Section~\ref{apx:always-mispred-worst-case:soundness}) and~\ref{proposition:always-mispred-worst-case:completeness} (proved in Section~\ref{apx:always-mispred-worst-case:completeness}).
\end{proof}

\subsection{Auxiliary definitions}

Before proving Propositions~\ref{proposition:always-mispred-worst-case:soundness} and~\ref{proposition:always-mispred-worst-case:completeness}, we introduce some auxiliary definitions.

\smallskip
\para{$\pc$-similar configurations}
Two configurations $\sigma,\sigma' \in \Conf$ are \textit{$\pc$-similar}, written $\sigma \pcSim \sigma'$, iff $\sigma(\pc) = \sigma'(\pc)$.

\smallskip
\para{Next-step agreeing configurations}
Two configurations $\sigma, \sigma' \in \Conf$ are \textit{next-step agreeing}, written $\sigma \nextSim \sigma'$, iff there are $\sigma_1, \sigma_1' \in \Conf$ such that $\sigma \eval{p}{\tau} \sigma_1$, $\sigma' \eval{p}{\tau} \sigma_1'$, and $\sigma_1 \pcSim \sigma_1'$.

\smallskip
\para{Similar extended configurations}
Two speculative states $\tup{\sigma, \id, w, \lbl}$ and $\tup{\sigma',\id', w', \lbl'}$ in $\specStates$ are \textit{similar}, written $\tup{\sigma, \id, w, \lbl} \cong \tup{\sigma',\id', w', \lbl'}$, iff $\id = \id'$, $w = w'$, $\lbl = \lbl'$,  $\sigma \pcSim \sigma'$, and $\sigma \nextSim \sigma'$.

Two sequences of speculative states $s$ and $s'$ in $\specStates^*$ are \textit{similar}, written $s \cong s'$, iff all their speculative states are similar.
Formally, $\emptysequence \cong \emptysequence$ and $s \concat \tup{\sigma,\id, w, \lbl} \cong s' \concat \tup{\sigma',\id', w', \lbl'}$ iff $s \cong s'$ and $\tup{\sigma,\id, w, \lbl} \cong \tup{\sigma',\id', w', \lbl'}$.

Two extended configurations $\tup{\ctr, \sigma, s},\tup{\ctr', \sigma', s'}$ are \textit{similar}, written $\tup{\ctr, \sigma, s} \cong \tup{\ctr', \sigma', s'}$, iff $\ctr = \ctr'$, $\sigma \pcSim \sigma'$, and $s \cong s'$.
Similarly, $\tup{\ctr, \sigma, s, h} \cong \tup{\ctr', \sigma', s', h'}$  iff $\ctr = \ctr'$, $\sigma \pcSim \sigma'$, $s \cong s'$, and $h = h'$.

\smallskip
\para{Commit-free projection}
The commit-free projection of a sequence of speculative state $s \in \specStates^*$, written $\commProject{s}$, is as follows:
$\commProject{\emptysequence} = \emptysequence$,
$\commProject{s \concat \tup{\sigma, \id, w, \lbl}} = \commProject{s} \concat \tup{\sigma, \id, w, \lbl}$ if $\sigma \eval{p}{\tau} \sigma'$ and $\sigma'(\pc) \neq \lbl$, and 
$\commProject{s \concat \tup{ \sigma, \id, w, \lbl}} = \commProject{s}$ if $\sigma \eval{p}{\tau} \sigma'$ and $\sigma'(\pc) = \lbl$.

\smallskip
\para{Mirrored configurations}
Two speculative states $\tup{\sigma,\id, w, \lbl}$ and $\tup{\sigma,\id', w', \lbl'}$ in $\specStates$ are \textit{mirrors} if $\sigma = \sigma'$ and $\lbl = \lbl'$.
Two sequences of speculative states $s$ and $s'$ in $\specStates^*$ are \textit{mirrors}, written $s || s'$, if $|s| = |s'|$ and their speculative states are pairwise mirrors.
Finally, $\tup{\ctr, \sigma, s} || \tup{\ctr',\sigma', s', h'}$ iff $\sigma = \sigma'$ and $s || \commProject{s'}$.

\smallskip
\para{Auxiliary functions for handling transactions' lenghts}
The $\window{\cdot} : \specStates^* \to \Nat \cup \{\infty\}$ function is defined as follows:
$\window{\emptysequence} = \infty$ and $\window{s \concat \tup{\ctr, w, n, \sigma}} = w$.
The $\minWindow{\cdot}:\specStates^* \to \Nat \cup \{\infty\}$ function is defined as follows:
$\minWindow{\emptysequence} = \infty$ and $\minWindow{s \concat \tup{\ctr, w, n, \sigma}} = \mathit{min}(w, \minWindow{s})$.

\smallskip
\para{Invariants}
We denote by $INV(s,s')$, where $s, s' \in \specStates^*$, the fact that $|s| = |s'|$ and for all $1 \leq i \leq |s|$, $\minWindow{\prefix{s'}{i}} \leq \window{\prefix{s}{i}}$, where $\prefix{s}{i}$ denotes the prefix of $s$ of length $i$.
Additionally, $INV(\tup{\ctr, \sigma, s}, \tup{\ctr', \sigma', s', h'})$, where $\tup{\ctr, \sigma, s}$ is an extended configuration for the always mispredict semantics whereas $\tup{\ctr', \sigma', s', h'}$ is an extended configuration for the speculative semantics,  holds iff $INV(s,\commProject{s'})$ does.

We denote by $INV2(s,s')$, where $s, s' \in \specStates^*$, the fact that $|s| = |s'|$ and for all $1 \leq i \leq |s|$, $\minWindow{\prefix{s'}{i}} = \window{\prefix{s}{i}}$, where $\prefix{s}{i}$ denotes the prefix of $s$ of length $i$.
Moreover, $INV2(\tup{\ctr, \sigma, s}, \tup{\ctr', \sigma', s', h'})$, where $\tup{\ctr, \sigma, s}$ is an extended configuration for the always mispredict semantics whereas $\tup{\ctr', \sigma', s', h'}$ is an extended configuration for the speculative semantics,  holds iff $INV2(s,\commProject{s'})$ does.

\smallskip
\para{Notation for runs and traces}
For simplicity, we restate here the definitions of $\amProgEval{p}{w}{\sigma}$ and $\spProgEval{p}{\bp}{\sigma}$.
Let $p$ be a program,  $w \in \Nat$ be a speculative window, $\bp$ be a prediction oracle, and  $\sigma \in \Init$ be an initial configuration.
Then, $\amProgEval{p}{w}{\sigma}$ denotes the trace $\tau \in \ExtObs^*$ such that there is a final configuration $\sigma' \in \Final$ such that $\tup{\sigma,\tau,\sigma'} \in \amTraces{p}{w}$, whereas $\spProgEval{p}{w}{\sigma}$ denotes the trace $\tau \in \ExtObs^*$ such that there is a final configuration $\sigma' \in \Final$ such that $\tup{\sigma,\tau,\sigma'} \in \specTraces{p}{\bp}$. 
If there is no such $\tau$, then we write $\amProgEval{p}{w}{\sigma} = \bot$ (respectively $\spProgEval{p}{w}{\sigma}=\bot$).

\subsection{Soundness}\label{apx:always-mispred-worst-case:soundness}

In Proposition~\ref{proposition:always-mispred-worst-case:soundness}	, we prove the soundness of the always mispredict semantics w.r.t. the speculative semantics.

\begin{restatable}{prop}{}
\label{proposition:always-mispred-worst-case:soundness}		
Let $p$ be a program, $w \in \Nat$ be a speculative window, and $\sigma, \sigma' \in \Init$ be two initial configurations.
If $\amProgEval{p}{w}{\sigma} = \amProgEval{p}{w}{\sigma'}$, then $\spProgEval{p}{\bp}{\sigma} = \spProgEval{p}{\bp}{\sigma'}$ for all prediction oracle $\bp$ with speculation window at most $w$.
\end{restatable}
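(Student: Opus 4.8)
The plan is to fix the program $p$ and the oracle $\bp$ (with speculative window at most $w$) and show that the speculative trace $\spProgEval{p}{\bp}{\sigma}$ is a \emph{deterministic function} of the always-mispredict trace $\amProgEval{p}{w}{\sigma}$, i.e.\ $\spProgEval{p}{\bp}{\sigma} = F_{p,\bp}(\amProgEval{p}{w}{\sigma})$ for a map $F_{p,\bp}$ that does not depend on $\sigma$. Given this, the proposition is immediate: from $\amProgEval{p}{w}{\sigma} = \amProgEval{p}{w}{\sigma'}$ we get $\spProgEval{p}{\bp}{\sigma} = F_{p,\bp}(\amProgEval{p}{w}{\sigma}) = F_{p,\bp}(\amProgEval{p}{w}{\sigma'}) = \spProgEval{p}{\bp}{\sigma'}$. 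Such an $F_{p,\bp}$ exists because the always-mispredict trace is ``informationally richer'' than any $\bp$-trace: the $\pcObs{}$ observations it contains pin down the entire non-speculative control flow, hence every branching history, hence every response $\bp(p,h,\ell)$ of the oracle; and since $\bp$'s windows are bounded by $w$, every $\loadObs{}$/$\storeObs{}$ observation produced under $\bp$ — whether on the committed path, on a correctly predicted branch that later commits, or on a mispredicted branch that later rolls back — already occurs, with the same value, in the always-mispredict run. So nothing observed under $\bp$ is ``new'' relative to $\amProgEval{p}{w}{\sigma}$.

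The technical core is a simulation lemma, proved by induction on the length of the always-mispredict execution of $\sigma$: every always-mispredict run from $\sigma$ is mirrored by a speculative run from $\sigma$ under $\bp$, where the two extended configurations stay related by the mirroring relation $s^{am}\mathbin{||}\commProject{s^{\bp}}$ (its commit-free projection $\commProject{}$ absorbs the $\bp$-side transactions that will commit) together with the budget invariants $INV$ (and, at the relevant points, $INV2$) relating the remaining transaction lengths; and the $\bp$-side trace produced so far is obtained from the always-mispredict trace by a rewriting that depends only on $p$ and $\bp$. The induction is a case analysis on the always-mispredict rule applied. \textsc{Am-NoBranch} is matched by \textsc{Se-NoBranch}, possibly preceded on the $\bp$-side by \textsc{Se-Commit} steps (which carry no load/store observation and are accounted for by $\commProject{}$). \textsc{Am-Rollback} is matched by \textsc{Se-Rollback}: because the $\bp$-side transaction is no longer than the always-mispredict one, its speculative observations form a prefix of the always-mispredict ones, and $INV$ says exactly how much to truncate. \textsc{Am-Branch} splits into two subcases: if $\bp$ predicts the target that the always-mispredict semantics forces, this is a misprediction for $\bp$ and is matched directly by \textsc{Se-Branch}; if $\bp$ predicts the opposite target, it predicts \emph{correctly}, so on the $\bp$-side the opened transaction will commit, and on the always-mispredict side we instead consume the mispredicted exploration up to the window, the subsequent \textsc{Am-Rollback}, and the portion of the continuation that re-does the now-correct branch, splicing these pieces via $\commProject{}$ and $\mathbin{||}$. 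Finiteness of every speculative transaction (windows are decremented, all programs terminate) makes the induction well-founded and guarantees both runs reach a final configuration.

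The step I expect to be the main obstacle is exactly this last subcase: a branch predicted correctly by $\bp$ but mispredicted by the always-mispredict semantics, where the two traces have genuinely different shapes — $\startObs{i}\,{\cdot}\,\text{(correct path)}\,{\cdot}\,\commitObs{i}$ under $\bp$ versus $\startObs{i}\,{\cdot}\,\text{(wrong path)}\,{\cdot}\,\rollbackObs{i}\,{\cdot}\,\text{(correct path)}$ under always-mispredict — so one must argue that the $\bp$-trace is still recoverable and that transactions nested \emph{inside} the correctly predicted region line up under $\commProject{}$, which requires carefully propagating $INV$/$INV2$ through the nesting (this is also why two invariants, rather than one, are needed). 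Once the simulation lemma is in place, the determinism of $F_{p,\bp}$ and the conclusion are routine; soundness follows as above, and the symmetric direction is Proposition~\ref{proposition:always-mispred-worst-case:completeness}.
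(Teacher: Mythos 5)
Your proposal is correct in substance but takes a genuinely different route from the paper. The paper proves this proposition through a \emph{relational} step lemma (Lemma~\ref{lemma:always-mispredict-worstcase:am-to-sp}): it advances \emph{two} always-mispredict runs (from $\sigma$ and $\sigma'$) and \emph{two} speculative runs under $\bp$ in lockstep, keeping the four extended configurations related by $\cong$, by the mirroring $\mathbin{||}$ through the commit-free projection $\commProject{\cdot}$, and by the budget invariant $INV$, and it uses the hypothesis $\amProgEval{p}{w}{\sigma}=\amProgEval{p}{w}{\sigma'}$ at each step to force the two $\bp$-runs to emit identical observations; it never needs to show that the $\bp$-trace is \emph{computable} from the always-mispredict trace. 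You instead aim for the functional statement — a single-run simulation plus a transducer $F_{p,\bp}$ rewriting AM traces into $\bp$-traces — from which the proposition is immediate. That is logically equivalent to the proposition (so asserting that $F_{p,\bp}$ exists ``because the AM trace is informationally richer'' is not itself an argument) and, if carried out, is a stronger and more reusable intermediate result; but it moves the whole burden onto well-definedness of the rewriting from the trace alone, which is the part your sketch leaves thin. In particular, the oracle's inputs are branching histories that record predicted labels, transaction identifiers, and the \emph{actual outcomes of branches executed speculatively inside transactions}, so ``the $\pcObs{}$ observations pin down the non-speculative control flow, hence every history'' is not enough: you must recover nested branch outcomes from the $\startObs{}$/$\rollbackObs{}$-bracketed speculative segments of the AM trace, recover instruction counts (to truncate to $\bp$'s window $w'\le w$) from the control-flow path plus the program text, and rebuild the history recursively from $\bp$'s own earlier answers, with your $INV$/$INV2$ window-dominance guaranteeing that every segment the $\bp$-run needs is actually present in the AM trace. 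You correctly flag the hard subcase (a branch $\bp$ predicts correctly, giving the $\startObs{i}\cdots\commitObs{i}$ shape, with nested transactions spliced through $\commProject{\cdot}$); the case analysis and invariant propagation needed there is essentially the same as in the paper's relational lemma, done once per run instead of pairwise — the relational formulation is precisely what lets the paper sidestep proving trace-determinacy of the oracle's inputs, at the cost of carrying two runs of each semantics through the induction.
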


\begin{proof}
Let $w \in \Nat$ be a speculative window and $\sigma, \sigma' \in \Init$ be two initial configurations such that $\amProgEval{p}{w}{\sigma} = \amProgEval{p}{w}{\sigma'}$.
Moreover, let $\bp$ be an arbitrary prediction oracle with speculative window at most $w$.
If $\amProgEval{p}{w}{\sigma} = \bot$, then the computation does not terminate in $\sigma$ and $\sigma'$.
Since speculation does not introduce non-termination, the computation does not terminate according to $\speval{p}{\bp}{}$ as well.
Hence, $\spProgEval{p}{\bp}{\sigma} = \spProgEval{p}{\bp}{\sigma'} = \bot$.
If $\amProgEval{p}{w}{\sigma} \neq \bot$, then we can obtain the runs for $\speval{p}{\bp}{}$ by repeatedly applying Lemma~\ref{lemma:always-mispredict-worstcase:am-to-sp} (proved below) to the runs corresponding to $\amProgEval{p}{w}{\sigma}$ and $ \amProgEval{p}{w}{\sigma'}$.
Observe that we can apply the Lemma~\ref{lemma:always-mispredict-worstcase:am-to-sp} because (I) $\amProgEval{p}{w}{\sigma} = \amProgEval{p}{w}{\sigma'}$, (II) $\amProgEval{p}{w}{\sigma} \neq \bot$, (III) the initial configurations $\tup{0,\sigma, \emptysequence}$, $\tup{0,\sigma', \emptysequence}$, $\tup{0,\sigma, \emptysequence, \emptysequence}$, $\tup{0,\sigma', \emptysequence, \emptysequence}$ trivially satisfy conditions (1)--(4), and (IV) the application of Lemma~\ref{lemma:always-mispredict-worstcase:am-to-sp} preserves (1)--(4).
From the point (g) in Lemma~\ref{lemma:always-mispredict-worstcase:am-to-sp}, we immediately have that the runs have the same traces.
Hence, $\spProgEval{p}{\bp}{\sigma} = \spProgEval{p}{\bp}{\sigma'}$.
\end{proof}

We now prove Lemma~\ref{lemma:always-mispredict-worstcase:am-to-sp}.

\begin{restatable}{lem}{}\label{lemma:always-mispredict-worstcase:am-to-sp}
Let $p$ be a program, $w \in \Nat$ be a speculative window, $\bp$ be a prediction oracle with speculative window at most $w$, 
$am_0, am_0'$ be two extended speculative configurations for the always mispredict semantics, and 
$sp_0, sp_0'$ be two extended speculative configurations for the  speculative semantics.

If the following conditions hold:
\begin{compactenum}
\item $am_0 \cong am_0'$,
\item $sp_0 \cong sp_0'$,
\item $am_0 || sp_0 \wedge am_0' || sp_0'$,
\item $INV(am_0,sp_0) \wedge INV(am_0',sp_0')$,
\item $\amProgEval{p}{w}{am_0} \neq \bot$,
\item $\amProgEval{p}{w}{am_0} = \amProgEval{p}{w}{am_0'}$,
\end{compactenum}
then there are configurations $am_1, am_1'$ (for the always mispredict semantics), $sp_1, sp_1'$ (for the speculative semantics), and  $n \in \Nat$
 such that:
\begin{compactenum}[(a)]
\item $am_0 \ameval{p}{w}{\tau_0}^n am_1 \wedge am_0' \ameval{p}{w}{\tau_0}^n am_1'$,
\item $sp_0 \speval{p}{\bp}{\tau} sp_1 \wedge sp_0' \speval{p}{\bp}{\tau'} sp_1'$,
\item $am_1 \cong am_1'$,
\item $sp_1 \cong sp_1'$,
\item $am_1 || sp_1 \wedge am_1' || sp_1'$,
\item $INV(am_1,sp_1) \wedge INV(am_1',sp_1')$, and
\item $\tau = \tau'$.
\end{compactenum}
\end{restatable}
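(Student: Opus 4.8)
The plan is to prove Lemma~\ref{lemma:always-mispredict-worstcase:am-to-sp} by a case analysis on the rule that the speculative semantics $\speval{p}{\bp}{}$ applies to $sp_0$. By invariant~(2), i.e.\ $sp_0 \cong sp_0'$, the same rule applies to $sp_0'$, and the always-mispredict configurations apply the same rule by invariant~(1), so the two worlds stay in lockstep and the real content is establishing the cross-world trace equality~(g). For each rule I would exhibit the matching block of $n$ always-mispredict steps and re-establish conditions~(a)--(g). The rôles of the three invariants are: $\cong$ guarantees that both worlds execute the same instruction, take the same branch targets, and---because the branching-history components coincide---receive the same oracle prediction; $||$ records that the always-mispredict state mirrors exactly the \emph{commit-free} part of the speculative state, ignoring remaining transaction lengths; and $INV$ records that along every prefix the minimal speculative window is dominated by the always-mispredict window, so that speculative transactions terminate no later than their always-mispredict counterparts.

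First I would dispatch the non-branching case: when $p(\sigma(\pc))$ is not a branch and $\exhausted{s_{sp}}$, invariant $INV$ forces $\exhaustedTop{s_{am}}$, so the always-mispredict semantics takes a single \textsc{Am-NoBranch} step ($n=1$) producing the same non-speculative observation; re-establishing $||$ is immediate since mirroring ignores lengths, and re-establishing $INV$ is a short monotonicity check comparing $\decrement{\cdot}$ on the speculative side with $\decrementTop{\cdot}$ on the always-mispredict side (and analogously for $\pbarrier$ via $\zeroes{\cdot}$/$\zeroesTop{\cdot}$). The branch case splits on whether $\bp$'s prediction is correct. If it is \emph{incorrect}, then---as a branch has only two successors---the predicted target equals the always-mispredicted target, so a single \textsc{Am-Branch} step ($n=1$) pushes a mirror transaction, and re-establishing $INV$ uses $w'\le w$ together with the decrements both rules perform. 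If it is \emph{correct}, the freshly pushed speculative transaction will be committed and hence does not appear under $\commProject{\cdot}$, so instead I would burn a whole block of always-mispredict steps---\textsc{Am-Branch}, the speculative execution of the wrongly-predicted branch until that transaction is rolled back (possibly with its own nested transactions), and \textsc{Am-Rollback}---after which the always-mispredict state equals $\decrementTop{s_{am}}$ and still mirrors $\commProject{\decrement{s_{sp}}}$. \textsc{Se-Commit} then corresponds to $n=0$ always-mispredict steps (the committed transaction is invisible to the mirrored state, its always-mispredict counterpart having already been spent in the earlier correct-prediction block), and \textsc{Se-Rollback} to one \textsc{Am-Rollback}. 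In every case the two speculative worlds emit the same observation because instruction, branch targets, and oracle predictions coincide across worlds; when this concerns \emph{which} successor is correct, that is where I would invoke hypothesis~(6): were the branch outcome to differ between the worlds, the always-mispredict traces would already differ at the $\pcObsKywd$ emitted after the corresponding rollback, contradicting $\amProgEval{p}{w}{am_0}=\amProgEval{p}{w}{am_0'}$.

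The main obstacle is the accounting for $INV$ in the branch cases under nesting: one must show that replacing a prefix's minimal speculative window by $\mathit{min}(w',\cdot)$ keeps it below the always-mispredict prefix window $\mathit{min}(w,\window{s_{am}}-1)$, using both $w'\le w$ and the pre-step $INV$; the subtlety is that the always-mispredict rules decrement only the innermost transaction whereas \textsc{Se-NoBranch}/\textsc{Se-Branch} decrement all of them, so the two sides drift on non-innermost entries and the inequality has to be maintained prefix-by-prefix rather than globally. A secondary obstacle, needed so that iterating the lemma actually builds a \emph{complete} speculative run (the outer Proposition~\ref{proposition:always-mispred-worst-case:soundness}), is a termination argument: the only steps with $n=0$ are \textsc{Se-Commit}s, and these are bounded by the number of branch instructions executed along the always-mispredict run, which is finite by hypothesis~(5); hence the construction cannot stall. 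Granting the lemma, Proposition~\ref{proposition:always-mispred-worst-case:soundness} follows by starting both constructions from the (trivially invariant) initial configurations and reading off trace equality from repeated applications of clause~(g).
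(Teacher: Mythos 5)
Your overall strategy is the paper's: a lockstep simulation argument in which each step of the speculative semantics is matched by a block of always-mispredict steps, with the correct-prediction branch matched to an entire mispredicted always-mispredict transaction (\textsc{Am-Branch}, its body, \textsc{Am-Rollback}), mispredicted branches and non-branching instructions matched one-to-one, \textsc{Se-Commit} matched to zero always-mispredict steps, and hypothesis (6) used to equate branch outcomes (and hence prediction correctness) across the two worlds. The roles you assign to $\cong$, $||$ and $INV$ are the ones the paper uses. There is, however, a genuine gap in your treatment of \textsc{Se-Rollback}: you match it with a single \textsc{Am-Rollback} step ($n=1$), but \textsc{Am-Rollback} only applies when the \emph{last} always-mispredict speculative state has remaining length $0$, and at the moment the speculative semantics rolls back this need not hold. $INV$ is only an inequality (the minimal remaining window along the speculative prefix is \emph{at most} the corresponding always-mispredict window), the oracle's window $w'$ may be strictly smaller than $w$, and---as you yourself note---the speculative rules decrement all transactions while the always-mispredict rules decrement only the innermost one. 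Consequently, when the speculative transaction with identifier $\id$ is exhausted and rolled back, its always-mispredict mirror typically still has a positive budget, and above it there may sit mirrors of the not-yet-exhausted nested speculative transactions (the $s'$ that \textsc{Se-Rollback} discards) as well as extra nested transactions that exist only on the always-mispredict side. A single \textsc{Am-Rollback} is then simply not an enabled step, so clause (a) cannot be established, and clause (e) would also fail because the speculative rollback removes several speculative states at once while one always-mispredict step removes at most one.

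The paper's proof closes exactly this hole by treating \textsc{Se-Rollback} like your correct-prediction case, i.e., as a macro-step on the always-mispredict side: it matches the single speculative rollback with $n=k+1$ always-mispredict steps, namely $k$ further steps (which exhaust and roll back the deeper transactions and run down the mirror's remaining budget, producing observations $\nu$) followed by the \textsc{Am-Rollback} of the mirror of $\id$, and it invokes hypothesis (6) to conclude that $k$, $\nu$ and the post-rollback program counter coincide in the two always-mispredict runs, which is what makes clauses (a) and (g) go through. The remainder of your proposal---the $INV$ bookkeeping in the branch cases, the use of (6) to synchronize the two worlds at branches, and the termination remark needed when the lemma is iterated in the outer proposition---matches the paper's argument.
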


\begin{proof}
Let $p$ be a program, $w \in \Nat$ be a speculative window, $\bp$ be a prediction oracle with speculative window at most $w$, 
$am_0, am_0'$ be two extended speculative configurations for the ``always mispredict'' semantics, and 
$sp_0, sp_0'$ be two extended speculative configurations for the  speculative semantics.

In the following, we use a dot notation to refer to the components of the extended configurations.
For instance, we write $am_0.\sigma$ to denote the configuration $\sigma$ in $am_0$.
We also implicitly lift functions like $\minWindow{\cdot}$ and $\window{\cdot}$ to extended configurations.

Assume that (1)--(6) hold.
Observe that, from (2), it follows $\minWindow{sp_0} = \minWindow{sp_0'}$.
We proceed by case distinction on $\minWindow{sp_0}$:	
\begin{compactitem}
\item[$\bm{\minWindow{sp_0} > 0}$:] 
From ${\minWindow{sp_0} > 0}$, (1), (2), and (4), we also get that $\window{am_0} >0$ and $\window{am_0'} >0$ (this follows from  $sp_0.s = \prefix{sp_0.s}{|sp_0.s|}$ and (4)).
From $\minWindow{sp_0} > 0$ and (2), it follows that $\exhausted{sp_0}$ and $\exhausted{sp_0'}$ hold.
Additionally, from $\window{am_0} >0$ and (1), we also get $\exhaustedTop{am_0}$ and $\exhaustedTop{am_1}$.
That is, we can apply only the rules \textsc{Se-NoBranch} and \textsc{Se-Branch} according to both semantics.
Observe that $sp_0.\sigma(\pc) = sp_0'.\sigma(\pc)$ (from (2)) and $am_0.\sigma(\pc) = am_0'.\sigma(\pc)$ (from (1)).
Moreover, $am_0.\sigma(\pc) = sp_0.\sigma(\pc)$ and $am_0'.\sigma(\pc) = sp_0'.\sigma(\pc)$ (from (3)).
Hence, the program counters in the four states point to the same instructions.
There are two cases:
\begin{compactitem}
\item[$\bm{p(sp_0.\sigma(\pc))}$ \textbf{is a branch instruction} ${\pjz{x}{n}}$:]
Observe that $\bp(sp_0.\sigma,p,sp_0.h) = \bp(sp_0'.\sigma, p, sp_0'.h)$ since $am_0.\sigma(\pc) = sp_0.\sigma(\pc)$, $sp_0.h = sp_0'.h$ (from (2)), and $\bp$ is a prediction oracle.
Let $\tup{m,w'}$ be the corresponding prediction.

There are two cases:
\begin{compactitem}
\item[\textbf{${\tup{\lbl',w'}}$ is a correct prediction w.r.t. $sp_0$}:]
We first show that $\tup{\lbl',w'}$ is a correct prediction w.r.t. $sp_0'$ as well.
From (6), we know that the trace produced starting from $am_0$ and $am_0'$ are the same.
From this, $am_0.\sigma(\pc) = sp_0.\sigma(\pc)$, and $p(sp_0.\sigma(\pc))$ is a branch instruction, we have that the always mispredict semantics modifies the program counter $\pc$ in the same way in $am_0$ and $am_0'$ when applying the \textsc{Se-Branch} in $\ameval{p}{w}{}$.
Hence, $am_0.\sigma(x) = am_0'.\sigma(x)$.
From this and (3), we also have $sp_0.\sigma(x) = sp_0'.\sigma(x)$.
Thus, $\tup{\lbl',w'}$ is a correct prediction w.r.t. $sp_0'$ as well.

According to the always mispredict semantics, there are $k, k'$ such that:
\begin{align*} 
	r:= am_0 \ameval{p}{w}{\startObs{am_0.\ctr} \concat \pcObs{\lbl_0}} am_2 \ameval{p}{w}{\nu}^{k} am_3 \ameval{p}{w}{\rollbackObs{am_0.\ctr} \concat \pcObs{\lbl_1}} am_1\\
	r':= am_0' \ameval{p}{w}{\startObs{am_0'.\ctr} \concat \pcObs{\lbl_0'}} am_2' \ameval{p}{w}{\nu'}^{k'} am_3' \ameval{p}{w}{\rollbackObs{am_0'.\ctr} \concat \pcObs{\lbl_1'}} am_1'
\end{align*}
From (1) and (6), we immediately have that $am_0.\ctr = am_0'.\ctr$, $\lbl_0 = \lbl_0'$, $k = k'$, $\nu = \nu'$, and $\lbl_1 = \lbl_1'$.
Moreover, since ${\tup{\lbl',w'}}$ is a correct prediction we also have that $\lbl_1 = \lbl'$ and $\lbl_1' = \lbl'$.
Hence, we pick $n$ to be $k+2$.

Additionally, by applying the rule \textsc{Se-Branch} once in $\speval{p}{\bp}{}$ we also get $sp_0 \speval{p}{\bp}{\tau} sp_1$ and $sp_0' \speval{p}{\bp}{\tau'} sp_1'$.

We have already shown that (a) and (b) hold. We now show that (c)--(g) hold as well.
\begin{compactenum}[(a)]
\item[\textbf{(c)}:]
To show that $am_1 \cong am_1'$, we need to show $am_1.\ctr = am_1'.\ctr$, $am_1.\sigma \pcSim am_1'.\sigma$, and $am_1.s \cong am_1'.s$.
First, $am_1.\ctr = am_1'.\ctr$ immediately follows from $\nu = \nu'$ and (1) (that is, $am_0.\ctr = am_0'.\ctr$ and during the mispredicted branch we created the same transactions in both runs).
Second, $am_1.\sigma \pcSim am_1'.\sigma$ follows from $\lbl_1 = \lbl_1'$ (since $am_1.\sigma = am_0.\sigma[\pc \mapsto \lbl_1]$ and $am_1'.\sigma = am_0'.\sigma[\pc \mapsto \lbl_1']$).
Finally, $am_1.s \cong am_1'.s$ immediately follows from $am_1.s = am_0.s$, $am_1'.s = am_0.s$, and (1).

\item[\textbf{(d)}:]
To show that $sp_1 \cong sp_1'$, we need to show $sp_1.\ctr = sp_1'.\ctr$, $sp_1.\sigma \pcSim sp_1'.\sigma$,  $sp_1.s \cong sp_1'.s$, and $sp_1.h = sp_1'.h$.
First,  $sp_1.\ctr = sp_1'.\ctr$ immediately follows from $sp_1.\ctr = sp_0.\ctr + 1$, $sp_1'.\ctr = sp_0'.\ctr+1$, and (2).
Second, $sp_1.\sigma \pcSim sp_1'.\sigma$ follows from (2) and $\bp(sp_0.\sigma,p,sp_0.h) = \bp(sp_0'.\sigma, p, sp_0'.h)$ (so the program counter is updated in the same way in both configurations).
Third, $sp_1.s \cong sp_1'.s$ follows from (2), $sp_1.s$ being $\decrement{sp_0.s} \concat \tup{sp_0.\sigma, sp_0.\ctr, w', \lbl'}$, $sp_1'.s$ being $\decrement{sp_0'.s} \concat \tup{sp_0'.\sigma, sp_0'.\ctr, w', \lbl'}$, $sp_0.\ctr = sp_0'.\ctr$ (from (2)), $sp_0.\sigma \pcSim sp_0'.\sigma$ (from (2)), and $sp_0.\sigma \nextSim sp_0'.\sigma$ (we proved above that $sp_0.\sigma(x) = sp_0'.\sigma(x)$; from this, it follows that the outcome of the branch instruction w.r.t. the non-speculative semantics is the same).
Finally, $sp_1.h = sp_1'.h$ immediately follows from (2), $sp_1.h = sp_0.h \concat \tup{sp_0.\sigma(\pc), sp_0.\ctr, \lbl'}$,  $sp_1'.,h = sp_0'.h \concat \tup{sp_0'.\sigma(\pc), sp_0'.\ctr, \lbl'}$,  $sp_0.\ctr = sp_0'.\ctr$ (from (2)), and $sp_0.\sigma(\pc) = sp_0'.\sigma(\pc)$ (from (2)).

\item[\textbf{(e)}:]
We show only  $am_1||sp_1$ (the proof for $am_1'||sp_1'$ is similar).
To show $am_1 || sp_1$, we need to show $am_1.\sigma = sp_1.\sigma$ and $am_1.s || \commProject{sp_1.s}$.
From the \textsc{Se-Branch} rule in $\speval{p}{\bp}{}$, we have that $sp_1.\sigma = sp_0.\sigma[\pc \mapsto \lbl']$.
The last step in $r$ is obtained by applying the rule \textsc{Se-Rollback} in $\ameval{p}{w}{}$.
From this rule, $am_1.\sigma$ is the configuration obtained by executing one step of the non-speculative semantics starting from $am_0.\sigma$.
From (3), we have that $am_0.\sigma = sp_0.\sigma$.
Moreover, since $\tup{\lbl',w'}$ is a correct prediction for $sp_0$, we have that $am_1.\sigma = am_0.\sigma[\pc \mapsto \lbl']$.
Hence, $am_1.\sigma = sp_1.\sigma$.

We now show that  $am_1.s || \commProject{sp_1.s}$.
From  the \textsc{Se-Branch} rule in $\speval{p}{\bp}{}$, we have that $sp_1.s = \decrement{sp_0.s} \concat \tup{sp_0.\sigma, sp_0.\ctr, w', \lbl'}$.
Since $\tup{\lbl',w'}$ is a correct prediction for $sp_0$, however, $\commProject{sp_1.s} = \commProject{\decrement{sp_0.s}} = \decrement{\commProject{sp_0.s}}$.
Moreover, from $r$, it follows that $am_1.s = am_0.s$.
Hence, $am_1.s || \commProject{sp_1.s}$ follows from (3), $am_1.s = am_0.s$, and $\commProject{sp_1.s} = \commProject{\decrement{sp_0.s}}$.

\item[\textbf{(f)}:]
Here we prove that  $INV(am_1,sp_1)$ holds. $INV(am_1',sp_1')$ can be proved in a similar way.
To show $INV(am_1,sp_1)$, we need to show that $|am_1.s| = |\commProject{sp_1.s}|$ and for all $i \leq i \leq |am_1.s|$, $\minWindow{\prefix{\commProject{sp_1.s}}{i}} \leq \window{\prefix{am_1.s}{i}}$.

We first prove $|am_1.s| = |\commProject{sp_1.s}|$.
From (4), we have $|am_0.s| = |\commProject{sp_0.s}|$.
From the run $r$, we have that $am_1.s = am_0.s$ and, therefore, $|am_1.s| = |am_0.s|$.
From the \textsc{Se-Branch} rule in $\speval{p}{w}{}$, we have that $sp_1.s = \decrement{sp_0.s} \concat \tup{sp_0.\sigma, sp_0.\ctr, w', \lbl'}$.
Since $\tup{\lbl',w'}$ is a correct prediction for $sp_0$, we have that  $\commProject{sp_1.s} = \commProject{\decrement{sp_0.s}}$.
Hence, $|\commProject{sp_1.s}| = |\commProject{sp_0.s}|$.
From  $|am_0.s| = |\commProject{sp_0.s}|$,  $|\commProject{sp_1.s}| = |\commProject{sp_0.s}|$, and $|am_1.s| = |am_0.s|$, we therefore have that $|am_1.s| = |\commProject{sp_1.s}|$.

We now show that for all $1 \leq i \leq |am_1.s|$, $\minWindow{\prefix{\commProject{sp_1.s}}{i}} \leq \window{\prefix{am_1.s}{i}}$.
Observe that $am_1.s = am_0.s$ and $sp_1.s = \decrement{sp_0.s} \concat \tup{sp_0.\sigma, sp_0.\ctr, w', \lbl'}$.
Since $\tup{\lbl',w'}$ is a correct prediction for $sp_0$, we have that  $\commProject{sp_1.s} = \commProject{\decrement{sp_0.s}}$.
Let $i$ be an arbitrary value such that $1 \leq i \leq |am_1.s|$.
Then, $\window{\prefix{am_1.s}{i}} = \window{\prefix{am_0.s}{i}}$ and $\minWindow{\prefix{\commProject{sp_1.s}}{i}} = \minWindow{\prefix{\commProject{\decrement{sp_0.s}}}{i}}$.
From (4), we have that $\minWindow{\prefix{\commProject{sp_0.s}}{i}} \leq \window{\prefix{am_0.s}{i}}$.
From this and  $\window{\prefix{am_1.s}{i}} = \window{\prefix{am_0.s}{i}}$, we get $\minWindow{\prefix{\commProject{sp_0.s}}{i}} \leq \window{\prefix{am_1.s}{i}}$.
From the definition of $\decrement{\cdot}$, we also have $\minWindow{\prefix{\commProject{\decrement{sp_0.s}}}{i}} \leq \minWindow{\prefix{\commProject{sp_0.s}}{i}}$.
From this and $\minWindow{\prefix{\commProject{sp_0.s}}{i}} \leq \window{\prefix{am_1.s}{i}}$, we have $\minWindow{\prefix{\commProject{\decrement{sp_0.s}}}{i}} \leq  \window{\prefix{am_1.s}{i}}$.
Hence, $\minWindow{\prefix{\commProject{sp_1.s}}{i}} \leq \window{\prefix{am_1.s}{i}}$ for all $1 \leq i \leq |am_1.s|$ (since we have proved it for an arbitrary $i$).

\item[\textbf{(g)}:]
The observations are $\tau = \startObs{sp_0.\ctr} \concat \pcObs{\lbl'}$ and $\tau = \startObs{sp_0'.\ctr} \concat \pcObs{\lbl'}$.
From this, $\bp(sp_0.\sigma,p,sp_0.h) = \bp(sp_0'.\sigma, p, sp_0'.h)$, and $sp_0.\ctr = sp_0'.\ctr$ (from (2)), we get $\tau = \tau'$.
\end{compactenum}
This completes the proof of this case.

\item[\textbf{$\tup{\lbl',w'}$ is a misprediction w.r.t. $sp_0$}:]
We first show that $\tup{\lbl',w'}$ is a misprediction w.r.t. $sp_0'$ as well.
From (6), we know that the trace produced starting from $am_0$ and $am_0'$ are the same.
From this, $am_0.\sigma(\pc) = sp_0.\sigma(\pc)$, and $p(sp_0.\sigma(\pc))$ is a branch instruction, we have that the always mispredict semantics modifies the program counter $\pc$ in the same way in $am_0$ and $am_0'$ when applying the \textsc{Se-Branch} in $\ameval{p}{w}{}$.
Hence, $am_0.\sigma(x) = am_0'.\sigma(x)$.
From this and (3), we also have $sp_0.\sigma(x) = sp_0'.\sigma(x)$.
Thus, $\tup{\lbl',w'}$ is a misprediction w.r.t. $sp_0'$ as well.

Let $n = 1$. 
We obtain $am_0 \ameval{p}{w}{\tau_0}^n am_1$ and $am_0' \ameval{p}{w}{\tau_0'}^n am_1'$ by applying once the rule \textsc{Se-Branch} in $\ameval{p}{w}{}$.
From (5) and (6), we have that $\tau_0 = \tau_0'$.
Moreover, by applying the rule \textsc{Se-Branch} once in $\speval{p}{\bp}{}$ we also get $sp_0 \speval{p}{\bp}{\tau} sp_1$ and $sp_0' \speval{p}{\bp}{\tau'} sp_1'$.

We have already shown that (a) and (b) hold. We now show that (c)--(g) hold as well.
\begin{compactenum}[(a)]
\item[\textbf{(c)}:]
To show that $am_1 \cong am_1'$, we need to show $am_1.\ctr = am_1'.\ctr$, $am_1.\sigma \pcSim am_1'.\sigma$, and $am_1.s \cong am_1'.s$.
First,  $am_1.\ctr = am_1'.\ctr$ immediately follows from $am_0.\ctr+1 = am_1.\ctr$, $am_0'.\ctr+1 = am_1'.\ctr$, and (1).
Second, $am_1.\sigma \pcSim am_1'.\sigma$ follows from $\tau_0 = \tau_0'$ (i.e., the program counter is modified in the same ways in both runs).
Finally, to show that $am_1.s \cong am_1'.s$, we need to show that $am_0.\ctr = am_0'.\ctr$, $min(w, \window{am_0.s}-1) = min(w, \window{am_0'.s}-1)$, $n_0 = n_0'$, $am_0.\sigma \pcSim am_0'.\sigma$, and $am_0.\sigma \nextSim am_0'.\sigma$  (since $am_1.s = am_0.s \concat \tup{am_0.\sigma, am_0.\ctr, min(w, \window{am_0.s}-1), n_0}$ and $am_1'.s = am_0'.s \concat \tup{am_0'.\sigma, am_0'.\ctr, min(w, \window{am_0'.s}-1), n_0'}$).
Observe that  $am_0.\ctr = am_0'.\ctr$ follows from (1), $min(w, \window{am_0.s}-1) = min(w, \window{am_0'.s}-1)$ also follow from (1), $n_0 = n_0'$ follows from $\tau_0 = \tau_0'$ (which indicates that $\pc$ is modified in a similar way), $am_0.\sigma \pcSim am_0'.\sigma$ follows from (1), and $am_0.\sigma \nextSim am_0'.\sigma$ follows from $\tau_0 = \tau_0'$ (i.e., since we mispredict in the same direct, then the outcome of the branch instruction must be the same according to the non-speculative semantics).

\item[\textbf{(d)}:]
To show that $sp_1 \cong sp_1'$, we need to show $sp_1.\ctr = sp_1'.\ctr$, $sp_1.\sigma \pcSim sp_1'.\sigma$,  $sp_1.s \cong sp_1'.s$, and $sp_1.h = sp_1'.h$.
First,  $sp_1.\ctr = sp_1'.\ctr$ immediately follows from $sp_1.\ctr = sp_0.\ctr + 1$, $sp_1'.\ctr = sp_0'.\ctr+1$, and (2).
Second, $sp_1.\sigma \pcSim sp_1'.\sigma$ follows from (2) and $\bp(sp_0.\sigma,p,sp_0.h) = \bp(sp_0'.\sigma, p, sp_0'.h)$ (so the program counter is updated in the same way in both configurations).
Third, $sp_1.s \cong sp_1'.s$ follows from (2), $sp_1.s$ being $\decrement{sp_0.s} \concat \tup{sp_0.\sigma, sp_0.\ctr, w', \lbl'}$, $sp_1'.s$ being $\decrement{sp_0'.s} \concat \tup{sp_0'.\sigma, sp_0'.\ctr, w', \lbl'}$, $sp_0.\ctr = sp_0'.\ctr$ (from (2)), $sp_0.\sigma \pcSim sp_0'.\sigma$ (from (2)), and $sp_0.\sigma \nextSim sp_0'.\sigma$ (we proved above that $sp_0.\sigma(x) = sp_0'.\sigma(x)$; from this, it follows that the outcome of the branch instruction w.r.t. the non-speculative semantics is the same).
Finally, $sp_1.h = sp_1'.h$ immediately follows from (2), $sp_1.h = sp_0.h \concat \tup{sp_0.\sigma(\pc), sp_0.\ctr, \lbl'}$,  $sp_1'.,h = sp_0'.h \concat \tup{sp_0'.\sigma(\pc), sp_0'.\ctr, \lbl'}$,  $sp_0.\ctr = sp_0'.\ctr$ (from (2)), and $sp_0.\sigma(\pc) = sp_0'.\sigma(\pc)$ (from (2)).

\item[\textbf{(e)}:]
To show $am_1||sp_1 \wedge am_1'||sp_1'$, we need to show $am_1.\sigma = sp_1.\sigma$, $am_1'.\sigma = sp_1'.\sigma$, $am_1.s || \commProject{sp_1.s}$, and $am_1'.s || \commProject{sp_1'.s}$.
In the following, we show only  $am_1.\sigma = sp_1.\sigma$ and $am_1.s || \commProject{sp_1.s}$ ($am_1'.\sigma = sp_1'.\sigma$ and $\commProject{sp_1'.s}$ can be derived in the same way).
The \textsc{Se-Branch} rule in both semantics only modify the program counter.
From this and (3), for showing $am_1.\sigma = sp_1.\sigma$, it is enough to show $am_1.\sigma(\pc) = sp_1.\sigma(\pc)$.
From the \textsc{Se-Branch} rule in the speculative semantics $\speval{p}{\bp}{}$, we have that $sp_1.\sigma(\pc) = \lbl'$.
There are two cases:
\begin{compactitem}
\item[$\bm{sp_0.\sigma(x) = 0}$:]
From this and $\tup{\lbl',w'}$ being a misprediction for $sp_0$, it follows that $\lbl' = sp_0.\sigma(\pc) + 1$.
Moreover, from $sp_0.\sigma(x) = 0$ and (3), we also have that  $am_0.\sigma(x) = 0$.
From this and the \textsc{Se-Branch} rule in $\ameval{p}{w}{}$, we have  $am_1.\sigma(\pc) = am_0.\sigma(\pc)+1$.
Therefore,  $am_1.\sigma(\pc) = sp_0.\sigma(\pc)+1$ (because $am_0.\sigma(\pc) = sp_0.\sigma(\pc)$ follows from (3)).
Hence, $am_1.\sigma(\pc) = sp_1.\sigma(\pc)$.
Thus, $am_1.\sigma = sp_1.\sigma$.

\item[$\bm{sp_0.\sigma(x) \neq 0}$:]
From this and $\tup{\lbl',w'}$ being a misprediction for $sp_0$, it follows that $\lbl' = \lbl$.
Moreover, from $sp_0.\sigma(x) \neq 0$ and (3), we also have that  $am_0.\sigma(x) \neq 0$.
From this and the \textsc{Se-Branch} rule in $\ameval{p}{w}{}$, we have  $am_1.\sigma(\pc) = \lbl$.
Hence, $am_1.\sigma(\pc) = sp_1.\sigma(\pc)$.
Thus, $am_1.\sigma = sp_1.\sigma$.
\end{compactitem}

We still have to show that $am_1.s || \commProject{sp_1.s}$.
From the \textsc{Se-Branch} rule in $\ameval{p}{w}{}$, we have that $am_1.s = am_0.s \concat \tup{am_0.\sigma, am_0.\ctr, \mathit{min}(w, \window{am_0.s} -1), \lbl'}$ (note that the prediction is $\lbl'$ as we showed above).
In contrast, from the \textsc{Se-Branch} rule in $\speval{p}{w}{}$, we have that $sp_1.s = \decrement{sp_0.s} \concat \tup{sp_0.\sigma, sp_0.\ctr, w', \lbl'}$.
Since $\tup{\lbl',w'}$ is a misprediction for $sp_0$, we have that  $\commProject{sp_1.s} = \commProject{\decrement{sp_0.s}} \concat \tup{sp_0.\sigma, sp_0.\ctr, w', \lbl'}$.
Observe that $am_0.s || \commProject{\decrement{sp_0.s}}$ immediately follows from (3).
Hence, we just have to show that $\tup{am_0.\sigma, am_0.\ctr, \mathit{min}(w, \window{am_0.s} -1), \lbl'} || \tup{sp_0.\sigma, sp_0.\ctr, w', \lbl'}$.
Since the prediction is the same in both speculative states, this requires only to show that $am_0.\sigma = sp_0.\sigma$, which follows from (3).

\item[\textbf{(f)}:]
Here we prove that  $INV(am_1,sp_1)$ holds. $INV(am_1',sp_1')$ can be proved in a similar way.
To show $INV(am_1,sp_1)$, we need to show that $|am_1.s| = |\commProject{sp_1.s}|$ and for all $i \leq i \leq |am_1.s|$, $\minWindow{\prefix{\commProject{sp_1.s}}{i}} \leq \window{\prefix{am_1.s}{i}}$.

We first prove $|am_1.s| = |\commProject{sp_1.s}|$.
From (4), we have $|am_0.s| = |\commProject{sp_0.s}|$.
From the \textsc{Se-Branch} rule in $\ameval{p}{w}{}$, we have that $am_1.s = am_0.s \concat \tup{ am_0.\sigma, am_0.\ctr, \mathit{min}(w, \window{am_0.s} -1), \lbl'}$ (note that the prediction is $\lbl'$ as we showed above).
Hence, $|am_1.s| = |am_0.s| +1$.
From the \textsc{Se-Branch} rule in $\speval{p}{w}{}$, we have that $sp_1.s = \decrement{sp_0.s} \concat \tup{ sp_0.\sigma, sp_0.\ctr, w', \lbl'}$.
Since $\tup{\lbl',w'}$ is a misprediction for $sp_0$, we have that  $\commProject{sp_1.s} = \commProject{\decrement{sp_0.s}} \concat \tup{sp_0.\sigma, sp_0.\ctr, w', \lbl'}$.
Hence, $|\commProject{sp_1.s}| = |\commProject{sp_0.s}|+1$.
From $|am_0.s| = |\commProject{sp_0.s}|$, $|am_1.s| = |am_0.s| +1$, and $|\commProject{sp_1.s}| = |\commProject{sp_0.s}|+1$, we therefore have $|am_1.s| = |\commProject{sp_1.s}|$.

We now show that for all $i \leq i \leq |am_1.s|$, $\minWindow{\prefix{\commProject{sp_1.s}}{i}} \leq \window{\prefix{am_1.s}{i}}$.
Observe that $am_1.s = am_0.s \concat \tup{am_0.\sigma, am_0.\ctr, \mathit{min}(w, \window{am_0.s} -1), \lbl'}$ and $sp_1.s = \decrement{sp_0.s} \concat \tup{sp_0.\sigma, sp_0.\ctr, w', \lbl'}$.
Since $\tup{\lbl',w'}$ is a misprediction for $sp_0$, we have that  $\commProject{sp_1.s} = \commProject{\decrement{sp_0.s}} \concat \tup{sp_0.\sigma, sp_0.\ctr, w', \lbl'}$.
There are two cases:
\begin{compactitem}
\item[$\bm{i = |am_1.s|}$:]
From the \textsc{Se-Branch} rule in $\ameval{p}{w}{}$, we have that $\window{\prefix{am_1.s}{i}} = \mathit{min}(w, \window{am_0.s} -1)$.
From the \textsc{Se-Branch} rule in $\speval{p}{\bp}{}$, we also have that $\minWindow{\prefix{\commProject{sp_1.s}}{i}} = \mathit{min}(\minWindow{\prefix{\commProject{sp_0.s}}{i}} -1,w')$.
Moreover, we have that (1) $w' \leq w$ (since $\bp$ has speculative window at most $w$), and (2) $\minWindow{\prefix{\commProject{sp_0.s}}{i}} -1 \leq \window{am_0.s} -1$ (since we have $\minWindow{\prefix{\commProject{sp_0.s}}{i}}  \leq \window{am_0.s}$ from (4) because $am_0.s = \prefix{am_0.s}{i}$).
As a result, we immediately have that $\mathit{min}(\minWindow{\prefix{\commProject{sp_0.s}}{i}} -1,w') \leq \mathit{min}(w, \window{am_0.s} -1)$.
Hence, $\minWindow{\prefix{\commProject{sp_1.s}}{i}}   \leq \window{\prefix{am_1.s}{i}}$.

\item[$\bm{i < |am_1.s|}$:]
We have  $\window{\prefix{am_1.s}{i}} = \window{\prefix{am_0.s}{i}}$ (because $\prefix{am_1.s}{i} = \prefix{am_0.s}{i}$) and $\minWindow{\prefix{\commProject{sp_1.s}}{i}} = \minWindow{\prefix{\commProject{sp_0.s}}{i}} -1$ (because $\prefix{\commProject{sp_1.s}}{i} = \decrement{\prefix{\commProject{sp_0.s}}{i}}$).
From (4), we also have that $\minWindow{\prefix{\commProject{sp_0.s}}{i}} \leq \window{\prefix{am_0.s}{i}}$.
Therefore, $\minWindow{\prefix{\commProject{sp_0.s}}{i}} - 1  \leq \window{\prefix{am_0.s}{i}}$.
From this, $\window{\prefix{am_1.s}{i}} = \window{\prefix{am_0.s}{i}}$, and $\minWindow{\prefix{\commProject{sp_1.s}}{i}} = \minWindow{\prefix{\commProject{sp_0.s}}{i}} -1$, we have $\minWindow{\prefix{\commProject{sp_1.s}}{i}}   \leq \window{\prefix{am_1.s}{i}}$.

\end{compactitem}

\item[\textbf{(g)}:]
The observations are $\tau = \startObs{sp_0.\ctr} \concat \pcObs{\lbl'}$ and $\tau = \startObs{sp_0'.\ctr} \concat \pcObs{\lbl'}$.
From this, $\bp(sp_0.\sigma,p,sp_0.h) = \bp(sp_0'.\sigma, p, sp_0'.h)$, and $sp_0.\ctr = sp_0'.\ctr$ (from (2)), we get $\tau = \tau'$.
\end{compactenum}
This completes the proof of this case.
\end{compactitem}
This completes the proof of this case.

\item[$\bm{p(sp_0.\sigma(\pc))}$ \textbf{is not a branch instruction}:]
Let $n = 1$. 
We obtain $am_0 \ameval{p}{w}{\tau_0}^n am_1$ and $am_0' \ameval{p}{w}{\tau_0'}^n am_1'$ by applying once the rule \textsc{Se-NoBranch} in $\ameval{p}{w}{}$.
From (5) and (6), we have that $\tau_0 = \tau_0'$.
Moreover, by applying the rule \textsc{Se-NoBranch} once in $\speval{p}{\bp}{}$ we also get $sp_0 \speval{p}{\bp}{\tau} sp_1$ and $sp_0' \speval{p}{\bp}{\tau'} sp_1'$.

We have already shown that (a) and (b) hold. We now show that (c)--(g) hold as well.
\begin{compactenum}[(a)]
\item[\textbf{(c)}:]
To show that $am_1 \cong am_1'$, we need to show $am_1.\ctr = am_1'.\ctr$, $am_1.\sigma \pcSim am_1'.\sigma$, and $am_1.s \cong am_1'.s$.
First,  $am_1.\ctr = am_1'.\ctr$ immediately follows from $am_0.\ctr = am_1.\ctr$, $am_0'.\ctr = am_1'.\ctr$, and (1).
Second, $am_1.\sigma \pcSim am_1'.\sigma$ follows from (1) and the fact that $\eval{p}{}$ modifies the program counters in $am_0.\sigma$ and $am_0'.\sigma$ in the same way (if the current instruction is not a jump, both program counters are incremented by 1 and if the instruction is a jump, the target is the same since $\tau_0 = \tau_0'$).
Finally, $am_1.s \cong am_1'.s$ follows from (1), $am_1.s$ being either $\decrTop{am_0.s}$ or $\zeroesTop{am_0.s}$ (depending on $am_0.\pc$), and $am_1'.s$ being either $\decrTop{am_0'.s}$ or $\zeroesTop{am_0'.s}$ (depending on $am_0'.\pc$).

\item[\textbf{(d)}:]
To show that $sp_1 \cong sp_1'$, we need to show $sp_1.\ctr = sp_1'.\ctr$, $sp_1.\sigma \pcSim sp_1'.\sigma$,  $sp_1.s \cong sp_1'.s$, and $sp_1.h = sp_1'.h$.
First,  $sp_1.\ctr = sp_1'.\ctr$ immediately follows from $sp_0.\ctr = sp_1.\ctr$, $sp_0'.\ctr = sp_1'.\ctr$, and (2).
Second, $sp_1.\sigma \pcSim sp_1'.\sigma$ follows from (2) and the fact that $\eval{p}{}$ modifies the program counters in $sp_0.\sigma$ and $sp_0'.\sigma$ in the same way (if the current instruction is not a jump, both program counters are incremented by 1 and if the instruction is a jump, the target is the same since $\tau = \tau'$, which we prove in (g) below).
Third, $sp_1.s \cong sp_1'.s$ follows from (2), $sp_1.s$ being either $\decrement{sp_0.s}$ or $\zeroes{sp_0.s}$ (depending on $sp_0.\pc$), and $sp_1'.s$ being either $\decrement{sp_0'.s}$ or $\zeroes{sp_0'.s}$ (depending on $sp_0'.\pc$).
Finally, $sp_1.h = sp_1'.h$ immediately follows from (2), $sp_1.h = sp_0.h$, and $sp_1'.h = sp_0'.h$.

\item[\textbf{(e)}:]
To show $am_1||sp_1 \wedge am_1'||sp_1'$, we need to show $am_1.\sigma = sp_1.\sigma$, $am_1'.\sigma = sp_1'.\sigma$, $am_1.s || \commProject{sp_1.s}$, and $am_1'.s || \commProject{sp_1'.s}$.
From the \textsc{Se-NoBranch}, we have that $am_0.\sigma \eval{p}{\tau_0} am_1.\sigma$,  $am_0'.\sigma \eval{p}{\tau_0'} am_1'.\sigma$,  $sp_0.\sigma \eval{p}{\tau} sp_1.\sigma$, $sp_0'.\sigma \eval{p}{\tau'} sp_1'.\sigma$.
From this and (3), we have that $am_1.\sigma = sp_1.\sigma$ and $am_1'.\sigma = sp_1'.\sigma$ (since $am_0.\sigma = sp_0.\sigma$, $am_0'.\sigma = sp_0'.\sigma$).
Moreover, $am_1.s || \commProject{sp_1.s}$, and $am_1'.s || \commProject{sp_1'.s}$ immediately follow from (3),  $sp_1.s$ being either $\decrement{sp_0.s}$ or $\zeroes{sp_0.s}$ (depending on $sp_0.\pc$), and $sp_1'.s$ being either $\decrement{sp_0'.s}$ or $\zeroes{sp_0'.s}$ (depending on $sp_0'.\pc$), which only modify the speculation windows.

\item[\textbf{(f)}:]
For simplicity, we show only $INV(am_1,sp_1)$, the proof for $INV(am_1', sp_1')$ is analogous.
To show $INV(am_1,sp_1)$, we need to show that $|am_1.s| = |\commProject{sp_1.s}|$ (which immediately follows from $INV(am_0,sp_0)$ and the fact that we do not add or remove speculative states) and for all $i \leq i \leq |am_1.s|$, $\minWindow{\prefix{\commProject{sp_1.s}}{i}} \leq \window{\prefix{am_1.s}{i}}$.
There are two cases:
\begin{compactitem}
\item[$\bm{i = |am_1.s|}$:]
If the  instruction is not $\pbarrier$, then  $\window{\prefix{am_1.s}{i}} = \window{\prefix{am_0.s}{i}} - 1$ and $\minWindow{\prefix{\commProject{sp_1.s}}{i}} = \minWindow{\prefix{\commProject{sp_0.s}}{i}} -1$.
From (4), we also have that $\minWindow{\prefix{\commProject{sp_0.s}}{i}} \leq \window{\prefix{am_0.s}{i}}$.
Hence,  $\minWindow{\prefix{\commProject{sp_1.s}}{i}} \leq \window{\prefix{am_1.s}{i}}$.

If the executed instruction is $\pbarrier$, then $\window{\prefix{am_1.s}{i}} = 0$ and $\minWindow{\prefix{\commProject{sp_1.s}}{i}} = 0$.
Therefore,  $\minWindow{\prefix{\commProject{sp_1.s}}{i}} \leq \window{\prefix{am_1.s}{i}}$.

\item[$\bm{i < |am_1.s|}$:]
If the executed instruction is not $\pbarrier$, then  $\window{\prefix{am_1.s}{i}} = \window{\prefix{am_0.s}{i}}$ (because $\prefix{am_1.s}{i} = \prefix{am_0.s}{i}$) and $\minWindow{\prefix{\commProject{sp_1.s}}{i}} = \minWindow{\prefix{\commProject{sp_0.s}}{i}} -1$.
From (4), we also have that $\minWindow{\prefix{\commProject{sp_0.s}}{i}} \leq \window{\prefix{am_0.s}{i}}$.
Therefore, $\minWindow{\prefix{\commProject{sp_0.s}}{i}} - 1  \leq \window{\prefix{am_0.s}{i}}$.
From this, $\window{\prefix{am_1.s}{i}} = \window{\prefix{am_0.s}{i}}$, and $\minWindow{\prefix{\commProject{sp_1.s}}{i}} = \minWindow{\prefix{\commProject{sp_0.s}}{i}} -1$, we have $\minWindow{\prefix{\commProject{sp_1.s}}{i}}   \leq \window{\prefix{am_1.s}{i}}$.

If the executed instruction is $\pbarrier$, then  $\window{\prefix{am_1.s}{i}} = \window{\prefix{am_0.s}{i}}$ (because $\prefix{am_1.s}{i} = \prefix{am_0.s}{i}$) and $\minWindow{\prefix{\commProject{sp_1.s}}{i}} = 0$.
From this, we immediately have  $\minWindow{\prefix{\commProject{sp_1.s}}{i}}   \leq \window{\prefix{am_1.s}{i}}$.
\end{compactitem}

\item[\textbf{(g)}:]  
From (3), we have that $am_0.\sigma = sp_0.\sigma$ and $am_0'.\sigma = sp_0'.\sigma$.
From the \textsc{Se-NoBranch}, we have that $am_0.\sigma \eval{p}{\tau_0} am_1.\sigma$,  $am_0'.\sigma \eval{p}{\tau_0'} am_1'.\sigma$,  $sp_0.\sigma \eval{p}{\tau} sp_1.\sigma$, $sp_0'.\sigma \eval{p}{\tau'} sp_1'.\sigma$.
That is,  $\tau_0 = \tau$ and $\tau_0' = \tau'$.
From this and $\tau_0  = \tau_0'$, we get $\tau = \tau'$.
\end{compactenum}
This completes the proof of this case.

\end{compactitem}
This completes the proof of this case.

\item[$\bm{\minWindow{sp_0} = 0}$:] 
Let $i, i'$ be the largest values such that $\minWindow{\prefix{sp_0.s}{i}} >0$ and $\minWindow{\prefix{sp_0'.s}{i'}} >0$.
We now show that $i = i'$.
From (2), we have that $sp_0.s \cong sp_0'.s$ which means that the speculative states in $sp_0.s$ and $sp_0'.s$ are pairwise similar.
As a result, the remaining speculative windows are pairwise the same.
Hence,  $\minWindow{\prefix{sp_0.s}{i}} = \minWindow{\prefix{sp_0'.s}{i'}}$ for all $1 \leq i \leq |sp_0.s|$.
Observe also that each pair of speculative states in $sp_0.s$ and $sp_0'.s$ results either in two commits or two rollbacks (again from (2)).

There are two cases:
\begin{compactitem}
\item[\textbf{We can apply the \textsc{Se-Commit} rule in $sp_0$}:]
Hence, $sp_0.s = s_0 \concat \tup{\sigma_0, \id_0, 0, \lbl_0} \concat s_1$, $\sigma_0 \eval{p}{\rho} \sigma_1$, $\exhausted{s_1}$, and $\sigma_1(\pc) = \lbl_0$.
From (2), we have that $sp_0'.s = s_0' \concat \tup{\sigma_0', \id_0', 0, \lbl_0'} \concat s_1'$, $\exhausted{s_1'}$, $\sigma_0' \eval{p}{\rho'} \sigma_1'$, $\lbl_0 = \lbl_0'$, $\id_0 = \id_0'$, and $\sigma_1'(\pc) = \lbl_0'$.
As a result, we can apply the \textsc{Se-Commit} rule also to $sp_0'$.
Therefore, we have $sp_0 \speval{p}{\bp}{\rollbackObs{\id_0} \concat \pcObs{\sigma_1(\pc)}} sp_1$ and $sp_0' \speval{p}{\bp}{\rollbackObs{\id_0'} \concat \pcObs{\sigma_1'(\pc)}} sp_1'$.

Let $n = 0$.
Therefore, we have $am_0 \ameval{p}{w}{}^0 am_1$, $am_0' \ameval{p}{w}{}^0 am_1'$, $am_0 = am_1$, and $am_0' = am_1'$.

We have already shown that (a) and (b) hold. We now show that (c)-(g) hold as well.
\begin{compactenum}[(a)]
\item[\textbf{(c)}:]
$am_1 \cong am_1'$ immediately follows from (1), $am_0 = am_1$, and $am_0' = am_1'$.

\item[\textbf{(d)}:]
To show that $sp_1 \cong sp_1'$, we need to show $sp_1.\ctr = sp_1'.\ctr$, $sp_1.\sigma \pcSim sp_1'.\sigma$,  $sp_1.s \cong sp_1'.s$, and $sp_1.h = sp_1'.h$.
First,  $sp_1.\ctr = sp_1'.\ctr$ immediately follows from $sp_0.\ctr = sp_1.\ctr$, $sp_0'.\ctr = sp_1'.\ctr$, and (2).
Second, $sp_1.\sigma \pcSim sp_1'.\sigma$ follows from (2), $sp_1.\sigma = sp_0.\sigma$, and $sp_1'.\sigma = sp_0'.\sigma$.
Third, $sp_1.s \cong sp_1'.s$ follows from (2), $sp_1 = s_0 \concat s_1$, and  $sp_1' = s_0' \concat s_1'$.
Finally, $sp_1.h = sp_1'.h$ immediately follows from (2), $sp_1.h = sp_0.h \concat \tup{\sigma_0(\pc),\id_0,\sigma_1(\pc)}$, and $sp_1'.h = sp_0'.h \concat \tup{\sigma_0'(\pc),\id_0',\sigma_1'(\pc)}$, $\id_0 = \id_0'$ (from (2)), $\sigma_0(\pc)= \sigma_0'(\pc)$ (since all configurations in the speculative states are pairwise $\pc$-agreeing), and $\sigma_1(\pc)= \sigma_1'(\pc)$ (since all configurations in the speculative states are pairwise next-step agreeing).

\item[\textbf{(e)}:]
Here we prove only $am_1||sp_1$ ($am_1'||sp_1'$ can be proved in the same way).
To show $am_1||sp_1$, we need to show $am_1.\sigma = sp_1.\sigma$ and $am_1.s || \commProject{sp_1.s}$.
From the \textsc{Se-Commit} rule in $\speval{p}{\bp}{}$, we have that $sp_1.\sigma =sp_0.\sigma_1$.
Moreover, $am_1 = am_0$ and, therefore, $am_1.\sigma = am_0.\sigma$.
From (3), $sp_1.\sigma =sp_0.\sigma_1$, and $am_1.\sigma = am_0.\sigma$, we have $am_1.\sigma = sp_1.\sigma$.
Observe also that $am_1.s || \commProject{sp_1.s}$ follows from $am_1.s = am_0.s$ (since $am_1 = am_0$) and $\commProject{sp_1.s} = \commProject{sp_0.s}$ (since the removed speculative state is one leading to a commit).

\item[\textbf{(f)}:]
For simplicity, we show only $INV(am_1,sp_1)$, the proof for $INV(am_1', sp_1')$ is analogous.
To show $INV(am_1,sp_1)$, we need to show that $|am_1.s| = |\commProject{sp_1.s}|$ and for all $i \leq i \leq |am_1.s|$, $\minWindow{\prefix{\commProject{sp_1.s}}{i}} \leq \window{\prefix{am_1.s}{i}}$.
Observe that $am_1.s = am_0.s$ (because $am_1 = am_0$) and $\commProject{sp_1.s} = \commProject{sp_0.s}$ (because we remove a committed speculative state).
As a result, both $|am_1.s| = |\commProject{sp_1.s}|$ and $\minWindow{\prefix{\commProject{sp_1.s}}{i}} \leq \window{\prefix{am_1.s}{i}}$ for all $i \leq i \leq |am_1.s|$ immediately follow from (4).

\item[\textbf{(g)}:]
The observations $\tau_0$ and $\tau_0'$ are respectively $\commitObs{\id_0}$ and $\commitObs{\id_0'}$.
From (2), we have that $sp_0.s \cong sp_0'.s$ which implies $\id_0 = \id_0'$.
Hence, $\tau_0 = \tau_0'$.

\end{compactenum}
This completes the proof of this case.

\item[\textbf{We can apply the \textsc{Se-Rollback} rule in $sp_0$}:] 
Hence, $sp_0.s = s_0 \concat \tup{\sigma_0, \id_0, 0, \lbl_0} \concat s_1$, $\sigma_0 \eval{p}{\rho} \sigma_1$, $\exhausted{s_1}$, and $\sigma_1(\pc) \neq \lbl_0$.
From (2), we have that $sp_0'.s = s_0' \concat \tup{\sigma_0', \id_0', 0, \lbl_0'} \concat s_1'$, $\exhausted{s_1'}$, $\sigma_0' \eval{p}{\rho'} \sigma_1'$, $\lbl_0 = \lbl_0'$, $\id_0 = \id_0'$, and $\sigma_1'(\pc) \neq \lbl_0'$.
As a result, we can apply the \textsc{Se-Rollback} rule also to $sp_0'$.
Therefore, we have $sp_0 \speval{p}{\bp}{\rollbackObs{\id_0} \concat \pcObs{\sigma_1(\pc)}} sp_1$ and $sp_0' \speval{p}{\bp}{\rollbackObs{\id_0'} \concat \pcObs{\sigma_1'(\pc)}} sp_1'$.

In the following, we denote by $\idx$ the index of $sp_0.s$'s $i+1$-th speculative state (i.e., the one that we are rolling back) inside its commit-free projection $\commProject{sp_0.s}$, i.e., the $i+1$-th speculative state in $sp_0$ is the $\idx$-th state in $\commProject{sp_0.s}$.
Observe that $\idx$ is also the index of $sp_0'.s$'s $i+1$-th speculative state, which we are rolling back, inside $\commProject{sp_0'.s}$.

Let $\id$ be the transaction identifier in the $\idx$-th speculative state in $am_0.s$.
From (1), it follows that $\id$ is also the identifier in the $\idx$-th speculative state in $am_0'.s$.
According to the always mispredict semantics, there are $k, k'$ such that:\looseness=-1
\begin{align*} 
	r:= am_0  \ameval{p}{w}{\nu}^{k} am_2 \ameval{p}{w}{\rollbackObs{\id} \concat \pcObs{\lbl_1}} am_1\\
	r':= am_0' \ameval{p}{w}{\nu'}^{k'} am_2' \ameval{p}{w}{\rollbackObs{\id} \concat \pcObs{\lbl_1'}} am_1'
\end{align*}
From (1) and (6), we immediately have that  $k = k'$, $\nu = \nu'$, and $\lbl_1 = \lbl_1'$.
Hence, we pick $n$ to be $k+1$.

We have already shown that (a) and (b) hold. We now show that (c)-(g) hold as well.
\begin{compactenum}[(a)]
\item[\textbf{(c)}:]
To show that $am_1 \cong am_1'$, we need to show $am_1.\ctr = am_1'.\ctr$, $am_1.\sigma \pcSim am_1'.\sigma$, and $am_1.s \cong am_1'.s$.
First,  $am_1.\ctr = am_1'.\ctr$ follows from (1) and $\nu = \nu'$.
Second, $am_1.\sigma \pcSim am_1'.\sigma$ follows from (1) (since the configurations in the speculative states in $am_0.s$ and $am_0'.s$ are pairwise next-step agreeing and no intermediate steps in $r,r'$ has modified the state that we are rolling back in the last step).
Finally, $am_1.s \cong am_1'.s$ follows from (1), $am_1.s = \prefix{am_0.s}{\idx-1}$, and  $am_1'.s = \prefix{am_0'.s}{\idx-1}$.

\item[\textbf{(d)}:]
To show that $sp_1 \cong sp_1'$, we need to show $sp_1.\ctr = sp_1'.\ctr$, $sp_1.\sigma \pcSim sp_1'.\sigma$,  $sp_1.s \cong sp_1'.s$, and $sp_1.h = sp_1'.h$.
First,  $sp_1.\ctr = sp_1'.\ctr$ immediately follows from $sp_0.\ctr = sp_1.\ctr$, $sp_0'.\ctr = sp_1'.\ctr$, and (2).
Second, $sp_1.\sigma \pcSim sp_1'.\sigma$ follows from (2) (since the configurations in the speculative states are pairwise next-step agreeing).
Third, $sp_1.s \cong sp_1'.s$ follows from (2), $sp_1.s = \prefix{sp_0.s}{i}$, and  $sp_1'.s = \prefix{sp_0'.s}{i}$.
Finally, $sp_1.h = sp_1'.h$ immediately follows from (2), $sp_1.h = sp_0.h \concat \tup{\sigma_0(\pc),\id_0,\sigma_1(\pc)}$, and $sp_1'.h = sp_0'.h \concat \tup{\sigma_0'(\pc),\id_0',\sigma_1'(\pc)}$, $\id_0 = \id_0'$ (from (2)), $\sigma_0(\pc)= \sigma_0'(\pc)$ (since all configurations in the speculative states are pairwise $\pc$-agreeing), and $\sigma_1(\pc)= \sigma_1'(\pc)$ (since all configurations in the speculative states are pairwise next-step agreeing).

\item[\textbf{(e)}:]
Here we prove only $am_1||sp_1$ ($am_1'||sp_1'$ can be proved in the same way).
To show $am_1||sp_1$, we need to show $am_1.\sigma = sp_1.\sigma$ and $am_1.s || \commProject{sp_1.s}$.
From the \textsc{Se-Rollback} rule in $\speval{p}{\bp}{}$, we have that $sp_1.\sigma =\sigma_1$, which is the configuration obtained by executing one step starting from the configuration $\sigma_0$ in the $i+1$-th speculative state in $sp_0.s$.
From the run $r$, $am_1$ is obtained by applying the \textsc{Se-Rollback} rule in $\ameval{p}{w}{}$.
As a result, $am_1.\sigma$ is the configuration obtained by executing one step starting from the configuration in $\idx$-th speculative state in $am_0.s$.
Observe that, from (3), $\sigma_0$ is exactly the configuration in the $\idx$-th speculative state in $am_0.s$ (because the $i+1$-th state in $sp_0.s$ that we are rolling back is also the $\idx$-th state in $\commProject{sp_0.s}$). 
Hence, $am_1.\sigma = sp_1.\sigma$.
Moreover, observe that $am_1.s || \commProject{sp_1.s}$ immediately follows from (3), $am_1.s$ being $\prefix{am_0.s}{\idx-1}$, $\commProject{sp_1.s} = \commProject{\prefix{sp_0.s}{i}}$.

\item[\textbf{(f)}:]
For simplicity, we show only $INV(am_1,sp_1)$, the proof for $INV(am_1', sp_1')$ is analogous.
To show $INV(am_1,sp_1)$, we need to show that $|am_1.s| = |\commProject{sp_1.s}|$ and for all $i \leq i \leq |am_1.s|$, $\minWindow{\prefix{\commProject{sp_1.s}}{i}} \leq \window{\prefix{am_1.s}{i}}$.

First, $|am_1.s| = |\commProject{sp_1.s}|$ follows from 
(4), $am_1.s = \prefix{am_0.s}{\idx -1}$ (because we are rolling back the $\idx$-th transaction in the speculative state and the semantics only modifies the last speculative state), $\commProject{sp_1.s} = \commProject{\prefix{sp_0.s}{i}}$, and $|\commProject{\prefix{sp_0.s}{i}}| = \idx -1$ (because $sp_0.s$'s $i+1$-th speculative state corresponds to $\commProject{sp_0.s}$'s $\idx$-th state by construction).

We now show that for all $1 \leq j \leq |am_1.s|$, $\minWindow{\prefix{\commProject{sp_1.s}}{j}} \leq \window{\prefix{am_1.s}{j}}$.
Let $j$ be an arbitrary value such that  $1 \leq j \leq |am_1.s|$.
Observe that $\prefix{am_1.s}{j} = \prefix{am_0.s}{j}$  and $\prefix{\commProject{sp_1.s}}{j} = \prefix{\commProject{sp_0.s}}{j}$ (since $j \leq |am_1.s| < \idx$).
From this and (4), we therefore immediately have that $\minWindow{\prefix{\commProject{sp_1.s}}{j}} \leq \window{\prefix{am_1.s}{j}}$.

\item[\textbf{(g)}:]
The observations $\tau_0$ and $\tau_0'$ are respectively $\rollbackObs{\id_0} \concat \pcObs{\sigma_1(\pc)}$ and $\rollbackObs{\id_0'} \concat \pcObs{\sigma_1'(\pc)}$.
From (2), we have that $sp_0.s \cong sp_0'.s$ which implies $\id_0 = \id_0'$.
Moreover, we also have that $\sigma_0 \nextSim \sigma_0'$ and, therefore, $\sigma_1(\pc) = \sigma_1'(\pc)$.
Hence, $\tau_0 = \tau_0'$.

\end{compactenum}
This completes the proof of this case.

\end{compactitem}
This completes the proof of this case.

\end{compactitem}
This completes the proof of our claim.
\end{proof}

\subsection{Completeness}\label{apx:always-mispred-worst-case:completeness}

In Proposition~\ref{proposition:always-mispred-worst-case:completeness}, we prove the completeness of the always mispredict semantics w.r.t. the speculative semantics.

\begin{restatable}{prop}{}
\label{proposition:always-mispred-worst-case:completeness}	
Let $w \in \Nat$ be a speculative window and $\sigma, \sigma' \in \Init$ be two initial configurations.
If $\amProgEval{p}{w}{\sigma} \neq \amProgEval{p}{w}{\sigma'}$, then there exists a  prediction oracle $\bp$ with speculation window at most $w$ such that $\spProgEval{p}{\bp}{\sigma} \neq \spProgEval{p}{\bp}{\sigma'}$.
\end{restatable}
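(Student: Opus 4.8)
The plan is to prove the statement by constructing, for the given $\sigma,\sigma'$, an explicit prediction oracle with speculative window at most $w$ that distinguishes them. First I would dispose of a trivial case: if $\nspecEval{p}{\sigma}\neq\nspecEval{p}{\sigma'}$, then by Proposition~\ref{proposition:speculative-and-non-speculative} \emph{every} oracle $\bp$ already satisfies $\nspecProject{\spProgEval{p}{\bp}{\sigma}}=\nspecEval{p}{\sigma}\neq\nspecEval{p}{\sigma'}=\nspecProject{\spProgEval{p}{\bp}{\sigma'}}$, hence $\spProgEval{p}{\bp}{\sigma}\neq\spProgEval{p}{\bp}{\sigma'}$. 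So, using Proposition~\ref{proposition:always-mispredict:speculative-and-non-speculative}, I may assume $\nspecEval{p}{\sigma}=\nspecEval{p}{\sigma'}$ while $\amProgEval{p}{w}{\sigma}\neq\amProgEval{p}{w}{\sigma'}$. The two always-mispredict runs are deterministic; let $k$ be the first position at which their traces differ. Since the runs agree before $k$ they are in the same speculation state there, so (and because $k$ is not a non-speculative observation, as that would reinstate the trivial case) position $k$ lies strictly inside a non-empty nest of mispredicted transactions $b_1\supset\dots\supset b_d$ open in both runs; the differing datum at $k$ is an address or a control-flow target computed from configurations that, by minimality of $k$, agree on every relevant value except this one.

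Next I would define an oracle $\bp^{*}$, depending on $\sigma,\sigma'$, that drives the speculative semantics — from both $\sigma$ and $\sigma'$ — down exactly the control flow of the always-mispredict run up to position $k$, with the irrelevant ``side'' digressions collapsed away. I build $\bp^{*}$ and this run simultaneously by induction on steps: at a branch that is one of the ancestors $b_1,\dots,b_d$ of the distinguishing point, $\bp^{*}$ returns the always-mispredict prediction (the complement of the branch outcome) with window $w$; at any \emph{other} branch reached before $k$ — one the always-mispredict run speculates into and later rolls back — $\bp^{*}$ returns the branch's \emph{correct} target with window $0$, so the corresponding transaction commits immediately via \textsc{Se-Commit}, leaving the execution at the correct target without ever entering the digression's body; on the remaining histories $\bp^{*}$ is arbitrary. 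Because rolled-back and immediately-committed transactions leave the surviving configuration unchanged, the configuration at every visited branch and at position $k$ equals the one in the always-mispredict run, so $\bp^{*}$ is well defined, and since branching histories only grow each branch is consulted under a unique history.

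Then I would verify two points. (i) Under $\bp^{*}$ the run from $\sigma$ reaches position $k$ and emits there the same observation the always-mispredict run does; the only danger is that an ancestor $b_j$ is rolled back too early, since the speculative semantics decrements \emph{every} pending window on each step whereas the always-mispredict one decrements only the innermost. The key is that collapsing a side digression to a single step costs $b_j$ exactly one unit of window — as it also does in the always-mispredict semantics — so $b_j$ must survive as many speculative steps as it uses always-mispredict window; and that budget telescopes ($b_{j+1}$'s always-mispredict window is $b_j$'s remaining window minus one), hence is at most $w$. Giving each ancestor window $w$ therefore suffices and respects the bound; maintaining the invariant $\mathit{INV2}$ on minimum remaining windows along the construction makes this precise. (ii) All branch outcomes before $k$ agree between $\sigma$ and $\sigma'$ — a disagreement would move the first trace divergence earlier — so the run from $\sigma'$ under $\bp^{*}$ takes the same path (each ``mispredict'' stays a misprediction and each window-$0$ ``correct'' prediction stays correct), reaches position $k$ with $\sigma'$'s configuration, and emits the differing observation there. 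Hence $\spProgEval{p}{\bp^{*}}{\sigma}\neq\spProgEval{p}{\bp^{*}}{\sigma'}$.

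The main obstacle is point (i): reconciling the speculative semantics — all pending windows decremented, and a rollback discarding every nested transaction — with the always-mispredict semantics — only the innermost window decremented, the enclosing transaction resuming at full depth. This is precisely the mismatch the paper's auxiliary machinery ($\mathit{INV2}$ on minimum windows, commit-free projections, mirrored states) is built to manage, and the delicate step is the window-budget bookkeeping that shows ``committing side digressions early'' both reaches the distinguishing observation and never requests a window larger than $w$. Handling jumps with data-dependent targets (which never query the oracle and whose $\pcObsKywd$ observation directly exposes any difference) and the degenerate $d=0$ case (subsumed by the non-speculative case above) are comparatively routine.
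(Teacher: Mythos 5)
Most of your construction coincides with the paper's (mispredict the ancestors of the divergence point with window $w$, collapse every other speculated branch by predicting it correctly with window $0$ so it commits immediately, and justify reachability of the divergence point by the telescoping window budget, which is what the paper's $INV2$ bookkeeping formalizes). But there is one genuine gap: your case analysis at the divergence point itself is incomplete. You treat the differing observation at position $k$ as a load/store address or a jump target that is emitted directly, and you leave the oracle ``arbitrary'' on all remaining histories. This fails when the first divergence of the always-mispredict traces is produced by a conditional branch $\pjz{x}{\lbl}$ inside the speculative nest whose outcome differs between the two runs (the typical speculative control-flow leak, e.g.\ Kocher's Example~\#10): under the speculative semantics, the observation emitted at a branch step by \textsc{Se-Branch} is $\startObs{\id}\concat\pcObs{\lbl'}$ where $\lbl'$ is the \emph{oracle's prediction}, and since the two runs have equal histories at that point the prediction---hence the emitted observation---is identical in both runs, regardless of the actual value of $x$. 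So no differing observation is emitted ``at position $k$'' as claim (ii) asserts; the disagreement can only surface when that transaction is resolved, as $\commitObs{\id}$ in one run versus $\rollbackObs{\id}\concat\pcObs{\cdot}$ in the other.

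This is not a routine detail, because with an unlucky ``arbitrary'' choice the construction genuinely breaks: if the oracle assigns this branch a window no smaller than the remaining window of an enclosing ancestor, the ancestor's window expires first, both runs roll it back identically (its outcome agrees in the two runs), the inner transaction is discarded, and the two speculative traces may coincide entirely---precisely the nesting phenomenon of Example~\ref{ex:mispred}. The paper closes this case by having the oracle return a fixed prediction with window $0$ at the divergence branch, so that the very next step is forced (by the $\exhausted{\cdot}$ side conditions) to resolve it, committing in one run and rolling back in the other and thereby producing divergent observations before any ancestor can interfere. Your proof needs this extra clause in the definition of $\bp^{*}$ (and the corresponding case in the verification of (i)--(ii)); with it, the argument matches the paper's.
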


\begin{proof}
Let $w \in \Nat$ be a speculative window and $\sigma, \sigma' \in \Init$ be two initial configurations such that $\amProgEval{p}{w}{\sigma} \neq \amProgEval{p}{w}{\sigma'}$.

We first analyze the case when the program does not terminate.
There are two cases: either $\amProgEval{p}{w}{\sigma} = \bot \wedge \amProgEval{p}{w}{\sigma'} \neq \bot$ or $\amProgEval{p}{w}{\sigma} \neq \bot \wedge \amProgEval{p}{w}{\sigma'} = \bot$.
Consider the first case $\amProgEval{p}{w}{\sigma} = \bot \wedge \amProgEval{p}{w}{\sigma'} \neq \bot$ (the proof for the second case is similar).
Since speculative execution does not introduce or prevent non-termination, we have that $\spProgEval{p}{\bp}{\sigma} = \bot \wedge \spProgEval{p}{\bp}{\sigma'} \neq \bot$ for all prediction oracles $\bp$.
Hence, there is a prediction oracle with speculative window at most $w$ such that $\spProgEval{p}{\bp}{\sigma} \neq \spProgEval{p}{\bp}{\sigma'}$.

We now consider the case where both $\amProgEval{p}{w}{\sigma} \neq \bot$ and $\amProgEval{p}{w}{\sigma'} \neq \bot$.
From this and $\amProgEval{p}{w}{\sigma} \neq \amProgEval{p}{w}{\sigma'}$, it follows that there are $n, k, k' \in \Nat$, extended configurations $am_1,am_2,am_1',am_2'$, final extended configurations $am_3, am_3'$, and sequences of observations $\tau, \tau_{am,1}, \tau_{am,2}, \tau_{am,1}', \tau_{am,2}'$ such that $\tau_{am,1} \neq \tau_{am,1}'$, $am_1 \cong am_1'$ and:
\begin{align*} 
	\tup{0,\sigma,\emptysequence}  \ameval{p}{w}{\tau}^{n} am_1 \ameval{p}{w}{\tau_{am,1}}  am_2 \ameval{p}{w}{\tau_{am,2}}^{k} am_3 \\
	\tup{0,\sigma,\emptysequence}  \ameval{p}{w}{\tau}^{n} am_1' \ameval{p}{w}{\tau_{am,1}'}  am_2' \ameval{p}{w}{\tau_{am,2}'}^{k'} am_3'
\end{align*}
We claim that there is a prediction oracle $\bp$ with speculative window at most $w$ such that 
(1) $\tup{0,\sigma,\emptysequence, \emptysequence}  \speval{p}{\bp}{\nu}^{*} sp_1$ and $sp_1.\sigma = am_1.\sigma$ and $INV2(sp_1,am_1)$, and
(2) $\tup{0,\sigma',\emptysequence, \emptysequence}  \speval{p}{\bp}{\nu}^{*} sp_1'$ and $sp_1'.\sigma = am_1'.\sigma$ and $INV2(sp_1',am_1')$, and (3) $sp_1 \cong sp_1'$.

We now proceed by case distinction on the rule in $\ameval{p}{w}{}$ used to derive $am_1 \ameval{p}{w}{\tau_{am,1}}  am_2$.
Observe that from $am_1 \cong am_1'$ and $\tau_{am,1} \neq \tau_{am,1}'$, it follows that the only possible rules are \textsc{Se-NoBranch} and \textsc{Se-Branch} (since applying the \textsc{Se-Rollback} rule would lead to $\tau_{am,1} = \tau_{am,1}'$):
\begin{compactitem}
\item[\textbf{\textsc{Se-NoBranch}}:]
From the rule, we have that $p(am_1.\sigma(\pc)) \neq \pjz{x}{\lbl}$, $\exhaustedTop{am_1.s}$, and $am_1.\sigma \eval{p}{\tau_{am,1}} am_2.\sigma$.
From  $am_1 \cong am_1'$, we also have $p(am_1'.\sigma(\pc)) \neq \pjz{x}{\lbl}$ (since $am_1.\sigma(\pc) = am_1'.\sigma(\pc)$) and $\exhaustedTop{am_1'.s}$ (since $am_1'.s \cong am_1.s$).
Hence, also  $am_1' \ameval{p}{w}{\tau_{am,1}'}  am_2'$ is obtained by applying the \textsc{Se-NoBranch} rule.
Therefore,  $am_1'.\sigma \eval{p}{\tau_{am,1}} am_2'.\sigma$.

From our claim, we have 
(1) $\tup{0,\sigma,\emptysequence, \emptysequence}  \speval{p}{\bp}{\nu}^{*} sp_1$, $sp_1.\sigma = am_1.\sigma$, and $INV2(sp_1,am_1)$, and
(2) $\tup{0,\sigma',\emptysequence, \emptysequence}  \speval{p}{\bp}{\nu}^{*} sp_1'$ and $sp_1'.\sigma = am_1'.\sigma$ and $INV2(sp_1',am_1')$.
From $sp_1.\sigma = am_1.\sigma$,  $sp_1'.\sigma = am_1'.\sigma$,  $p(am_1'.\sigma(\pc)) \neq \pjz{x}{\lbl}$, and $p(am_1.\sigma(\pc)) \neq \pjz{x}{\lbl}$, we immediately have $p(sp_1'.\sigma(\pc)) \neq \pjz{x}{\lbl}$, and $p(sp_1.\sigma(\pc)) \neq \pjz{x}{\lbl}$.
Moreover, from $\exhaustedTop{am_1.s}$, $\exhaustedTop{am_1'.s}$, $INV2(sp_1,am_1)$, and $INV2(sp_1',am_1')$, we also have $\exhausted{sp_1}$ and $\exhausted{sp_1'}$.
Hence, we can apply to both configurations the \textsc{Se-NoBranch} rule in $\speval{p}{\bp}{}$.
As a result, $sp_1 \speval{p}{\bp}{\tau_{sp,1}} sp_2$ and $sp_1' \speval{p}{\bp}{\tau_{sp,1}'} sp_2'$ where the traces are derived using the non-speculative semantics $sp_1.\sigma \eval{p}{\tau_{sp,1}} sp_2.\sigma$ and $sp_1'.\sigma \eval{p}{\tau_{sp,1}'} sp_2'.\sigma$.
From this, $sp_1.\sigma = am_1.\sigma$, $sp_1'.\sigma = am_1'.\sigma$, and the determinism of the non-speculative semantics, we have that $\tau_{sp,1} = \tau_{am,1}$ and $\tau_{sp,1}' = \tau_{am,1}'$.
This, together with $\tau_{am,1} \neq \tau_{am,1}'$, leads to $\tau_{sp,1} \neq \tau_{sp,1}'$.
Observe also that both $\tau_{sp,1}$ and $\tau_{sp,1}'$ are observations in $\ExtObs$.
Therefore, there are sequences of observations $\rho, \rho'$ such that $\spProgEval{p}{\bp}{\sigma} = \nu \concat \tau_{sp,1} \concat \rho$ and $\spProgEval{p}{\bp}{\sigma'} = \nu \concat \tau_{sp,1}' \concat \rho'$.
As a result, $\spProgEval{p}{\bp}{\sigma} \neq \spProgEval{p}{\bp}{\sigma'}$.

\item[\textbf{\textsc{Se-Branch}}:]
From the rule, we have that $p(am_1.\sigma(\pc)) = \pjz{x}{\lbl''}$ and $\exhaustedTop{am_1.s}$.
From  $am_1 \cong am_1'$, we also have $p(am_1'.\sigma(\pc)) = \pjz{x}{\lbl''}$ (since $am_1.\sigma(\pc) = am_1'.\sigma(\pc)$) and $\exhaustedTop{am_1'.s}$ (since $am_1'.s \cong am_1.s$).
Hence, also  $am_1' \ameval{p}{w}{\tau_{am,1}'}  am_2'$ is obtained by applying the \textsc{Se-Branch} rule.
Therefore, $\tau_{am,1} = \startObs{am_1.\ctr} \concat \pcObs{\lbl}$ and $\tau_{am,1}' = \startObs{am_1'.\ctr} \concat \pcObs{\lbl'}$, where $\lbl = \begin{cases} \lbl'' & \text{if}\ am_1.\sigma(x) \neq 0\\ am_1.\sigma(\pc)+1 & \text{if}\ am_1.\sigma(x) = 0 \end{cases}$ and $\lbl' = \begin{cases} \lbl'' & \text{if}\ am_1'.\sigma(x) \neq 0\\ am_1'.\sigma(\pc)+1 & \text{if}\ am_1'.\sigma(x) = 0 \end{cases}$.
Since $am_1.\ctr = am_1.\ctr$ (from $am_1'.s \cong am_1.s$) and $\tau_{am,1} \neq \tau_{am,1}'$, we have that either $am_1.\sigma(x) = 0 \wedge am_1'.\sigma(x) \neq 0$ or $am_1.\sigma(x) \neq 0 \wedge am_1'.\sigma(x) = 0$.
Without loss of generality, we assume that $am_1.\sigma(x) = 0 \wedge am_1'.\sigma(x) \neq 0$ (the proof for the other case is analogous).

We now executes two steps starting from $sp_1$ and $sp_1'$ with respect to $\speval{p}{\bp}{}$, and we construct $sp_1 \speval{p}{\bp}{\tau_{sp,1}} sp_2 \speval{p}{\bp}{\tau_{sp,2}} sp_3$ and $sp_1' \speval{p}{\bp}{\tau_{sp,1}'} sp_2' \speval{p}{\bp}{\tau_{sp,2}'} sp_3'$.
From our claim, we have 
(1) $\tup{0,\sigma,\emptysequence, \emptysequence}  \speval{p}{\bp}{\nu}^{*} sp_1$, $sp_1.\sigma = am_1.\sigma$, and $INV2(sp_1,am_1)$, and
(2) $\tup{0,\sigma',\emptysequence, \emptysequence}  \speval{p}{\bp}{\nu}^{*} sp_1'$ and $sp_1'.\sigma = am_1'.\sigma$ and $INV2(sp_1',am_1')$, and
(3) $|sp_1.h| = |sp_1'.h|$.
From $sp_1.\sigma = am_1.\sigma$,  $sp_1'.\sigma = am_1'.\sigma$,  $p(am_1'.\sigma(\pc)) = \pjz{x}{\lbl''}$, and $p(am_1.\sigma(\pc)) = \pjz{x}{\lbl''}$, we immediately have $\bp(sp_1.\sigma,p,sp_1.h) \neq \bot$, 
$\bp(sp_1'.\sigma,p,sp_1'.h) \neq \bot$, $p(sp_1'.\sigma(\pc)) = \pjz{x}{\lbl''}$, and $p(sp_1.\sigma(\pc)) = \pjz{x}{\lbl''}$.
Moreover, from $\exhaustedTop{am_1.s}$, $\exhaustedTop{am_1'.s}$, $INV2(sp_1,am_1)$, and $INV2(sp_1',am_1')$, we also have $\exhausted{sp_1}$ and $\exhausted{sp_1'}$.
Hence, we can apply to both configurations the \textsc{Se-Branch} rule in $\speval{p}{\bp}{}$.
From $sp_1.\sigma = am_1.\sigma$,  $sp_1'.\sigma = am_1'.\sigma$, and $am_1 \cong am_1'$, we immediately have that $sp_1.\sigma(\pc) = sp_1'.\sigma(\pc)$.
From this, (3), and the way in which we construct $\bp$ (see below), we have that $\bp(sp_1.\sigma,p, sp_1.h) = \bp(sp_1'.\sigma,p,sp_1'.h) = \tup{\lbl'',0}$.
By applying the \textsc{Se-Branch} rule to $sp_1$ we get $sp_1 \speval{p}{\bp}{\tau_{sp,1}} sp_2$, where $\tau_{sp,1} = \startObs{sp_1.\ctr} \concat \pcObs{\lbl''}$, $sp_2.\ctr = sp_1.\ctr+1$, $sp_2.\sigma = sp_1.\sigma[\pc \mapsto \lbl'']$, $sp_2.s = \decrement{sp_1.s} \concat \tup{sp_1.\sigma, sp_1.\id, 0, \lbl''}$, and $sp_2.h = sp_1.h \concat \tup{sp_1.\sigma(\pc), sp_1.\id, \lbl''}$.
Similarly, by applying the \textsc{Se-Branch} rule to $sp_1'$ we get $sp_1' \speval{p}{\bp}{\tau_{sp,1}'} sp_2'$, where $\tau_{sp,1}' = \startObs{sp_1'.\ctr} \concat \pcObs{\lbl''}$, $sp_2'.\ctr = sp_1'.\ctr+1$, $sp_2'.\sigma = sp_1'.\sigma[\pc \mapsto \lbl'']$, $sp_2'.s = \decrement{sp_1'.s} \concat \tup{sp_1'.\sigma, sp_1'.\id, 0, \lbl''}$, and $sp_2'.h = sp_1'.h \concat \tup{sp_1'.\sigma(\pc), sp_1'.\id, \lbl''}$.
Observe that $\tau_{sp,1} = \tau_{sp,1}'$ immediately follows from (3).

Since $sp_2.s = \decrement{sp_1.s} \concat \tup{sp_1.\id, 0, n, sp_1.\sigma}$ and $sp_2'.s = \decrement{sp_1'.s} \concat \tup{sp_1'.\id, 0, n, sp_1'.\sigma}$, we can commit or rollback the transactions associated with the last speculative states in $sp_2.s$ and $sp_2'.s$ (i.e., the transactions we started in the step  $sp_1 \speval{p}{\bp}{\tau_{sp,1}} sp_2$ and  $sp_1' \speval{p}{\bp}{\tau_{sp,1}'} sp_2'$).
We start by considering $sp_2$.
From $am_1.\sigma(x) = 0$ and $sp_1.\sigma = am_1.\sigma$, we have that $sp_1.\sigma(x) = 0$.
From this, $p(sp_1.\sigma(\pc)) = \pjz{x}{\lbl''}$, and  $\bp(sp_1.\sigma,p, sp_1.h) = \tup{\lbl'',0}$, the configuration $\sigma$ obtained as $sp_1.\sigma \eval{p}{\tau} \sigma$ is such that $\sigma(\pc) = \lbl''$.
Hence, we can apply the \textsc{Se-Commit} rule to $sp_2$ and obtain $sp_2 \speval{p}{\bp}{\tau_{sp,2}} sp_3$, where $\tau_{sp,2} = \commitObs{sp_1.\ctr}$, $sp_3.\ctr = sp_2.\ctr$, $sp_3.\sigma = sp_2.\sigma$, $sp_3.s = \decrement{sp_1.s}$, and $sp_3.h = sp_2.h$.
Consider now  $sp_2'$.
From $am_1'.\sigma(x) \neq 0$ and $sp_1'.\sigma = am_1'.\sigma$, we have that $sp_1'.\sigma(x) \neq 0$.
From this, $p(sp_1'.\sigma(\pc)) = \pjz{x}{\lbl''}$, and  $\bp(sp_1'.\sigma,p, sp_1'.h) = \tup{\lbl'',0}$, the configuration $\sigma'$ obtained as $sp_1'.\sigma \eval{p}{\tau} \sigma'$ is such that $\sigma'(\pc) = sp_1'.\sigma(\pc) + 1 $.
From this and the well-formedness of $p$, we have that $\sigma'(\pc) \neq \lbl''$.
Hence, we can apply the \textsc{Se-Rollback} rule to $sp_2'$ and obtain $sp_2' \speval{p}{\bp}{\tau_{sp,2}'} sp_3'$, where $\tau_{sp,2}' = \rollbackObs{sp_1'.\ctr} \concat \pcObs{\sigma'(\pc)}$, $sp_3'.\ctr = sp_2'.\ctr$, $sp_3'.\sigma = \sigma'$, $sp_3'.s = \decrement{sp_1'.s}$, and $sp_3'.h = sp_2'.h \concat \tup{sp_1'.\sigma(\pc), sp_1'.\ctr, \sigma'(\pc)}$.
Therefore, there are sequences of observations $\rho,\rho'$ such that $\spProgEval{p}{\bp}{\sigma} = \nu \concat \tau_{sp,1} \concat \tau_{sp,2} \concat \rho$ and $\spProgEval{p}{\bp}{\sigma'} = \nu \concat \tau_{sp,1}' \concat \tau_{sp,2}' \concat \rho'$, where $\tau_{sp,1} = \tau_{sp,1}'$ and $\tau_{sp,2} \neq \tau_{sp,2}'$.
As a result, $\spProgEval{p}{\bp}{\sigma} \neq \spProgEval{p}{\bp}{\sigma'}$.
\end{compactitem}
This completes the proof our claim.

\para{Constructing the prediction oracle $\bp$}
Here we prove our claim that there is a prediction oracle $\bp$ with speculative window at most $w$ such that 
(1) $\tup{0,\sigma,\emptysequence, \emptysequence}  \speval{p}{\bp}{\nu}^{*} sp_1$ and $sp_1.\sigma = am_1.\sigma$ and $INV2(sp_1,am_1)$, and
(2) $\tup{0,\sigma',\emptysequence, \emptysequence}  \speval{p}{\bp}{\nu}^{*} sp_1'$ and $sp_1'.\sigma = am_1'.\sigma$ and $INV2(sp_1',am_1')$, and 
(3) $sp_1 \cong sp_1'$.

We denote by $RB$ the set of transaction identifiers that occur $am_1.s$.
Observe that from $\tup{0,\sigma,\emptysequence}  \ameval{p}{w}{\tau}^{n} am_1$ and $\tup{0,\sigma,\emptysequence}  \ameval{p}{w}{\tau}^{n} am_1'$ it follows that $RB$ is also the set of transaction identifiers occurring in $am_1'.s$.
The set $RB$ identifies which branch instructions we must mispredict to reach the configuration $am_1$.

We now construct a list of predictions $\tup{\lbl,w}$.
In the following, we write $am_0$ instead of $\tup{0,\sigma,\emptysequence}$.
We construct the list by analyzing the execution  $am_0  \ameval{p}{w}{\tau}^{n} am_1$ (observe that deriving the list from  $\tup{0,\sigma',\emptysequence}  \ameval{p}{w}{\tau}^{n} am_1'$ would result in the same list).
The list $L$ is iteratively constructed as follows. 
During the execution, we denote by $\idx$ the position of the configuration in the execution from which we should continue the construction.
Initially, $\idx = 0$.
At each iteration, we find the smallest $\idx' \in \mathbb{N}$ such that $\idx \leq \idx' < n$ and the extended configuration $am_{\idx'}$ is such that $am_0 \ameval{p}{w}{\tau_1}^{\idx'} am_{\idx'}$, and $p(am_{\idx'}.\sigma(\pc)) = \pjz{x}{\lbl}$ and $\exhaustedTop{am_{\idx'}.s}$ (i.e., the next step in the execution is obtained using the \textsc{Se-Branch} rule).
Next, we proceed as follows:
\begin{compactitem}
	\item[$\mathbf{am_{\idx'}.\ctr \in RB}$]
	In this case, we mispredict the outcome of the corresponding branch instruction.
	The prediction associated with the branch instruction  is $\tup{\lbl'',w}$, where $\lbl'' = am_{\idx'}.\sigma(\pc) +1$ if $am_{\idx'}.\sigma(x) = 0$ and $\lbl'' = \lbl$ otherwise.
	Concretely, we update $L$ by appending $\tup{\lbl'',w}$ and setting $\idx$ to $\idx' +1$ (i.e., we continue constructing $L$ by analyzing the next step in the execution). 
	
	\item[$\mathbf{am_{\idx'}.\ctr \not\in RB}$]
	In this case, we correctly predict the outcome of the corresponding branch instruction.
	The prediction associated with the branch instruction is $\tup{\lbl'',0}$, where $\lbl'' = \lbl$ if $am_{\idx'}.\sigma(x) = 0$ and $\lbl'' = am_{\idx'}.\sigma(\pc) +1$ otherwise.
	Concretely, we update $L$ by appending $\tup{\lbl'',0}$ and we update the index $\idx$ to  $\idx' + k +2$ where the value $k \in \Nat$ is such that $am_0 \ameval{p}{w}{\tau_1}^{\idx'} am_{\idx'} \ameval{p}{w}{\startObs{am_{\idx'}.\ctr} \concat \pcObs{\lbl'} } am_{\idx'+1} \ameval{p}{w}{\nu}^{k} am_{\idx'+k+1} \ameval{p}{w}{\rollbackObs{am_{\idx'}.\ctr} \concat \pcObs{\lbl''} } am_{\idx'+k+2}$.  
	That is, we ignore the portion of the execution associated with the speculative transaction $am_{\idx'}.\ctr$ and we continue constructing $L$ after the transaction   $am_{\idx'}.\ctr$ has been rolled back.
	Observe also that $\idx' + k +2 \leq n$ (otherwise $am_{\idx'}.\ctr$ would have been included into $RB$).
\end{compactitem}
Finally, if the configuration $am_1$ in $\tup{0,\sigma,\emptysequence}  \ameval{p}{w}{\tau}^{n} am_1$ is such that $p(am_{1}.\sigma(\pc)) = \pjz{x}{\lbl}$ and $\exhaustedTop{am_{1}.s}$, we append to $L$ the prediction $\tup{\lbl,0}$.

The prediction oracle $\bp$ is defined as follows, where $\#(h)$ is the number of unique transaction identifiers in $h$:
\[
	\bp(\sigma, p, h) := \begin{cases}
 		\tup{\lbl',w'} & \text{if}\ \#(h)+1 \leq |L| \wedge \elt{L}{\#(h)+1} = \tup{\lbl',w'} \wedge ( p(\sigma(\pc)) = \pjz{x}{\lbl'} \vee \lbl' = \sigma(\pc) +1) \\
 		\tup{\lbl,0} & \text{otherwise (where}\ p(\sigma(\pc)) = \pjz{x}{\lbl})
 \end{cases}
\]
Observe that $\bp$ is a prediction oracle as it depends only on the program counter, the program itself, and the length of the history.
Moreover, its speculative window is at most $w$ (since the predictions contain either $w$ or $0$ as speculative window).
The key property of $\bp$ is that ``when the execution starts from $\sigma$ (or $\sigma'$), $\bp$ follows the predictions in $L$ for the first $|L|$ branch instructions and afterwards always predicts the branch as taken''.

From $\tup{0,\sigma,\emptysequence}  \ameval{p}{w}{\tau}^{n} am_1$, $\tup{0,\sigma,\emptysequence}  \ameval{p}{w}{\tau}^{n} am_1'$, and the way in which we constructed $\bp$, we have that (1) $\tup{0,\sigma,\emptysequence, \emptysequence}  \speval{p}{\bp}{\nu}^{*} sp_1$ and $sp_1.\sigma = am_1.\sigma$ and $INV2(sp_1,am_1)$, and
(2) $\tup{0,\sigma',\emptysequence, \emptysequence}  \speval{p}{\bp}{\nu}^{*} sp_1'$ and $sp_1'.\sigma = am_1'.\sigma$ and $INV2(sp_1',am_1')$, and (3) $sp_1 \cong sp_1'$.
The proof of this statement is similar to the proof of Lemma~\ref{lemma:always-mispredict-worstcase:am-to-sp}.
\end{proof}

\section{Relating concrete and symbolic always mispredict semantics (Proposition~\ref{proposition:symbolic-execution})}\label{appendix:concorete-and-symbolic}

Below, we provide the proof of Proposition~\ref{proposition:symbolic-execution}, which we restate here for simplicity:

\alwaysMispredSymbolic*

\begin{proof}
The proposition follows from Propositions~\ref{propostion:alwasy-mispred-symbolic:soundness} (proved in Section~\ref{apx:alwasy-mispred-symbolic:soundness}) and~\ref{propostion:alwasy-mispred-symbolic:completeness} (proved in Section~\ref{apx:alwasy-mispred-symbolic:completeness}). 

We now prove the two directions:
\begin{compactitem}
\item[$\bm{(\Rightarrow):}$]
Let $\tup{\sigma, \tau, \sigma'}$ be a run in $\amTraces{p}{w}$.
From this, it follows that $\tup{0,\sigma,\emptysequence} \ameval{p}{w}{\tau}^* \tup{\ctr, \sigma', \emptysequence}$.
From Proposition~\ref{propostion:alwasy-mispred-symbolic:completeness} (proved in Section~\ref{apx:alwasy-mispred-symbolic:completeness}), there is a valuation $\mu$, a symbolic trace $\tau'$, and a final symbolic configuration $\sigma''$ such that  $\tup{0,\tup{sm_0,sa_0},\emptysequence} \symSpeval{p}{w}{\tau'}^* \tup{\ctr, \sigma'', \emptysequence}$, $\mu(\tau') = \tau$, $\mu(\tup{sm_0,sa_0}) = \sigma$, $\mu(\sigma'') = \sigma'$, and $\mu \models \pathCond{\tau'}$.
Hence, $\tup{\tup{sm_0,sa_0}, \tau', \sigma''} \in \symTraces{p}{w}$ and $\tup{\mu(\tup{sm_0,sa_0}), \mu(\tau'), \mu(\sigma'')} \in \gamma(\tup{\tup{sm_0,sa_0}, \tau', \sigma''})$ (since $\mu \models \pathCond{\tau'}$).
Therefore, $\tup{\mu(\tup{sm_0,sa_0}), \mu(\tau'), \mu(\sigma'')} \in \gamma(\symTraces{p}{w})$.
From this and $\mu(\tau') = \tau$, $\mu(\tup{sm_0,sa_0}) = \sigma$, $\mu(\sigma'') = \sigma'$, we have $\tup{\sigma, \tau, \sigma'} \in \gamma(\symTraces{p}{w})$.

\item[$\bm{(\Leftarrow):}$]
Let $\tup{\sigma, \tau, \sigma'}$ be a run in $\gamma(\symTraces{p}{w})$.
From this, it follows that there is a symbolic run $\tup{\sigma_s, \tau_s, \sigma_s'} \in \symTraces{p}{w}$ and a valuation $\mu$ such that $\mu \models \pathCond{\tau_s}$ and $\mu(\sigma_s) = \sigma$, $\mu(\sigma_s') = \sigma'$, and $\mu(\tau_s) = \tau$.
From $\tup{\sigma_s, \tau_s, \sigma_s'} \in \symTraces{p}{w}$, we have that $\tup{0, \sigma_s, \emptysequence} \symSpeval{p}{w}{\tau_s}^* \tup{\ctr, \sigma_s', \emptysequence}$.
From Proposition~\ref{propostion:alwasy-mispred-symbolic:soundness} (proved in Section~\ref{apx:alwasy-mispred-symbolic:soundness}) and $\mu \models \pathCond{\tau_s}$, we have that $\tup{0, \mu(\sigma_s), \emptysequence} \ameval{p}{w}{\mu(\tau_s)}^* \tup{\ctr, \mu(\sigma_s'), \emptysequence}$.
Therefore, $\tup{\mu(sigma_s) ,\mu(\tau_s), \mu(\sigma_s')} \in \amTraces{p}{w}$.
From this and $\mu(\sigma_s) = \sigma$, $\mu(\sigma_s') = \sigma'$, and $\mu(\tau_s) = \tau$, we have $\tup{\sigma, \tau, \sigma'} \in \amTraces{p}{w}$.

\end{compactitem}
This completes the proof of our claim.
\end{proof}

\subsection{Auxiliary definitions} 

\newcommand{\wf}[1]{\mathit{wf}(#1)}

Given a symbolic speculative state $\tup{\sigma, \id, w, \lbl} \in \specStates$, its \textit{next-step mispredict path condition} $\pathCond{\tup{\sigma, \id, w, \lbl}}$ is the symbolic expression $\pathCond{\tau}$ such that $\sigma \symEval{p}{\tau} \sigma'$ and $\sigma'(\pc) \neq \lbl$ (if such a $\tau$ does not exists, then $\pathCond{s} = \top$).
Given a sequence of symbolic speculative states $s \in \specStates^*$, its \textit{next-step mispredict path condition} $\pathCond{s} = \bigwedge_{\tup{\sigma, \id, w, \lbl} \in s} \pathCond{\tup{\sigma, \id, w, \lbl}}$.

We say that a concrete speculative state $\tup{\sigma, \id, w, \lbl} \in \specStates$ is  \textit{next-step mispredict well-formed}, written $\wf{\tup{\sigma, \id, w, \lbl}}$, if $\sigma \symEval{p}{\tau} \sigma'$ and $\sigma'(\pc) \neq \lbl$.
We say that a sequence of speculative states $s \in \specStates^*$ is \textit{next-step mispredict well-formed}, written $\wf{s}$, if $\bigwedge_{\tup{\sigma, \id, w, \lbl} \in s} \wf{\tup{\sigma, \id, w, \lbl}}$.

\subsection{Auxiliary lemmas}

Lemma~\ref{lemma:symbolic:useful} provides some useful helping facts, which we will use throughout the soundness proof.

\begin{restatable}{lem}{}\label{lemma:symbolic:useful}
The following facts hold:
\begin{compactenum}
	\item Given a symbolic assignment $sa$ and a valuation $\mu : \SymVal \to \Val$, then $\mu(sa)(x) = \mu(sa(x))$ for all $x \in \Var$.
	\item Given a symbolic memory $sa$ and a valuation $\mu : \SymVal \to \Val$, then $\mu(sm)(n) = \mu(sm(n))$ for all $n \in \Addr$.
	\item Given a symbolic assignment $sa$, an expression $e$, and a valuation $\mu : \SymVal \to \Val$, then $\mu(\exprEval{e}{sa}) = \exprEval{e}{\mu(sa)}$.
\end{compactenum}	
\end{restatable}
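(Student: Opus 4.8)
\textbf{Proof proposal for Lemma~\ref{lemma:symbolic:useful}.}

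The plan is to prove the three facts in order, since the later ones build on the earlier ones. The overall method is straightforward structural induction, combined with unfolding the definitions of the valuation-lifting operations $\mu(sa)$, $\mu(sm)$, and the symbolic expression evaluation $\exprEval{\cdot}{\cdot}$ given in Appendix~\ref{appendix:symbolic-semantics}.

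First I would dispatch facts (1) and (2). Recall that $\mu(sa)$ is \emph{defined} as the composition $\mu \circ sa$, so by definition $\mu(sa)(x) = (\mu \circ sa)(x) = \mu(sa(x))$, which is exactly (1); there is no induction needed. For (2), the situation is slightly more delicate because a symbolic memory $sm$ is not simply a function but either a base map $mem : \Nat \to \SymVal$ or a term $\symWrite{sm'}{\sexpr}{\sexpr'}$, and $\mu(sm)$ is defined by recursion on this term structure: $\symExprEval{mem}{\mu} = \mu \circ mem$ and $\symExprEval{\symWrite{sm'}{\sexpr}{\sexpr'}}{\mu} = \mu(sm')[\mu(\sexpr) \mapsto \mu(\sexpr')]$. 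So (2) requires induction on the structure of $sm$: the base case $sm = mem$ is like (1), and in the inductive step $sm = \symWrite{sm'}{\sexpr_0}{\sexpr_0'}$ one splits on whether $n = \mu(\sexpr_0)$ — if so, both sides equal $\mu(\sexpr_0')$; if not, both sides reduce to the corresponding quantities for $sm'$, where the induction hypothesis applies. One subtlety worth noting: the left-hand side $\mu(sm)(n)$ uses the \emph{concrete} address $n$ directly, whereas the symbolic-memory update is keyed by the symbolic expression; writing $sm(n)$ on the right-hand side tacitly means the result of symbolically reading address $n$, so one should be careful that the statement is read as $\mu(sm)(n) = \mu(\symRead{sm}{n})$ for concrete $n$, which is what the proof actually establishes.

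Next I would prove fact (3) by structural induction on the expression $e$, using (1) and (2) as the base-case tools. For $e = n \in \Val$: $\exprEval{n}{sa} = n$ and $\exprEval{n}{\mu(sa)} = n$, and $\mu(n) = n$. For $e = x \in \Var$: $\mu(\exprEval{x}{sa}) = \mu(sa(x))$, which by (1) equals $\mu(sa)(x) = \exprEval{x}{\mu(sa)}$. For the unary and binary operator cases I would invoke the induction hypothesis on the subexpressions and then case-split according to whether the recursively evaluated subexpressions are concrete values or genuinely symbolic. The concrete case uses the definition $\symExprEval{\unaryOp{\sexpr}}{\mu} = \unaryOp{\symExprEval{\sexpr}{\mu}}$ (and similarly $\mathit{apply}$ agrees with $\ominus$ on concrete values), and the symbolic case uses the syntactic rule $\exprEval{\unaryOp{e}}{sa} = \unaryOp{\exprEval{e}{sa}}$; either way the equation reduces to the induction hypothesis applied componentwise. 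Since these expressions can also contain $\symRead{sm}{\sexpr}$ subterms (symbolic expressions were extended with memory reads), I would additionally handle the case $e = \symRead{sm}{e'}$: here $\mu(\exprEval{\symRead{sm}{e'}}{sa}) = \mu(\symRead{sm}{\exprEval{e'}{sa}}) = \mu(sm)(\mu(\exprEval{e'}{sa}))$ by the evaluation rule for memory reads, which by the induction hypothesis on $e'$ equals $\mu(sm)(\exprEval{e'}{\mu(sa)}) = \exprEval{\symRead{sm}{e'}}{\mu(sa)}$.

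I do not anticipate a genuine obstacle here — the lemma is a routine "syntactic substitution commutes with semantic evaluation" statement. The only place that requires care is keeping the concrete-vs-symbolic case distinctions in the operator cases aligned with the somewhat asymmetric presentation of $\exprEval{\cdot}{\cdot}$ in Figure~\ref{figure:symbolic:non-speculative-semantics} (where, e.g., $\exprEval{\binaryOp{e_1}{e_2}}{a}$ has three listed subcases), and making sure the $\mathit{apply}$ function used on concrete values is understood to agree with the abstract operators $\ominus$, $\otimes$ — which is the intended convention. So the main "obstacle" is just bookkeeping discipline across the case split, not any conceptual difficulty.
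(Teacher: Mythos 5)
Your proposal takes essentially the same route as the paper's proof: (1) is immediate from the definition $\mu(sa) = \mu \circ sa$, (2) is handled by the same kind of unfolding (the paper simply declares it ``identical to (1)'', whereas you are somewhat more careful about the $\symWrite{sm}{\sexpr}{\sexpr'}$ term structure), and (3) is the same structural induction on $e$ with base cases for values and registers (using (1)) and a concrete-versus-symbolic case split in the operator cases. The only deviation is your extra case $e = \symRead{sm}{e'}$, which is unnecessary---\lang{} program expressions consist only of values, registers, and unary/binary operators, so memory reads never occur inside the $e$ being evaluated (they only arise in the symbolic expressions stored in $sa$ and $sm$)---but including it does no harm to the argument.
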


\begin{proof}
Below we prove all our claims.

\para{Proof of (1)}
Let $sa$ be a symbolic assignment and $\mu : \SymVal \to \Val$.
Then, $\mu(sa)$ is defined as $\mu \circ sa$.
Hence, $\mu(sa)(x) = \mu(sa(x))$.

\para{Proof of (2)}
Identical to (1).

\para{Proof of (3)}
Let $sa$ be a symbolic assignment, $e$ be an expression, and $\mu : \SymVal \to \Val$.
We show, by structural induction on $e$, that $\mu(\exprEval{e}{sa}) = \exprEval{e}{\mu(sa)}$.
For the base case, there are two possibilities.
If $e$ is a value $n \in \Val$, then $\exprEval{n}{sa} = n$, $\exprEval{n}{\mu(sa)} = n$, and $\mu(n) = n$.
Hence, $\mu(\exprEval{e}{sa}) = \exprEval{e}{\mu(sa)}$.
If $e$ is a register identifier $x \in \Var$, then $\exprEval{x}{\mu(sa)} = \mu(sa)(x)$ and $\mu(\exprEval{e}{sa}) = \mu(sa(x))$.
From this and (1), we get $\mu(\exprEval{e}{sa}) = \exprEval{e}{\mu(sa)}$.
For the induction step, there two cases.
If $e$ is of the form $\ominus e'$, then $\exprEval{\unaryOp{e'}}{\mu(sa)} = \mathit{apply}(\ominus, \exprEval{e'}{\mu(sa)})$, whereas $\mu(\exprEval{\unaryOp{e'}}{sa})$ is either $\mu(\mathit{apply}(\ominus, \exprEval{e'}{sa}))$ if $\exprEval{e'}{sa} \in \Val$ or $\mu( \unaryOp{\exprEval{e'}{sa}})$ otherwise.
In the first case, $\mu(\mathit{apply}(\ominus, \exprEval{e'}{\mu(sa)})) = \mathit{apply}(\ominus, \exprEval{e'}{sa})$ (since we are dealing only with concrete values) and $\exprEval{e'}{\mu(sa)} = \exprEval{e'}{sa}$ since $\exprEval{e'}{sa} \in \Val$.
Therefore, $\mu(\exprEval{e}{sa}) = \exprEval{e}{\mu(sa)}$.
In the second case, $\mu( \unaryOp{\exprEval{e'}{sa}}) = \mathit{apply}(\ominus, \mu(\exprEval{e'}{sa}))$.
By applying the induction hypothesis ($\exprEval{e'}{\mu(sa)} = \mu(\exprEval{e'}{sa}$), we obtain $\mu( \unaryOp{\exprEval{e'}{sa}}) = \mathit{apply}(\ominus, \exprEval{e'}{\mu(sa)})$.
Hence, $\mu(\exprEval{e}{sa}) = \exprEval{e}{\mu(sa)}$.
The proof for the second case (i.e., $e = \binaryOp{e_1}{e_2}$) is similar.
\end{proof}

\subsection{Soundness}\label{apx:alwasy-mispred-symbolic:soundness}

In Lemma~\ref{lemma:symbolic:one-step-soundness-non-speculative}, we prove the soundness of the symbolic non-speculative semantics.

\begin{restatable}{lem}{}\label{lemma:symbolic:one-step-soundness-non-speculative}
Let $p$ be a program.
Whenever $\tup{sm,sa} \symEval{p}{\tau} \tup{sm',sa'}$, then for all mappings $\mu: \SymVal \to \Val$ such that $\mu \models \pathCond{\tau}$, $\tup{\mu(sm),\mu(sa)} \eval{p}{\mu(\tau)} \tup{\mu(sm'),\mu(sa')}$.
\end{restatable}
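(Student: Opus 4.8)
The statement is a one-step soundness lemma for the symbolic non-speculative semantics: every symbolic transition concretizes, under any valuation satisfying the step's path condition, to a concrete transition of the standard semantics. The natural approach is a case analysis on the rule of Figure~\ref{figure:symbolic:non-speculative-semantics} used to derive $\tup{sm,sa} \symEval{p}{\tau} \tup{sm',sa'}$, matching each symbolic rule against its concrete counterpart in Figure~\ref{figure:language:semantics:full}. The workhorse throughout is Lemma~\ref{lemma:symbolic:useful}: part~(3) lets me commute a valuation $\mu$ past symbolic expression evaluation, i.e.\ $\mu(\exprEval{e}{sa}) = \exprEval{e}{\mu(sa)}$, and parts~(1)--(2) do the same for register and memory lookups. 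I would also note at the outset that $sa(\pc)\in\Val$ is concrete, so $\mu(sa)(\pc) = sa(\pc)$ and $p(\mu(sa)(\pc)) = p(sa(\pc))$; hence the instruction being executed is the same on both sides, which is what makes the case analysis line up.

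Then I would go rule by rule. For \textsc{Skip}, \textsc{Barrier}, \textsc{Assign}, and \textsc{Terminate}, the path condition is empty ($\pathCond{\tau}=\top$), the observation is empty, and the claim reduces to checking that $\mu$ commutes with the register update $sa[\pc\mapsto sa(\pc)+1, x\mapsto\exprEval{e}{sa}]$, which follows from Lemma~\ref{lemma:symbolic:useful}(1),(3). For \textsc{Load-Symb} and \textsc{Store-Symb}, I additionally use Lemma~\ref{lemma:symbolic:useful}(2) together with the semantics of the array operations $\symRead{}{}$ and $\symWrite{}{}$ under $\mu$ (spelled out in Appendix~\ref{appendix:symbolic-semantics}: $\mu(\symRead{sm}{\sexpr}) = \mu(sm)(\mu(\sexpr))$ and $\mu(\symWrite{sm}{\sexpr}{\sexpr'}) = \mu(sm)[\mu(\sexpr)\mapsto\mu(\sexpr')]$), so that the symbolic memory read/write concretizes to the concrete \textsc{Load}/\textsc{Store}; the load/store observation $\loadObs{\sexpr}$/$\storeObs{\sexpr}$ maps under $\mu$ to $\loadObs{\mu(\sexpr)}$/$\storeObs{\mu(\sexpr)}$, matching the concrete rule. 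For the \textsc{ConditionalUpdate} family: the two \textsc{Concr} variants have a concrete guard $\exprEval{e'}{sa}\in\Val$, so by Lemma~\ref{lemma:symbolic:useful}(3) the concrete guard $\exprEval{e'}{\mu(sa)}$ has the same value and the matching concrete rule (\textsc{ConditionalUpdate-Sat} or \textsc{-Unsat}) applies; for \textsc{ConditionalUpdate-Symb} the symbolic guard $\sexpr$ is non-concrete, the resulting register value is $\ite{\sexpr=0}{\exprEval{e}{sa}}{sa(x)}$, and I evaluate this ite-expression under $\mu$: if $\mu(\sexpr)=0$ it collapses to $\mu(\exprEval{e}{sa})$ and the concrete \textsc{-Sat} rule fires, otherwise it collapses to $\mu(sa(x))$ and \textsc{-Unsat} fires — in both cases reproducing the concrete update. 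The branching rules \textsc{Beqz-*} and \textsc{Jmp-*} are where the path condition matters: for the \textsc{Symb} variants the observation carries $\symPcObs{\sexpr}$ with $\sexpr$ the branch condition (e.g.\ $sa(x)=0$, $sa(x)\neq 0$, or $\exprEval{e}{sa}=\lbl$), and $\pathCond{\tau}$ contains exactly this $\sexpr$; since $\mu\models\pathCond{\tau}$, we have $\mu(\sexpr)\neq 0$, which forces $\mu(sa)(x)=0$ (resp.\ $\neq 0$, resp.\ $\exprEval{e}{\mu(sa)}=\lbl$) via Lemma~\ref{lemma:symbolic:useful}, so precisely the concrete rule with the corresponding control flow applies; the $\symPcObs{\sexpr}$ observation is dropped by $\mu(\cdot)$ on traces (by definition of the action of valuations on symbolic observations) and the remaining $\pcObs{\lbl}$ matches. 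For the \textsc{Concr} branch variants the condition is $\top$ and the target is already concrete.

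The main obstacle, such as it is, is bookkeeping rather than conceptual: one must be careful that the action of $\mu$ on a symbolic trace $\tau$ is defined to (i) delete all $\symPcObs{\cdot}$ observations and (ii) apply $\mu$ pointwise to the arguments of the remaining $\loadObs{}$/$\storeObs{}$/$\pcObs{}$ observations, so that in the branching cases $\mu(\symPcObs{\sexpr}\concat\pcObs{\lbl}) = \pcObs{\lbl}$ exactly matches the single observation emitted by the concrete rule, and one must invoke the correct part of Lemma~\ref{lemma:symbolic:useful} at each point where a valuation is pushed inside an evaluation, lookup, memory term, or ite-expression. There is no induction needed here (it is genuinely a single-step statement), so once the case split is organized and the commutation lemma is applied uniformly, each case closes by inspection of the matching concrete rule. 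I would present it as: "By case analysis on the rule used to derive $\tup{sm,sa}\symEval{p}{\tau}\tup{sm',sa'}$. In all cases, since $sa(\pc)$ is concrete, $p(\mu(sa)(\pc)) = p(sa(\pc))$, so the same instruction is executed. We detail the representative cases $\ldots$" and then dispatch the rest by saying they are analogous, relying on Lemma~\ref{lemma:symbolic:useful}.
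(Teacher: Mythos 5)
Your proposal is correct and follows essentially the same route as the paper's proof: a case analysis on the symbolic rule, using the concreteness of $sa(\pc)$ to match instructions, Lemma~\ref{lemma:symbolic:useful} to commute $\mu$ past expression evaluation and memory operations, the hypothesis $\mu\models\pathCond{\tau}$ to select the correct concrete branching rule, and the definition of $\mu$ on traces (dropping $\symPcObs{\cdot}$) to match observations. The paper also splits the \textsc{ConditionalUpdate-Symb} case on whether $\mu\models\sexpr=0$, exactly as you do, so no gap remains.
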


\begin{proof}
Let $p$ be a program.
Assume that $\tup{sm,sa} \symEval{p}{\tau} \tup{sm',sa'}$ holds.
We proceed by case distinction on the rules defining $\symEval{p}{}$.
\begin{compactitem}
\item[\textbf{Rule \textsc{Skip}.}]
Assume that $\tup{sm,sa} \symEval{p}{\tau} \tup{sm',sa'}$ has been derived using the \textsc{Skip} rule in the symbolic semantics.
Then, $p(a(\pc)) = \pskip$, $sm' = sm$, $\tau = \emptysequence$, and $sa' = sa[\pc \mapsto sa(\pc)+1]$.
Let $\mu : \SymVal \to \Val$ be an arbitrary valuation satisfying $\pathCond{\tau}$ (since $\pathCond{\tau}$ is $\top$ all valuations are models of $\pathCond{\tau}$).
Observe that $\mu(sa)(\pc) = sa(\pc)$ since the program counter is always concrete.
Hence, $p(\mu(sa)(\pc)) = \pskip$.
Thus, we can apply the rule \textsc{Skip} of the concrete semantics starting from $\tup{\mu(sm),\mu(sa)}$.
Hence, $\tup{\mu(sm),\mu(sa)} \eval{p}{} \tup{\mu(sm),\mu(sa)[\pc \mapsto \mu(sa)(\pc)+1]}$.
From this, $\mu(\tau) = \emptysequence$ (since $\tau = \emptysequence$), and $sa' = sa[\pc \mapsto sa(\pc)+1]$, we have that $\tup{\mu(sm),\mu(sa)} \eval{p}{\mu(\tau)} \tup{\mu(sm'),\mu(sa')}$.

\item[\textbf{Rule \textsc{Barrier}.}]
The proof of this case is analogous to that of \textsc{Skip}.

\item[\textbf{Rule \textsc{Assign}.}]
Assume that $\tup{sm,sa} \symEval{p}{\tau} \tup{sm',sa'}$ has been derived using the \textsc{Assign} rule in the symbolic semantics.
Then, $p(a(\pc)) = \passign{x}{e}$, $x \neq \pc$, $sm' = sm$, $\tau = \emptysequence$, and $sa' = sa[\pc \mapsto sa(\pc)+1, x \mapsto \exprEval{e}{sa}]$.
Let $\mu : \SymVal \to \Val$ be an arbitrary valuation (since $\pathCond{\tau}$ is $\top$ all valuations are models of $\pathCond{\tau}$).
Observe that $\mu(sa)(\pc) = sa(\pc)$ since the program counter is always concrete.
Hence, $p(\mu(sa)(\pc)) =  \passign{x}{e}$.
Thus, we can apply the rule \textsc{Assign} of the concrete semantics starting from $\tup{\mu(sm),\mu(sa)}$.
Hence, $\tup{\mu(sm),\mu(sa)} \eval{p}{} \tup{\mu(sm),\mu(sa)[\pc \mapsto \mu(sa)(\pc)+1, x \mapsto \exprEval{e}{\mu(sa)}]}$.
From this, Lemma~\ref{lemma:symbolic:useful}, $\mu(\tau) = \emptysequence$ (since $\tau = \emptysequence$), and $sa' = sa[\pc \mapsto sa(\pc)+1, x \mapsto \exprEval{e}{sa}]$, we have that $\tup{\mu(sm),\mu(sa)} \eval{p}{\mu(\tau)} \tup{\mu(sm'),\mu(sa')}$.

\item[\textbf{Rule \textsc{ConditionalUpdate-Concr-Sat}.}]
Assume that $\tup{sm,sa} \symEval{p}{\tau} \tup{sm',sa'}$ has been derived using the \textsc{ConditionalUpdate-Concr-Sat} rule in the symbolic semantics.
Then, $p(a(\pc)) = \pcondassign{x}{e}{e'}$, $x \neq \pc$, $sm' = sm$, $\tau = \emptysequence$, $\exprEval{e'}{sa} = 0$, and $sa' = sa[\pc \mapsto sa(\pc)+1, x \mapsto \exprEval{e}{sa}]$.
Let $\mu : \SymVal \to \Val$ be an arbitrary valuation (since $\pathCond{\tau}$ is $\top$ all valuations are models of $\pathCond{\tau}$).
Observe that $\mu(sa)(\pc) = sa(\pc)$ since the program counter is always concrete.
Hence, $p(\mu(sa)(\pc)) =  \pcondassign{x}{e}{e'}$.
Thus, we can apply one of the rules  \textsc{ConditionalUpdate-Sat} or  \textsc{ConditionalUpdate-Unsat} of the concrete semantics starting from $\tup{\mu(sm),\mu(sa)}$.
We claim that  $\exprEval{e'}{\mu(sa)} = 0$.
Therefore, we can apply the rule \textsc{ConditionalUpdate-Sat}.
Hence, $\tup{\mu(sm),\mu(sa)} \eval{p}{} \tup{\mu(sm),\mu(sa)[\pc \mapsto \mu(sa)(\pc)+1, x \mapsto \exprEval{e}{\mu(sa)}]}$.
From this, Lemma~\ref{lemma:symbolic:useful}, $\mu(\tau) = \emptysequence$ (since $\tau = \emptysequence$), and $sa' = sa[\pc \mapsto sa(\pc)+1, x \mapsto \exprEval{e}{sa}]$, we have that $\tup{\mu(sm),\mu(sa)} \eval{p}{\mu(\tau)} \tup{\mu(sm'),\mu(sa')}$.

We now show that  $\exprEval{e'}{\mu(sa)}$ is indeed $0$.
Since $\exprEval{e'}{sa} = 0$, the value of $e'$ depends only on registers in $sa$ whose value is concrete.
Hence, for all these registers $x$, $\mu(sa)(x) = sa(x)$.
Thus, $\exprEval{e'}{\mu(sa)} = \exprEval{e'}{sa} = 0$.

\item[\textbf{Rule \textsc{ConditionalUpdate-Concr-Unsat}.}]
The proof of this case is analogous to that of \textsc{ConditionalUpdate-Concr-Unsat}.

\item[\textbf{Rule \textsc{ConditionalUpdate-Symb}.}]
Assume that $\tup{sm,sa} \symEval{p}{\tau} \tup{sm',sa'}$ has been derived using the \textsc{ConditionalUpdate-Symb} rule in the symbolic semantics.
Then, $p(a(\pc)) = \pcondassign{x}{e}{e'}$, $x \neq \pc$, $sm' = sm$, $\tau = \emptysequence$, $\exprEval{e'}{sa} = se$, $se \not\in \Val$, and $sa' = sa[\pc \mapsto sa(\pc)+1, x \mapsto \ite{se = 0}{ \exprEval{e}{sa} }{sa(x)}]$.
Let $\mu : \SymVal \to \Val$ be an arbitrary valuation (since $\pathCond{\tau}$ is $\top$ all valuations are models of $\pathCond{\tau}$).
Observe that $\mu(sa)(\pc) = sa(\pc)$ since the program counter is always concrete.
Hence, $p(\mu(sa)(\pc)) =  \pcondassign{x}{e}{e'}$.

There are two cases:
\begin{compactitem}
\item[$\bm{\mu \models se =0}$:]
Then, $\mu(se) = 0$ and therefore $\exprEval{e'}{\mu(sa)} = 0$.
Hence, we can apply the rule \textsc{ConditionalUpdate-Sat} starting from $\tup{\mu(sm),\mu(sa)}$.
Therefore, we get $\tup{\mu(sm),\mu(sa)} \eval{p}{} \tup{\mu(sm),\mu(sa)[\pc \mapsto \mu(sa)(\pc)+1, x \mapsto \exprEval{e}{\mu(sa)}]}$.
To prove our claim, we have to show that $\mu(sa') = \mu(sa)[\pc \mapsto \mu(sa)(\pc)+1, x \mapsto \exprEval{e}{\mu(sa)}]$.
From  $sa' = sa[\pc \mapsto sa(\pc)+1, x \mapsto \ite{se = 0}{ \exprEval{e}{sa} }{sa(x)}]$, it is enough to show that 
$\mu(\ite{se = 0}{ \exprEval{e}{sa} }{sa(x)}) = \exprEval{e}{\mu(sa)}$.
Since $\mu \models se =0$, $\mu(\ite{se = 0}{ \exprEval{e}{sa} }{sa(x)})$ is equivalent to $\mu(\exprEval{e}{sa})$.
From this and Lemma~\ref{lemma:symbolic:useful}, $\mu(\exprEval{e}{sa}) =\exprEval{e}{\mu(sa)}$.
Hence, we have that $\tup{\mu(sm),\mu(sa)} \eval{p}{\mu(\tau)} \tup{\mu(sm'),\mu(sa')}$.

\item[$\bm{\mu \not\models se =0}$:]
Then, $\mu(se) \neq 0$ and therefore $\exprEval{e'}{\mu(sa)} \neq 0$.
Hence, we can apply the rule \textsc{ConditionalUpdate-Unsat} starting from $\tup{\mu(sm),\mu(sa)}$.
Therefore, we get $\tup{\mu(sm),\mu(sa)} \eval{p}{} \tup{\mu(sm),\mu(sa)[\pc \mapsto \mu(sa)(\pc)+1]}$.
To prove our claim, we have to show that $\mu(sa') = \mu(sa)[\pc \mapsto \mu(sa)(\pc)+1]$.
From  $sa' = sa[\pc \mapsto sa(\pc)+1, x \mapsto \ite{se = 0}{ \exprEval{e}{sa} }{sa(x)}]$, it is enough to show that 
$\mu(\ite{se = 0}{ \exprEval{e}{sa} }{sa(x)}) = \mu(sa)(x)$.
Since $\mu \not\models se =0$, $\mu(\ite{se = 0}{ \exprEval{e}{sa} }{sa(x)})$ is equivalent to $\mu(sa(x))$.
From this and Lemma~\ref{lemma:symbolic:useful}, $\mu(sa(x)) = \mu(sa)(x)$.
Hence, we have that $\tup{\mu(sm),\mu(sa)} \eval{p}{\mu(\tau)} \tup{\mu(sm'),\mu(sa')}$.
\end{compactitem}
This completes the proof of this case.

\item[\textbf{Rule \textsc{Load-Symb}.}]
Assume that $\tup{sm,sa} \symEval{p}{\tau} \tup{sm',sa'}$ has been derived using the \textsc{Load-Symb} rule in the symbolic semantics.
Then, $p(a(\pc)) = \pload{x}{e}$, $x \neq \pc$, $sm' = sm$, $se = \exprEval{e}{sa}$, $\tau = \loadObs{se}$, and $sa' = sa[\pc \mapsto sa(\pc)+1, x \mapsto \symRead{sm}{se} ]$.
Let $\mu : \SymVal \to \Val$ be an arbitrary valuation (since $\pathCond{\tau} = \top$ all valuations $\mu$ are valid models).
Observe that $\mu(sa)(\pc) = sa(\pc)$ since the program counter is always concrete.
Hence, $p(\mu(sa)(\pc)) =  \pload{x}{e}$ and $x \neq \pc$.
Thus, we can apply the rule \textsc{Load} of the concrete semantics starting from $\tup{\mu(sm),\mu(sa)}$.
Hence, $\tup{\mu(sm),\mu(sa)} \eval{p}{\loadObs{ \exprEval{e}{\mu(sa)} } } \tup{\mu(sm),\mu(sa)[\pc \mapsto \mu(sa)(\pc)+1, x \mapsto \mu(sm)(\exprEval{e}{\mu(sa)})]}$.

Observe that $\mu(\exprEval{e}{sa}) = \exprEval{e}{\mu(sa)}$ (from Lemma~\ref{lemma:symbolic:useful}), and therefore $\mu(\loadObs{\exprEval{e}{sa}}) = \loadObs{ \exprEval{e}{\mu(sa)} }$.
%
We claim that $\mu(\symRead{sm}{se}) = \mu(sm)(\exprEval{e}{\mu(sa)})$.
From this, $\mu(sa)[\pc \mapsto \mu(sa)(\pc)+1, x \mapsto \mu(sm)(\exprEval{e}{\mu(sa)})] = \mu(sa')$.
Therefore, $\tup{\mu(sm),\mu(sa)} \eval{p}{\mu(\tau)} \tup{\mu(sm'),\mu(sa')}$.

We now show that $\mu(\symRead{sm}{se}) = \mu(sm)(\exprEval{e}{\mu(sa)})$.
From the evaluation of symbolic expressions, we have that $\mu(\symRead{sm}{se})$ is equivalent to $\mu(sm)(\mu(se))$.
From this and  $se = \exprEval{e}{sa}$, we have $\mu(\symRead{sm}{se}) = \mu(sm)(\mu(\exprEval{e}{sa}))$.
From this and Lemma~\ref{lemma:symbolic:useful}, we get $\mu(\symRead{sm}{se}) = \mu(sm)(\exprEval{e}{\mu(sa)})$.

\item[\textbf{Rule \textsc{Store-Symb}.}]
Assume that $\tup{sm,sa} \symEval{p}{\tau} \tup{sm',sa'}$ has been derived using the \textsc{Store-Symb} rule in the symbolic semantics.
Then, $p(a(\pc)) = \pstore{x}{e}$, $se = \exprEval{e}{sa}$,  $sm' = \symWrite{sm}{se}{sa(x)}$, $\tau = \storeObs{se}$, and $sa' = sa[\pc \mapsto sa(\pc)+1]$.
Let $\mu : \SymVal \to \Val$ be an arbitrary valuation (since $\pathCond{\tau} = \top$ all valuations are valid models).
Observe that $\mu(sa)(\pc) = sa(\pc)$ since the program counter is always concrete.
Hence, $p(\mu(sa)(\pc)) =  \pstore{x}{e}$.
Thus, we can apply the rule \textsc{Store} of the concrete semantics starting from $\tup{\mu(sm),\mu(sa)}$.
Hence, $\tup{\mu(sm),\mu(sa)} \eval{p}{\storeObs{ \exprEval{e}{\mu(sa)} } } \tup{\mu(sm)[\exprEval{e}{\mu(sa)} \mapsto \mu(sa)(x)],\mu(sa)[\pc \mapsto \mu(sa)(\pc)+1]}$.
Observe that (1) $\mu(\exprEval{e}{sa}) = \exprEval{e}{\mu(sa)}$ (from Lemma~\ref{lemma:symbolic:useful}) and therefore $\mu(\storeObs{\exprEval{e}{sa}}) = \storeObs{ \exprEval{e}{\mu(sa)} }$, and (2) $\mu(sa)[\pc \mapsto \mu(sa)(\pc)+1] = \mu(sa[\pc \mapsto sa(\pc)+1])$.
%
Moreover, we claim that $\mu(sm)[\exprEval{e}{\mu(sa)} \mapsto \mu(sa)(x)] = \mu(\symWrite{sm}{se}{sa(x)})$.
Therefore, $\tup{\mu(sm),\mu(sa)} \eval{p}{\mu(\tau)} \tup{\mu(sm'),\mu(sa')}$.

We now prove our claim that  $\mu(sm)[\exprEval{e}{\mu(sa)} \mapsto \mu(sa)(x)] = \mu(\symWrite{sm}{se}{sa(x)})$.
From the evaluation of symbolic expressions, we have that $\mu(\symWrite{sm}{se}{sa(x)})$ is equivalent to $\mu(sm)[\mu(se) \mapsto \mu(sa(x))]$.
From $se = \exprEval{e}{sa}$, we have  $\mu(\symWrite{sm}{se}{sa(x)}) = \mu(sm)[\mu(\exprEval{e}{sa}) \mapsto \mu(sa(x))]$.
From Lemma~\ref{lemma:symbolic:useful}, we have  $\mu(\symWrite{sm}{se}{sa(x)}) = \mu(sm)[\exprEval{e}{\mu(sa)} \mapsto \mu(sa)(x)]$.

\item[\textbf{Rule \textsc{Beqz-Concr-Sat}.}]
Assume that $\tup{sm,sa} \symEval{p}{\tau} \tup{sm',sa'}$ has been derived using the \textsc{Beqz-Concr-1} rule in the symbolic semantics.
Then, $p(a(\pc)) = \pjz{x}{\lbl}$, $sa(x) = 0$, $sa(x) \in \Val$, $sm' = sm$, $\tau = \symPcObs{\top} \concat \pcObs{\lbl}$, and $sa' = sa[\pc \mapsto \lbl]$.
Let $\mu : \SymVal \to \Val$ be an arbitrary valuation (since $\pathCond{\tau} = \top$ all valuations are valid models).
Observe that $\mu(sa)(\pc) = sa(\pc)$ since the program counter is always concrete.
Hence, $p(\mu(sa)(\pc)) =  \pjz{x}{\lbl}$.
Thus, we can apply one of the rules \textsc{Beqz-Sat} or \textsc{Beqz-Unsat} of the concrete semantics starting from $\tup{\mu(sm),\mu(sa)}$.
We claim that $\mu(sa)(x) = 0$.
Therefore, we can apply the \textsc{Beqz-Sat} rule.
Hence, $\tup{\mu(sm),\mu(sa)} \eval{p}{\pcObs{ \lbl } } \tup{\mu(sm),\mu(sa)[\pc \mapsto \lbl]}$.
Therefore, $\tup{\mu(sm),\mu(sa)} \eval{p}{\mu(\tau)} \tup{\mu(sm'),\mu(sa')}$ (since $\mu(\tau) = \mu(\symPcObs{\top} \concat \pcObs{\lbl}) = \pcObs{\lbl}$).

We now prove our claim that $\mu(sa)(x) = 0$.
We know that $sa(x) = 0$.
Hence, $\mu(sa(x)) = 0$ (since $sa(x)$ is a concrete value).
From this and Lemma~\ref{lemma:symbolic:useful}, $\mu(sa)(x) = 0$.

\item[\textbf{Rule \textsc{Beqz-Symb-Sat}.}]
Assume that $\tup{sm,sa} \symEval{p}{\tau} \tup{sm',sa'}$ has been derived using the \textsc{Beqz-Symb-Sat} rule in the symbolic semantics.
Then, $p(a(\pc)) = \pjz{x}{\lbl}$, $sa(x) \not\in \Val$, $sm' = sm$, $\tau = \symPcObs{sa(x) = 0} \concat \pcObs{\lbl}$,  and $sa' = sa[\pc \mapsto \lbl]$.
Let $\mu : \SymVal \to \Val$ be an arbitrary valuation satisfying $\pathCond{\tau} = sa(x) = 0$.
Observe that $\mu(sa)(\pc) = sa(\pc)$ since the program counter is always concrete.
Hence, $p(\mu(sa)(\pc)) =  \pjz{x}{\lbl}$.
Thus, we can apply one of the rules \textsc{Beqz-Sat} or \textsc{Beqz-Unsat} of the concrete semantics starting from $\tup{\mu(sm),\mu(sa)}$.
We claim that $\mu(sa)(x) = 0$.
Therefore, we can apply the \textsc{Beqz-Sat} rule.
Hence, $\tup{\mu(sm),\mu(sa)} \eval{p}{\pcObs{ \lbl } } \tup{\mu(sm),\mu(sa)[\pc \mapsto \lbl]}$.
Therefore, $\tup{\mu(sm),\mu(sa)} \eval{p}{\mu(\tau)} \tup{\mu(sm'),\mu(sa')}$ (since $\mu(\tau) = \mu(\symPcObs{sa(x) = 0} \concat \pcObs{\lbl}) = \pcObs{\lbl}$).

We now prove our claim that $\mu(sa)(x) = 0$.
We know that $sa(x) = se$ and $\mu \models se = 0$.
Hence, $\mu(se) = 0$.
Therefore, $\mu(sa(x)) = 0$.
From this and Lemma~\ref{lemma:symbolic:useful}, we have $\mu(sa)(x) = 0$.

\item[\textbf{Rule \textsc{Beqz-Concr-Unsat}.}]
The proof of this case is similar to that of the \textsc{Beqz-Concr-Sat} rule.

\item[\textbf{Rule \textsc{Beqz-Symb-Unsat}.}]
The proof of this case is similar to that of the \textsc{Beqz-Symb-Sat} rule.

\item[\textbf{Rule \textsc{Jmp-Concr}.}]
Assume that $\tup{sm,sa} \symEval{p}{\tau} \tup{sm',sa'}$ has been derived using the \textsc{Jmp-Concr} rule in the symbolic semantics.
Then, $p(a(\pc)) = \pjmp{e}$,  $\lbl = \exprEval{e}{sa}$, $\lbl \in \Val$, $sm' = sm$, $\tau = \symPcObs{\top} \concat \pcObs{\lbl}$, and $sa' = sa[\pc \mapsto \lbl]$.
Let $\mu : \SymVal \to \Val$ be an arbitrary valuation (since $\pathCond{\tau} = \top$ all valuations are valid models).
Observe that $\mu(sa)(\pc) = sa(\pc)$ since the program counter is always concrete.
Hence, $p(\mu(sa)(\pc)) =  \pjmp{e}$.
Thus, we can apply the \textsc{Jmp} rule of the concrete semantics starting from $\tup{\mu(sm),\mu(sa)}$.
Observe that  $\exprEval{e}{\mu(sa)} = \mu(\exprEval{e}{sa}) = \lbl$.
Hence, $\tup{\mu(sm),\mu(sa)} \eval{p}{\pcObs{ \lbl } } \tup{\mu(sm),\mu(sa)[\pc \mapsto \lbl]}$.
Therefore, $\tup{\mu(sm),\mu(sa)} \eval{p}{\mu(\tau)} \tup{\mu(sm'),\mu(sa')}$ (since $\mu(\tau) = \mu(\symPcObs{\top} \concat \pcObs{\lbl}) = \pcObs{\lbl}$).

\item[\textbf{Rule \textsc{Jmp-Symb}.}]
Assume that $\tup{sm,sa} \symEval{p}{\tau} \tup{sm',sa'}$ has been derived using the \textsc{Jmp-Symb} rule in the symbolic semantics.
Then, $p(a(\pc)) = \pjmp{e}$, $\exprEval{e}{sa} \not \in \Val$, $\lbl \in \Val$, $sm' = sm$, $\tau = \symPcObs{\exprEval{e}{sa} = \lbl} \concat \pcObs{\lbl}$,  and $sa' = sa[\pc \mapsto \lbl]$.
Let $\mu : \SymVal \to \Val$ be an arbitrary valuation satisfying $\pathCond{\tau}$, that is, $\exprEval{e}{sa} = \lbl$.
Observe that $\mu(sa)(\pc) = sa(\pc)$ since the program counter is always concrete.
Hence, $p(\mu(sa)(\pc)) =   \pjmp{e}$.
Thus, we can apply the \textsc{Jmp} rule of the concrete semantics starting from $\tup{\mu(sm),\mu(sa)}$.
We claim $\exprEval{e}{\mu(sa)} = \lbl$.
Hence, $\tup{\mu(sm),\mu(sa)} \eval{p}{\pcObs{ \lbl } } \tup{\mu(sm),\mu(sa)[\pc \mapsto \lbl]}$.
Therefore, $\tup{\mu(sm),\mu(sa)} \eval{p}{\mu(\tau)} \tup{\mu(sm'),\mu(sa')}$ (since $\mu(\tau) = \mu(\symPcObs{\exprEval{e}{sa} = \lbl} \concat \pcObs{\lbl}) = \pcObs{\lbl}$).

We now prove our claim that $\exprEval{e}{\mu(sa)} = \lbl$.
We know that $\exprEval{e}{sa} = se$ and $\mu \models se = \lbl$.
Hence, $\mu(\exprEval{e}{sa}) = \lbl$.
From this and Lemma~\ref{lemma:symbolic:useful}, we have $\exprEval{e}{\mu(sa)} = \lbl$.

\item[\textbf{Rule \textsc{Terminate}.}]
The proof of this case is similar to that of the \textsc{Skip} rule.

\end{compactitem}
This completes the proof of our claim.
\end{proof}

In Lemma~\ref{lemma:symbolic:one-step-soundness-speculative}, we prove the soundness of a single step of the symbolic always mispredict semantics.

\begin{restatable}{lem}{}\label{lemma:symbolic:one-step-soundness-speculative}
Let $p$ be a program. 
Whenever $\tup{\ctr,\tup{sm,sa},s} \symSpeval{p}{\mbp}{\tau} \tup{\ctr',\tup{sm',sa'},s'}$, then for all mappings $\mu: \SymVal \to \Val$ such that $\mu \models \pathCond{s} \wedge \pathCond{\tau}$,  $\tup{\ctr, \tup{\mu(sm),\mu(sa)}, \mu(s)} \ameval{p}{\mbp}{\mu(\tau)} \tup{\ctr',\tup{\mu(sm'),\mu(sa')},\mu(s')}$.
\end{restatable}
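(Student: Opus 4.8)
The statement is a one-step soundness lemma for the symbolic always-mispredict speculative semantics, and the plan is to prove it by case analysis on the rule in $\symSpeval{p}{\mbp}{}$ used to derive the transition $\tup{\ctr,\tup{sm,sa},s} \symSpeval{p}{\mbp}{\tau} \tup{\ctr',\tup{sm',sa'},s'}$. Since the symbolic always-mispredict semantics has exactly three rules --- \textsc{Am-NoBranch}, \textsc{Am-Branch-Symb}, and \textsc{Am-Rollback} (Figure~\ref{figure:symbolic:symbolic-always-mispredict-semantics}) --- there are three cases. In each case I would fix an arbitrary valuation $\mu$ with $\mu \models \pathCond{s} \wedge \pathCond{\tau}$ and show that the correspondingly-named rule of the concrete always-mispredict semantics (Figure~\ref{figure:always-mispredict-semantics}) applies to $\tup{\ctr, \tup{\mu(sm),\mu(sa)}, \mu(s)}$ and produces the claimed result. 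Throughout I would use that the program counter is always concrete (so $\mu(sa)(\pc) = sa(\pc)$), hence all side-conditions on $p(\sigma(\pc))$ carry over verbatim, and I would use Lemma~\ref{lemma:symbolic:useful} to push $\mu$ through expression evaluation, register lookups, and memory lookups.

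\textbf{Case \textsc{Am-NoBranch}.} Here the symbolic rule wraps a symbolic non-speculative step $\sigma \symEval{p}{\tau} \sigma'$, with the side conditions $p(\sigma(\pc)) \neq \pjz{x}{\lbl}$ and $\exhaustedTop{s}$, and $s'$ obtained from $s$ by $\decrementTop{}$ or $\zeroesTop{}$. The key move is to invoke Lemma~\ref{lemma:symbolic:one-step-soundness-non-speculative} on $\sigma \symEval{p}{\tau} \sigma'$: since $\mu \models \pathCond{\tau}$ (which follows from $\mu \models \pathCond{s}\wedge\pathCond{\tau}$), we get $\mu(\sigma) \eval{p}{\mu(\tau)} \mu(\sigma')$. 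The predicate $\exhaustedTop{}$ and the functions $\decrementTop{}, \zeroesTop{}$ only inspect or modify the remaining lengths (natural numbers) of speculative states, which $\mu$ does not touch, so $\exhaustedTop{\mu(s)}$ holds and $\mu(s') = \decrementTop{\mu(s)}$ (resp.\ $\zeroesTop{\mu(s)}$) --- here one should also note that the configuration components stored inside speculative states only matter up to applying $\mu$, and $\mu$ commutes with these structural operations. This lets me apply the concrete \textsc{Am-NoBranch} rule. This case is essentially routine given the non-speculative lemma.

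\textbf{Case \textsc{Am-Branch-Symb}.} This is the main obstacle, because the symbolic rule computes the branch target $\lbl'$ and the emitted path-condition observation $\symPcObs{\mathit{se}}$ via an underlying symbolic non-speculative step $\sigma \symEval{p}{\symPcObs{\mathit{se}} \concat \pcObs{\lbl'}} \sigma'$ (a \textsc{Beqz}-style rule), and then \emph{always mispredicts} by choosing $\lbl$ to be the \emph{other} successor. I need to show the concrete \textsc{Am-Branch} rule picks the same $\lbl$. The delicate point: the concrete rule's case split is on whether $\mu(\sigma)(x) = 0$, whereas the symbolic rule's case split is on whether $\lbl' = \sigma(\pc)+1$. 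The link is that $\mu \models \pathCond{\tau}$ includes the constraint $\mathit{se}$ (which is either $sa(x)=0$ or $sa(x)\neq 0$), so by the symbolic non-speculative soundness lemma $\mu(\sigma)(x) = 0$ iff the \textsc{Beqz-Sat}-branch was taken iff $\lbl' = \sigma(\pc)+1$ --- hence the concrete always-mispredict rule selects exactly $\lbl$, and the emitted trace $\startObs{\id}\concat\pcObs{\lbl}$ matches after applying $\mu$ (which erases the $\symPcObs{\mathit{se}}$ observation). The new speculative state $\tup{\sigma,\ctr,\mathit{min}(w,\window{s}-1),\lbl}$ transfers under $\mu$ because $\window{}$ and $\mathit{min}$ only involve lengths, and I must check $\mu \models \pathCond{s'}$ for the \emph{new} state too: its next-step mispredict path condition is exactly $\mathit{se}$, which $\mu$ satisfies --- this is why the lemma threads $\pathCond{s}$ through the induction and why the auxiliary well-formedness notions in the appendix are set up.

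\textbf{Case \textsc{Am-Rollback}.} The symbolic rule requires $\sigma' \symEval{p}{\tau} \sigma''$ with $\sigma''(\pc) \neq \lbl$, acting on a state $s \concat \tup{\sigma',\id,0,\lbl}$ whose last transaction has remaining length $0$. Since $\mu \models \pathCond{s}$ and the last speculative state's next-step mispredict path condition is part of $\pathCond{s}$, Lemma~\ref{lemma:symbolic:one-step-soundness-non-speculative} gives $\mu(\sigma') \eval{p}{\mu(\tau)} \mu(\sigma'')$, and I need $\mu(\sigma'')(\pc) \neq \lbl$; but $\sigma''(\pc)$ is a concrete value (the pc is always concrete) equal to $\mu(\sigma'')(\pc)$, and $\lbl$ is concrete, so the disequality is preserved literally. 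The concrete \textsc{Am-Rollback} rule then applies and yields $\tup{\ctr,\mu(\sigma''),\mu(s)}$ with emitted trace $\rollbackObs{\id}\concat\pcObs{\mu(\sigma'')(\pc)} = \mu(\rollbackObs{\id}\concat\pcObs{\sigma''(\pc)})$. Assembling the three cases completes the proof; the only genuinely subtle reasoning is the target-selection argument in the \textsc{Am-Branch-Symb} case, which hinges on the path condition carrying the symbolic branch condition.
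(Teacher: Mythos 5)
Your proposal follows essentially the same route as the paper's proof: a case split over the three rules \textsc{Am-NoBranch}, \textsc{Am-Branch-Symb}, \textsc{Am-Rollback}, discharging each by invoking the one-step non-speculative soundness lemma (Lemma~\ref{lemma:symbolic:one-step-soundness-non-speculative}), with the two genuinely delicate points correctly identified: the agreement of the mispredicted target in the branch case, and the role of $\pathCond{s}$ (the next-step mispredict path condition of the last speculative state) in licensing the underlying non-speculative step in the rollback case.

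One intermediate claim in your branch case is stated backwards, though. You write that $\mu(\sigma)(x)=0$ iff the \textsc{Beqz-Sat}-style rule was taken iff $\lbl' = \sigma(\pc)+1$. When $x=0$ the satisfied branch \emph{jumps}, so $\lbl'$ is the instruction's target and, by well-formedness of the program (no branch of the form $\cmd{n}{\pjz{x}{n+1}}$), $\lbl' \neq \sigma(\pc)+1$; the correct link is $\mu(\sigma)(x)=0$ iff $\lbl' \neq \sigma(\pc)+1$. As written, your chain would make the concrete \textsc{Am-Branch} rule and the symbolic \textsc{Am-Branch-Symb} rule select \emph{opposite} successors. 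The fix is immediate and your conclusion (both rules pick the same $\lbl$) is the right one — the paper carries this out by casing on the two possible values of $\lbl$ and on which of the four \textsc{Beqz} rules produced the underlying step — but flip that biconditional and note where well-formedness enters. Two harmless remarks: in the rollback case the concrete \textsc{Am-Rollback} rule has no premise $\sigma''(\pc) \neq \lbl$, so preserving that disequality under $\mu$ is not actually needed to apply it; and establishing $\mu \models \pathCond{s'}$ for the extended speculative state is an obligation of the multi-step Proposition~\ref{propostion:alwasy-mispred-symbolic:soundness}, not of this one-step lemma, so it need not be discharged here.
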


\begin{proof}
Let $p \in \Prg$ be a program.
Moreover, assume that $\tup{\ctr,\tup{sm,sa},s} \symSpeval{p}{\mbp}{\tau} \tup{\ctr',\tup{sm',sa'},s'}$.
We proceed by case distinction on the rules defining $ \symSpeval{p}{\mbp}{}$.
\begin{compactitem}
\item[\textbf{Rule \textsc{Am-NoBranch}.}]
Assume that  $\tup{\ctr,\tup{sm,sa},s} \symSpeval{p}{\mbp}{\tau} \tup{\ctr',\tup{sm',sa'},s'}$ has been derived using the \textsc{Se-NoBranch} rule in the symbolic semantics.
Then, $\tup{sm,sa} \symEval{p}{\tau} \tup{sm',sa'}$, $p(sa(\pc)) \neq \pjz{x}{\lbl}$, $\exhaustedTop{s}$, 
$\ctr' = \ctr$, and $s' = \decrementTop{s}$.
Let $\mu : \SymVal \to \Val$ be an arbitrary valuation that satisfies $\pathCond{s} \wedge \pathCond{\tau}$.
Observe that $\mu$ also satisfies $\pathCond{\tau}$.
Therefore, we can apply Lemma~\ref{lemma:symbolic:one-step-soundness-non-speculative} to $\tup{sm,sa} \symEval{p}{\tau} \tup{sm',sa'}$ to derive $\tup{\mu(sm),\mu(sa)} \eval{p}{\mu(\tau)} \tup{\mu(sm'),\mu(sa')}$.
Moreover, we immediately have that $\exhaustedTop{s}$ iff $\exhaustedTop{\mu(s)}$ and  $p(sa(\pc)) \neq \pjz{x}{\lbl}$, iff $p(\mu(sa)(\pc)) \neq \pjz{x}{\lbl}$,.
Thus, we can apply the rule \textsc{Am-NoBranch} of the concrete semantics starting from  $\tup{\ctr,\tup{\mu(sm),\mu(sa)},\mu(s)}$.
Therefore, $\tup{\ctr, \tup{\mu(sm),\mu(sa)}, \mu(s)} \ameval{p}{\mbp}{\mu(\tau)} \tup{\ctr,\tup{\mu(sm'),\mu(sa')},\mu(s')}$.

\item[\textbf{Rule \textsc{Am-Branch}.}]
Assume that  $\tup{\ctr,\tup{sm,sa},s} \symSpeval{p}{\mbp}{\tau} \tup{\ctr',\tup{sm',sa'},s'}$ has been derived using the \textsc{Se-Branch} rule in the symbolic semantics.
Then, $p(sa(\pc)) = \pjz{x}{\lbl''}$,  $\exhaustedTop{s}$, $\tup{sm,sa} \symEval{p}{\symPcObs{se} \concat \pcObs{\lbl'}} \sigma'$, $\lbl = \begin{cases}
 	sa(\pc)+1 & \text{if}\ \lbl' \neq sa(\pc) + 1\\
 	\lbl '' & \text{if}\ \lbl' = sa(\pc) + 1
 \end{cases}$, 
$\ctr'= \ctr+1$, $sm' = sm$, $sa' = sa [\pc \mapsto \lbl]$, $s' = s \concat \tup{\tup{sm,sa}, \ctr, \mathit{min}(w, \window{s} -1), \lbl  }$, and $\tau = \symPcObs{se} \concat \startObs{\ctr} \concat \pcObs{\lbl}$.
Let $\mu : \SymVal \to \Val$ be an arbitrary valuation satisfying $\pathCond{s} \wedge \pathCond{\tau}$.
Observe that $\mu$ also satisfies $\pathCond{\tau}$, i.e., $\mu \models se$.
We can apply  Lemma~\ref{lemma:symbolic:one-step-soundness-non-speculative} to $\tup{sm,sa} \symEval{p}{\symPcObs{se} \concat \pcObs{\lbl'}} \sigma'$ to derive  $\tup{\mu(sm),\mu(sa)} \eval{p}{\mu(\tau)} \mu(\sigma')$.
Observe that $\mu(\sigma')(\pc) = \lbl'$ and $\exhaustedTop{s}$ holds iff $\exhaustedTop{\mu(s)}$ does.
Therefore, we can apply the rule \textsc{Am-Branch} of the concrete semantics starting from  $\tup{\ctr,\tup{\mu(sm),\mu(sa)},\mu(s)}$.
We have $\tup{\ctr, \tup{\mu(sm),\mu(sa)}, \mu(s)} \ameval{p}{\mbp_n}{\startObs{\ctr} \concat \pcObs{\lbl'''}} \tup{\ctr+1,\tup{\mu(sm),\mu(sa)[\pc \mapsto \lbl''']},\mu(s) \concat \tup{\tup{\mu(sm),\mu(sa)}, \ctr, \mathit{min}(w, \window{\mu(s)} -1), \lbl'''}}$.
We claim that $\lbl = \lbl'''$.
Hence, $\tup{\ctr, \tup{\mu(sm),\mu(sa)}, \mu(s)} \ameval{p}{\mbp}{\mu(\tau)} \tup{\ctr',\tup{\mu(sm'),\mu(sa')},\mu(s')}$.

We now show that $\lbl = \lbl'''$.
There are two cases:
\begin{compactitem}
\item[$\bm{\lbl = sa(\pc)+1}:$]
Then, $ \lbl' \neq sa(\pc) + 1$.
There are two cases:
\begin{compactitem}
\item The step $\tup{sm,sa} \symEval{p}{\symPcObs{se} \concat \pcObs{\lbl'}} \sigma'$ has been derived using the \textsc{Beqz-Concr-Sat} rule from the symbolic non-speculative semantics.
Then, $sa(x) = 0$. 
Therefore, $\mu(sa)(x) = 0$ and $\lbl''' = \mu(sa)(\pc) + 1$ by construction.
Hence, $\lbl = \lbl'''$.

\item The step $\tup{sm,sa} \symEval{p}{\symPcObs{se} \concat \pcObs{\lbl'}} \sigma'$ has been derived using the \textsc{Beqz-Symb-Sat} rule from the symbolic non-speculative semantics.
Then, $se$ is $sa(x) = 0$.
From this and $\mu \models se$, we have that  $\mu(sa)(x) = 0$ and, therefore, $\lbl''' = \mu(sa)(\pc) + 1$ by construction.
Hence, $\lbl = \lbl'''$.
\end{compactitem}
This completes the proof of this case.

\item[$\bm{\lbl = \lbl''}:$]
Then, $ \lbl' = sa(\pc) + 1$.
There are two cases:
\begin{compactitem}
\item The step $\tup{sm,sa} \symEval{p}{\symPcObs{se} \concat \pcObs{\lbl'}} \sigma'$ has been derived using the \textsc{Beqz-Concr-Unsat} rule from the symbolic non-speculative semantics.
Then, $sa(x) \neq 0$. 
Therefore, $\mu(sa)(x) \neq 0$ and $\lbl''' = \lbl''$ by construction.
Hence, $\lbl = \lbl'''$.

\item The step $\tup{sm,sa} \symEval{p}{\symPcObs{se} \concat \pcObs{\lbl'}} \sigma'$ has been derived using the \textsc{Beqz-Symb-Unsat} rule from the symbolic non-speculative semantics.
Then, $se$ is $sa(x) \neq 0$.
From this and $\mu \models se$, we have that  $\mu(sa)(x) \neq 0$ and, therefore, $\lbl''' = \lbl''$ by construction.
Hence, $\lbl = \lbl'''$.
\end{compactitem}
This completes the proof of this case.

\end{compactitem}
This completes the proof of our claim.

\item[\textbf{Rule \textsc{Am-Rollback}.}]
Assume that  $\tup{\ctr,\tup{sm,sa},s} \symSpeval{p}{\mbp}{\tau} \tup{\ctr',\tup{sm',sa'},s'}$ has been derived using the \textsc{Se-Rollback} rule in the symbolic semantics.
Then, $s = s_B \concat \tup{\sigma,\id, 0, \lbl}$, $\sigma \symEval{p}{\tau'} \sigma'$, $\ctr'= \ctr$, $
\sigma' = \tup{sm',sa'}$, $s' = s_B$, $sa'(\pc) \neq \lbl$, and $\tau = \rollbackObs{\id} \concat \pcObs{\sigma'(\pc)}$.
Let $\mu : \SymVal \to \Val$ be an arbitrary valuation satisfying $\pathCond{s} \wedge  \pathCond{\tau}$.
We claim that $\mu \models \pathCond{\tau'}$.
Therefore, Lemma~\ref{lemma:symbolic:one-step-soundness-non-speculative} to  $\sigma \symEval{p}{\tau'} \sigma'$ to derive $\mu(\sigma) \eval{p}{\mu(\tau')} \mu(\sigma')$.
By applying the rule \textsc{Am-Rollback} of the concrete semantics starting from  $\tup{\ctr,\tup{\mu(sm),\mu(sa)},\mu(s)}$, we obtain $\tup{\ctr, \tup{\mu(sm),\mu(sa)}, \mu(s_B) \concat \tup{\sigma,\id, 0, \lbl}} \ameval{p}{\mbp_n}{\rollbackObs{\id} \concat \pcObs{\sigma'(\pc)}} \tup{\ctr,\mu(\sigma'), \mu(s_B)}$.
Since $\sigma'(\pc) = \mu(\sigma')(\pc) = \lbl$ (since the program counter is always concrete), we have that  $\tup{\ctr, \tup{\mu(sm),\mu(sa)}, \mu(s)} \ameval{p}{\mbp}{\mu(\tau)} \tup{\ctr',\tup{\mu(sm'),\mu(sa')},\mu(s')}$.

We now prove our claim that $\mu \models \pathCond{\tau'}$.
There are four cases:
\begin{compactitem}
\item The step $\sigma \symEval{p}{\tau'} \sigma'$ has been derived using the \textsc{Beqz-Concr-Sat} rule from the symbolic non-speculative semantics.
Then, $\pathCond{\tau'} = \top$ and the claim trivially holds.

\item The step $\sigma \symEval{p}{\tau'} \sigma'$ has been derived using the \textsc{Beqz-Concr-Unsat} rule from the symbolic non-speculative semantics.
Then, $\pathCond{\tau'} = \top$ and the claim trivially holds.

\item The step $\sigma \symEval{p}{\tau'} \sigma'$ has been derived using the \textsc{Beqz-Symb-Sat} rule from the symbolic non-speculative semantics.
Then, $\pathCond{\tau'}$ is $sa(x) = 0$.
Observe that $sa(x) = 0$ is one of the conjuncts of $\pathCond{s}$ by construction.
Therefore, $\mu \models \pathCond{\tau'}$ since $\mu \models \pathCond{s}$.

\item The step $\sigma \symEval{p}{\tau'} \sigma'$ has been derived using the \textsc{Beqz-Symb-Unsat} rule from the symbolic non-speculative semantics.
Then, $\pathCond{\tau'}$ is $sa(x) \neq 0$.
Observe that $sa(x) \neq 0$ is one of the conjuncts of $\pathCond{s}$ by construction.
Therefore, $\mu \models \pathCond{\tau'}$ since $\mu \models \pathCond{s}$.
\end{compactitem}
This completes the proof of our claim.

\end{compactitem}
This completes the proof of our claim.
\end{proof}

In Proposition~\ref{propostion:alwasy-mispred-symbolic:soundness}, we finally prove the soundness of the symbolic semantics.

\begin{restatable}{prop}{}\label{propostion:alwasy-mispred-symbolic:soundness}
Let $p$ be a program and $\mu$ be a valuation.
Whenever $\tup{0,\tup{sm_0,sa_0}, \emptysequence} \symSpeval{p}{\mbp}{\tau}^* \tup{\ctr,\tup{sm,sa}, \emptysequence}$, $\tup{sm,sa} \in \Final$, and $\mu \models \pathCond{\tau}$, then $\tup{0,\tup{\mu(sm_0),\mu(sa_0)}, \emptysequence} \ameval{p}{\mbp}{\mu(\tau)}^* \tup{\ctr,\tup{\mu(sm),\mu(sa)}, \emptysequence}$.
\end{restatable}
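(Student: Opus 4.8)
The plan is to lift the single-step soundness result, \Cref{lemma:symbolic:one-step-soundness-speculative}, to full runs by induction on the number of steps in the symbolic execution $\tup{0,\tup{sm_0,sa_0},\emptysequence} \symSpeval{p}{\mbp}{\tau}^* \tup{\ctr,\tup{sm,sa},\emptysequence}$. The base case (zero steps) is immediate: $\tau = \emptysequence$, the two configurations coincide, and $\mu(\emptysequence) = \emptysequence$. For the inductive step, I would decompose the run as $\tup{0,\tup{sm_0,sa_0},\emptysequence} \symSpeval{p}{\mbp}{\tau_1}^* \tup{\ctr'',\tup{sm'',sa''},s''} \symSpeval{p}{\mbp}{\tau_2} \tup{\ctr,\tup{sm,sa},\emptysequence}$ with $\tau = \tau_1 \concat \tau_2$, apply the induction hypothesis to the prefix, apply \Cref{lemma:symbolic:one-step-soundness-speculative} to the last step, and concatenate the resulting concrete runs.

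The subtlety is the hypothesis management. \Cref{lemma:symbolic:one-step-soundness-speculative} requires $\mu \models \pathCond{s''} \wedge \pathCond{\tau_2}$, where $\pathCond{s''}$ is the next-step mispredict path condition of the speculative-state sequence $s''$ reached after the prefix. The statement of the proposition only gives us $\mu \models \pathCond{\tau}$, the path condition of the whole symbolic trace (the conjunction of all $\symPcObs{\sexpr}$ observations appearing in $\tau$). So the key step is to argue that $\pathCond{\tau}$ already entails both $\pathCond{\tau_1}$ (needed for the IH on the prefix — and $\pathCond{\tau_1}$ is a sub-conjunction of $\pathCond{\tau}$ since $\tau_1$ is a prefix of $\tau$), $\pathCond{\tau_2}$ (also a sub-conjunction), and $\pathCond{s''}$. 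The first two are trivial from the definition of $\pathCond{\cdot}$ as the conjunction of branching-condition observations and the fact that $\tau = \tau_1 \concat \tau_2$. The third — that $\mu \models \pathCond{s''}$ — is the main obstacle: I expect to prove it as an auxiliary invariant, namely that along any symbolic run reached from the initial configuration whose accumulated trace $\tau_1$ satisfies $\mu \models \pathCond{\tau_1}$, the resulting speculative-state sequence $s''$ satisfies $\mu \models \pathCond{s''}$. Intuitively this holds because every speculative state $\tup{\sigma, \id, w, \lbl}$ on the stack was created by an \textsc{Am-Branch-Symb} step that, by construction, chose $\lbl$ to be the \emph{mispredicted} target; executing one non-speculative step from $\sigma$ therefore lands on $\sigma'(\pc) \neq \lbl$ precisely under the branching condition that the \textsc{Am-Branch-Symb} rule emitted as a $\symPcObs{se}$ observation into the trace — hence $\pathCond{\tup{\sigma,\id,w,\lbl}}$ is (equivalent to, or implied by a conjunct of) something already recorded in $\pathCond{\tau_1}$.

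Concretely, I would either (i) strengthen the induction hypothesis of this proposition to carry the side condition ``$\mu \models \pathCond{s''}$ for the intermediate speculative state $s''$'', which makes the inductive step go through cleanly using \Cref{lemma:symbolic:one-step-soundness-speculative}, or (ii) factor out a small standalone lemma stating that invariant and cite it. Option (i) keeps everything self-contained. In the inductive step I then: apply the strengthened IH to get both the concrete prefix run $\tup{0,\tup{\mu(sm_0),\mu(sa_0)},\emptysequence} \ameval{p}{\mbp}{\mu(\tau_1)}^* \tup{\ctr'',\tup{\mu(sm''),\mu(sa'')},\mu(s'')}$ and $\mu \models \pathCond{s''}$; combine $\mu \models \pathCond{s''}$ with $\mu \models \pathCond{\tau_2}$ (a sub-conjunction of $\pathCond{\tau}$) to invoke \Cref{lemma:symbolic:one-step-soundness-speculative} on the final step, yielding $\tup{\ctr'',\tup{\mu(sm''),\mu(sa'')},\mu(s'')} \ameval{p}{\mbp}{\mu(\tau_2)} \tup{\ctr,\tup{\mu(sm),\mu(sa)},\mu(\emptysequence)}$; re-establish the invariant for the new (empty) speculative state $\emptysequence$, which is vacuous; and concatenate, using $\mu(\tau_1 \concat \tau_2) = \mu(\tau_1) \concat \mu(\tau_2)$ (immediate from the definition of $\mu(\cdot)$ on traces) and $\mu(\emptysequence) = \emptysequence$. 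The only genuinely non-bookkeeping part is establishing the $\pathCond{s''}$ invariant, and everything else is routine trace algebra.
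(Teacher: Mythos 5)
Your proposal is correct and matches the paper's own proof: the paper likewise iterates Lemma~\ref{lemma:symbolic:one-step-soundness-speculative} along the run and discharges the extra hypothesis $\mu \models \pathCond{s''}$ by proving exactly the invariant you describe (by induction over the run, with the \textsc{Am-Branch} case using that the new speculative state's mispredict condition is the $\symPcObs{se}$ conjunct already recorded in the trace, and the \textsc{NoBranch}/\textsc{Rollback} cases preserving it trivially). The only cosmetic difference is that the paper proves the invariant as a separate claim rather than folding it into a strengthened induction hypothesis.
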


\begin{proof}
Let $p$ be a program and $\mu$ be a valuation.
Assume that, for some $n \in \Nat$, $\tup{\ctr_0, \tup{sm_0, sa_0}, s_0} \symSpeval{p}{\mbp}{\tau_0} \tup{\ctr_1, \tup{sm_1, sa_1}, s_1} \symSpeval{p}{\mbp}{\tau_1} \ldots \tup{\ctr_n, \tup{sm_n, sa_n}, s_n} \symSpeval{p}{\mbp}{\tau_n} \tup{\ctr_{n+1}, \tup{sm_{n+1}, sa_{n+1}}, s_{n+1}}$, where $\ctr_0 = 0$, $s_0 = s_{n+1} = \emptysequence$, and  $\tup{sm_n, sa_n} \in \Final$.
Finally, assume that $\mu \models \pathCond{\tau_0 \concat \tau_1 \concat \ldots \concat \tau_n}$.
We claim that, for all $1 \leq i \leq n+1$, $\mu \models \pathCond{s_{i-1}}$.
We can obtain the concrete execution $\tup{\ctr_0, \tup{\mu(sm_0), \mu(sa_0)}, \mu(s_0)} \speval{p}{\mbp}{\mu(\tau_0 \concat \tau_1 \concat \ldots \concat \tau_n)}^n  \tup{\ctr_{n+1}, \tup{\mu(sm_{n+1}), \mu(sa_{n+1})}, \mu(s_{n+1})}$ by repeatedly applying Lemma~\ref{lemma:symbolic:one-step-soundness-speculative} (since for all $1 \leq i \leq n+1$, $\mu \models \pathCond{s_{i-1}} \wedge \pathCond{\tau_i}$).

We now show that for all $0 \leq i \leq n+1$, $\mu \models \pathCond{s_{i}}$.
We prove this by induction.
For the base case of $i = 0$, the claim trivially holds since $s_0 = \emptysequence$ and therefore $\pathCond{s_{0}} = \top$.
For the induction step, assume that the claim holds for all $j < i$.
We now show that it holds for $i$ as well.
There are three cases:
\begin{compactitem}
\item The $i$-th configuration has been derived using the \textsc{Se-NoBranch} rule.
Then, $s_{i-1}$ and $s_i$ only differ in the length of the remaining speculation windows.
Therefore, $\pathCond{s_{i-1}} = \pathCond{s_i}$.
From the induction hypothesis, $\mu \models \pathCond{s_{i-1}}$.
Therefore, $\mu \models \pathCond{s_i}$.

\item The $i$-th configuration has been derived using the \textsc{Se-Branch} rule.
Then, $s_i = \decrementTop{s_{i-1}} \concat \tup{\tup{sm_{i-1},sa_{i-1}}, \id, \mathit{min}(w, \window{s}-1), \lbl}$, where $p(sa_{i-1}(\pc)) = \pjz{x}{\lbl''}$, $\tup{sm_{i-1}, sa_{i-1}} \symEval{p}{\symPcObs{se} \concat \pcObs{\lbl'}} \sigma'$, 
$\lbl = 
\begin{cases}
sa_{i-1}(\pc) + 1 & \text{if}\ \lbl' \neq sa_{i-1}(\pc) + 1 \\
\lbl'' 	 	& \text{if}\ \lbl' = sa_{i-1}(\pc) + 1 \\
\end{cases}$, and $\pathCond{\tau_{i-1}} = se$.
From this, it follows that $\pathCond{\tup{\sigma, \id, 0, \lbl}} = se$.
From this, $\pathCond{\tau_{i-1}} = se$, and $\mu \models \pathCond{\tau_0 \concat \tau_1 \concat \ldots \concat \tau_n}$, we have that $\mu \models se$.
Moreover, from $\pathCond{s_i} = \pathCond{s_{i-1}} \wedge se$, $\mu \models se$, and the induction hypothesis, we have $\mu \models \pathCond{s_i}$.

\item The $i$-th configuration has been derived using the \textsc{Se-Rollback} rule.
Then, $s_{i-1} = s_i \concat \tup{\sigma, \id, 0, \lbl}$  for some $\sigma$, $\id$, and $\lbl$.
Therefore, $\pathCond{s_{i-1}} = \pathCond{s_i} \wedge \varphi$ for some $\varphi$.
From the induction hypothesis, $\mu \models \pathCond{s_{i-1}}$.
From this and $\pathCond{s_{i-1}} = \pathCond{s_i} \wedge \varphi$ for some $\varphi$, we have $\mu \models \pathCond{s_i}$.
	
\end{compactitem}
This completes the proof of our claim.
\end{proof}

\subsection{Completeness}\label{apx:alwasy-mispred-symbolic:completeness}

In Lemma~\ref{lemma:symbolic:one-step-completeness-non-speculative}, we prove the completeness of the non-speculative symbolic semantics.

\begin{restatable}{lem}{}\label{lemma:symbolic:one-step-completeness-non-speculative}
Let $p$ be a program, $sa$ be a symbolic assignment, and $sm$ be a symbolic memory.
Whenever  $\tup{m,a} \eval{p}{\tau} \tup{m',a'}$, then 
for all assignments $\mu : \SymVal \to \Val$ such that $\mu(sa) = a$ and $\mu(sm) = m$,
then there exist $\tau', sm',sa'$ such that 
$\tup{sm,sa} \symEval{p}{\tau'} \tup{sm',sa'}$,  $\mu(sm') = m'$, $\mu(sa') = a'$, $\mu(\tau') = \tau$, and $\mu \models \pathCond{\tau}$.
\end{restatable}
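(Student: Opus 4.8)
The plan is to prove this one-step completeness lemma by a straightforward case analysis on the rule of the concrete non-speculative semantics (Figure~\ref{figure:language:semantics:full}) used to derive $\tup{m,a} \eval{p}{\tau} \tup{m',a'}$. For each concrete rule, I would identify the matching symbolic rule (or rules) from Figure~\ref{figure:symbolic:non-speculative-semantics} and show that it fires. The key observation enabling the matching is that $sa(\pc)$ is always concrete (stated in the excerpt), so $\mu(sa)(\pc) = sa(\pc) = a(\pc)$, which means $p(sa(\pc)) = p(a(\pc))$ and hence the symbolic configuration is poised at the same instruction; the relevant symbolic rule is thus determined by the instruction together with whether the operands involved evaluate to concrete values or to genuinely symbolic expressions.

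First I would establish, or simply cite, the auxiliary facts: $\mu(\exprEval{e}{sa}) = \exprEval{e}{\mu(sa)} = \exprEval{e}{a}$ for any expression $e$, and the analogous commutation facts for symbolic memory reads/writes under $\mu$. These are exactly the statements in Lemma~\ref{lemma:symbolic:useful} (proved just above in the excerpt for the soundness direction), so I would reuse them directly. With these in hand, the non-branching cases (\textsc{Skip}, \textsc{Barrier}, \textsc{Assign}, \textsc{Terminate}, \textsc{Load}, \textsc{Store}) are essentially immediate: the symbolic rule produces $sa' = sa[\pc \mapsto sa(\pc)+1, \ldots]$ with the register/memory update given by a symbolic expression whose $\mu$-image is precisely the concrete update, the trace $\tau'$ is either empty or $\loadObs{\sexpr}$/$\storeObs{\sexpr}$ with $\mu(\sexpr)$ equal to the concrete address, and $\pathCond{\tau} = \top$ so the final satisfiability condition $\mu \models \pathCond{\tau}$ is trivial. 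For conditional updates, I would split on whether $\exprEval{e'}{sa}$ is concrete or symbolic: if concrete, use \textsc{ConditionalUpdate-Concr-Sat}/\textsc{Unsat} and note the concrete evaluation forces the corresponding branch; if symbolic, use \textsc{ConditionalUpdate-Symb}, whose $\mathbf{ite}$-update matches the concrete outcome under $\mu$ because $\mu(\exprEval{e'}{sa}) = \exprEval{e'}{a}$ decides which arm of the $\mathbf{ite}$ is taken.

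The branch and jump cases (\textsc{Beqz-Sat}, \textsc{Beqz-Unsat}, \textsc{Jmp}) are where I expect the only real subtlety, and they are the main obstacle. Here $\pathCond{\tau}$ is nontrivial: the symbolic rules emit a $\symPcObs{\sexpr}$ observation. If $sa(x)$ (resp.\ $\exprEval{e}{sa}$) is concrete, the symbolic rule is \textsc{Beqz-Concr-Sat}/\textsc{Unsat} (resp.\ \textsc{Jmp-Concr}) with $\sexpr = \top$, and since the concrete evaluation agrees we pick the same case, with $\pathCond{\tau} = \top$. If it is symbolic, the rule is \textsc{Beqz-Symb-Sat}/\textsc{Unsat} (resp.\ \textsc{Jmp-Symb}): now I must choose the right one of the two symbolic rules so that the emitted condition $\sexpr$ — namely $sa(x) = 0$ or $sa(x) \neq 0$ (resp.\ $\exprEval{e}{sa} = \lbl$) — is satisfied by $\mu$; this works precisely because $\mu(sa)(x) = a(x)$ and the concrete rule that fired tells us whether $a(x) = 0$ or not (and for \textsc{Jmp}, the concrete target $\lbl = \exprEval{e}{a}$ is chosen as the symbolic label). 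I should also note that for \textsc{Jmp-Symb} one must pick $\lbl \in \Val$ to be the concrete target $\exprEval{e}{a}$; this is legitimate since the symbolic rule is schematic over all $\lbl \in \Val$. In every case $\mu(\tau') = \tau$ holds because $\mu$ drops the $\symPcObs{}$ observations and maps the remaining $\pcObs{}$ observation to itself (program counters being concrete). Assembling all cases yields the lemma.
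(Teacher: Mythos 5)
Your proposal is correct and follows essentially the same route as the paper's proof: a case analysis on the concrete rule, matching each with the corresponding symbolic rule via Lemma~\ref{lemma:symbolic:useful}, splitting the conditional-update, branch, and jump cases on whether the operand is concrete or symbolic, and instantiating the label in \textsc{Jmp-Symb} with the concrete target so that $\mu$ satisfies the emitted path condition. No gaps.
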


\begin{proof}
Let $p$ be a program, $sa$ be a symbolic assignment, and $sm$ be a symbolic memory.
Moreover, assume that  $\tup{m,a} \eval{p}{\tau} \tup{m',a'}$ holds.
Finally, let $\mu : \SymVal \to \Val$ be an arbitrary assignment such that  $\mu(sa) = a$ and $\mu(sm) = m$.
We proceed by case distinction on the rules defining $\eval{p}{}$.
\begin{compactitem}
\item[\textbf{Rule \textsc{Skip}.}]
Assume that $\tup{m,a} \eval{p}{\tau} \tup{m',a'}$ has been derived using the \textsc{Skip} rule in the concrete semantics.
Then, $p(a(\pc)) = \pskip$, $m' = m$, $a' = a[\pc \mapsto a(\pc)+1]$, and $\tau = \emptysequence$.
From $p(a(\pc)) = \pskip$, $\mu(sa) = a$, and $\pc$ being always concrete, we have that $p(sa(\pc)) = \pskip$.
Thus, we can apply the rule \textsc{Skip} of the symbolic semantics starting from $\tup{sm,sa}$.
Hence, $\tup{sm,sa} \symEval{p}{\tau'} \tup{sm',sa'}$, where $\tau' = \emptysequence$, $sm' = sm$, $sa' = sa[\pc \mapsto sa(\pc) +1]$, and $\varphi' = \varphi$.
From $m' = m$, $\mu(sm) = m$, and $sm' = sm$, we have that $\mu(sm') = m'$.
From Lemma~\ref{lemma:symbolic:useful}, $a' = a[\pc \mapsto a(\pc)+1]$, $\mu(sa) =a$,  and $sa' = sa[\pc \mapsto sa(\pc) +1]$, we have that $\mu(sa') = \mu(sa[\pc \mapsto sa(\pc) +1]) = \mu(sa)[\pc \mapsto \mu(sa(\pc))+1] = a[\pc \mapsto \mu(sa(\pc))+1] = a[\pc \mapsto \mu(sa)(\pc)+1] = a[\pc \mapsto a(\pc)+1] = a'$.
From $\tau = \emptysequence$ and $\tau' = \emptysequence$, we have that $\mu(\tau') = \tau$.
Finally, from $\pathCond{\tau'} = \top$, we have that $\mu \models \pathCond{\tau}$.

\item[\textbf{Rule \textsc{Barrier}.}]
The proof of this case is similar to that of  \textbf{Rule \textsc{Skip}.}

\item[\textbf{Rule \textsc{Assign}.}]
Assume that $\tup{m,a} \eval{p}{\tau} \tup{m',a'}$ has been derived using the \textsc{Assign} rule in the concrete semantics.
Then, $p(a(\pc)) = \passign{x}{e}$, $x \neq  \pc$, $m' = m$, $a' = a[\pc \mapsto a(\pc)+1, x \mapsto \exprEval{x}{a}]$, and $\tau = \emptysequence$.
From $p(a(\pc)) = \passign{x}{e}$, $\mu(sa) = a$, and $\pc$ being always concrete, we have that $p(sa(\pc)) = \passign{x}{e}$.
Thus, we can apply the rule \textsc{Assign} of the symbolic semantics starting from $\tup{sm,sa}$.
Hence, $\tup{sm,sa} \symEval{p}{\tau'} \tup{sm',sa'}$, where $\tau' = \emptysequence$, $sm' = sm$, and $sa' = sa[\pc \mapsto sa(\pc) +1, x \mapsto \exprEval{e}{sa}]$.
From $m' = m$, $\mu(sm) = m$, and $sm' = sm$, we have that $\mu(sm') = m'$.
From Lemma~\ref{lemma:symbolic:useful}, $a' = a[\pc \mapsto a(\pc)+1, x \mapsto \exprEval{e}{a}]$, $\mu(sa) =a$,  and $sa' = sa[\pc \mapsto sa(\pc) +1,x \mapsto \exprEval{e}{a}]$, we have that $\mu(sa') = \mu(sa[\pc \mapsto sa(\pc) +1,x \mapsto \exprEval{e}{sa}]) = \mu(sa)[\pc \mapsto \mu(sa(\pc)) +1,x \mapsto \mu(\exprEval{e}{sa})] = \mu(sa)[\pc \mapsto \mu(sa)(\pc) +1,x \mapsto \exprEval{e}{\mu(sa)}] = a[\pc \mapsto a(\pc) +1,x \mapsto \exprEval{e}{a}] = a'$.
From $\tau = \emptysequence$ and $\tau' = \emptysequence$, we have that $\mu(\tau') = \tau$.
Finally, from $\pathCond{\tau'} = \top$, we have that $\mu \models \pathCond{\tau}$.

\item[\textbf{Rule \textsc{ConditionalUpdate-Sat}.}]
Assume that $\tup{m,a} \eval{p}{\tau} \tup{m',a'}$ has been derived using the \textsc{ConditionalUpdate-Sat} rule in the concrete semantics.
Then, $p(a(\pc)) = \pcondassign{x}{e}{e'}$, $x \neq  \pc$, $\exprEval{e'}{a} = 0$, $m' = m$, $a' = a[\pc \mapsto a(\pc)+1, x \mapsto \exprEval{x}{a}]$, and $\tau = \emptysequence$.
From $p(a(\pc)) =  \pcondassign{x}{e}{e'}$, $\mu(sa) = a$, and $\pc$ being always concrete, we have that $p(sa(\pc)) =  \pcondassign{x}{e}{e'}$.
There are two cases:
\begin{compactitem}
\item[$\bm{\exprEval{e'}{sa} \in \Val:}$]
We claim that $\exprEval{e'}{sa} = 0 $.
Therefore,  we can apply the rule \textsc{ConditionalUpdate-Concr-Sat} of the symbolic semantics starting from $\tup{sm,sa}$.
Hence, $\tup{sm,sa} \symEval{p}{\tau'} \tup{sm',sa'}$, where $\tau' = \emptysequence$, $sm' = sm$, and $sa' = sa[\pc \mapsto sa(\pc) +1, x \mapsto \exprEval{e}{sa}]$.
From $m' = m$, $\mu(sm) = m$, and $sm' = sm$, we have that $\mu(sm') = m'$.
From Lemma~\ref{lemma:symbolic:useful}, $a' = a[\pc \mapsto a(\pc)+1, x \mapsto \exprEval{e}{a}]$, $\mu(sa) =a$,  and $sa' = sa[\pc \mapsto sa(\pc) +1,x \mapsto \exprEval{e}{a}]$, we have that $\mu(sa') = \mu(sa[\pc \mapsto sa(\pc) +1,x \mapsto \exprEval{e}{sa}]) = \mu(sa)[\pc \mapsto \mu(sa(\pc)) +1,x \mapsto \mu(\exprEval{e}{sa})] = \mu(sa)[\pc \mapsto \mu(sa)(\pc) +1,x \mapsto \exprEval{e}{\mu(sa)}] = a[\pc \mapsto a(\pc) +1,x \mapsto \exprEval{e}{a}] = a'$.
From $\tau = \emptysequence$ and $\tau' = \emptysequence$, we have that $\mu(\tau') = \tau$.
Finally, from $\pathCond{\tau'} = \top$, we have that $\mu \models \pathCond{\tau}$.

We now show that $\exprEval{e'}{sa} = 0 $.
From $\exprEval{e'}{sa} \in \Val$, it follows that all variables defining $e'$ have concrete values in $sa$.
From this and $\mu(sa) = a$, we have that all these variables agree in $a$ and $sa$.
Hence, $\exprEval{e'}{sa} = \exprEval{e'}{a}= 0 $.

\item[$\bm{\exprEval{e'}{sa} \not\in \Val:}$]
We can apply the rule \textsc{ConditionalUpdate-Symb} of the symbolic semantics starting from $\tup{sm,sa}$.
Hence, $\tup{sm,sa} \symEval{p}{\tau'} \tup{sm',sa'}$, where $\tau' = \emptysequence$, $sm' = sm$, and $sa' = sa[\pc \mapsto sa(\pc) +1, x \mapsto \ite{\exprEval{e'}{sa} = 0}{\exprEval{e}{sa}}{sa(x)} ]$.
From $m' = m$, $\mu(sm) = m$, and $sm' = sm$, we have that $\mu(sm') = m'$.
From $\tau = \emptysequence$ and $\tau' = \emptysequence$, we have that $\mu(\tau') = \tau$.
Finally, from $\pathCond{\tau'} = \top$, we have that $\mu \models \pathCond{\tau}$.
Finally, to show that $a' = \mu(sa')$, we need to show that $\mu(\ite{\exprEval{e'}{sa} = 0}{\exprEval{e}{sa}}{sa(x)})$ is equivalent to $\exprEval{e}{sa}$.
From the evaluation of symbolic expressions, $\mu(\ite{\exprEval{e'}{sa} = 0}{\exprEval{e}{sa}}{sa(x)})$ is $\mu(\exprEval{e}{sa})$ if $\mu(\exprEval{e'}{sa} = 0) \neq 0$, i.e., if $\mu(\exprEval{e'}{sa}) = 0$, and $\mu(sa(x))$ otherwise.
We now show that $\mu(\exprEval{e'}{sa}) = 0$, from this we immediately get that $a' = \mu(sa')$.
From  Lemma~\ref{lemma:symbolic:useful}, we have $\mu(\exprEval{e'}{sa})= \exprEval{e'}{\mu(sa)}$.
From this and $\mu(sa) = a$, we have $\mu(\exprEval{e'}{sa})= \exprEval{e'}{a}$.
From this and $\exprEval{e'}{a} = 0$, we finally get  $\mu(\exprEval{e'}{sa}) = 0$.
\end{compactitem}
This concludes the proof of this case.

\item[\textbf{Rule \textsc{ConditionalUpdate-Unsat}.}]
The proof of this case is similar to that of \textbf{Rule \textsc{ConditionalUpdate-Sat}.}

\item[\textbf{Rule \textsc{Load}.}]
Assume that $\tup{m,a} \eval{p}{\tau} \tup{m',a'}$ has been derived using the \textsc{Load} rule in the concrete semantics.
Then, $p(a(\pc)) = \pload{x}{e}$, $x \neq  \pc$, $m' = m$, $a' = a[\pc \mapsto a(\pc)+1, x \mapsto m(\exprEval{e}{a})]$, and $\tau = \loadObs{\exprEval{e}{a}}$.
From $p(a(\pc)) = \pload{x}{e}$, $\mu(sa) = a$, and $\pc$ being always concrete, we have that $p(sa(\pc)) = \pload{x}{e}$.
We proceed by applying the \textsc{Load-Symb} rule of the symbolic semantics starting from $\tup{sm,sa,\varphi}$.
Hence, $\tup{sm,sa} \symEval{p}{\tau'} \tup{sm',sa'}$, where $se = \exprEval{e}{sa}$, $se' = \symRead{sm}{se}$, $\tau' = \loadObs{se}$, $sm' = sm$, and $sa' = sa[\pc \mapsto sa(\pc) +1, x \mapsto se']$.
We claim that $\mu(se') = m(\exprEval{e}{a})$.
From $m' = m$, $\mu(sm) = m$, and $sm' = sm$, we have that $\mu(sm') = m'$.
From $a' = a[\pc \mapsto a(\pc)+1, x \mapsto m(\exprEval{e}{a})]$, $sa' = sa[\pc \mapsto sa(\pc) +1, x \mapsto se']$, and $\mu(se') = m(\exprEval{e}{a})$, we have that 
$\mu(sa') 
	= \mu(sa[\pc \mapsto sa(\pc) +1, x \mapsto se']) 
	= \mu(sa)[\pc \mapsto \mu(sa(\pc)) +1, x \mapsto \mu(se')] 
	= \mu(sa)[\pc \mapsto \mu(sa)(\pc) +1,x \mapsto m(\exprEval{e}{a})] 
	= a[\pc \mapsto a(\pc) +1,x \mapsto m(\exprEval{e}{a})] 
	= a'$ (by relying on $\mu(sa) = a$, $\mu(sm) = m$, our claim that $\mu(se') = m(\exprEval{e}{a})$, and Lemma~\ref{lemma:symbolic:useful}).
From $\tau = \loadObs{\exprEval{e}{a}}$ and $\tau' = \loadObs{\exprEval{e}{sa}}$, we have that $\mu(\tau') =  \loadObs{\mu(\exprEval{e}{sa})} = \loadObs{\exprEval{e}{\mu(sa)}} = \loadObs{\exprEval{e}{a}} = \tau$ (by applying Lemma~\ref{lemma:symbolic:useful} and relying on $\mu(sa) = a$).
Finally, from $\pathCond{\tau} = \top$, we have that $\mu \models \pathCond{\tau}$.

We now prove our claim that $\mu(se') = m(\exprEval{e}{a})$.
From  $se' = \symRead{sm}{se}$, we have that $\mu(se') = \mu(sm)(\mu(se))$ (from the evaluation of symbolic expressions).
From this and $\mu(sm) = m$, we have that $\mu(se') = m(\mu(se))$.
From $se = \exprEval{e}{sa}$, $\mu(se) = \mu(\exprEval{e}{sa}) = \exprEval{e}{\mu(sa)} = \exprEval{e}{a}$ (from $\mu(sa) = a$ and Lemma~\ref{lemma:symbolic:useful}).
Therefore, $\mu(se') = m(\exprEval{e}{a})$.

\item[\textbf{Rule \textsc{Store}.}]
Assume that $\tup{m,a} \eval{p}{\tau} \tup{m',a'}$ has been derived using the \textsc{Store} rule in the concrete semantics.
Then, $p(a(\pc)) = \pstore{x}{e}$, $m' = m[\exprEval{e}{a} \mapsto a(x) ]$, $a' = a[\pc \mapsto a(\pc)+1]$, and $\tau = \storeObs{\exprEval{e}{a}}$.
From $p(a(\pc)) = \pstore{x}{e}$, $\mu(sa) = a$, and $\pc$ being always concrete, we have that $p(sa(\pc)) = \pstore{x}{e}$.
We proceed by applying the \textsc{Store-Symb} rule of the symbolic semantics starting from $\tup{sm,sa}$.
Hence, $\tup{sm,sa} \symEval{p}{\tau'} \tup{sm',sa'}$, where $se = \exprEval{e}{sa}$, $\tau' = \storeObs{se}$, $sm' = \symWrite{sm}{se}{sa(x)}$, and $sa' = sa[\pc \mapsto sa(\pc) +1]$.
For showing that $\mu(sm') = m'$, we proceed as follows.
From the evaluation of symbolic expressions, 
$\mu(sm') 
	= \mu(\symWrite{sm}{se}{sa(x)})
	= \mu(sm)[\mu(se) \mapsto \mu(sa(x))]$.
From this, $\mu(sm) = m$, $\mu(sa) = a$, and Lemma~\ref{lemma:symbolic:useful}, we have
$\mu(sm)[\mu(se) \mapsto \mu(sa(x))]
	= m[\mu(se) \mapsto \mu(sa)(x)]
	= m[\mu(se) \mapsto a(x)]
$.
From this, $\mu(sa) = a$, $se = \exprEval{e}{sa}$, and Lemma~\ref{lemma:symbolic:useful}, we have
$m[\mu(se) \mapsto a(x)] 
	= m[\mu(\exprEval{e}{sa}) \mapsto a(x)]
	= m[\exprEval{e}{\mu(sa)} \mapsto a(x)]
	= m[\exprEval{e}{a} \mapsto a(x)]
	$.
Therefore, $\mu(sm') = m[\exprEval{e}{a} \mapsto a(x)]$.
Since $m' =m[\exprEval{e}{a} \mapsto a(x) ]$, we proved that $\mu(sm') = m'$.
From $a' = a[\pc \mapsto a(\pc)+1]$  and $sa' = sa[\pc \mapsto sa(\pc) +1]$, we have that 
$\mu(sa') 
	= \mu(sa[\pc \mapsto sa(\pc) +1]) 
	= \mu(sa)[\pc \mapsto \mu(sa(\pc)) +1] 
	= a[\pc \mapsto a(\pc) +1] 
	= a'$ (by relying on $\mu(sa) = a$ and Lemma~\ref{lemma:symbolic:useful}).
From $\tau = \storeObs{\exprEval{e}{a}}$ and $\tau' = \storeObs{\exprEval{e}{sa}}$, we have that $\mu(\tau') =  \storeObs{\mu(\exprEval{e}{sa})} = \storeObs{\exprEval{e}{\mu(sa)}} = \storeObs{\exprEval{e}{a}} = \tau$ (by applying Lemma~\ref{lemma:symbolic:useful} and relying on $\mu(sa) = a$).
Finally, from $\pathCond{\tau} = \top$, we have $\mu \models \pathCond{\tau}$.

We now prove our claim that $\mu(se') = m(\exprEval{e}{a})$.
From  $se' = \ite{se = 0}{sm(0)}{\ite{se = 1}{sm(1)}{\ldots}}$, we have that $\mu(se') = \mu(\ite{se = 0}{sm(0)}{\ite{se = 1}{sm(1)}{\ldots}})$.
Hence, $\mu(se') = \mu(sm(\mu(se)))$ (by relying on the semantics of $\ite{se}{se'}{se''}$).
From this and $se = \exprEval{e}{sa}$, we have that $\mu(se') = \mu(sm(\mu(\exprEval{e}{sa})))$.
By applying Lemma~\ref{lemma:symbolic:useful}, we have $\mu(se') = \mu(sm)(\exprEval{e}{\mu(sa)})$.
From this, $\mu(sm) = m$, and $\mu(sa) = a$, we have $\mu(se') = m(\exprEval{e}{a})$.

\item[\textbf{Rule \textsc{Beqz-Sat}.}]
Assume that $\tup{m,a} \eval{p}{\tau} \tup{m',a'}$ has been derived using the \textsc{Beqz-Sat} rule in the concrete semantics.
Then, $p(a(\pc)) = \pjz{x}{\lbl}$, $a(x) = 0$, $m' = m$, $a' = a[\pc \mapsto \lbl]$, and $\tau = \pcObs{\lbl}$.
From $p(a(\pc)) = \pjz{x}{\lbl}$, $\mu(sa) = a$, and $\pc$ being always concrete, we have that $p(sa(\pc)) = \pjz{x}{\lbl}$.
There are two cases:
\begin{compactitem}
\item Assume that $sa(x) \in \Val$.
From $a(x) = 0$ and $\mu(sa) = a$, we have that $sa(x) = 0$.
Hence, we proceed by applying the \textsc{Beqz-Concr-Sat} rule of the symbolic semantics starting from $\tup{sm,sa}$.
Hence, $\tup{sm,sa} \symEval{p}{\tau'} \tup{sm',sa'}$, where $\tau' = \symPcObs{\top} \concat \pcObs{\lbl}$, $sm' = sm$, and $sa' = sa[\pc \mapsto \lbl]$.
From $m' = m$, $\mu(sm) = m$, and $sm' = sm$, we have that $\mu(sm') = m'$.
From $a' = a[\pc \mapsto \lbl]$  and $sa' = sa[\pc \mapsto \lbl]$, we have that 
$\mu(sa') 
	= \mu(sa[\pc \mapsto \lbl]) 
	= \mu(sa)[\pc \mapsto \lbl] 
	= a[\pc \mapsto \lbl] 
	= a'$ (by relying on $\mu(sa) = a$  and Lemma~\ref{lemma:symbolic:useful}).
From $\tau = \pcObs{\lbl}$ and $\tau' = \symPcObs{\top} \concat \pcObs{\lbl}$, we have that $\mu(\tau') =  \tau$.
Finally, from $\pathCond{\tau'} = \top$, we have that $\mu \models \pathCond{\tau'}$.

\item Assume that $sa(x) \not\in \Val$ and let $se = sa(x)$.
We proceed by applying the \textsc{Beqz-Symb-Sat} rule of the symbolic semantics starting from $\tup{sm,sa}$.
Hence, $\tup{sm,sa} \symEval{p}{\tau'} \tup{sm',sa'}$, where $\tau' = \symPcObs{sa(x) = 0} \concat \pcObs{\lbl}$, $sm' = sm$, and $sa' = sa[\pc \mapsto \lbl]$.
From $m' = m$, $\mu(sm) = m$, and $sm' = sm$, we have that $\mu(sm') = m'$.
From $a' = a[\pc \mapsto \lbl]$  and $sa' = sa[\pc \mapsto \lbl]$, we have that 
$\mu(sa') 
	= \mu(sa[\pc \mapsto \lbl]) 
	= \mu(sa)[\pc \mapsto \lbl] 
	= a[\pc \mapsto \lbl] 
	= a'$ (by relying on $\mu(sa) = a$  and Lemma~\ref{lemma:symbolic:useful}).
From $\tau = \pcObs{\lbl}$ and $\tau' =  \symPcObs{sa(x) = 0} \concat \pcObs{\lbl}$, we have that $\mu(\tau') =  \tau$.
We now show that $\mu \models \pathCond{\tau'}$.
Since $\pathCond{\tau'}$ is $sa(x) = 0$, we need to show that $\mu(sa(x)) = 0$.
From Lemma~\ref{lemma:symbolic:useful}, $\mu(sa(x)) = \mu(sa)(x)$.
From this and $\mu(sa) = a$, $\mu(sa(x)) = a(x)$.
Therefore, $\mu(sa(x)) = 0$ since $a(x) = 0$.
Hence, $\mu \models sa(x) = 0$, i.e., $\mu \models \pathCond{\tau'}$.
\end{compactitem}

\item[\textbf{Rule \textsc{Beqz-Unsat}.}]
The proof of this case is similar to that of \textbf{Rule \textsc{Beqz-Unsat}.}

\item[\textbf{Rule \textsc{Jmp}.}]
Assume that $\tup{m,a} \eval{p}{\tau} \tup{m',a'}$ has been derived using the \textsc{Jmp} rule in the concrete semantics.
Then, $p(a(\pc)) = \pjmp{e}$, $\lbl = \exprEval{e}{a}$, $m' = m$, $a' = a[\pc \mapsto \lbl]$, and $\tau = \pcObs{\lbl}$.
From $p(a(\pc)) =  \pjmp{e}$, $\mu(sa) = a$, and $\pc$ being always concrete, we have that $p(sa(\pc)) =  \pjmp{e}$.
There are two cases:
\begin{compactitem}
\item Assume that $\exprEval{e}{sa} \in \Val$.
We proceed by applying the \textsc{Jmp-Concr} rule of the symbolic semantics starting from $\tup{sm,sa}$.
Hence, $\tup{sm,sa} \symEval{p}{\tau'} \tup{sm',sa'}$, where $\lbl' = \exprEval{e}{sa}$, $\tau' = \symPcObs{\top} \concat \pcObs{\lbl}'$, $sm' = sm$, and $sa' = sa[\pc \mapsto \lbl']$.
Observe that $\lbl = \lbl'$ (since $\lbl' = \exprEval{e}{sa}$ and $\mu(sa) = a$).
From $m' = m$, $\mu(sm) = m$, and $sm' = sm$, we have that $\mu(sm') = m'$.
From $a' = a[\pc \mapsto \lbl]$, $sa' = sa[\pc \mapsto \lbl']$, and $\lbl = \lbl'$, we have that 
$\mu(sa') 
	= \mu(sa[\pc \mapsto \lbl]) 
	= \mu(sa)[\pc \mapsto \lbl] 
	= a[\pc \mapsto \lbl] 
	= a'$ (by relying on $\mu(sa) = a$  and Lemma~\ref{lemma:symbolic:useful}).
From $\tau = \pcObs{\lbl}$, $\tau' = \symPcObs{\top} \concat \pcObs{\lbl'}$, and $\lbl = \lbl'$, we have that $\mu(\tau') =  \tau$.
Finally, from $\pathCond{\tau'} = \top$, we have that $\mu \models \pathCond{\tau'}$.

\item Assume that $\exprEval{e}{sa} \not\in \Val$ and let $se = \exprEval{e}{sa}$.
We proceed by applying the \textsc{Jmp-Symb} rule of the symbolic semantics starting from $\tup{sm,sa}$, where we pick $\lbl$ as target label.
Hence, $\tup{sm,sa} \symEval{p}{\tau'} \tup{sm',sa'}$, where $\tau' = \symPcObs{\exprEval{e}{sa} = \lbl} \concat \pcObs{\lbl}$, $sm' = sm$, and $sa' = sa[\pc \mapsto \lbl]$.
From $m' = m$, $\mu(sm) = m$, and $sm' = sm$, we have that $\mu(sm') = m'$.
From $a' = a[\pc \mapsto \lbl]$  and $sa' = sa[\pc \mapsto \lbl]$, we have that 
$\mu(sa') 
	= \mu(sa[\pc \mapsto \lbl]) 
	= \mu(sa)[\pc \mapsto \lbl] 
	= a[\pc \mapsto \lbl] 
	= a'$ (by relying on $\mu(sa) = a$  and Lemma~\ref{lemma:symbolic:useful}).
From $\tau = \pcObs{\lbl}$ and $\tau' =  \symPcObs{\exprEval{e}{sa} = \lbl} \concat \pcObs{\lbl}$, we have that $\mu(\tau') =  \tau$.
We now show that $\mu \models \pathCond{\tau'}$.
Since $\pathCond{\tau'}$ is $\exprEval{e}{sa} = \lbl$, we need to show that $\mu(\exprEval{e}{sa}) = \lbl$.
From Lemma~\ref{lemma:symbolic:useful}, $\mu(\exprEval{e}{sa}) = \exprEval{e}{\mu(sa)}$.
From this and $\mu(sa) = a$,  $\mu(\exprEval{e}{sa}) = \exprEval{e}{a}$.
Therefore, $\mu(\exprEval{e}{sa}) = \lbl$ since $\exprEval{e}{a} = \lbl$.
Hence, $\mu \models \exprEval{e}{sa} = \lbl$, i.e., $\mu \models \pathCond{\tau'}$.
\end{compactitem}

\item[\textbf{Rule \textsc{Terminate}.}]
The proof of this case is similar to that of \textbf{Rule \textsc{Skip}.}

\end{compactitem}
This completes the proof of our claim.
\end{proof}

In Lemma~\ref{lemma:symbolic:one-step-completeness-speculative}, we prove the completeness of one step of the symbolic semantics.

\begin{restatable}{lem}{}\label{lemma:symbolic:one-step-completeness-speculative}
Let $p$ be a program, $sa$ be a symbolic assignment, $sm$ be a symbolic memory,  and $ss$ be a sequence of symbolic speculative states.
Whenever  $\tup{\ctr,\tup{m,a},s} \ameval{p}{\mbp_n}{\tau} \tup{\ctr',\tup{m',a'},s'}$ and $\wf{s}$, then 
for all valuations $\mu : \SymVal \to \Val$ such that $\mu(sa) = a$, $\mu(sm) = m$, and $\mu(ss) = s$,
then there are $\tau', \ctr', sm',sa'$ such that 
$\tup{\ctr, \tup{sm,sa}, ss} \symSpeval{p}{\mbp_n}{\tau'} \tup{\ctr', \tup{sm',sa'}, ss'}$, $\mu(sm') = m'$, $\mu(sa') = a'$, $\mu(ss') = s'$, $\mu(\tau') = \tau$, and $\mu \models \pathCond{\tau'}$.
\end{restatable}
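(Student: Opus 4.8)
The plan is to proceed by case distinction on the rule of the concrete always-mispredict semantics ($\ameval{p}{\mbp_n}{}$, Figure~\ref{figure:always-mispredict-semantics}) used to derive the step $\tup{\ctr,\tup{m,a},s} \ameval{p}{\mbp_n}{\tau} \tup{\ctr',\tup{m',a'},s'}$. There are exactly three rules to consider: \textsc{Am-NoBranch}, \textsc{Am-Branch}, and \textsc{Am-Rollback}. In each case I would identify which symbolic rule from Figure~\ref{figure:symbolic:symbolic-always-mispredict-semantics} must fire (this is forced by the instruction $p(sa(\pc))$, since $\pc$ is always concrete and $\mu(sa)(\pc) = sa(\pc)$, so $p(\mu(sa)(\pc)) = p(sa(\pc))$), construct the witness symbolic trace $\tau'$ and resulting symbolic configuration, and then discharge the five obligations $\mu(sm') = m'$, $\mu(sa') = a'$, $\mu(ss') = s'$, $\mu(\tau') = \tau$, and $\mu \models \pathCond{\tau'}$.

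First I would handle \textsc{Am-NoBranch}: here $p(\sigma(\pc)) \neq \pjz{x}{\lbl}$ and $\sigma \eval{p}{\tau} \sigma'$ with $\exhaustedTop{s}$. I would invoke Lemma~\ref{lemma:symbolic:one-step-completeness-non-speculative} on the non-speculative step $\tup{m,a} \eval{p}{\tau} \tup{m',a'}$ (using $\mu(sa) = a$, $\mu(sm) = m$) to obtain a symbolic step $\tup{sm,sa} \symEval{p}{\tau'} \tup{sm',sa'}$ with $\mu(sm') = m'$, $\mu(sa') = a'$, $\mu(\tau') = \tau$, and $\mu \models \pathCond{\tau'}$. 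Then the symbolic \textsc{Am-NoBranch} rule applies (noting $\exhaustedTop{ss}$ follows from $\exhaustedTop{s}$ since $\mu(ss) = s$ and the top speculative window is a concrete natural number), and $ss' $ is obtained from $ss$ by $\decrementTop{}$ or $\zeroesTop{}$ exactly as $s'$ is obtained from $s$; these operations only touch concrete windows, so $\mu(ss') = s'$ follows immediately. The \textsc{Am-Branch} case is the delicate one: the concrete rule sets $\lbl$ by case analysis on whether $\sigma(x) = 0$, whereas the symbolic rule derives the branch target through a symbolic non-speculative step $\sigma \symEval{p}{\symPcObs{se}\concat\pcObs{\lbl'}}\sigma'$ and then sets $\lbl$ by comparing $\lbl'$ with $sa(\pc)+1$. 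I would first apply Lemma~\ref{lemma:symbolic:one-step-completeness-non-speculative} to the underlying branch step $\tup{m,a}\eval{p}{\pcObs{\ell_c}}\tup{m,a[\pc\mapsto\ell_c]}$ (where $\ell_c$ is the concrete target taken by \textsc{Am-Branch}, i.e.\ either $a(\pc)+1$ or $\lbl'$) to get the matching symbolic \textsc{Beqz}$\ldots$ rule, and then argue — by sub-case analysis on which of the four \textsc{Beqz} symbolic rules fired (\textsc{Concr-Sat}, \textsc{Concr-Unsat}, \textsc{Symb-Sat}, \textsc{Symb-Unsat}) — that the symbolic target $\ell'$ equals $a(\pc)+1$ iff $a(x)=0$, hence the symbolic and concrete definitions of $\lbl$ agree under $\mu$; this is essentially the same bookkeeping carried out in the soundness direction (Lemma~\ref{lemma:symbolic:one-step-soundness-speculative}, \textsc{Am-Branch} case) run backwards. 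The new speculative state pushed onto $ss'$ has identifier $\ctr$, window $\mathit{min}(w, \window{ss}-1) = \mathit{min}(w,\window{s}-1)$ (concrete, equal by $\mu(ss)=s$), label the symbolic $\lbl$, and snapshot $\tup{sm,sa}$ with $\mu(\tup{sm,sa}) = \tup{m,a} = \sigma$, so $\mu(ss') = s'$; and $\mu(\tau') = \tau$ holds because $\mu$ drops the $\symPcObs{se}$ observation and leaves $\startObs{\ctr}\concat\pcObs{\lbl}$ intact. For \textsc{Am-Rollback} I would use that $\wf{s}$ guarantees the last speculative state $\tup{\sigma',\id,0,\lbl}$ admits a non-speculative step $\sigma'\eval{p}{\tau_r}\sigma''$ with $\sigma''(\pc)\neq\lbl$; applying Lemma~\ref{lemma:symbolic:one-step-completeness-non-speculative} to this step inside the corresponding symbolic speculative state (whose snapshot maps under $\mu$ to $\sigma'$) yields a symbolic step whose path condition is among the conjuncts of $\pathCond{ss'}$ — wait, more precisely I need $\mu \models \pathCond{\tau'}$ for the emitted rollback trace $\tau' = \rollbackObs{\id}\concat\pcObs{\sigma''(\pc)}$, whose path condition is the path condition of the internal symbolic non-speculative step; this is exactly $\pathCond{}$ of the last speculative state, which $\mu$ satisfies because $\wf{s}$ forces the concrete branch to actually mispredict, matching the symbolic \textsc{Beqz-Symb-$\ast$} guard. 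The side condition $\sigma''(\pc)\neq\lbl$ of the symbolic \textsc{Am-Rollback} rule transfers from its concrete counterpart since program counters are concrete.

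\textbf{Main obstacle.} The hard part will be the \textsc{Am-Branch} case, and specifically matching the two different ways the branch target is computed: concretely by inspecting $\sigma(x)$, symbolically by running a \textsc{Beqz} symbolic rule and then comparing the resulting label with $sa(\pc)+1$. I must carefully track, through all four \textsc{Beqz} symbolic sub-rules, that the emitted $\symPcObs{se}$ condition is exactly the one needed so that $\mu\models\pathCond{\tau'}$ and that the chosen mispredicted target coincides with the concrete one under $\mu$. A secondary subtlety, recurring in every case, is verifying that $\exhaustedTop{ss} \iff \exhaustedTop{s}$ and that $\window{ss} = \window{s}$, $\minWindow{}$-style quantities are preserved by $\mu$ — these hold trivially because speculative-window bookkeeping is entirely over concrete naturals untouched by $\mu$, but they need to be stated. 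Everything else reduces to invoking Lemma~\ref{lemma:symbolic:one-step-completeness-non-speculative} and Lemma~\ref{lemma:symbolic:useful} and unwinding definitions, closely mirroring the soundness lemma's proof structure.
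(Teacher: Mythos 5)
Your proposal is correct and follows essentially the same route as the paper's proof: a case split on \textsc{Am-NoBranch}, \textsc{Am-Branch}, and \textsc{Am-Rollback}, reducing each case to Lemma~\ref{lemma:symbolic:one-step-completeness-non-speculative} (with Lemma~\ref{lemma:symbolic:useful} for the valuation bookkeeping), observing that speculative-window manipulation is purely concrete, and invoking $\wf{s}$ in the rollback case exactly where the paper does. The only points to tighten in the \textsc{Am-Branch} case are that the non-speculative step fed to the completeness lemma must be the correctly resolved branch, i.e.\ the label in $\{\lbl', a(\pc)+1\} \setminus \{\lbl\}$ (the paper's $\lbl''$), not the mispredicted target that \textsc{Am-Branch} writes into $\pc$, and that the final matching of the symbolic and concrete mispredicted labels additionally uses well-formedness of $p$ (no instruction $\cmd{n}{\pjz{x}{n+1}}$) to exclude the degenerate case where the jump target coincides with the fall-through --- both of which your planned sub-case analysis over the four \textsc{Beqz} rules would surface.
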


\begin{proof}
Let $p$ be a program, $sa$ be a symbolic assignment, $sm$ be a symbolic memory, $ss$ be a symbolic speculative state.
Moreover, assume that $\tup{\ctr,\tup{m,a},s} \ameval{p}{\mbp_n}{\tau} \tup{\ctr',\tup{m',a'},s'}$ holds and $\wf{s}$.
Finally, let $\mu : \SymVal \to \Val$ be an arbitrary valuation such that $\mu(sa) = a$, $\mu(sm) = m$, and $\mu(ss) = s$.
We proceed by case distinction on the rules defining $\ameval{p}{\mbp_n}{}$.
\begin{compactitem}
\item[\textbf{Rule \textsc{Am-NoBranch}.}]
Assume that $\tup{\ctr,\tup{m,a},s} \ameval{p}{\mbp_n}{\tau} \tup{\ctr',\tup{m',a'},s'}$ has been derived using the \textsc{Se-NoBranch} rule in the concrete semantics.
Then, $p(a(\pc)) \neq \pjz{x}{\lbl}$, $\tup{m,a} \eval{p}{\tau} \tup{m',a'}$, $\exhaustedTop{s}$, $\ctr' = \ctr$, and $s' = \decrementTop{s}$.
Observe that $p(sa(\pc)) \neq \pjz{x}{\lbl}$, and $\exhaustedTop{ss}$ immediately follow from   $p(a(\pc)) \neq \pjz{x}{\lbl}$ and $\exhaustedTop{s}$ respectively.
We can apply Lemma~\ref{lemma:symbolic:one-step-completeness-non-speculative} to $\tup{m,a} \eval{p}{\tau} \tup{m',a'}$.
Therefore, for all valuation $\mu$ such that $\mu(sa) = a$ and $\mu(sm) = m$, there are $sm', sa', \tau'$ such that  $\tup{sm,sa} \symEval{p}{\tau'} \tup{sm',sa'}$,  $\mu(sm') = m'$, $\mu(sa') = a'$,  $\mu(\tau') = \tau$, and $\mu \models \pathCond{\tau'}$.
Since $\mu$ is a valuation for which $\mu(sa) = a$ and $\mu(sm) = m$, we therefore have that $\tup{sm,sa} \symEval{p}{\tau'} \tup{sm',sa'}$,  $\mu(sm') = m'$, $\mu(sa') = a'$,  $\mu(\tau') = \tau$, and $\mu \models \pathCond{\tau'}$.
Hence, we can apply the \textsc{Am-NoBranch} rule in the symbolic semantics to $\tup{\ctr,\tup{sm,sa},ss}$.
Thus, $\tup{\ctr, \tup{sm,sa}, ss} \symSpeval{p}{\mbp_n}{\tau'} \tup{\ctr'', \tup{sm',sa'}, ss'}$, where $\ctr'' = \ctr$.
We have already established that $\mu(sm') = m'$, $\mu(sa') = a'$, and $\mu(\tau') = \tau$, and $\mu \models \pathCond{\tau'}$ (which all follow from the application of Lemma~\ref{lemma:symbolic:one-step-completeness-non-speculative}).
We still have to show $\mu(ss') = s'$.
This immediately follows from $\mu(ss) = s$, $s' =
 	\begin{cases}
 		\decrementTop{s} & \text{if}\ p(a(\pc)) \neq \pbarrier\\
		\zeroesTop{s}  & 	\text{otherwise}
 	\end{cases}$, $ss' =
 	\begin{cases}
 		\decrementTop{ss} & \text{if}\ p(sa(\pc)) \neq \pbarrier\\
		\zeroesTop{ss}  & 	\text{otherwise}
 	\end{cases}$, and $sa(\pc) = a(\pc)$.

\item[\textbf{Rule \textsc{Am-Branch}.}]
Assume that $\tup{\ctr,\tup{m,a},s} \ameval{p}{\mbp_n}{\tau} \tup{\ctr',\tup{m',a'},s'}$ has been derived using the \textsc{Se-Branch} rule in the concrete semantics.
Then, $p(a(\pc)) = \pjz{x}{\lbl'}$, $\exhaustedTop{s}$, $\lbl = \begin{cases} a(\pc) + 1 & \text{if}\ a(x) =0 \\ \lbl' & \text{if}\ a(x) \neq 0 \end{cases}$, $\tau = \startObs{\ctr} \concat \pcObs{\lbl}$, $m' = m$, $a' = a[\pc \mapsto \lbl]$, $\ctr' = \ctr+1$, and $s' = \decrementTop{s} \concat \tup{ \tup{m,a}, \ctr, min(w, \window{s} -1), \lbl}$.
Observe that $\exhaustedTop{ss}$ directly follows from $\mu(ss) = s$ and $\exhaustedTop{s}$ and $p(sa(\pc)) = \pjz{x}{\lbl'}$ follows from $\pc$ being always concrete and $\mu(sa) = a$.
Observe also that $\tup{m,a} \eval{p}{\pcObs{\lbl''}} \tup{m,a[\pc \mapsto \lbl'']}$ holds where $\lbl'' \in \{ \lbl', a(\pc)+1 \} \setminus \{ \lbl \}$.
We can apply  Lemma~\ref{lemma:symbolic:one-step-completeness-non-speculative} to $\tup{m,a} \eval{p}{\pcObs{\lbl''}} \tup{m,a[\pc \mapsto \lbl'']}$.
Therefore, for all valuations $\mu$ such that $\mu(sa) =a$ and $\mu(sm) = m$, $\tup{sm,sa} \symEval{p}{\tau'} \tup{sm',sa'}$, $\mu(sm') = m$, $\mu(sa') = a[\pc \mapsto \lbl'']$, $\mu(\tau') = \pcObs{\lbl''}$, and $\mu \models \pathCond{\tau'}$.
Since $\mu(sa) = a$ and $\mu(sm) = m$, we thus have that  $\tup{sm,sa} \symEval{p}{\tau''} \tup{sm',sa'}$, $\mu(sm') = m$, $\mu(sa') = a[\pc \mapsto \lbl'']$, $\mu(\tau'') = \pcObs{\lbl''}$, and $\mu \models \pathCond{\tau''}$.
Observe that, since  $p(sa(\pc)) = \pjz{x}{\lbl'}$, $\tau'$ is of the form $\symPcObs{se} \concat \pcObs{\lbl''}$.
Therefore, we can apply the \textsc{Am-Branch-Symb} rule in the symbolic semantics to $\tup{\ctr,\tup{sm,sa},ss}$ to derive $\tup{\ctr,\tup{sm,sa},ss} \ameval{p}{w}{\tau'} \tup{\ctr+1, \tup{sm',sa'}, ss'}$, where 
$\tau' = \symPcObs{se} \concat \startObs{\id} \concat \pcObs{\lbl'''}$,
$sm' = sm$,
$sa' = sa[\pc \mapsto \lbl''']$,
$
\lbl''' = 
\begin{cases}
\sigma(\pc) + 1 & \text{if}\ \lbl'' \neq sa(\pc) + 1 \\
\lbl' 	 	& \text{if}\ \lbl'' = sa(\pc) + 1 \\
\end{cases}$, 
$ss' = \decrementTop{ss} \concat \tup{\tup{sm,sa}, \ctr, \mathit{min}(w, \window{ss}-1), \lbl'''}$, and $\id = \ctr$.
We claim that $\lbl''' = \lbl$.
From $\mu(sm) = m$ and $m' = m$, we get that $\mu(sm') = \mu(sm) = m = m'$.
From $\mu(sa) = a$, $a' = a[\pc \mapsto \lbl]$, $sa' = sa[\pc \mapsto \lbl''']$, and $\lbl = \lbl'''$, we have that 
$
\mu(sa') = \mu(sa[\pc \mapsto \lbl''']) = \mu(sa)[\pc \mapsto \lbl'''] = a[\pc \mapsto \lbl'''] = a[\pc \mapsto \lbl] = a'$.
From $\tau = \startObs{\ctr} \concat \pcObs{\lbl}$,  $\tau' = \symPcObs{se} \concat \startObs{\id} \concat \pcObs{\lbl'''}$, and $\lbl = \lbl'''$, we have $\mu(\tau') = \tau$.
From $\mu(ss) = s$, $\mu(sa) = a$, $\mu(sm) = m$, $s' = \decrementTop{s} \concat \tup{ \tup{m,a}, \ctr, min(w, \window{s} -1), \lbl}$, $ss' = \decrementTop{ss} \concat \tup{\tup{sm,sa}, \ctr, \mathit{min}(w, \window{ss}-1), \lbl'''}$, and $\lbl = \lbl'''$, we have 
$\mu(ss') 
	= \mu(\decrementTop{ss} \concat \tup{\tup{sm,sa}, \ctr, \mathit{min}(w, \window{ss}-1), \lbl'''}) 
	= \decrementTop{\mu(ss)} \concat \tup{\tup{\mu(sm),\mu(sa)}, \ctr, \mathit{min}(w, \window{\mu(ss)}-1), \lbl'''} 
	= \decrementTop{s} \concat \tup{\tup{m,a}, \ctr, \mathit{min}(w, \window{s}-1), \lbl'''}
	= \decrementTop{s} \concat \tup{\tup{m,a}, \ctr, \mathit{min}(w, \window{s}-1), \lbl}
	= s'$.
Finally, $\mu \models \pathCond{\tau'}$ immediately follows from $\tau' = \symPcObs{se} \concat \startObs{\id} \concat \pcObs{\lbl'''}$, $\tau'' = \symPcObs{se} \concat \pcObs{\lbl''}$, and $\mu \models \pathCond{\tau''}$.

We now prove our claim that $\lbl = \lbl'''$.
There are two cases:
\begin{compactitem}
\item[$\bm{a(x) = 0}: $]
Then, $\lbl = a(\pc) + 1$.
Therefore, $\lbl'' = \lbl'$.
From this and $p$'s well-formedness, it follows that $\lbl'' \neq sa(\pc) +1$ (since we consider only \lang{} programs such that $p(\lbl) = \pjz{x}{\lbl'}$ and $\lbl' \neq \lbl+1$).
Hence, $\lbl'''= sa(\pc) +1$.
From this and $sa(\pc) = a(\pc)$, we have that $\lbl = \lbl'''$.

\item[$\bm{a(x) \neq 0}: $]
Then, $\lbl = \lbl'$.
Therefore, $\lbl'' = a(\pc) +1$.
From this and $p$'s well-formedness, $\lbl'''= \lbl$.
Hence, $\lbl = \lbl'''$.

\end{compactitem}

\item[\textbf{Rule \textsc{Am-Rollback}.}]
Assume that $\tup{\ctr,\tup{m,a},s} \ameval{p}{\mbp_n}{\tau} \tup{\ctr',\tup{m',a'},s'}$ has been derived using the \textsc{Am-Rollback} rule in the concrete semantics.
Then, $s = s_B \concat \tup{\id, 0, \lbl, \tup{m'',a''}}$, $\tau = \rollbackObs{\id} \concat \pcObs{a'(\pc)}$, $\tup{m'',a''} \eval{p}{\tau''} \tup{m''',a'''}$,  $\ctr' = \ctr$, $m' = m'''$, $a' = a'''$, and $s' = s_B$. 
From $\mu(ss) = s$ and $s = s_B \concat \tup{\id, 0, \lbl, \tup{m'',a''}}$, it follows that $ss =  ss_B \concat \tup{\id, 0, \lbl, \tup{sm'',sa''}}$ and $\mu(sa'') = a''$ and $\mu(sm'') = m''$.
Since $\mu(sa'') = a''$ and $\mu(sm'') = m''$, we can apply Lemma~\ref{lemma:symbolic:one-step-completeness-non-speculative} to $\tup{m'',a''} \eval{p}{\tau''} \tup{m''',a'''}$.
We, therefore, obtain that there are $sm''', sa''', \tau'''$ such that $\tup{sm'',sa''} \symEval{p}{\tau'} \tup{sm''',sa'''}$,  $\mu(sm''') = m'''$, $\mu(sa''') = a'''$, $\mu(\tau''') = \tau''$, and $\mu \models \pathCond{\tau'''}$.
We claim that $sa'''(\pc) \neq \lbl$.
Hence, we can apply the \textsc{Am-Rollback} rule in the symbolic semantics to $\tup{\ctr,\tup{sm,sa},ss}$.
Thus, $\tup{\ctr, \tup{sm,sa}, ss} \symSpeval{p}{\mbp_n}{\tau'} \tup{\ctr'', \tup{sm',sa'}, ss'}$, where $\ctr'' = \ctr$, $sm' = sm'''$, $sa' = sa'''$, and $ss' = ss_B$.
From $\ctr''  = \ctr$ and $\ctr' = \ctr$, we have $\ctr' = \ctr''$.
From $\mu(sa''') =a'''$, $a' = a'''$, and $sa' = sa'''$, we have $\mu(sa') = a'$.
From $\mu(sm''') =m'''$, $m' = m'''$, and $sm' = sm'''$, we have $\mu(sm') = m'$.
From $\tau = \rollbackObs{\id} \concat \pcObs{a'(\pc)}$ and $\tau' =  \rollbackObs{\id} \concat \pcObs{sa'(\pc)}$, we have $\mu(\tau') = \tau$ and $\mu \models \pathCond{\tau'}$.
Finally, from $s' = s_B$ and $ss' = ss_B$, we have
$\mu(ss') 
	= \mu(ss_B)
	= s_B$ (thanks to $\mu(ss) = s$) and $\mu \models \pathCond{ss'}$ (thanks to $\mu \models \pathCond{ss})$.

We now prove our claim that $sa'''(\pc) \neq \lbl$.
From $s = s_B \concat \tup{\id, 0, \lbl, \tup{m'',a''}}$, $\tup{m'',a''} \eval{p}{\tau''} \tup{m''',a'''}$, and $\wf{s}$, we immediately have that $a'''(\pc) \neq \lbl$.
From this and $sa'''(\pc) = a'''(\pc)$, we have $sa'''(\pc) \neq \lbl$.
 
\end{compactitem}
This completes the proof of our claim.
\end{proof}

In Proposition~\ref{propostion:alwasy-mispred-symbolic:completeness} we finally prove the completeness of our symbolic semantics.

\begin{restatable}{prop}{}\label{propostion:alwasy-mispred-symbolic:completeness}
Let $p$ be a program.
Whenever $\tup{0, \sigma, \emptysequence} \ameval{p}{\bp}{\tau'}^* \tup{\ctr,\sigma',\emptysequence}$, $\sigma \in \Init$, and $\sigma' \in \Final$, then there is a valuation $\mu$, a symbolic trace $\tau'$, and a final symbolic configuration $\tup{sm,sa}$ such that $\tup{0,\tup{sm_0,sa_0}, \emptysequence, \top} \symSpeval{p}{\mbp}{\tau'}^* \tup{\ctr,\tup{sm,sa}, \emptysequence}$, $\mu( \tup{sm_0,sa_0} ) = \sigma$, $\mu(\tup{sm,sa}) = \sigma'$, $\mu(\tau') = \tau$, and $\mu \models \pathCond{\tau'}$.
\end{restatable}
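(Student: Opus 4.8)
The plan is to lift a concrete always-mispredict run to a symbolic one by an induction on the length of the run, carrying along a valuation $\mu$ that witnesses the correspondence between each concrete configuration and its symbolic counterpart. First I would set up the correspondence invariant: for a concrete extended configuration $\tup{\ctr, \tup{m,a}, s}$ and a symbolic one $\tup{\ctr, \tup{sm,sa}, ss}$, we say $\mu$ relates them if $\mu(sm) = m$, $\mu(sa) = a$, and $\mu(ss) = s$. Since $\sigma \in \Init$, we have $a_0(\pc) = 0 = sa_0(\pc)$; choosing $\mu$ such that $\mu(\symb{x}) = a_0(x)$ for every register $x$ and $\mu(\symb{m^n}) = m_0(n)$ for every address $n$ gives $\mu(\tup{sm_0,sa_0}) = \sigma$, establishing the base case. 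Here I would use the fixed symbolic initial state $\tup{sm_0,sa_0}$ from Appendix~\ref{appendix:symbolic-semantics}.

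The inductive step is where the real work lies, and it is handled almost entirely by Lemma~\ref{lemma:symbolic:one-step-completeness-speculative}. Given the concrete step $\tup{\ctr_i, \sigma_i, s_i} \ameval{p}{\bp}{\tau_i} \tup{\ctr_{i+1}, \sigma_{i+1}, s_{i+1}}$ together with a valuation $\mu$ relating $\tup{\ctr_i,\sigma_i,s_i}$ to a symbolic configuration $\tup{\ctr_i, \tup{sm_i,sa_i}, ss_i}$, the lemma produces a symbolic step $\tup{\ctr_i, \tup{sm_i,sa_i}, ss_i} \symSpeval{p}{\mbp}{\tau_i'} \tup{\ctr_{i+1}, \tup{sm_{i+1},sa_{i+1}}, ss_{i+1}}$ with $\mu(\tau_i') = \tau_i$, $\mu$ still relating the successors, and $\mu \models \pathCond{\tau_i'}$. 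The one precondition of the lemma that must be discharged is $\wf{s_i}$, i.e. that every speculative state currently on the stack is next-step-mispredict well-formed; this follows because the always-mispredict semantics only ever pushes speculative states whose stored configuration leads, under one step of the non-speculative semantics, to a program counter different from the recorded prediction $\lbl$ --- this is exactly how \textsc{Am-Branch} chooses $\lbl$. So I would first prove, as a small auxiliary invariant along the run, that every reachable concrete extended configuration $\tup{\ctr_i,\sigma_i,s_i}$ satisfies $\wf{s_i}$, by a straightforward case analysis on the three always-mispredict rules (\textsc{Am-NoBranch} only decrements windows and so preserves the stored configurations and labels; \textsc{Am-Branch} pushes a well-formed state by construction; \textsc{Am-Rollback} pops).

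Concatenating the symbolic steps obtained at each position yields $\tup{0,\tup{sm_0,sa_0},\emptysequence} \symSpeval{p}{\mbp}{\tau'}^* \tup{\ctr, \tup{sm,sa}, \emptysequence}$ with $\tau' = \tau_0' \concat \cdots \concat \tau_{n}'$. Because the same $\mu$ is threaded through the whole induction, $\mu(\tau') = \tau_0 \concat \cdots \concat \tau_n = \tau$, $\mu(\tup{sm,sa}) = \sigma'$, and $\mu \models \pathCond{\tau_i'}$ for each $i$, hence $\mu \models \pathCond{\tau'}$ since the path condition is the conjunction of the per-observation conditions. Note that the fact that $\mu$ need not be refined as the run proceeds --- all symbolic values are ground in the concrete run and pinned down by the initial choice --- is what makes the single-valuation argument work; this is the subtle point to get right, and it is also why Lemma~\ref{lemma:symbolic:one-step-completeness-speculative} is stated with a universally quantified $\mu$ rather than producing a fresh one. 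The main obstacle, then, is not the bookkeeping but making sure the $\wf{s_i}$ invariant is correctly identified and proved, since without it the rollback case of the symbolic semantics (which requires $sa'''(\pc) \neq \lbl$) cannot be applied; everything else is a mechanical induction invoking the already-established one-step lemma.
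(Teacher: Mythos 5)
Your proposal matches the paper's own proof essentially step for step: the paper also fixes a single valuation $\mu$ with $\mu(\tup{sm_0,sa_0}) = \sigma$, first establishes the invariant $\wf{s_i}$ along the concrete run by the same case analysis on \textsc{Am-NoBranch}, \textsc{Am-Branch}, and \textsc{Am-Rollback}, and then constructs the symbolic run by repeatedly applying Lemma~\ref{lemma:symbolic:one-step-completeness-speculative}, obtaining $\mu(\tau') = \tau$, $\mu(\tup{sm,sa}) = \sigma'$, and $\mu \models \pathCond{\tau'}$ exactly as you describe. The approach and the key lemma are the same, so there is nothing further to add.
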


\begin{proof}
Let $p$ be a program and $\tup{\ctr_0, \sigma_0, s_0} \ameval{p}{\bp}{\tau_0} \tup{\ctr_1, \sigma_1, s_1} \ameval{p}{\bp}{\tau_1} \ldots \tup{\ctr_n, \sigma_n, s_n} \ameval{p}{\bp}{\tau_n} \tup{\ctr_{n+1}, \sigma_{n+1}, s_{n+1}}$ be a concrete execution, where $\ctr_0 = 0$, $s_0 = s_{n+1} = \emptysequence$, $\sigma_0 \in \Init$, and $\sigma_{n+1} \in \Final$.
We claim that $\wf{s_i}$ holds for all $0 \leq i \leq n+1$.
Furthermore, let $\mu$ be a valuation such that $\mu(\tup{sm_0,sa_0}) = \sigma_0$.
We can construct the symbolic execution $\tup{0,\tup{sm_0,sa_0}, \emptysequence, \top} \symSpeval{p}{\mbp}{\tau'}^* \tup{\ctr,\tup{sm,sa}, \emptysequence}$  by repeatedly applying Lemma~\ref{lemma:symbolic:one-step-completeness-speculative}.
From the same Lemma, we also obtain that $\mu( \tup{sm_0,sa_0} ) = \sigma_0$, $\mu(\tup{sm,sa}) = \sigma_{n+1}$, $\mu(\tau') = \tau_0 \concat \ldots \concat \tau_n$, and $\mu \models \pathCond{\tau'}$.

We prove by induction that  $\wf{s_i}$ holds for all $0 \leq i \leq n+1$.
For the base case $i=0$, $s_0 = \emptysequence$ and $\wf{s_0}$ holds trivially.
For the induction step, we assume that $\wf{s_j}$ holds for all $j < i$ and we show that it holds also for $s_j$.
There are three cases:
\begin{compactitem}
\item The $i$-th configuration has been derived using the \textsc{Am-NoBranch} rule.
Then, $s_{i-1}$ and $s_i$ only differ in the length of the remaining speculation windows.
Therefore, $\wf{s_{i}}$ holds iff $\wf{s_{i-1}}$ does.
From the induction hypothesis, $\wf{s_{i-1}}$ holds and therefore  $\wf{s_{i}}$ holds.

\item 
The $i$-th configuration has been derived using the \textsc{Am-Branch} rule.
Then, $s_i = \decrementTop{s_{i-1}} \concat \tup{\sigma_{i-1},\ctr, \mathit{min}(w, \window{s}-1), \lbl}$, where $p(\sigma_{i-1}(\pc)) = \pjz{x}{\lbl''}$, $\sigma_{i-1} \eval{p}{\pcObs{\lbl'}} \sigma'$, and
$\lbl = 
\begin{cases}
\sigma_{i-1}(\pc) + 1 & \text{if}\ \sigma_{i-1}(x) = 0  \\
\lbl' 	 	& \text{if}\ \sigma_{i-1}(x) \neq 0 \\
\end{cases}$.
From this, $\wf{s_i}$ holds iff both $\wf{s_{i-1}}$ and $\wf{\tup{\sigma_{i-1},\ctr, \mathit{min}(w, \window{s}-1), \lbl}}$ hold.
The former follows from the induction hypothesis, the latter follows from 
$\sigma_{i-1} \eval{p}{\pcObs{\lbl'}} \sigma'$ and $\lbl = 
\begin{cases}
\sigma_{i-1}(\pc) + 1 & \text{if}\ \sigma_{i-1}(x) = 0  \\
\lbl' 	 	& \text{if}\ \sigma_{i-1}(x) \neq 0 \\
\end{cases}$.

\item The $i$-th configuration has been derived using the \textsc{Am-Rollback} rule.
Then, $s_{i-1} = s_i \concat \tup{\sigma, \id, 0, \lbl}$  for some $\sigma$, $\id$, and $\lbl$, i.e.,  if $\wf{s_{i-1}}$ holds, then $\wf{s_i}$ holds as well.
From the induction hypothesis, $\wf{s_{i-1}}$ holds and, thus,  $\wf{s_{i}}$ holds.\looseness=-1	
\end{compactitem}
This completes the proof of our claim.

\end{proof}

\section{\tool{}'s soundness and completeness (Theorem~\ref{theorem:soundness-and-completeness})}\label{appendix:proofs}

Below, we provide the proof of Theorem~\ref{theorem:soundness-and-completeness}, which we restate here for simplicity:
\spectectorSoundnessAndCompleteness*

\begin{proof}
	Theorem~\ref{theorem:soundness-and-completeness} immediately follows from \tool{}'s soundness (see Lemma~\ref{theorem:soundness}) and completeness (see Lemma~\ref{theorem:completeness}).
\end{proof}

Lemma~\ref{theorem:soundness} states that if $\tool{}$ determines a program to be secure, the program is speculatively non-interferent.\looseness=-1

\begin{restatable}{lem}{spectectorSoundness}
\label{theorem:soundness}
Whenever $\tool{}(p,\policy,w) = \textsc{Secure}$,  the program $p$ satisfies speculative non-interference w.r.t. the  policy~$\policy$ and any oracle $\bp$ with speculative window at most~$w$.
\end{restatable}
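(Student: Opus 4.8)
\textbf{Proof plan for Lemma~\ref{theorem:soundness}.}
The plan is to chain together the two characterization results already established in the paper. Assume $\tool{}(p,\policy,w) = \textsc{Secure}$; I want to conclude that $p$ satisfies speculative non-interference w.r.t.\ $\policy$ and every oracle $\bp$ with speculative window at most $w$. By Theorem~\ref{theorem:always-mispredict}, it suffices to show the always-mispredict reformulation: for all $\sigma,\sigma' \in \Init$ with $\sigma \indist{\policy} \sigma'$ and $\nspecEval{p}{\sigma} = \nspecEval{p}{\sigma'}$, we have $\amEval{p}{w}{\sigma} = \amEval{p}{w}{\sigma'}$. So the whole argument reduces to reasoning about the always-mispredict semantics and the symbolic runs that represent it.

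First I would fix two such configurations $\sigma,\sigma'$ and, arguing by contraposition, assume $\amEval{p}{w}{\sigma} \neq \amEval{p}{w}{\sigma'}$; the goal is to derive that either $\memcheck$ or $\pccheck$ returns true on some symbolic trace, contradicting $\tool{}(p,\policy,w) = \textsc{Secure}$. Using Proposition~\ref{proposition:symbolic-execution} ($\amTraces{p}{w} = \gamma(\symbTraces{p}{w})$), both concrete runs of $\sigma$ and $\sigma'$ are concretizations of symbolic runs in $\symbTraces{p}{w}$. The key structural fact to invoke is that two concrete always-mispredict runs starting from $\indist{\policy}$-configurations that agree on their non-speculative projections must follow the \emph{same symbolic trace} $\tau$: this is because the program counter is observable, so any divergence in control flow already shows up in $\nspecEval{p}{\cdot}$ (formally, via Proposition~\ref{proposition:always-mispredict:speculative-and-non-speculative}, the non-speculative projection of an always-mispredict run equals the non-speculative run, which is assumed equal for $\sigma$ and $\sigma'$). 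Hence there is a single symbolic run $\tup{\sigma_s,\tau,\sigma_s'} \in \symbTraces{p}{w}$ and valuations $\mu,\mu'$ with $\mu \models \pathCond{\tau}$, $\mu' \models \pathCond{\tau}$, $\mu(\sigma_s) = \sigma$, $\mu'(\sigma_s) = \sigma'$, $\mu(\tau) = \amEval{p}{w}{\sigma}$, and $\mu'(\tau) = \amEval{p}{w}{\sigma'}$.

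Next I would locate the first point of disagreement between $\mu(\tau)$ and $\mu'(\tau)$. Since $\nspecProject{\mu(\tau)} = \nspecEval{p}{\sigma} = \nspecEval{p}{\sigma'} = \nspecProject{\mu'(\tau)}$, the disagreement lies in $\specProject{\tau}$, i.e.\ in observations produced by rolled-back transactions. Case split on the kind of the first diverging observation. If it is a $\loadObs{\sexpr}$ or $\storeObs{\sexpr}$ observation, then $\mu$ and $\mu'$ together witness satisfiability of the formula $\psi$ built by $\memcheck(\tau,\policy)$: the self-composition variables $x_1,x_2$ are instantiated by $\mu,\mu'$ respectively; $\policyEqv{\policy}$ holds because $\sigma \indist{\policy} \sigma'$; $\pathCond{\tau}_{1\wedge 2}$ holds because both satisfy the path condition; $\cstrs{\nspecProject{\tau}}$ holds because the non-speculative projections agree; and $\neg\cstrs{\specProject{\tau}}$ holds because $\mu,\mu'$ disagree on a speculative memory observation. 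If instead the first diverging observation is a $\symPcObs{\sexpr}$ inside $\specProject{\tau}$, then taking $\nu$ to be the corresponding prefix of $\specProject{\tau}$, the same pair $\mu,\mu'$ witnesses satisfiability of the formula in $\pccheck$: $\pathCond{\nspecProject{\tau}\concat\nu}_{1\wedge 2}$ and $\cstrs{\nspecProject{\tau}}$ hold as before, and $\neg\sameSymPc{\sexpr}$ holds because the branch outcome of $\sexpr$ differs under $\mu$ versus $\mu'$ (being the first divergence, everything before $\nu$ agreed). In either case $\tool{}$ would have returned \textsc{Insecure}, a contradiction. The main obstacle I anticipate is the bookkeeping needed to justify rigorously that the two runs follow the identical symbolic trace $\tau$ up to the first divergence --- this requires carefully relating the observability of program counters in $\nspecEval{p}{\cdot}$ to the branch structure of the symbolic semantics, and making precise that the prefix $\nu$ appearing in $\pccheck$'s loop is exactly the speculative prefix up to the first disagreeing branch; the rest is a routine unfolding of the definitions of $\memcheck$, $\pccheck$, and the projections.
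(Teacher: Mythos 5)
Your overall strategy (reduce to the always-mispredict semantics via Theorem~\ref{theorem:always-mispredict}, pass to symbolic runs via Proposition~\ref{proposition:symbolic-execution}, and show that a disagreement in $\amEval{p}{w}{\cdot}$ would make \memcheck{} or \pccheck{} fire) matches the paper's, but the pivotal step is wrong as justified. You claim that two always-mispredict runs from $\indist{\policy}$-related configurations with $\nspecEval{p}{\sigma} = \nspecEval{p}{\sigma'}$ must follow the \emph{same} symbolic trace because ``the program counter is observable, so any divergence in control flow already shows up in $\nspecEval{p}{\cdot}$.'' That is false for exactly the leaks \pccheck{} exists to catch: the non-speculative projection deletes every observation inside rolled-back transactions, so two runs can agree on $\nspecEval{p}{\cdot}$ and still resolve a \emph{speculatively executed} branch differently (e.g., Kocher's Example \#10 in Section~\ref{sec:casestudies}). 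In that situation the two concrete runs concretize two \emph{different} symbolic runs, and your single-trace framing collapses.

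The collapse is visible in your second case: if both valuations $\mu,\mu'$ satisfy $\pathCond{\tau}$ for one and the same symbolic trace $\tau$, then for every $\symPcObs{\sexpr}$ occurring in $\tau$ (speculative or not) the expression $\sexpr$ is a conjunct of $\pathCond{\tau}$, hence both $\mu$ and $\mu'$ satisfy it, and $\neg\sameSymPc{\sexpr}$ cannot be witnessed by this pair; moreover $\mu(\tau)$ and $\mu'(\tau)$ contain identical $\pcObs{\cdot}$ observations, so no control-flow divergence is even expressible in your setup. The correct argument runs the dependency the other way, as in the paper's proof: do not assume a common trace, but use $\pccheck = \bot$ (together with equality of the non-speculative projections) to \emph{derive}, by induction over the speculatively executed branches, that the two runs take the same branch decisions --- if they first diverged at some speculative branch, the common prefix would be precisely the $\nu$ in \pccheck{}'s loop, the pair $(\mu,\mu')$ would satisfy $\pathCond{\nspecProject{\tau}\concat\nu}_{1\wedge 2} \wedge \policyEqv{\policy} \wedge \cstrs{\nspecProject{\tau}} \wedge \neg\sameSymPc{\sexpr}$ (note \pccheck{} deliberately uses the path condition of only $\nspecProject{\tau}\concat\nu$, not of the full trace, so both runs can satisfy it), contradicting $\pccheck = \bot$. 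Only after concluding $\tau_s = \tau_s'$ does the $\memcheck = \bot$ argument (which is fine in your write-up) give equality of the speculative memory observations and hence of the speculative traces. Your closing remark gestures at this (``same symbolic trace up to the first divergence''), but the proof as written builds both cases on the unrestricted same-trace claim, and that gap is exactly where the substance of the soundness proof lies.
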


\begin{proof}
	If $\tool{}(p,\policy,w) = \textsc{Secure}$, then for all symbolic traces $\tau$, $\memcheck(\tau) = \bot$ and $\pccheck(\tau) = \bot$.
	Let $\sigma$ and $\sigma'$ be two arbitrary $\policy$-indistinguishable initial configurations producing the same concrete non-speculative trace $\tau$.
	From Proposition~\ref{proposition:always-mispredict:speculative-and-non-speculative}, $\sigma$ and $\sigma'$ result in two concrete speculative traces $\tau_c$ and $\tau_c'$  with the same non-speculative projection.
	From Propositions~\ref{proposition:symbolic-execution}, $\tau_c$ and $\tau_c'$ correspond to two symbolic traces $\tau_s$ and $\tau_s'$.
	Since $\pccheck(\tau_s) = \bot$, $\pccheck(\tau_s') = \bot$, and $\nspecProject{\tau_c} = \nspecProject{\tau_c'}$, speculatively executed control-flow instructions produce the same outcome in $\tau_c$ and $\tau_c'$.
	Hence, the same code is executed in both traces and $\tau_s = \tau_s'$.
	From $\memcheck(\tau_s) = \bot$, the observations produced by speculatively executed $\loadKywd$ and $\storeKywd$ instructions are the same.
	Thus, $\tau_c = \tau_c'$.
	Hence, whenever two initial configurations result in the same non-speculative traces, then they produce the same speculative traces.
	Therefore, $p$ satisfies speculative non-interference w.r.t. the always-mispredict semantics with speculative window $w$.
	From this and Theorem~\ref{theorem:always-mispredict}, $p$ satisfies speculative non-interference w.r.t. any prediction oracle with speculative window at most $w$.
\end{proof}

Lemma~\ref{theorem:completeness} states that the leaks found by $\tool{}$ are valid counterexamples to speculative non-interference.

\begin{restatable}{lem}{spectectorCompleteness}
\label{theorem:completeness}
Whenever $\tool{}(p,\policy,w) = \textsc{Insecure}$, there is an oracle $\bp$ with speculative window at most $w$ such that program $p$ does not satisfy speculative non-interference w.r.t. $\bp$ and the policy~$\policy$.
\end{restatable}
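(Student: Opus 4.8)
The plan is to mirror the structure of the soundness proof (Lemma~\ref{theorem:soundness}) but in the reverse direction, chaining the \emph{completeness} halves of the two reduction results instead of the soundness halves. Concretely, suppose $\tool{}(p,\policy,w) = \textsc{Insecure}$. Then the \textbf{for} loop in Algorithm~\ref{algorithm:tool} found some symbolic run $\tup{\sigma,\tau,\sigma'} \in \symbTraces{p}{w}$ with $\memcheck(\tau,\policy) \vee \pccheck(\tau,\policy)$ returning true, i.e.\ the associated SMT formula $\psi$ is satisfiable. First I would unpack what a satisfying valuation $\mu$ of $\psi$ gives us: because $\psi$ conjoins $\pathCond{\tau}_{1\wedge 2}$, $\policyEqv{\policy}$, and $\cstrs{\nspecProject{\tau}}$ (and in the \pccheck{} case the prefix path conditions and $\neg\sameSymPc{se}$), the two ``copies'' of the symbolic variables under $\mu$ yield two concrete initial configurations $\sigma_1,\sigma_2$ that (i) are $\policy$-indistinguishable, (ii) both drive execution along the symbolic path $\tau$, so by Proposition~\ref{proposition:symbolic-execution} ($\amTraces{p}{w} = \gamma(\symbTraces{p}{w})$) they have concrete always-mispredict runs $\mu_1(\tau),\mu_2(\tau)$, (iii) agree on all non-speculative observations, hence by Proposition~\ref{proposition:always-mispredict:speculative-and-non-speculative} ($\nspecTraces{p} = \nspecProject{\amTraces{p}{w}}$, which also gives $\nspecEval{p}{\sigma_1}=\nspecEval{p}{\sigma_2}$), yet (iv) differ in at least one speculative observation — a $\loadObs{}/\storeObs{}$ address for \memcheck{}, or a $\symPcObs{}$ outcome for \pccheck{}. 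Thus $\sigma_1 \indist{\policy} \sigma_2$, $\nspecEval{p}{\sigma_1} = \nspecEval{p}{\sigma_2}$, but $\amEval{p}{w}{\sigma_1} \neq \amEval{p}{w}{\sigma_2}$.

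Next I would invoke Theorem~\ref{theorem:always-mispredict}, or more precisely its underlying Proposition~\ref{proposition:always-mispredict-worst-case}: since $\amEval{p}{w}{\sigma_1} \neq \amEval{p}{w}{\sigma_2}$, there exists a prediction oracle $\bp$ with speculative window at most $w$ such that $\specEval{p}{\bp}{\sigma_1} \neq \specEval{p}{\bp}{\sigma_2}$ (this is the ``completeness'' direction, Proposition~\ref{proposition:always-mispred-worst-case:completeness}, which constructs $\bp$ explicitly). For this particular $\bp$, the pair $\sigma_1,\sigma_2$ witnesses a violation of Definition~\ref{def:gsni}: they are $\policy$-indistinguishable, produce equal non-speculative traces, but differ under the speculative semantics. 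Hence $p$ does not satisfy speculative non-interference w.r.t.\ $\bp$ and $\policy$, which is exactly the claim.

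The main obstacle I anticipate is step (iv): carefully arguing that the satisfiability of $\psi$ really does produce two runs that follow the \emph{same} symbolic path $\tau$ yet differ on a \emph{speculative} observation, and that this difference survives concretization as a genuine difference in $\amEval{p}{w}{\cdot}$ rather than, say, a spurious mismatch inside a committed (non-rolled-back) part of the trace. For \memcheck{} this needs the fact that $\neg\cstrs{\specProject{\tau}}$ pins the disagreement to an observation in the speculative projection $\specProject{\tau}$, and that $\specProject{}$ and $\nspecProject{}$ partition the observations of $\tau$, so the non-speculative traces genuinely coincide; for \pccheck{} the subtlety is that the disagreeing $\symPcObs{se}$ is reached along a common prefix $\nspecProject{\tau}\concat\nu$, so up to that point both executions are identical and only then diverge — one must check that divergence of a speculative branch outcome indeed changes $\specProject{\cdot}$. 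A secondary, more bookkeeping-level obstacle is matching the self-composition subscripting convention ($x \mapsto x_1,x_2$, lifted to symbolic expressions) with the statement of Proposition~\ref{proposition:symbolic-execution}, which concretizes a \emph{single} run via a valuation $\mu$; the clean way is to split $\mu$ into $\mu_1,\mu_2$ acting on the two copies and apply Proposition~\ref{proposition:symbolic-execution} twice, once per copy, observing both copies share the path condition and hence both concretize legitimately. I expect the rest — assembling the SNI counterexample and quoting Theorem~\ref{theorem:always-mispredict}/Proposition~\ref{proposition:always-mispredict-worst-case} — to be routine.
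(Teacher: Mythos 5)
Your proposal is correct and follows essentially the same route as the paper's proof: extract two satisfying concretizations of the self-composed formula, map them to concrete always-mispredict runs via Proposition~\ref{proposition:symbolic-execution}, use Proposition~\ref{proposition:always-mispredict:speculative-and-non-speculative} to conclude the non-speculative traces agree while the speculative traces differ, and then transfer the always-mispredict violation to a concrete oracle $\bp$ via Theorem~\ref{theorem:always-mispredict} (whose completeness half is Proposition~\ref{proposition:always-mispredict-worst-case}). The subtleties you flag in step (iv) and the $\mu_1,\mu_2$ splitting are exactly the points the paper's proof handles (somewhat tersely) in its case analysis of \memcheck{} and \pccheck{}, so no gap.
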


\begin{proof}
	If $\tool{}(p,\policy) = \textsc{Insecure}$, there is a symbolic trace $\tau$ for which either $\memcheck(\tau) = \top$ or $\pccheck(\tau) = \top$.
	In the first case, $\pathCond{\tau} \wedge \policyEqv{\policy} \wedge \cstrs{\nspecProject{\tau}} \wedge \neg \cstrs{\specProject{\tau}}$ is satisfiable.
	Then, there are two models for the symbolic trace $\tau$ that (1) satisfy the path condition encoded in $\tau$, (2) agree on the non-sensitive registers and memory locations in $\policy$, (3) produce the same non-speculative projection, and (4) the speculative projections differ on a $\loadKywd$ or $\storeKywd$ observation.
	From Proposition~\ref{proposition:symbolic-execution}, the two concretizations correspond to two concrete runs, with different traces, whose non-speculative projection are the same.
	By combining this with Proposition~\ref{proposition:always-mispredict:speculative-and-non-speculative}, there are two configurations that produce the same non-speculative trace but different speculative traces.
	This is a violation of speculative non-interference.

	In the second case, there is a prefix $\nu \concat \symPcObs{\sexpr}$ of $\specProject{\tau}$ such that $\pathCond{\nspecProject{\tau}\concat \nu} \wedge \policyEqv{\policy} \wedge \cstrs{\nspecProject{\tau}} \wedge  \neg (\sexpr_1 \leftrightarrow \sexpr_2)$ is satisfiable.
	Hence, there are two symbolic traces $\tau$ and $\tau'$ that produce the same non-speculative observations but differ on a program counter observation $\pcObs{n}$ in their speculative projections.
	Again, this implies, through Propositions~\ref{proposition:symbolic-execution} and~\ref{proposition:always-mispredict:speculative-and-non-speculative}, that there are two $\policy$-indistinguishable initial configurations producing the same non-speculative traces but distinct speculative traces, leading to a violation of speculative non-interference.

	In both cases, $p$ does not satisfy speculative non-interference w.r.t. the always-mispredict semantics with speculative window $w$.
	From this and Theorem~\ref{theorem:always-mispredict}, there is a prediction oracle $\bp$ with speculative window at most $w$ such that $p$ does not satisfy speculative non-interference w.r.t. $\bp$.	
\end{proof}

\end{document}